\pgfplotsset{compat=1.17} 
\newenvironment{Tableau}[1]{%
 \tikzpicture[scale=0.6%0.4
,draw/.append style={thick,black},
 baseline=(current bounding box.center)]
 % now draw the tableau
 \tableauRow=-1.5
 \foreach \Row in {#1} {
 \tableauCol=0.5
 \foreach\k in \Row {
 \draw[thin](\the\tableauCol,\the\tableauRow)rectangle++(1,1);
 \draw[thin](\the\tableauCol,\the\tableauRow)+(0.5,0.5)node{$\k$};
 \global\advance\tableauCol by 1
 }
 \global\advance\tableauRow by -1
 }
}{\endtikzpicture}
\tikzstyle{vertexf}=[auto=left,circle,fill=black,minimum size=10pt,
\tikzstyle{vertexe}=[auto=left,circle,draw=black,
\def\bfrh{\bold{\rho}}
\newcommand{\Rho}{\mathrm{P}}
\def\fig#1{\includegraphics[width=.25\linewidth]{#1.pdf}}
\def\PI{\mbox{\rm P$_{\rm I}$}}
\def\PII{\mbox{\rm P$_{\rm II}$}}
\def\PIII{\mbox{\rm P$_{\rm III}$}}
\def\PIV{\mbox{\rm P$_{\rm IV}$}}
\def\PV{\mbox{\rm P$_{\rm V}$}}
\def\PVI{\mbox{\rm P$_{\rm VI}$}}
\def\sPV{\hbox{\rm S$_{\rm V}$}}
\def\HV{\mbox{$\mathcal{H}_{\rm V}$}}
\DeclareMathOperator{\Wr}{Wr}%\mathscr{W}}
\DeclareMathOperator{\He}{He}
\DeclareMathOperator{\Lag}{L}
\def\bold#1{\boldsymbol{#1}}
\newcommand{\fract}[2]{{\textstyle\frac{#1}{#2}}}
\newcommand{\ri}{\right}
\newcommand{\lf}{\left}
\newcommand\eq{\begin{equation}}
\newcommand\en{\end{equation}}
\newcommand\nn{\nonumber}
\newtheorem{theorem}{Theorem}[section]
\newtheorem{lemma}[theorem]{Lemma}
\newtheorem{corollary}[theorem]{Corollary}
\theoremstyle{definition}
\newtheorem{definition}[theorem]{Definition}
\newtheorem{example}[theorem]{Example}
\newtheorem{remark}[theorem]{Remark}
\newtheorem{remarks}[theorem]{Remarks}
\newtheorem{conjecture}[theorem]{Conjecture}
\numberwithin{figure}{section}
\numberwithin{equation}{section}
\numberwithin{table}{section}
\DeclarePairedDelimiter{\ceil}{\lceil}{\rceil}
\DeclarePairedDelimiter{\floor}{\lfloor}{\rfloor}
\def\d{{\rm d}}
\def\e{{\rm e}}
\def\i{{\rm i}}
\def\a{\alpha}
\def\b{\beta}
\def\ga{\gamma}
\def\de{\delta}
\def\ep{\varepsilon}
\def\ph{\varphi}
\def\k{\kappa}
\def\la{\lambda}
\def\th{\theta}
\def\p{Painlev\'e}
\def\peq{\p\ equation}
\def\peqs{\p\ equations}
\def\sch{Schr\"odinger}
\def\bk{B\"ack\-lund}
\def\bts{B\"ack\-lund transformations}
\newcommand{\D}{{\rm D}}
\newcommand\ZZ{\mathbb{Z}}
\newcommand\CC{\mathbb{C}}
\newcommand\NN{\mathbb{N}}
\newcommand{\Z}{\mathbb{Z}}
\newcommand\hf{\fract{1}{2}}
\newcommand{\gam}{\gamma}
\newcommand{\One}{{\hbox{{\rm 1{\hbox to 1.5pt{\hss\rm1}}}}}}
\renewcommand{\One}{\mathbb{1}}
\renewcommand{\One}{{\rm 1\!\!1}}
\def\bfla{\bold{\la}}
\def\bfLA{\bold{\Lambda}}
\def\bfnu{\bold{\nu}}
\newcommand{\blue}[1]{\textcolor{blue}{#1}}
\newcommand{\comment}[1]{}
\def\beq{\begin{equation}}
\def\eeq{\end{equation}}
\def\pms#1#2#3{\a=#1,\qquad \b=#2,\qquad \ga=#3}%,\qquad D=-\tfrac12}
\def\pmn#1#2#3{\a_{m,n}=#1,\qquad \b_{m,n}=#2,\qquad \ga_{m,n}=#3}%,\qquad D_{m,n}=-\tfrac12}
\def\pmm#1#2#3{\a_{m,0}=#1,\qquad \b_{m,0}=#2,\qquad \ga_{m,0}=#3}%,\qquad D_{m,n}=-\tfrac12}
\newcommand{\deriv}[3][]{\frac{\d^{#1}{#2}}{{\d{#3}}^{#1}}}
\newcommand{\pderiv}[3][]{\frac{\partial^{#1}{#2}}{{\partial{#3}}^{#1}}}
\def\wz{\deriv{w}{z}}
\def\wzz{\deriv[2]{w}{z}}
\def\luk{Lukashevich}
\def\etal{\textit{et al.}}
\newcommand{\KummerM}{M}
\newcommand{\KummerU}{U}
\def\Lag#1#2{L_{#1}^{(#2)}}
\def\Tmn#1#2{T_{#1}^{(#2)}}
\def\TTmn#1#2{\widehat{T}_{#1}^{(#2)}}
\def\ifrac#1#2{#1/#2}
\def\O{\mathcal{O}}
\def\W{\mathcal{W}}
\def\Vor{Vorob'ev}
\def\cdot{{\,\scriptstyle\bullet\,}}
\def\bn#1#2{\Big[\substack{#1\\[2pt]#2}\Big]}
\def\An{\mathcal{A}_n}
\def\Ann{\mathcal{A}_{n+1}}
\def\ds{\displaystyle}
\def\dz{\ds\deriv{}{z}}
\def\A#1{\widetilde{A}_{#1}^{(1)}}
\def\spiv{s\PIV}
\def\spv{s\PV}
\begin{document}
\title{Rational Solutions of the Fifth \p\ Equation.\\ Generalised Laguerre Polynomials}

\author{Peter A. Clarkson and Clare Dunning\\[2.5pt]
School of Mathematics, Statistics and Actuarial Science,\\ University of Kent, Canterbury, CT2 7NF, UK\\
Email: {P.A.Clarkson@kent.ac.uk}, {T.C.Dunning@kent.ac.uk}
%\\ Orcid: 
}

\maketitle
\begin{abstract}
In this paper rational solutions of the fifth Painlev\'e equation are discussed. There are two classes of rational solutions of the fifth Painlev\'e equation, one expressed in terms of the generalised Laguerre polynomials, which are the main subject of this paper, and the other in terms of the generalised Umemura polynomials. Both the generalised Laguerre polynomials and the generalised Umemura polynomials can be expressed as Wronskians of Laguerre polynomials specified in terms of specific families of partitions. The properties of the generalised Laguerre polynomials are determined and various differential-difference and discrete equations found. The rational solutions of the fifth Painlev\'e equation, the associated $\sigma$-equation and the symmetric fifth Painlev\'e system are expressed in terms of generalised Laguerre polynomials. Non-uniqueness of the solutions in special cases is established and some applications are considered. In the second part of the paper, 
the structure of the roots of the polynomials are {investigated} for all values of the parameter. Interesting transitions between root structures through coalescences at the origin are discovered, with the allowed behaviours controlled by hook data associated with the partition. The discriminants of the generalised Laguerre polynomials are found and also shown to be expressible in terms of partition data.
Explicit expressions for the coefficients of a general Wronskian Laguerre polynomial defined in terms of a single partition are given. 
\end{abstract}

\begin{itemize}
 \item[]{\textbf{Keywords}: \p\ equation, rational solutions, Laguerre polynomials, discriminant, partition, Wronskian.}
\end{itemize}

\noindent
{Dedicated to Athanassios S.\ Fokas on the occasion of his 70th anniversary for his many contributions to studies of integrable nonlinear differential equations, including \p\ equations.}
 
\section{Introduction}
The fifth \p\ equation is given by
\beq\wzz=\left(\frac{1}{2w}+ \frac{1}{w-1}\right)\left(\wz\right)^{\!\!2} -
\frac{1}{z}\wz+ \frac{(w-1)^2(\a w^2+{\b})}{z^2w}+ \frac{\ga w}{z} 
+\frac{\de w(w+1)}{w-1},\label{eq:pv1}\eeq
with $\a$, $\b$, $\ga$ and $\de$ constants. In the generic case of \eqref{eq:pv1} when $\de\not=0$, then we set
$\de=-\tfrac12$, without loss of generality (by rescaling $z$ if necessary) and obtain
\beq\wzz=\left(\frac{1}{2w}+ \frac{1}{w-1}\right)\left(\wz\right)^{\!\!2} -
\frac{1}{z}\wz+ \frac{(w-1)^2(\a w^2+{\b})}{z^2w}+ \frac{\ga w}{z} 
-\frac{w(w+1)}{2(w-1)},\label{eq:pv}\eeq
which we will refer to as \PV. 

The six \peqs\ (\PI--\PVI), 
were discovered by \p, Gambier and their colleagues whilst studying second order ordinary differential equations of the form
\begin{equation} \label{gen-ode}
\deriv[2]{w}{z}=F\left(z,w,\deriv{w}{z}\right), %\qquad '\equiv {\d}/{\d z}
\end{equation}
where $F$ is rational in $\d w/\d z$ and $w$ and analytic in $z$. The \p\ transcendents, i.e.\ the solutions of the \peqs, can be thought of as nonlinear analogues of the classical special functions. Iwasaki, Kimura, Shimomura and Yoshida \cite{refIKSY} characterize the six \peqs\ as ``the most important nonlinear ordinary differential equations" and state that ``many specialists believe that during the twenty-first century the \p\ functions will become new members of the community of special functions". Subsequently the \p\ transcendents are a chapter in the NIST \textit{Digital Library of Mathematical Functions}\ \cite[\S32]{refDLMF}.

The general solutions of the \peqs\ are transcendental in the sense that they cannot be expressed in terms of known elementary functions and so require the introduction of a new transcendental function to describe their solution. However, it is well known that all the \peqs, except \PI, possess rational solutions, algebraic solutions and solutions expressed in terms of the classical special functions --- Airy, Bessel, parabolic cylinder, Kummer and hypergeometric functions, respectively --- for special values of the parameters, see, e.g.\ \cite{refPAC06review,refFA82,refGLS}
and the references therein. These hierarchies are usually generated from ``seed solutions'' using the associated \bk\ transformations and frequently can be expressed in the form of determinants. 

\Vor\ \cite{refVor} and Yablonskii \cite{refYab59} expressed the rational solutions of \PII\ in terms of special polynomials, now known as the \textit{Yablonskii--\Vor\ polynomials}, which were defined through a second-order, bilinear differential-difference equation. Subsequently Kajiwara and Ohta \cite{refKO96} derived a determinantal {representation} of the polynomials, see also \cite{refKMi,refKMii}.
Okamoto \cite{refOkamotoP2P4} obtained special polynomials, analogous to the Yablonskii--\Vor\ polynomials, which are associated with some of the rational solutions of \PIV. Noumi and Yamada \cite{refNY99} generalized Okamoto's results and expressed all rational solutions of \PIV\ in terms of special polynomials, now known as the \textit{generalized Hermite polynomials} $H_{m,n}(z)$ and \textit{generalized Okamoto polynomials} $Q_{m,n}(z)$, both of which are determinants of sequences of Hermite polynomials; see also \cite{refKO98}. 

Umemura \cite{RefUm20} derived special polynomials associated with certain rational and algebraic solutions of 
\PIII\ and \PV, %the third and fifth \peqs, 
which are determinants of sequences of associated Laguerre polynomials. 
({The original manuscript was written by Umemura in 1996 for the proceedings of the conference ``\textit{Theory of nonlinear special functions: the \p\ transcendents}" in Montreal, which were not published; see \cite{refOkOh20}.})
Subsequently there have been further studies of rational and algebraic solutions of \PV\ %fifth \peq\ 
\cite{refPACpv,refPAC23,refKLM,refMOK,refNY98ii,refOkamotoPV,refWat}.
Several of these papers are concerned with the combinatorial structure and determinant representation of the generalised Laguerre polynomials, often related to the Hamiltonian structure and affine Weyl symmetries of the \peqs. Additionally the coefficients of these special polynomials have some interesting combinatorial properties \cite{RefUm98,RefUm01,RefUm20}. See also \cite{refNoumi2} and 
results on the combinatorics of the coefficients of Wronskian Hermite polynomials \cite{refBDS} and Wronskian Appell polynomials \cite{refBon}.

We define generalised Laguerre polynomials as Wronskians of 
a sequence of associated Laguerre polynomials specified in terms of a partition of an integer. 
We give a short introduction to the combinatorial concepts in \S\ref{sec:parts} and record several equivalent definitions of a generalised Laguerre polynomial in \S\ref{sec:gen_lag}, where we also show that the polynomials satisfy various differential-difference equations and discrete equations. 
In \S\ref{sec:rats} we express a family of rational solution of \PV\ \eqref{eq:pv} in terms of the generalised Laguerre polynomials. For certain values of the parameter, we show that the solutions are not unique. Rational solutions of the \PV\ $\sigma$-equation, the second-order, second-degree differential equation associated with the Hamiltonian representation of \PV, are considered in \S\ref{sec:sigma}, which includes a discussion of some applications. 
In \S\ref{sec:spv} we describe rational solutions of the symmetric \PV\ system. 
Properties of generalised Laguerre polynomials are established in \S\ref{sec:lag} as well as an explicit description of all partitions with $2$-core of size $k$ and $2$-quotient $(\bfla,\bold{\emptyset})$ for all partitions $\bfla$. Then in \S\ref{sec:roots} we obtain the discriminants of the polynomials, describe the patterns of roots as a function of the parameter and explain how the roots move as the parameter varies. Finally, we show that many of the results in the last section can be expressed in terms of combinatorial properties of the underlying partition. We also obtain explicit expressions for the coefficients of Wronskian Laguerre polynomials that depend on a single partition using the hooks of the partition.

\section{Partitions}
\label{sec:parts}
Partitions will appear throughout this article. We give a brief description of the key ideas. Useful references include \cite{refMacd,refStanley}. 
A partition $\bfla=(\la_1, \la_2,\ldots, \la_r)$ 
is a sequence of non-increasing integers $\la_1 \ge \la_2 \ge \ldots \ge \la_r$. 
We sometimes set $r=\ell(\bfla)$. The partition $\bold {\emptyset}$ represents the unique partition of zero. 
We define $|\bfla|=\lambda_1+\lambda_2 + \dots + \lambda_r$. 
The associated {\it degree vector}
$\bold{h}_{\bfla}=(h_1, h_2,\ldots, h_r)$ is a sequence of distinct integers 
 $h_1> h_2> \ldots> h_r>0$ related to partition elements via 
\eq
\la_j = h_j-r+j,\qquad j=1,2,\ldots, r.
\label{degvec}
\en
We often write $\bold{h}$ rather than $\bold{h}_{\bfla}$. 
Define the Vandermonde determinant
$\Delta(\bold{h})$ as 
\eq
\Delta(\bold{h})=\prod_{1 \le j<k \le r}(h_k-h_j).
\en

Partitions are usefully 
represented as Young diagrams by stacking $r$ rows of boxes of decreasing length 
$\la_j$ for $j=1,2,\dots,r$ on top of each other. 
Reflecting a Young diagram in the main diagonal gives the 
diagram corresponding to the conjugate partition $\bfla^*$. 
Young's lattice is the lattice of all partitions partially ordered by inclusion of the corresponding Young diagrams. That is, $\widetilde{\bfla} \le \bfla$ if $\widetilde{\lambda}_i \le \lambda_i$ for $i=1,2,\dots,\ell(\widetilde{\bfla})$. 
We write $\widetilde \bfla <_j \bfla$ if $|\widetilde \bfla|+j =|\bfla|. $
Let $F_{\bfla}$ denote the number of paths in the Young lattice from $\bfla$ to $\bold{\emptyset}$, and $F_{\bfla / \widetilde{\bfla}}$ the number of paths from $\bfla$ to
$\widetilde{\bfla}$. Explicitly 
\[
F_{\bfla / \widetilde{\bfla}}= (|\bfla| - |\widetilde{\bfla}|)!
\text { det} \, \lf[ \frac{1}{( \lambda_j-\widetilde{\lambda}_k-j+k)!}\ri]_{j,k=1}^{\ell(\bfla)}.
\]
A hook length $h_{jk}$ is assigned 
to box $(j,k)$ in the Young diagram via 
\eq
h_{j,k} = \la_j+{\la}^*_k -j-k+1.
\label{hook_length}
\en
The hook length counts the number of boxes to the right of and below box $(j,k)$ plus one. Thus 
\[
F_{\bfla} = \frac{|\bfla|!} {\prod_{h \in \mathcal{H}_{\bfla}
 } h},
\]
where $\mathcal{H}_{\bfla}$ is the set of all hook lengths. 
The entries of the degree vector $\bold{h}_{\bfla}$ are 
 the hooks in the first column of the Young diagram. Examples of Young diagrams 
 and the corresponding hook lengths are given in Figure~\ref{fig:young1}. 

A partition can be represented as $p+1$ smaller partitions known as the $p$-core $\overline{\bfla}$ and $p$-quotient $(\bfnu_1, \dots, \bfnu_p)$. A partition is a $p$-core partition if it contains no hook lengths of size $p$. Therefore the example partition $(2,1)$ is a $2$-core and $\bfla=(4^2,2,1^3)$ is both a $6$- and $7$-core. 
We only consider $p=2$ here. 
The hooks of size $2$ are vertical or horizontal dominoes.
We note that all $2$-cores are staircase partitions $\overline{\bfla} =(k,k-1,\ldots, 1)$. 

The $2$-core of a partition is found by sequentially removing all hooks of size $2$ from the Young diagram such that at each step the diagram represents a partition. 
 The 
terminating Young diagram defines the $2$-core, which we denote $\overline{\bfla}$. It does not
depend on the order in which the hooks are removed. For example, the partition $(4^2,2,1^3)$ has $2$-core $\overline{\bfla}=(2,1)$. Figure~\ref{fig:young1}(a) {shows} that there are three choices of domino that may be removed at the first step. 
 The $2$-height $\text{ht}(\bfla)$
(or $2$-sign) of partition $\bfla$ is the (unique) number of vertical 
dominoes removed from $\bfla$ to obtain its $2$-core. Equivalently, the $2$-height is the number of vertical 
dominoes in any domino tiling of the Young diagram of 
$\bfla$.

The $2$-quotient records how the dominoes are removed from a partition to obtain its core. James' $p$-abacus \cite{refJKbk} is a useful tool to determine the quotient, and
 provides an alternative visual representation of a partition. 
A $2$-abacus consists of left and right vertical
 runners with bead positions labelled $0,2,4,\ldots$ (left) and $1,3,5,\ldots$ (right) from top to bottom. To represent a partition on the $2$-abacus, 
 place a bead at the points corresponding to each
 element of the degree vector $\bold{h}$. Since a partition can have 
 as many $0$'s as we like, 
we allow an abacus to have any number of 
initial beads and any number of empty beads after the last bead. There are, therefore, an infinite set of abaci associated to each partition, according to the location of the first unoccupied slot. We return to this point below. 
The parts of a partition are read from its abacus by counting the number of empty spaces before each bead. 

A bead with no bead directly above it on the same runner 
corresponds to a hook of length $2$ in the Young diagram. The $2$-core 
$\overline{\bfla}$ is found from the abacus by sliding all beads vertically up as far as possible and reading off the resulting partition. 
Figure~\ref{fig:young1} 
shows the Young diagram and hooklengths of $(4^2,2,1^3) $ in (a), 
 an abacus representation in (c), 
its $2$-core $\overline{\bfla}=(2,1)$ in (b) and the abacus corresponding 
to $\overline{\bfla}$ that is obtained from (c) by pushing up all beads. 
\begin{figure}[ht]
\begin{center}
 \begin{subfigure}[b]{0.24\textwidth}
\centering
\begin{Tableau}{{9,5,3,2},{8,4,2,1},{5,1},{3},{2},{1}}
\end{Tableau}
 \caption{${\bfla}$}
 \label{fig:young1a}
 \end{subfigure}
\hfill
 \begin{subfigure}[b]{0.24\textwidth}
\centering
\begin{Tableau}{{3,1},{ 1}}
\end{Tableau}
 \caption{$\overline{\bfla}$}
 \label{fig:young1b}
 \end{subfigure}
 \hfill
 \begin{subfigure}[b]{0.24\textwidth}
 \centering

\begin{tikzpicture}[scale=0.7]
 \node[vertexe] (n0) at (0,0) {};
 \node[vertexf] (n1) at (0,-1) {};
 \node[vertexe] (n2) at (0,-2) {};
 \node[vertexe] (n3) at (0,-3) {};
 \node[vertexf] (n3) at (0,-4) {};

 \node[vertexf] (n0) at (1,0) {};
 \node[vertexf] (n1) at (1,-1) {};
 \node[vertexf] (n2) at (1,-2) {};
 \node[vertexe] (n3) at (1,-3) {};
 \node[vertexf] (n3) at (1,-4) {};
\end{tikzpicture}

\caption{${\bfla}$}
 \label{fig:abac1}
 \end{subfigure}
 \hfill
 \begin{subfigure}[b]{0.24\textwidth}
 \centering

\begin{tikzpicture}[scale=0.7]
 \node[vertexf] (n0) at (0,0) {};
 \node[vertexf] (n1) at (0,-1) {};
 \node[vertexe] (n2) at (0,-2) {};
 \node[vertexe] (n3) at (0,-3) {};
 \node[vertexe] (n3) at (0,-4) {};

 \node[vertexf] (n0) at (1,0) {};
 \node[vertexf] (n1) at (1,-1) {};
 \node[vertexf] (n2) at (1,-2) {};
 \node[vertexf] (n3) at (1,-3) {};
 \node[vertexe] (n3) at (1,-4) {};
\end{tikzpicture}

 \caption{ $\overline{\bfla}$}
 \label{fig:abac2}
 \end{subfigure}

\end{center}
\caption{The Young diagrams including hook length 
corresponding to (a) $\bfla=(4^2,2,1^3)$ and its core (b)
$\overline{\bfla}=(2,1)$, and corresponding abacus diagrams (c) and (d). 
\label{fig:young1}}
\end{figure}
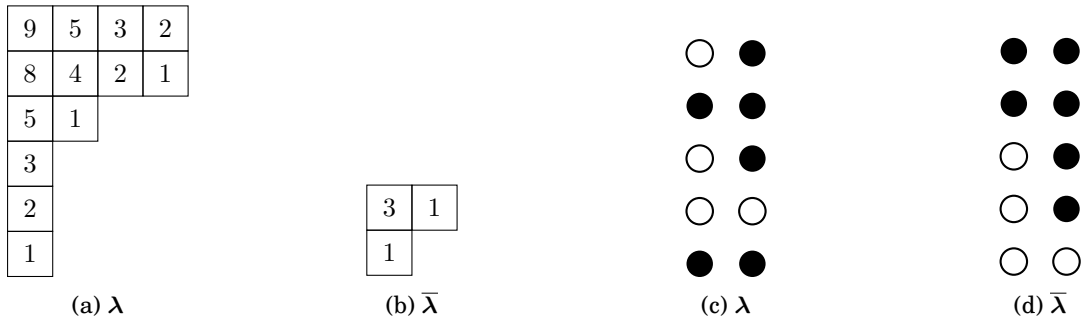

The $2$-quotient is an ordered pair of partitions $(\bfnu_1,\bfnu_2)$ that 
encodes how many places the beads on each runner 
are moved to obtain the $2$-core. 
The $2$-quotient ordering is specified by ensuring the $2$-core has at least as many
beads on the second runner as the first. One can always add a bead to 
the left runner of the partition abacus and shift all subsequent beads one place if this condition is not met \cite{refWildon}, swapping the order of the quotient partitions. Consequently, the relationship between a partition and 
its $2$-core of size $k$ and $2$-quotient $ (\bold{\nu}_1,\bold{\nu}_2)$ is bijective. 
In the running example, one bead on the left runner is moved one place and another bead is moved three places. This is recorded in the partition $\bold{\nu}_1=(3,1)$. 
Only one bead is moved on runner 2, by one space, and so $\bold{\nu}_2=(1)$. 
Therefore the $2$-core and $2$-quotient of ${\bfla}=(4^2,2,1^3)$ are $(2,1)$ 
and $( (3,1), (1) )$ respectively.

While we do not know of an explicit representation of the core and quotient for a generic partition, nor vice versa, the corresponding partitions can easily be found case by case and the bijection is known 
 in some special families of partitions. Partitions with $2$-core $k$ and $2$-quotient $(\bold{\nu},\bold{\emptyset})$ will be important 
in this article. For such partitions, we now determine the (unordered) first column hooks of 
the corresponding partition $\bold{\Lambda}(k, \bold{\nu})$. 
 Find the degree vector $\bold{h}_{\bold{\nu}}$ and 
place beads on the $2$-abacus in positions 
\eq
\lf\{ 2h_i\ri\}_{i=1}^{r}
\cup\{ 2j-1\}_{j=1}^{r+ k}.
\label{hks}
\en
We read off the corresponding partition $\bold{\Lambda}(k, \bold{\nu})$ from 
the position of the beads on the abacus. 
The first column hooks given by \eqref{hks} must be 
 ordered before using \eqref{degvec} to obtain the partition, which is why we cannot give an expression 
for $\bold{\Lambda}(k, \bold{\nu})$ for generic partitions $\bold{\nu}$. As an example take $k=3$ 
and $\bold{\nu}=(4,2,1)$. Then $\bold{h}_{\bold{\nu}}= ( 6,3,1)$. It follows from \eqref{hks} that the abacus of the partition $\bold{\Lambda}(3,(4,2,1)) $ has beads in places
$2,6,12$ and $1,3,5,7,9,11$. Therefore $\bold{h}_{\bold{\Lambda}}=(12,11,9,7,6,5,3,2,1)$ and thus $\bold{\Lambda}(3,(4,2,1)) =(4^2,3,2^3,1^3)$. 
In section~\ref{sec:lag}, we use the first column hook 
set (\ref{hks}) 
to determine 
an explicit formula for the family of partitions with 
$2$-core $k$ and $2$-quotient $( ((m+1)^n),\bold{\emptyset})$.

\section{Generalised Laguerre polynomials}
\label{sec:gen_lag}
%\subsection{Definitions} 
\begin{definition}
The \textit{generalised Laguerre polynomial} $\Tmn{m,n}{\mu}(z)$, which is a polynomial of degree $(m+1)n$, is defined by 
\beq \label{def:Tmn}
\Tmn{m,n}{\mu}(z) = \det \lf[\frac{\d^{j+k} }{\d z^{j+k}} L_{m+n}^{(\mu+1)} (z) 
\ri]_{j,k=0}^{n-1}, \qquad m\ge 0, \quad n\ge 1,
\eeq
where $L_{n}^{(\a)}(z)$ is the associated Laguerre polynomial
\beq\label{eq:alp} \Lag{n}{\a}(z)=\frac{z^{-\a}\,\e^z}{n!}\deriv[n]{}{z}\left(z^{n+\a}\,\e^{-z}\right),\qquad n\geq0.\eeq
%and $(a)_n$ is the Pochhammer symbol.
\end{definition}

\comment{\begin{remark}
 The generalised Laguerre polynomial $\Tmn{m,n}{\mu}(z)$ is a polynomial of degree $(m+1)n$. 
\end{remark}}
\begin{lemma}\label{lem:32}
The generalised Laguerre polynomial $\Tmn{m,n}{\mu}(z)$ 
can also be written as the Wronskian
\begin{align}
\Tmn{m,n}{\mu}(z) &=(-1)^{n(n-1)/2 } 
\Wr\left( L_{m+n}^{(n+\mu)}(z), L_{m+n-1}^{(n+\mu)}(z), 
\ldots, L_{m+1}^{(n+\mu)}(z)\right) \nn \\
&= \Wr\left( L_{m+1}^{(n+\mu)}(z), L_{m+2}^{(n+\mu)}(z), 
\ldots, L_{m+n}^{(n+\mu)}(z) \right). \label{def:Tmn2}
%= L_{{\bfla},\phi}(z,\mu+n) 
\end{align}
\end{lemma}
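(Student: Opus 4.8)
The plan is to reach \eqref{def:Tmn2} from the determinant \eqref{def:Tmn} using only two standard facts about the associated Laguerre polynomials \cite{refDLMF}: the differentiation rule $\tfrac{\d}{\d z}L_N^{(\alpha)}(z)=-L_{N-1}^{(\alpha+1)}(z)$ and the contiguous relation $L_N^{(\alpha)}(z)=L_N^{(\alpha+1)}(z)-L_{N-1}^{(\alpha+1)}(z)$. First I would observe that, with $f=L_{m+n}^{(\mu+1)}$, the matrix $\bigl[\tfrac{\d^{j+k}}{\d z^{j+k}}f(z)\bigr]_{j,k=0}^{n-1}$ is literally the Wronskian matrix of the tuple $\bigl(f,f',\dots,f^{(n-1)}\bigr)$, since its $(j,k)$ entry equals $\bigl(f^{(k)}\bigr)^{(j)}=f^{(j+k)}$. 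Hence $\Tmn{m,n}{\mu}(z)=\Wr\bigl(f,f',\dots,f^{(n-1)}\bigr)$.

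Next I would iterate the differentiation rule to get $f^{(k)}=(-1)^kL_{m+n-k}^{(\mu+1+k)}(z)$ for $k=0,1,\dots,n-1$, and pull the scalar $(-1)^k$ out of the $k$-th column (multilinearity of the determinant). The accumulated factor is $\prod_{k=0}^{n-1}(-1)^k=(-1)^{n(n-1)/2}$, so that $\Tmn{m,n}{\mu}(z)=(-1)^{n(n-1)/2}\,\Wr(v_1,\dots,v_n)$ with $v_j:=L_{m+n-j+1}^{(\mu+j)}(z)$. The degrees $m+n,m+n-1,\dots,m+1$ already match the first line of \eqref{def:Tmn2}; what remains is to raise every upper index $\mu+j$ to the common value $\mu+n$.

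For that I would iterate the contiguous relation to $L_N^{(\alpha)}=\sum_{i=0}^{k}(-1)^i\binom{k}{i}L_{N-i}^{(\alpha+k)}$ (a finite sum, proved by an easy induction on $k$), and apply it to $v_j$ with $k=n-j$, obtaining $v_j=\sum_{i=0}^{n-j}(-1)^i\binom{n-j}{i}w_{j+i}$ where $w_l:=L_{m+n-l+1}^{(\mu+n)}(z)$; in particular $v_j=w_j+(\text{linear combination of }w_{j+1},\dots,w_n)$. Thus $(v_1,\dots,v_n)$ is obtained from $(w_1,\dots,w_n)$ by an upper-unitriangular change of basis of determinant $1$, so multilinearity of the Wronskian gives $\Wr(v_1,\dots,v_n)=\Wr(w_1,\dots,w_n)$, which is the first line of \eqref{def:Tmn2}. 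The second line then follows on reversing the order of the $n$ arguments, which multiplies a Wronskian by the sign of the order-reversing permutation, namely $(-1)^{n(n-1)/2}$; since this sign squares to $(-1)^{n(n-1)}=1$, the prefactor cancels and one is left with $\Wr\bigl(L_{m+1}^{(n+\mu)},\dots,L_{m+n}^{(n+\mu)}\bigr)$.

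The only step needing care is the penultimate one: checking that applying the iterated contiguous relation to $v_j$ produces only the $w_l$ with $l\ge j$, with the coefficient of $w_j$ equal to $1$, so that the transition matrix is genuinely triangular with unit diagonal and no stray scalar is introduced. This is a short index computation. A variant that sidesteps the closed-form expansion is to use invariance of the Wronskian under adding to one of its arguments a linear combination of the others, and to transform $(v_1,\dots,v_n)$ into $(w_1,\dots,w_n)$ one coordinate at a time.
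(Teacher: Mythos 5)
Your proposal is correct and follows essentially the same route as the paper: convert the double-derivative determinant to a Wronskian via $\frac{\d^k}{\d z^k}L_N^{(\alpha)}=(-1)^kL_{N-k}^{(\alpha+k)}$, raise all superscripts to $\mu+n$ by iterating the contiguous relation (the paper's induction formula is exactly your binomial expansion), observe the transition is unitriangular so the Wronskian is unchanged, and reverse the order to cancel the sign. The only cosmetic difference is that you phrase the final step as a determinant-one change of basis where the paper says ``keep the last term in each sum'' via column operations.
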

\noindent
\comment{I think this proof is wrong. Equation \eqref{eq:35} is incorrect and also leads to singular coefficients. Since
\[ \Lag{k}{\a+\b+1}{(x+y)}=\sum_{j=0}^k \Lag{j}{\b}(x) \Lag{k-j}{\a}(y),\qquad \Lag{k}{\a}(0)=\frac{(\a+1)_k}{k!} = \frac{\Gamma(\a+k+1)}{k!\,\Gamma(\a+1)},\]
then
\begin{align*}
 \Lag{k}{\a+\b+1}{(x)}&=\sum_{j=0}^k \Lag{j}{\b}(x) \Lag{k-j}{\a}(0)=\sum_{j=0}^k \frac{\Gamma(\a+1+k-j)}{(k-j)!\,\Gamma(\a+1)}\,\Lag{j}{\b}(x),
\end{align*}
and so
\[ \Lag{k}{\a}{(x)}=\sum_{j=0}^k \frac{\Gamma(\a-\b+k-j)}{(k-j)!\,\Gamma(\a-\b)}\,\Lag{j}{\b}(x),\]
which has singular coefficients if $\a-\b$ is a negative integer.}

\begin{proof}
We use 
\eq
\frac{\d^k}{\d z^k} L_{n}^{(\alpha)}(z) = \begin{cases}
(-1)^k L_{n-k} ^{(\alpha+k)}(z),\quad & k\le n, \\ 
0, & \text{otherwise},
\end{cases}
\label{lag1}
\en
cf.~\cite[equation (18.9.23)]{refDLMF},
to write the determinant form of $\Tmn{m,n}{\mu}(z)$ as a Wronskian
%\eq
%\Tmn{m,n}{\mu}(z) = \Wr( L_{m+n}^{(\mu+1)}(z), 
%\frac{d}{dx} L_{m+n}^{(\mu+1)}(z), \ldots, 
%\frac{d^{n-1}}{dx^{n-1}} L_{m+n}^{(\mu+1)}(z) )
%\en
%as 
\eq \det \lf[\frac{\d^{j+k} }{\d z^{j+k}} L_{m+n}^{(\mu+1)} (z) 
\ri]_{j,k=0}^{n-1} 
= (-1)^{n(n-1)/2} \Wr\left( L_{m+n}^{(\mu+1)}(z), 
 L_{m+n-1}^{(\mu+2)}(z), \ldots, 
L_{m+1}^{(\mu+n)}(z) \right). \nn 
\en
\comment{Then we rewrite each term in the Wronskian using 
\eq
L_k^{(\alpha)} (x) = \sum_{j=0}^k \binom{ \alpha-\beta+k-j-1 }{ k-j} 
L_j^{(\beta)}(x).\label{eq:35}
 \en 
to obtain
\begin{align*}
\Tmn{m,n}{\mu}(z) &= (-1)^{n(n-1)/2}\nn\\
&\times\Wr\left( \sum_{j=0}^{m+n} \binom{m-j}{m+n-j} 
L_{j}^{(n+\mu)}(z), %\\ & \quad \quad 
\sum_{j=0}^{m+n-1} \binom{m-j}{m+n-j-1} L_{j}^{(n+\mu)}(z),\ldots, 
L_{m+1}^{(n+\mu)}(z) \right). 
\end{align*}}%
{Using the result
\eq
\Lag{m}{\a }(z) = \Lag{m}{\a+1}(z)-\Lag{m -1}{\a+1}(z),
\label{eq:lag_rel}
\en
\cite[equation (18.9.13)]{refDLMF}, 
it can be shown using induction that
\[\Lag{m+k}{\a+1 -k }(z) =\Lag{m+k}{\a}(z)+\sum_{j=1}^{k-1} (-1)^{k-j}\binom{k-1}{j-1}\Lag{m+j}{\a}(z).\]
Hence setting $\a=\mu+n$ gives
%\beq\Lag{m+n+1-k}{\mu+k }(z) = \sum_{j=1}^{n-k+1} (-1)^{n-k-j+1}\binom{n-k}{j-1}\Lag{m+j}{\mu+n }(z),\qquad k=1,2,\ldots,n,\nn\eeq
\beq\Lag{m+k}{\mu+n+1 -k }(z) = \Lag{m+k }{\mu+n }(z)+ \sum_{j=1}^{k-1} (-1)^{k-j}\binom{k-1}{j-1}\Lag{m+j}{\mu+n}(z),\qquad k=1,2,\ldots,n,\eeq
and so we obtain
\begin{align*}
\Tmn{m,n}{\mu}(z) &= (-1)^{n(n-1)/2}\nn\\
&\times\Wr\left( {\Lag{m+n}{n+\mu}(z)+\sum_{j=1}^{n} (-1)^{n-j}\binom{n-1}{j-1}\Lag{m+j}{n+\mu},\ldots,\Lag{m+2}{n+\mu}(z)-\Lag{m+1}{n+\mu}(z), \Lag{m+1}{n+\mu}(z) }\right). 
%\left( \sum_{j=1}^{n} (-1)^{n-j}\binom{n-1}{j-1}\Lag{m+j}{\mu+n }(z) , \sum_{j=1}^{n-1} (-1)^{n-j-1}\binom{n-2}{j-1}\Lag{m+j}{\mu+n }(z),\ldots, \Lag{m+2}{n+\mu}(z)-\Lag{m+1}{n+\mu}(z),\Lag{m+1}{n+\mu}(z) \right)
\end{align*}}%
Since we can add a multiple of any column to any other column without 
changing the Wronskian determinant, we keep the last term in each sum:
\eq
\Tmn{m,n}{\mu}(z) = (-1)^{n(n-1)/2} \Wr\left( L_{m+n}^{(n+\mu)}(z), L_{m+n-1}^{(n+\mu)}(z), 
\ldots, 
L_{m+1}^{(n+\mu)}(z) \right). 
\en
On interchanging the $j^{\rm th}$ column with the $(n-j+1)^{\rm th}$ column, we 
find 
\eq
\Tmn{m,n}{\mu}(z) =
\Wr\left( L_{m+1}^{(n+\mu)}(z), L_{m+2}^{(n+\mu)}(z), 
\ldots, 
L_{m+n}^{(n+\mu)}(z) \right). 
\en
\end{proof}

We remark that 
\[
T_{0,m-1}^{(n-m+1)}(z)=\Wr\left( L_{1}^{(n)}(z), L_{2}^{(n)}(z), 
\ldots, L_{m-1}^{(n)}(z) \right)
= (-1)^{\floor{m/2}} L_{m-1}^{(-m-n )}(-z).
\]

\begin{definition}
Bonneux and Kuiljaars \cite{refBK18}, see also \cite{refDuran14b,refDP15,refGUGM18},
define a \textit{Wronskian of Laguerre polynomials}
\eq
\Omega_{\bfla}^{(\alpha)}(z) = \Wr\left( L_{h_1}^{(\alpha)}(z),
L_{h_2}^{(\alpha)}(z), 
\ldots, 
L_{h_r}^{(\alpha)}(z)\right),
\label{bk_omega}
\en
in terms of the degree vector 
$\bold{h}=(h_1, h_2, \ldots, h_r)$ of partition $\bfla=(\la_1,\la_2,\ldots,\la_{r})$.
Hence 
\eq
\Tmn{m,n}{\mu}(z) = (-1)^{n(n-1)/2} \, \Omega_{\bfla}^{(n+\mu)}(z),
\label{TOmega}
\en
where the partition is $\bfla=((m+1)^n)$.
\end{definition}

%\begin{remark}
%\end{remark}

\begin{definition}
The \textit{elementary Schur polynomials} $p_j(\bold{t})$, 
for $j\in\ZZ$, in terms of the variables 
$\bold{t} =(t_1, t_2,\ldots)$, are defined by the generating function
\eq
\sum_{j=0}^\infty p_j(\bold{t}) \,x^j =\exp\lf(\sum_{j=1}^\infty t_j \,x^j\ri),\qquad p_j(\bold{t}) =0,\quad \text{for}\quad j<0,
\en
with $p_0(\bold{t}) =1$.
The \textit{Schur polynomial} $S_{\bfla}(\bold{t})$ for the partition $\bfla$ is given by
\eq
S_{\bfla} ( \bold{t} )
={\rm det} \begin{bmatrix}
p_{\la_j+ k-j} (\bold{t}) \end{bmatrix}_{j,k=1}^{r }.
\label{schur}
\en
\end{definition}

The generalised Laguerre polynomial $\Tmn{m,n}{\mu}(z) $ can be expressed as a Schur polynomial, as shown in the following Lemma.
\begin{lemma}\label{schur_defn}
The generalised Laguerre polynomial $\Tmn{m,n}{\mu}(z) $ is the Schur polynomial 
\eq
\Tmn{m,n}{\mu}(z) 
= (-1)^{n(n-1)/2} S_{\bfla}(\bold{t} ),
\label{Tschur}
\en
where 
$
\bfla=((m+1)^n) $
and 
\eq
 t_j = \frac{\mu+n+1}{j} -z,\qquad j=1,2,\ldots\ .
\label{tvals}
\en
\end{lemma}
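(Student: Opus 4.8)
The plan is to deduce the statement from the Wronskian representation \eqref{def:Tmn2}--\eqref{TOmega} by converting the $z$-Wronskian of Laguerre polynomials occurring there into the Jacobi--Trudi determinant \eqref{schur}. By \eqref{TOmega} it suffices to prove $\Omega_{\bfla}^{(n+\mu)}(z)=S_{\bfla}(\bold{t})$ for $\bfla=((m+1)^n)$, the prefactors $(-1)^{n(n-1)/2}$ then matching.

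First I would record the identification of an associated Laguerre polynomial with an elementary Schur polynomial: from the generating function $\sum_{k\ge0}L_k^{(\beta)}(z)\,x^k=(1-x)^{-\beta-1}\exp\big(-zx/(1-x)\big)=\exp\big(\sum_{j\ge1}(\tfrac{\beta+1}{j}-z)\,x^j\big)$ one reads off $L_k^{(\beta)}(z)=p_k(\bold{t}^{(\beta)})$ with $(\bold{t}^{(\beta)})_j=\tfrac{\beta+1}{j}-z$. The subtlety to notice is that under this identification $\d/\d z$ does \emph{not} act on $p_k(\bold{t}^{(\beta)})$ as the Schur shift $p_k\mapsto p_{k-1}$: by \eqref{lag1}, $\d L_k^{(\beta)}/\d z=-L_{k-1}^{(\beta+1)}$, so each differentiation raises the upper index by one. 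Hence the matrix $\big[\partial_z^{\,i}L_{h_j}^{(n+\mu)}(z)\big]_{i=0,\dots,n-1;\,j=1,\dots,n}$, whose determinant is $\Omega_{\bfla}^{(n+\mu)}(z)$ with $\bold{h}=\bold{h}_{\bfla}$, has its $i$-th row made of Laguerre polynomials of upper index $n+\mu+i$, not of a common index, so it is not yet a Jacobi--Trudi matrix and the naive substitution $L_k^{(n+\mu)}=p_k$ does not close the argument.

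The key step is to homogenise these upper indices. Iterating \eqref{eq:lag_rel} in the form $L_k^{(\beta)}(z)=L_k^{(\beta+1)}(z)-L_{k-1}^{(\beta+1)}(z)$ gives, for each $p\ge0$, the expansion $L_k^{(\beta)}(z)=\sum_{s\ge0}(-1)^{s}\binom{p}{s}L_{k-s}^{(\beta+p)}(z)$. Using it with $p=n-1-i$ in the $i$-th row rewrites every entry as a linear combination, with coefficients depending only on the row index, of the polynomials $L^{(\mu+2n-1)}_{h_j-i-s}(z)$ of the single top index $\mu+2n-1$; the transformation is triangular with diagonal entries $(-1)^i$, hence multiplies the determinant by $(-1)^{n(n-1)/2}$, and after it every entry is a Laguerre polynomial of the same upper index. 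Applying the first step at that index turns the matrix into $\big[p_{h_j-i}(\bold{t})\big]$ with $t_j=\tfrac{\mu+2n}{j}-z$; reversing the row order (a further factor $(-1)^{n(n-1)/2}$, cancelling the previous one), transposing, and using \eqref{degvec} to line up the degrees identifies it with $\big[p_{\la_j+k-j}(\bold{t})\big]_{j,k=1}^{n}=S_{\bfla}(\bold{t})$. Collecting the signs gives $\Omega_{\bfla}^{(n+\mu)}(z)=S_{\bfla}(\bold{t})$, and hence \eqref{Tschur}.

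The step I expect to be the main obstacle is this index homogenisation together with the bookkeeping of signs and of the translation entering $\bold{t}$: the $z$-Wronskian is not literally the Jacobi--Trudi determinant of the $p_k$ under the naive reading $L_k^{(n+\mu)}(z)=p_k(\bold{t})$, and it is the raising of the Laguerre upper index during the reduction --- from $n+\mu$ up to $\mu+2n-1$ --- that determines both the shift to be used in $\bold{t}$ and the overall sign. Checking the cases $n=1$ (where the two reductions coincide) and $n=2$ directly against \eqref{def:Tmn} is a good test of the bookkeeping.
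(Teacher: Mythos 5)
Your route is genuinely different from the paper's, and its mathematical content is sound. The paper argues from the Schur side: it rewrites $S_{\bfla}$ as a Wronskian with respect to $t_1$ via $\partial_{t_1}p_m=p_{m-1}$ (equation \eqref{schur2}), identifies $p_j(\bold{t})$ with Laguerre polynomials through the generating function, and then asserts the result ``by re-ordering rows and columns and letting $z\to-z$''. You instead start from the $z$-Wronskian \eqref{def:Tmn2}/\eqref{TOmega} and homogenise the Laguerre upper parameter row by row with the iterate of \eqref{eq:lag_rel}: row $i$ becomes $(-1)^iR_i$ plus a combination of $R_{i+1},\dots,R_{n-1}$ only, where $R_w=\bigl(L^{(\mu+2n-1)}_{h_j-w}(z)\bigr)_j$, so the reduction is triangular and the determinant collapses to a Jacobi--Trudi determinant. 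This is exactly the point the paper's two-line proof elides: after the substitution every $t_j$ depends on $z$, differentiation in $z$ raises the Laguerre parameter, and the $t_1$-Wronskian is not the $z$-Wronskian, so some version of your homogenisation (equivalently, the representations \eqref{def:Tmnd}--\eqref{def:Tmne}, proved later in the paper by the same identities) is genuinely needed. What your approach buys is a self-contained and checkable derivation; what the paper's buys is brevity, at the cost of the unjustified identification of the two Wronskians.

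One point needs fixing before you write ``hence \eqref{Tschur}'': your argument forces $t_j=(\mu+2n)/j-z$, i.e.\ $p_j(\bold{t})=L_j^{(\mu+2n-1)}(z)$, which agrees with \eqref{tvals} only when $n=1$. The discrepancy is not a flaw in your reduction: \eqref{tvals} as printed cannot be correct. For $(m,n)=(0,2)$ one computes $\Tmn{0,2}{\mu}(z)=-\tfrac12z^2+(\mu+3)z-\tfrac12(\mu+3)(\mu+4)$, whereas $(-1)^{n(n-1)/2}S_{(1,1)}(\bold{t})$ with $t_j=(\mu+3)/j-z$ equals $-\tfrac12z^2+(\mu+2)z-\tfrac12(\mu+2)(\mu+3)$; by contrast your specialisation reproduces $\Tmn{0,2}{\mu}(z)$ exactly, and it is also the one consistent with \eqref{def:Tmnd}--\eqref{def:Tmne} and with the parameter $-\mu-2n-2m-2$ appearing in \eqref{symmT} (the claim $p_j(\bold{t})=L_j^{(\mu+n)}(-z)$ in the paper's proof is likewise off). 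So you have proved the corrected statement; state explicitly that \eqref{tvals} should read $t_j=(\mu+2n)/j-z$ rather than silently concluding \eqref{Tschur} with the printed value.
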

\begin{proof}
Since 
$$
\frac{\partial^j p_m}{\partial t_1^j} =p_{m-j},
$$
the Schur polynomial (\ref{schur}) can be written as the Wronskian 
\eq
S_{\bfla}(\bold{t}) = \Wr\left(p_{\la_n}, {p_{\la_{n-1}+1}}, 
\ldots, p_{\la_1+n-1} \right),
\label{schur2}
\en
for any partition $\bfla$, 
 where the Wronskian is evaluated with respect to $t_1$. 
 The choice of $t_j$ defined in \eqref{Tschur} leads to 
\eq
p_j(\bold{t}) = L_j^{(\mu+n)}(-z),\qquad j=0,1,\ldots\ .
\en
Set $\bfla=((m+1)^n)$, then \eqref{Tschur} follows from 
(\ref{schur2}) by re-ordering rows and columns and letting $z \to -z$. 
\end{proof}

% \begin{remark}
% A Schur polynomial is a special case of a universal 
% character $UC_{\bfla,\phi}(\bold{t}) = S_{\bfla}(\bold{t} )$. The universal 
% character form used here is that given by Masuda (2005) for 
% the generalised Umemura polynomials with the 
% second parameter set to zero gives the generalised Laguerre polynomial. 
% The latter connection is mentioned (for example?) in Clarkson 2005 in 
% relations (2.18). 
% The generalised Laguerre polynomials defined by Dur\'an 
% and later by Bonneux and Kuiljaars are universal characters. 
% [add a line to explain more here.]
% \end{remark}

\begin{definition}
Define the polynomial $\TTmn{m,n}{\mu}(z)$
\eq \label{def:Tnhat}
\TTmn{m,n}{\mu}(z) = \det \lf[\frac{\d^{j+k} }{\d z^{j+k}} L_{m+n}^{(\mu+1)} 
(-z) 
\ri]_{j,k=0}^{n-1}, \quad m\ge 0, \quad n\ge 1,
\en
with $L_n^{(\a)}(z)$ the associated Laguerre polynomial.
\end{definition}
\begin{remark}{\label{rem:37}\rm 
We note that 
\beq\label{eq:TmnT}\Tmn{m,n}{\mu}(-z) = \TTmn{m,n}{\mu}(z).\eeq}\end{remark}

\begin{lemma}
The generalised Laguerre polynomial $\Tmn{m,n}{\mu}(z)$ has the discrete symmetry 
\eq 
\Tmn{m,n}{\mu}(z) = (-1)^{\floor{(m+n+1)/2}} \,\Tmn{n-1, m+1}{-\mu-2n-2m-2}(-z). 
\label{symmT}
\en
\end{lemma}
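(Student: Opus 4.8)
The plan is to deduce \eqref{symmT} from the Schur--polynomial representation of Lemma~\ref{schur_defn}, combined with the classical conjugation involution on (elementary) Schur polynomials and the observation that the rectangle $\bfla=((m+1)^n)$ and its conjugate $\bfla^{*}=(n^{m+1})$ are exactly the two shapes attached to $\Tmn{m,n}{\mu}$ and to $\Tmn{n-1,m+1}{\,\cdot}$ (the second being the shape $((m'+1)^{n'})$ with $(m',n')=(n-1,m+1)$). The one ingredient not already in the excerpt is the conjugation identity for the polynomials $S_\nu$ of \eqref{schur}. From the generating function $\sum_j p_j(\bold t)x^j=\exp\big(\sum_k t_kx^k\big)$ one has $\sum_j p_j(-\bold t)x^j=\big(\sum_j p_j(\bold t)x^j\big)^{-1}$, and comparison with the series at $\bold t^{-}:=\big((-1)^{k-1}t_k\big)_k$ gives the convolution $\sum_l(-1)^lp_l(\bold t)\,p_{j-l}(\bold t^{-})=\delta_{j0}$. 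Feeding this into the Jacobi--Trudi determinant by the standard manipulation that converts $\det[p_{\nu_i-i+j}(\bold t)]$ into $\det[p_{\nu^{*}_i-i+j}(\bold t^{-})]$ (the $\omega$-involution, cf.~\cite{refMacd}) yields $S_\nu(\bold t)=S_{\nu^{*}}(\bold t^{-})$, and composing with the elementary rescaling $S_\nu\big((-1)^jt_j\big)=(-1)^{|\nu|}S_\nu(\bold t)$ gives the form I want,
\[
S_\nu(-\bold t)=(-1)^{|\nu|}\,S_{\nu^{*}}(\bold t)\qquad\text{for every partition }\nu .
\]

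Now take $\bfla=((m+1)^n)$ and let $\bold t$ be the vector prescribed by Lemma~\ref{schur_defn}, so that it is affine in $z$ with $z$-coefficient $-1$ and $\Tmn{m,n}{\mu}(z)=(-1)^{n(n-1)/2}S_{\bfla}(\bold t)$. Applying the conjugation identity with $\nu^{*}=\bfla$, that is $\nu=\bfla^{*}=(n^{m+1})$ and $|\nu|=(m+1)n$, gives
\[
\Tmn{m,n}{\mu}(z)=(-1)^{n(n-1)/2}\,(-1)^{(m+1)n}\,S_{(n^{m+1})}(-\bold t).
\]
The crux is that $-\bold t$ is again of the shape required by Lemma~\ref{schur_defn}, now with $z$ replaced by $-z$ and the remaining parameter negated, and that $(n^{m+1})$ is precisely the rectangle governing $\Tmn{n-1,m+1}{\,\cdot}$. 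Reading Lemma~\ref{schur_defn} backwards therefore identifies $S_{(n^{m+1})}(-\bold t)=(-1)^{m(m+1)/2}\,\Tmn{n-1,m+1}{\mu'}(-z)$ for the unique $\mu'$ matching the parameter prescription, and a short computation returns $\mu'=-\mu-2n-2m-2$. Chaining the three displays then produces \eqref{symmT}.

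It remains to collect the signs. The three factors above leave $(-1)^{E}$ with $E=\binom n2+n(m+1)+\binom{m+1}2$; by the Pascal identity $\binom a2+\binom b2+ab=\binom{a+b}2$ (with $a=n$, $b=m+1$) this is $\binom{m+n+1}{2}$, and since $\binom k2\equiv\lfloor k/2\rfloor\pmod 2$ we get $E\equiv\lfloor(m+n+1)/2\rfloor\pmod 2$, which is exactly the power in \eqref{symmT}.

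The routine parts are the generating--function manipulations and the parameter matching. The genuinely delicate point --- and the part I expect to be the main obstacle --- is the sign: one must carry two $(-1)^{n(n-1)/2}$-type factors together with the conjugation sign $(-1)^{(m+1)n}$ and recognise, via the two elementary congruences above, that their product collapses to $(-1)^{\lfloor(m+n+1)/2\rfloor}$ rather than to some spurious $\pm1$; it is prudent to pin down the normalisation in Lemma~\ref{schur_defn} (both the sign and the precise form of $\bold t$) against a small case such as $(m,n)=(0,1)$ and $(0,2)$. A Schur--free alternative is to use the determinant form \eqref{def:Tmn}: both sides become Hankel determinants $\det[P^{(j+k)}(z)]$ in the successive derivatives of a single Laguerre polynomial of degree $m+n$, of complementary sizes $n$ and $m+1$; one then invokes the classical complementary-minor (Jacobi) relation for such determinants together with the Laguerre connection/reflection formulae to recognise the complementary block as the derivatives of the partner polynomial, and the signs there organise in the same way.
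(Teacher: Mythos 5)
Your proposal is correct and follows essentially the same route as the paper: both apply the Schur-polynomial representation of Lemma~\ref{schur_defn} together with the conjugation relation between $S_{\bfla}$ and $S_{\bfla^*}$ at $-\bold{t}$, with $\bfla=((m+1)^n)$ and $\bfla^*=(n^{m+1})$. The only difference is that you carry out explicitly the sign bookkeeping (the $(-1)^{|\nu|}$ factor in the conjugation identity, the parameter matching $\mu'=-\mu-2n-2m-2$, and the collapse of the total exponent to $\lfloor(m+n+1)/2\rfloor$) that the paper's one-line proof leaves implicit.
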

\begin{proof}
Apply the standard relation
\eq
S_{\bfla}(\bold{t} ) = S_{\bfla^*}(-\bold{t} ).
%\label{schur_prop} 
\en
with $\bfla^*=(n^{m+1})$ 
to the Schur form of the generalised Laguerre polynomial \eqref{schur_defn}. 
\end{proof}

\begin{lemma}
The generalised Laguerre polynomial $\Tmn{m,n}{\mu}(z)$ can also be written as the determinants
\begin{subequations}\begin{align} \label{def:Tmna}
 \Tmn{m,n}{\mu}(z)&=\det\left[\Lag{m+n}{\mu+j+k+1}(z)\right]_{j,k=0}^{n-1},&& m\geq0,\quad n\geq1,\\ \label{def:Tmnc}
 \Tmn{m,n}{\mu}(z)&=\det\left[\Lag{m+n-j-k}{\mu+2n-1}(z)\right]_{j,k=0}^{n-1},&& m\geq0,\quad n\geq1,\\ \label{def:Tmnd}
\Tmn{m,n}{\mu}(z)&=\det\left[\Lag{m+2-n+j+k}{\mu+2n-1}(z)\right]_{j,k=0}^{n-1},&& m\geq0,\quad n\geq1,\\ \label{def:Tmnb}
 \Tmn{m,n}{\mu}(z)&=(-1)^{\lfloor n/2\rfloor}\det\left[\Lag{m+j+1}{\mu+n+k}(z)\right]_{j,k=0}^{n-1},&& m\geq0,\quad n\geq1,\\ \label{def:Tmne}
\Tmn{m,n}{\mu}(z)&=(-1)^{\lfloor n/2\rfloor}\det\left[\Lag{m+1+j-k}{\mu+2n-1}(z)\right]_{j,k=0}^{n-1},&& m\geq0,\quad n\geq1,
\end{align}\end{subequations}
where $\Lag{n}{\a}(z)$ is the Laguerre polynomial with $\Lag{n}{\a}(z)=0$ if $n<0$.
\end{lemma}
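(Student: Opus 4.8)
The plan is to obtain all five identities from the original definition \eqref{def:Tmn} and the Wronskian form \eqref{def:Tmn2} of Lemma~\ref{lem:32}, using two Laguerre contiguity relations together with elementary determinant manipulations. Combining \eqref{lag1} with $k=1$ and \eqref{eq:lag_rel} gives the differential--difference relation $\deriv{}{z}\Lag{n}{\a}(z)=\Lag{n}{\a}(z)-\Lag{n}{\a+1}(z)$, whence, by iteration, $\ds\deriv[s]{}{z}\Lag{n}{\a}(z)=\sum_{i=0}^{s}(-1)^{i}\binom{s}{i}\Lag{n}{\a+i}(z)$; iterating \eqref{eq:lag_rel} itself gives $\ds\Lag{n}{\a}(z)=\sum_{i=0}^{s}(-1)^{i}\binom{s}{i}\Lag{n-i}{\a+s}(z)$ for every $s\ge0$, valid for all $n$ with the convention $\Lag{j}{\cdot}(z)=0$ for $j<0$. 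The one input from linear algebra needed beyond row/column permutations and transposition is that, for any sequence $(b_{i})_{i\ge0}$, the $n\times n$ matrix $[\sum_{i}(-1)^{i}\binom{j+k}{i}b_{i}]_{j,k=0}^{n-1}$ equals $C\,[b_{j+k}]_{j,k=0}^{n-1}\,C^{\mathsf{T}}$ with $C=[(-1)^{i}\binom{j}{i}]_{j,i=0}^{n-1}$ (by the Vandermonde identity $\binom{j+k}{i}=\sum_{p+q=i}\binom{j}{p}\binom{k}{q}$); since $\det C=(-1)^{n(n-1)/2}$ and $(\det C)^{2}=1$, this transformation leaves the determinant unchanged.

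For \eqref{def:Tmna} I would substitute the iterated differential--difference relation into \eqref{def:Tmn}, so that the $(j,k)$ entry becomes $\sum_{i}(-1)^{i}\binom{j+k}{i}\Lag{m+n}{\mu+1+i}(z)$; the factorisation above then collapses the determinant to $\det[\Lag{m+n}{\mu+j+k+1}(z)]_{j,k=0}^{n-1}$. For \eqref{def:Tmnb} I would apply the same relation inside the Wronskian $\Tmn{m,n}{\mu}(z)=\Wr(\Lag{m+1}{n+\mu}(z),\ldots,\Lag{m+n}{n+\mu}(z))$ of \eqref{def:Tmn2}, whose $(j,k)$ entry $\ds\deriv[j]{}{z}\Lag{m+k+1}{n+\mu}(z)$ becomes $\sum_{i}(-1)^{i}\binom{j}{i}\Lag{m+k+1}{n+\mu+i}(z)$; this matrix is $C$ times $[\Lag{m+k+1}{n+\mu+i}(z)]_{i,k=0}^{n-1}$, so $\Tmn{m,n}{\mu}(z)=(-1)^{n(n-1)/2}\det[\Lag{m+k+1}{n+\mu+i}(z)]_{i,k=0}^{n-1}$, and transposing yields \eqref{def:Tmnb} once one uses $(-1)^{n(n-1)/2}=(-1)^{\floor{n/2}}$ (equivalently $\binom{n}{2}\equiv\floor{n/2}\bmod 2$).

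For \eqref{def:Tmnc} I would start from \eqref{def:Tmna} and rewrite each entry $\Lag{m+n}{\mu+j+k+1}(z)$ using the iterated \eqref{eq:lag_rel} with $s=(n-1-j)+(n-1-k)$, obtaining $\sum_{i}(-1)^{i}\binom{(n-1-j)+(n-1-k)}{i}\Lag{m+n-i}{\mu+2n-1}(z)$; reversing the order of the rows and then of the columns (net sign $+1$) replaces the binomial argument by $j+k$, and the factorisation again collapses the determinant, this time to $\det[\Lag{m+n-j-k}{\mu+2n-1}(z)]_{j,k=0}^{n-1}$. Then \eqref{def:Tmnd} and \eqref{def:Tmne} follow from \eqref{def:Tmnc} by determinant symmetries alone: reversing the rows and then the columns of the matrix in \eqref{def:Tmnd} sends $\Lag{m+2-n+j+k}{\mu+2n-1}(z)$ to $\Lag{m+n-j-k}{\mu+2n-1}(z)$ with net sign $+1$, while transposing the matrix in \eqref{def:Tmne} and then reversing its columns sends $\Lag{m+1+j-k}{\mu+2n-1}(z)$ to $\Lag{m+n-j-k}{\mu+2n-1}(z)$ and contributes the prefactor $(-1)^{\floor{n/2}}$.

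The contiguity algebra is routine; the step that needs care is the sign bookkeeping — tracking the factors $(-1)^{n(n-1)/2}$ produced by the triangular conjugations and the factors $(-1)^{\floor{n/2}}$ produced by the row- and column-reversals, and invoking $(-1)^{n(n-1)/2}=(-1)^{\floor{n/2}}$ to reconcile them — together with verifying that the iterated Laguerre identities remain valid when lower indices drop below zero, which the convention $\Lag{j}{\cdot}(z)=0$ for $j<0$ guarantees since it is consistent with the telescoping.
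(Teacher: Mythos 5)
Your proposal is correct and is essentially the argument the paper intends: its own proof is a one-line remark that the identities follow from \eqref{lag1}, \eqref{eq:lag_rel} and determinant/Wronskian manipulations applied to \eqref{def:Tmn} or \eqref{def:Tmn2}, which is precisely what you carry out (with the iterated contiguity relations, the triangular $C\,[b_{j+k}]\,C^{\mathsf T}$ factorisation, and the sign identity $(-1)^{n(n-1)/2}=(-1)^{\floor{n/2}}$ all checking out). In effect you have supplied the details the paper omits, rather than a different route.
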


\begin{proof}
These identities are easily proved using the well-known formulae 
 (\ref{lag1}) and (\ref{eq:lag_rel}), %and 
%\[ \Lag{n}{\a}(z)=\Lag{n}{\a+1}(z)-%\Lag{n-1}{\a+1}(z),\]
 and properties of Wronskians in either \eqref{def:Tmn} or \eqref{def:Tmn2}.

\end{proof}

%\begin{remark} Note for us: the two solutions $T_{m,n}$ and $T_{m,n}$ correspon%d 
%to the universal characters 
%\eq
%UC_{\bfla,\phi}( \bold{t}, - ),\qquad 
%UC_{\phi,\bfla}( -, \mathbf{s} ) 
%\en
%with $\bfla=((m+1)^n)$ and 
%\eq
% t_j = \frac{\mu+2n}{j} -z 
%,\qquad
% s_j = \frac{\mu+2n}{j}+z 
%,\qquad j=1,2,\ldots 
%\en
%\end{remark}

\begin{lemma}\label{lem:38} The generalised Laguerre polynomial $\Tmn{m,n}{\mu}(z)$ satisfies the second-order, differential-difference equation
\beq\Tmn{m,n}{\mu}\deriv[2]{\Tmn{m,n}{\mu}}{z}-\left(\deriv{\Tmn{m,n}{\mu}}{z}\right)^{\!2}=\Tmn{m+1,n-1}{\mu}\Tmn{m-1,n+1}{\mu}.\label{eq:Tmn}\eeq
\end{lemma}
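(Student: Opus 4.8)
The identity \eqref{eq:Tmn} is a Hirota-type bilinear relation, and the natural strategy is to recognise it as an instance of the Jacobi (Desnanot--Jacobi, or Plücker) determinant identity applied to a suitable minor structure. The plan is to use the representation of $\Tmn{m,n}{\mu}$ as a Wronskian of consecutive associated Laguerre polynomials from \eqref{def:Tmn2}, together with the derivative rule \eqref{lag1}. Concretely, write $f_k(z)=L_{m+k}^{(n+\mu)}(z)$ so that $\Tmn{m,n}{\mu}(z)=\Wr(f_1,f_2,\ldots,f_n)$ up to sign, and observe that differentiating shifts parameters in a controlled way; the key point is that the Wronskian of $n$ consecutive members is the $n\times n$ determinant whose $(j,k)$ entry is a single derivative of a fixed seed function, so all four polynomials appearing in \eqref{eq:Tmn} are minors of one $(n{+}1)\times(n{+}1)$ ``master'' matrix built from the seed $L_{m+n}^{(\mu+1)}(z)$ and its derivatives.

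The concrete steps I would carry out are: (i) Fix the $(n{+}1)\times(n{+}1)$ matrix $M=\bigl[\tfrac{\d^{j+k}}{\d z^{j+k}}L_{m+n+1}^{(\mu+1)}(z)\bigr]_{j,k=0}^{n}$ (raising the Laguerre index by one so that its size-$n$ corner minors reproduce $T_{m,n}$, $T_{m\pm1,n\mp1}$ after using \eqref{lag1} and \eqref{eq:lag_rel} to reconcile the index/parameter bookkeeping). (ii) Identify $\Tmn{m,n}{\mu}$ with the central $n\times n$ minor, $\deriv{\Tmn{m,n}{\mu}}{z}$ and its second derivative with the appropriate bordered minors (this is where the Wronskian structure is essential: the derivative of an $n\times n$ Wronskian minor is again a minor of the bordered $(n{+}1)\times(n{+}1)$ matrix). (iii) Apply the Jacobi identity
\beq
D\cdot D_{1,n+1}^{1,n+1} = D_1^1\cdot D_{n+1}^{n+1} - D_1^{n+1}\cdot D_{n+1}^1,
\eeq
where $D$ is $\det M$, $D_i^j$ denotes $M$ with row $i$ and column $j$ deleted, and $D_{i,i'}^{j,j'}$ the doubly-deleted minor. (iv) Translate each of the six minors back into the $T$-polynomials using Lemma~\ref{lem:32} and the determinant forms in the preceding Lemma, checking that the signs $(-1)^{n(n-1)/2}$ etc.\ cancel correctly.

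An alternative, and perhaps cleaner, route is to work entirely on the Schur-polynomial side via Lemma~\ref{schur_defn}: under the substitution \eqref{tvals} the polynomial $\Tmn{m,n}{\mu}$ becomes (up to sign) the rectangular Schur polynomial $S_{((m+1)^n)}$, and bilinear identities of exactly the shape \eqref{eq:Tmn} are standard for Schur functions associated to rectangular partitions --- they are the simplest Plücker relations on the Sato Grassmannian (equivalently, the lowest-weight case of the KP/Toda bilinear hierarchy, for which the Wronskian/Jacobi identity is the finite-dimensional shadow). One then only needs that $S_{((m+1)^n)}$, $S_{((m+2)^{n-1})}$ and $S_{((m)^{n+1})}$ correspond to the Young-diagram operations of adding/removing a row and a column, which is transparent combinatorially, and that the variable specialisation \eqref{tvals} is the same for all four polynomials (it depends on $\mu+n+1$, which is preserved under $(m,n)\mapsto(m\pm1,n\mp1)$ --- exactly the reason the parameter $\mu$ is unshifted on the right-hand side of \eqref{eq:Tmn}).

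The main obstacle I anticipate is not the algebraic identity itself --- Jacobi/Plücker is a one-line input --- but the careful bookkeeping of indices, Laguerre upper parameters, and the powers of $-1$: the four objects $T_{m,n}$, $T_{m+1,n-1}$, $T_{m-1,n+1}$ live most naturally with different superscripts $(n+\mu)$, $(n-1+\mu)$, $(n+1+\mu)$, and one must use the contiguous relation \eqref{eq:lag_rel} (as already done in the proof of Lemma~\ref{lem:32}) to bring them to a common frame before the minors of a single matrix can be recognised. One also has to handle the boundary/degenerate cases $n=1$ (where the right-hand side involves $T_{m+1,0}$, which should be interpreted as $1$) and small $m$; I would state the convention $\Tmn{m,0}{\mu}\equiv1$ and note $\Tmn{m,n}{\mu}=0$ for $n<0$ to make the identity hold uniformly. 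Once the common frame is fixed, verifying that the six Jacobi minors are precisely $\{T_{m,n}^2,\ (T_{m,n})T_{m,n}',\ (T_{m,n}')^2\ \text{resp.}\ T_{m,n}T_{m,n}'',\ T_{m+1,n-1}T_{m-1,n+1}\}$ with matching signs is a finite, mechanical check.
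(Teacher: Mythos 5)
Your primary route --- realising all the polynomials in \eqref{eq:Tmn} as minors of a single bordered matrix of derivatives of one seed function and applying the Jacobi identity \eqref{JacobiId} --- is correct, and is in fact exactly the alternative proof the paper records in the remark following the lemma; the paper's own proof instead cites Sylvester's theorem that the double Wronskian $\mathcal{A}_n(\ph)=\det[\ph^{(j+k)}]_{j,k=0}^{n-1}$ satisfies the Toda equation, which is the same fact packaged as a citation. One correction to your step (i): the master matrix must be built from the seed $\Lag{m+n}{\mu+1}$, not $\Lag{m+n+1}{\mu+1}$. The whole point of the determinant form \eqref{def:Tmn} is that $\Tmn{m,n}{\mu}$ depends on $(m,n)$ only through the seed index $m+n$ and the size $n$; since $m+n$ is invariant under $(m,n)\mapsto(m\mp1,n\pm1)$, the full $(n{+}1)\times(n{+}1)$ determinant of $\d^{j+k}\Lag{m+n}{\mu+1}/\d z^{j+k}$ is $\Tmn{m-1,n+1}{\mu}$ and its leading $(n{-}1)\times(n{-}1)$ minor is $\Tmn{m+1,n-1}{\mu}$, with no change of $\mu$ and no use of \eqref{eq:lag_rel} whatsoever. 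The ``reconciliation of superscripts'' you worry about only arises because you start from the Wronskian form \eqref{def:Tmn2}; starting from \eqref{def:Tmn} the bookkeeping disappears.

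By contrast, your proposed Schur-function shortcut has a genuine gap. First, the specialisation \eqref{tvals} is \emph{not} the same for all four polynomials: the constant $\mu+n+1$ is not preserved under $(m,n)\mapsto(m\pm1,n\mp1)$ since $n$ changes while $\mu$ does not, so the rectangular Plücker/Toda identity at a fixed $\bold{t}$ identifies $S_{((m+2)^{n-1})}$ and $S_{(m^{n+1})}$ with $T$-polynomials at \emph{shifted} values of $\mu$, not with $\Tmn{m+1,n-1}{\mu}$ and $\Tmn{m-1,n+1}{\mu}$. Second, and more seriously, along the curve $t_j=c/j-z$ one has $\partial_z=-\sum_{j\ge1}\partial_{t_j}\neq-\partial_{t_1}$ (e.g.\ $-\partial_z p_2=p_1+p_0\neq p_1$), so the $t_1$-Wronskian structure underlying the rectangular Toda identity does not transfer directly to the $z$-derivatives appearing in \eqref{eq:Tmn}; transporting it requires exactly the row operations and contiguous relations you were hoping to avoid, and the naive transfer yields a false identity (a quick check at $(m,n)=(0,2)$ shows the right-hand side must be $\Tmn{1,1}{\mu}\Tmn{-1,3}{\mu}=-\Lag{2}{\mu+1}$, not a $\mu$-shifted product). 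Stick with the Jacobi-identity route on the determinant form \eqref{def:Tmn}, together with your sensible convention $\Tmn{m,0}{\mu}\equiv1$ for the boundary case $n=1$.
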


\begin{proof}According to Sylvester \cite{refSyl}, see also \cite{refMuir},
if $\mathcal{A}_n(\ph)$ is the double Wronskian given by
\[ \mathcal{A}_n(\ph) = \det\left[\deriv[j+k]{\ph}{z}\right]_{j,k=0}^{n-1}=\Wr\left(\ph,\deriv{\ph}{z},\ldots,\deriv[n-1]{\ph}{z}\right),\]
then $\mathcal{A}_n(\ph)$ satisfies the
\beq\label{eq:toda} \mathcal{A}_n\deriv[2]{\mathcal{A}_n}{z}-\left(\deriv{\mathcal{A}_n}{z}\right)^{\!2}=\mathcal{A}_{n+1}\mathcal{A}_{n-1},\eeq
which is now known as the Toda equation.
From \eqref{def:Tmn}
\[\Tmn{m,n}{\mu}= \det\left[\deriv[j+k]{\Lag{m+n}{\mu}}{z}\right]_{j,k=0}^{n-1}
=\Wr\left(\Lag{m+n}{\mu},\deriv{\Lag{m+n}{\mu}}{z},\ldots,\deriv[n-1]{\Lag{m+n}{\mu}}{z}\right).\]
If we let $\ph=\Lag{m+n}{\mu}$ and $\mathcal{A}_n\left(\Lag{m+n}{\mu}\right)=\Tmn{m,n}{\mu}$, then we need to show that
\[ \mathcal{A}_{n+1}\left(\Lag{m+n}{\mu}\right)=\Tmn{m-1,n+1}{\mu},\qquad\mathcal{A}_{n-1}\left(\Lag{m+n}{\mu}\right)=\Tmn{m+1,n-1}{\mu}.\]
By definition
\begin{align*}
\mathcal{A}_{n+1}\left(\Lag{m+n}{\mu}\right) &= \Wr\left(\Lag{m+n}{\mu},\deriv{\Lag{m+n}{\mu}}{z},\ldots,\deriv[n]{\Lag{m+n}{\mu}}{z}\right)=\Tmn{m-1,n+1}{\mu},\\
\mathcal{A}_{n-1}\left(\Lag{m+n}{\mu}\right)&= \Wr\left(\Lag{m+n}{\mu},\deriv{\Lag{m+n}{\mu}}{z},\ldots,\deriv[n-2]{\Lag{m+n}{\mu}}{z}\right)=\Tmn{m+1,n-1}{\mu},
\end{align*}
which proves the result. 
\end{proof}

\begin{remarks}{\rm
\begin{enumerate}[(i)]
\item[]
\item Lemma \ref{lem:38} can also be proved using the well-known \textit{Jacobi Identity} \cite{refDod}, sometimes known as the \textit{Lewis Carroll formula}, for the determinant $\mathcal{D}$
\beq \mathcal{D}\, \mathcal{D}\bn{i,k}{j,\ell}= \mathcal{D}\bn{i}{j}\mathcal{D}\bn{k}{\ell} - \mathcal{D}\bn{k}{j} \mathcal{D}\bn{i}{\ell}\label{JacobiId}\eeq
where $\mathcal{D}\bn{i}{j} $
is the determinant with the $i^{\rm th}$ row and the $j^{\rm th}$ column removed from $\mathcal{D}$. If 
\[\mathcal{D}=\Tmn{m-1,n+1}{\mu}= \det \lf[\frac{\d^{j+k} }{\d z^{j+k}} L_{m+n}^{(\mu+1)}\ri]_{j,k=0}^{n}
=\Wr\left(\Lag{m+n}{\mu+1},\deriv{\Lag{m+n}{\mu+1}}{z},\ldots,\deriv[n]{\Lag{m+n}{\mu+1}}{z}\right),\]
from \eqref{def:Tmn}, then
\comment{\begin{align*}
\mathcal{D}\bn{n,n+1}{n,n+1} &=\Tmn{m+1,n-1}{\mu},\qquad \mathcal{D}\bn{n}{n}= \deriv[2]{\Tmn{m,n}{\mu}}{z},\qquad
\mathcal{D}\bn{n+1}{n+1}=\Tmn{m,n}{\mu},\qquad \mathcal{D}\bn{n+1}{n}= \mathcal{D}\bn{n}{n+1}=\deriv{\Tmn{m,n}{\mu}}{z},
\end{align*}}%
\begin{align*}
\mathcal{D}\bn{n,n+1}{n,n+1} 
&=\Wr\left(\Lag{m+n}{\mu+1},\deriv{\Lag{m+n}{\mu+1}}{z},\ldots,\deriv[n-2]{\Lag{m+n}{\mu+1}}{z}\right)=%\W_{n-1}\left(\Lag{m+n}{\mu+1}\right)=
\Tmn{m+1,n-1}{\mu},\\
\mathcal{D}\bn{n+1}{n+1}&=\Wr\left(\Lag{m+n}{\mu+1},\deriv{\Lag{m+n}{\mu+1}}{z},\ldots,\deriv[n-1]{\Lag{m+n}{\mu+1}}{z}\right)%=\W_{n}\left(\Lag{m+n}{\mu+1}\right)
=\Tmn{m,n}{\mu},\\
\mathcal{D}\bn{n}{n+1}&=\mathcal{D}\bn{n+1}{n}=\Wr\left(\Lag{m+n}{\mu+1},\deriv{\Lag{m+n}{\mu+1}}{z},\ldots,\deriv[n-2]{\Lag{m+n}{\mu+1}}{z},\deriv[n]{\Lag{m+n}{\mu+1}}{z}\right)\\ &%=\deriv{}{z}W_{n}\left(\Lag{m+n}{\mu+1}\right)
=\deriv{}{z} \Wr\left(\Lag{m+n}{\mu+1},\deriv{\Lag{m+n}{\mu+1}}{z},\ldots,\deriv[n-2]{\Lag{m+n}{\mu+1}}{z}\right) = \deriv{\Tmn{m,n}{\mu}}{z},\\
\mathcal{D}\bn{n}{n}&=\deriv{}{z}\Wr\left(\Lag{m+n}{\mu+1},\deriv{\Lag{m+n}{\mu+1}}{z},\ldots,\deriv[n-2]{\Lag{m+n}{\mu+1}}{z},\deriv[n]{\Lag{m+n}{\mu+1}}{z}\right)
= \deriv[2]{\Tmn{m,n}{\mu}}{z},
\end{align*}
and so \eqref{eq:Tmn} follows from the Jacobi Identity \eqref{JacobiId} with $i=k=n$ and $j=\ell=n+1$.

\item We note that the generalised Hermite polynomial 
\[H_{m,n}(z)=\Wr\big(H_m(z),H_{m+1}(z),\ldots,H_{m+n-1}(z) \big),\]
with $H_k(z)$ the Hermite polynomial,
which arises in the description of rational solutions of \PIV, satisfies two second-order, differential-difference equations, see \cite[equation (4.19)]{refNY99}.
\end{enumerate}
}\end{remarks}

The generalised Laguerre polynomial $\Tmn{m,n}{\mu}(z)$ satisfies a number of discrete equations. In the following Lemma we prove two of these using Jacobi's Identity \eqref{JacobiId}.

\begin{lemma}\label{lem:312}The generalised Laguerre polynomial $\Tmn{m,n}{\mu}(z)$ satisfies the equations
\begin{align}\Tmn{m,n+1}{\mu-1}\, \Tmn{m,n-1}{\mu+1}&=\Tmn{m+1,n}{\mu-1}\Tmn{m-1,n}{\mu+1}-\left(\Tmn{m,n}{\mu}\right)^{\!2},\label{iden:Tmn1}\\
 \Tmn{m,n+1}{\mu-1} \,\Tmn{m+1,n-1}{\mu+1}&= {\Tmn{m+1,n}{\mu-1}\,\Tmn{m,n}{\mu+1}}-{\Tmn{m+1,n}{\mu}\,\Tmn{m,n}{\mu}}. \label{iden:Tmn2}
 \end{align}
\end{lemma}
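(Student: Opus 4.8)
The plan is to prove both identities \eqref{iden:Tmn1} and \eqref{iden:Tmn2} by the same mechanism used for Lemma~\ref{lem:38}, namely by exhibiting each as an instance of the Jacobi (Lewis Carroll) identity \eqref{JacobiId} applied to a suitably chosen $(n{+}1)\times(n{+}1)$ determinant of the form $\mathcal{D}=\det[\,\partial^{j+k}_z\varphi\,]_{j,k=0}^{n}$ for an appropriate Laguerre seed $\varphi=L^{(\cdot)}_{m+n+1}$ (or $L^{(\cdot)}_{m+n}$), and then matching the four first-minors $\mathcal{D}\bn{i}{j}$, $\mathcal{D}\bn{k}{\ell}$, $\mathcal{D}\bn{k}{j}$, $\mathcal{D}\bn{i}{\ell}$ and the double minor $\mathcal{D}\bn{i,k}{j,\ell}$ to the generalised Laguerre polynomials appearing on both sides. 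The essential input is the pair of Laguerre identities \eqref{lag1} and \eqref{eq:lag_rel}, which let one shift the degree/order parameters $(m,n,\mu)$ around, together with the several equivalent determinantal representations \eqref{def:Tmn}, \eqref{def:Tmna}--\eqref{def:Tmne}; these provide the freedom to recognise a given bordered Wronskian as a $\Tmn{m',n'}{\mu'}$ with the right indices.

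For \eqref{iden:Tmn1}, I would start from the representation that makes the parameter $\mu$ appear as a column/row border. Concretely, take $\mathcal{D}$ to be the determinant in \eqref{def:Tmna} (or its Wronskian form) at size $n{+}1$ with top parameter $\mu-1$, so $\mathcal{D}=\Tmn{m,n+1}{\mu-1}$; then deleting the last row and column should give $\Tmn{m,n}{\mu}$ up to a harmless index shift coming from \eqref{eq:lag_rel}, deleting the first row and column (or an inner one) should produce $\Tmn{m,n-1}{\mu+1}$, and the two mixed deletions should produce $\Tmn{m+1,n}{\mu-1}$ and $\Tmn{m-1,n}{\mu+1}$. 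One must be careful to choose $i,j,k,\ell$ (e.g.\ $i=k=0$ together with the last index, or $i=0$, $k=n$) so that the Jacobi identity returns exactly the combination $\mathcal{D}\bn{i}{j}\mathcal{D}\bn{k}{\ell}-\mathcal{D}\bn{k}{j}\mathcal{D}\bn{i}{\ell}$ with the $-(\Tmn{m,n}{\mu})^2$ sign; the sign of the square term forces $\mathcal{D}\bn{k}{j}=\mathcal{D}\bn{i}{\ell}=\pm\Tmn{m,n}{\mu}$, which pins down the admissible choice of deleted indices.

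For \eqref{iden:Tmn2} the strategy is identical but the bookkeeping differs, since now one side is a product of two distinct $\Tmn{}{}$ rather than a square, so the four minors must be paired as two genuinely different products $\mathcal{D}\bn{i}{j}\mathcal{D}\bn{k}{\ell}$ and $\mathcal{D}\bn{k}{j}\mathcal{D}\bn{i}{\ell}$. Here I expect deleting rows/columns near opposite ends of $\mathcal{D}=\Tmn{m,n+1}{\mu-1}$ to yield $\Tmn{m+1,n-1}{\mu+1}$ as the double minor $\mathcal{D}\bn{i,k}{j,\ell}$ and the factors $\Tmn{m+1,n}{\mu-1}$, $\Tmn{m,n}{\mu+1}$, $\Tmn{m+1,n}{\mu}$, $\Tmn{m,n}{\mu}$ as the four single minors; the relations \eqref{lag1}, \eqref{eq:lag_rel} again being what lets a ``border a Wronskian by $\partial^n$ of the seed'' operation be re-identified as incrementing $m\mapsto m-1$, $n\mapsto n+1$ or shifting $\mu$.

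The main obstacle is purely organisational rather than conceptual: one has to make the right choice of seed representation (which of \eqref{def:Tmn}, \eqref{def:Tmna}--\eqref{def:Tmne} to start from) and the right choice of the four deleted indices $(i,j,k,\ell)$ so that \emph{all five} minors simultaneously collapse to generalised Laguerre polynomials with the parameter shifts that appear in \eqref{iden:Tmn1}--\eqref{iden:Tmn2}, and to keep track of the powers of $-1$ generated by the column/row reorderings (the $(-1)^{n(n-1)/2}$ and $(-1)^{\lfloor n/2\rfloor}$ prefactors) so that the signs on the right-hand sides come out exactly as stated. Once the correct starting determinant and index choice are identified, each step is a one-line application of \eqref{lag1} or \eqref{eq:lag_rel} together with elementary column operations on Wronskians, exactly as in the proof of Lemma~\ref{lem:38} and in Remark (i) following it; so I would organise the write-up as two short paragraphs, one per identity, each displaying $\mathcal{D}$, listing the five minors, and invoking \eqref{JacobiId}.
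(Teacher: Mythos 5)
Your proposal follows essentially the same route as the paper: both identities are obtained by applying the Jacobi/Lewis Carroll identity \eqref{JacobiId} to an $(n+1)\times(n+1)$ determinantal representation of $\Tmn{m,n+1}{\mu-1}$, deleting the first and last rows and columns, and identifying the five resulting minors as shifted $\Tmn{m',n'}{\mu'}$ via \eqref{lag1} and \eqref{eq:lag_rel}. The organisational choice you flag is resolved in the paper by using the symmetric Hankel form \eqref{def:Tmn} for \eqref{iden:Tmn1} (so the two cross-minors coincide and produce the squared term) and the Wronskian form \eqref{def:Tmn2} for \eqref{iden:Tmn2} (so they differ and give two distinct products).
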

\begin{proof}As the $n+1$-dimensional determinant in \eqref{iden:Tmn1} and \eqref{iden:Tmn2} is the same, then to apply Jacobi's Identity \eqref{JacobiId}, it'll be necessary to use two different representations of $\Tmn{m,n+1}{\mu-1}$.

To prove \eqref{iden:Tmn1}, we use $\Tmn{m,n}{\mu}$ as defined by \eqref{def:Tmn} and so we consider
\[\mathcal{A}=\Tmn{m,n+1}{\mu-1} %=\W_{n+1}\left(\Lag{m+n+1}{\mu-1}\right)
=\Wr\left(\Lag{m+n+1}{\mu},\deriv{\Lag{m+n+1}{\mu}}{z},\ldots,\deriv[n]{\Lag{m+n+1}{\mu}}{z}\right),\]
then
\begin{align*}
\mathcal{A}\bn{1}{1} 
%&=\W_n\left(\deriv[2]{\Lag{m+n+1}{\mu}}{z}\right)=\W_n\left(\Lag{m+n-1}{\mu+2}\right)=\Tmn{m-1,n}{\mu+1}\\
&=\Wr\left(\deriv[2]{\Lag{m+n+1}{\mu}}{z},\deriv[3]{\Lag{m+n+1}{\mu}}{z},\ldots,\deriv[n+1]{\Lag{m+n+1}{\mu}}{z}\right)\\
&=\Wr\left(\Lag{m+n-1}{\mu+2},\deriv{\Lag{m+n-1}{\mu+2}}{z},\ldots,\deriv[n-1]{\Lag{m+n-1}{\mu+2}}{z}\right)=\Tmn{m-1,n}{\mu+1},\\
\mathcal{A}\bn{n+1}{n+1} 
%&=\W_n\left(\Lag{m+n+1}{\mu}\right)=\Tmn{m+1,n}{\mu-1}\\
&=\Wr\left(\Lag{m+n+1}{\mu},\deriv{\Lag{m+n+1}{\mu}}{z},\ldots,\deriv[n-1]{\Lag{m+n+1}{\mu}}{z}\right)=\Tmn{m+1,n}{\mu-1},\\
\mathcal{A}\bn{1}{n+1} &=\mathcal{A}\bn{n+1}{1}
%=\W_n\left(\deriv[2]{\Lag{m+n+1}{\mu}}{z}\right) =(-1)^n\W_n\left(\Lag{m+n}{\mu+1}\right) =(-1)^n\Tmn{m,n}{\mu}\\&
=\Wr\left(\deriv{\Lag{m+n+1}{\mu}}{z},\deriv[2]{\Lag{m+n+1}{\mu}}{z},\ldots,\deriv[n]{\Lag{m+n+1}{\mu}}{z}\right)\\
&=(-1)^n\Wr\left(\Lag{m+n}{\mu+1},\deriv{\Lag{m+n}{\mu+1}}{z},\ldots,\deriv[n-1]{\Lag{m+n}{\mu+1}}{z}\right)=(-1)^n\Tmn{m,n}{\mu},\\
\mathcal{A}\bn{1,n+1}{1,n+1} 
%&=\W_{n-1}\left(\deriv[2]{\Lag{m+n+1}{\mu}}{z}\right) =\W_{n-1}\left(\Lag{m+n-1}{\mu+2}\right)=\Tmn{m,n-1}{\mu+1}\\
&=\Wr\left(\deriv[2]{\Lag{m+n+1}{\mu}}{z},\deriv[3]{\Lag{m+n+1}{\mu}}{z},\ldots,\deriv[n]{\Lag{m+n+1}{\mu}}{z}\right)\\
&=\Wr\left(\Lag{m+n-1}{\mu+2},\deriv{\Lag{m+n-1}{\mu+2}}{z},\ldots,\deriv[n-2]{\Lag{m+n-1}{\mu+2}}{z}\right)=\Tmn{m,n-1}{\mu+1},
\end{align*}

since
\[\deriv{}{z}\Lag{m}{\a}(z)=-\Lag{m-1}{\a+1}(z),\qquad \deriv[2]{}{z}\Lag{m}{\a}(z)=\Lag{m-2}{\a+2}(z).\]
Then using Jacobi's Identity \eqref{JacobiId} with $i=k=1$ and $j=\ell=n+1$,
we obtain \eqref{iden:Tmn1}
%\[\Tmn{m,n+1}{\mu-1}\, \Tmn{m,n-1}{\mu+1}=\Tmn{m+1,n}{\mu-1}\Tmn{m-1,n}{\mu+1}-\left(\Tmn{m,n}{\mu}\right)^{\!2},\]
as required. 

To prove \eqref{iden:Tmn2}, we use the representation of $\Tmn{m,n}{\mu}$ given by \eqref{def:Tmn2}, so we consider
\[\mathcal{B}=\Tmn{m,n+1}{\mu-1} =\Wr\left( L_{m+1}^{(n+\mu)}, L_{m+2}^{(n+\mu)}, \ldots, L_{m+n}^{(n+\mu)}, L_{m+n+1}^{(n+\mu)} \right),\]
then
\[\begin{split} 
\mathcal{B}\bn{1}{1} &=\Wr\left( \deriv{}{z}L_{m+2}^{(n+\mu)}, \deriv{}{z}L_{m+3}^{(n+\mu)}, \ldots, \deriv{}{z}L_{m+n}^{(n+\mu)}, \deriv{}{z}L_{m+n+1}^{(n+\mu)} \right) \\
&=(-1)^n\Wr\left( L_{m+1}^{(n+\mu+1)} ,L_{m+2}^{(n+\mu+1)}, \ldots, L_{m+n-1}^{(n+\mu+1)},L_{m+n}^{(n+\mu+1)} \right) =(-1)^n \Tmn{m,n}{\mu+1}\\%[2.5pt]
\mathcal{B}\bn{n+1}{n+1} &
=\Wr\left( L_{m+1}^{(n+\mu)}, L_{m+2}^{(n+\mu)}, \ldots, L_{m+n}^{(n+\mu)}\right) =\Tmn{m,n}{\mu} \\%[2.5pt]
\mathcal{B}\bn{n+1}{1} &
=\Wr\left( L_{m+2}^{(n+\mu)}, L_{m+3}^{(n+\mu)}, \ldots, L_{m+n}^{(n+\mu)}, L_{m+n+1}^{(n+\mu)} \right)=\Tmn{m+1,n}{\mu}\\%[2.5pt]
\mathcal{B}\bn{1}{n+1} &=\Wr\left( \deriv{}{z}L_{m+1}^{(n+\mu)}, \deriv{}{z}L_{m+2}^{(n+\mu)}, \ldots, \deriv{}{z}L_{m+n-1}^{(n+\mu)}, \deriv{}{z}L_{m+n}^{(n+\mu)} \right) \\
&=(-1)^n\Wr\left( L_{m}^{(n+\mu+1)} ,L_{m+1}^{(n+\mu+1)}, \ldots, L_{m+n-2}^{(n+\mu+1)},L_{m+n-1}^{(n+\mu+1)} \right) =(-1)^n \Tmn{m-1,n}{\mu+1}\\%[2.5pt]
\mathcal{B}\bn{1,n+1}{1,n+1} &=\Wr\left( \deriv{}{z}L_{m+2}^{(n+\mu)}, \deriv{}{z}L_{m+3}^{(n+\mu)}, \ldots, \deriv{}{z}L_{m+n}^{(n+\mu)} \right) \\
&=(-1)^{n-1} \Wr\left( L_{m+1}^{(n+\mu+1)} ,L_{m+2}^{(n+\mu+1)}, \ldots, L_{m+n-1}^{(n+\mu+1)}\right) =(-1)^{n-1} \Tmn{m+1,n-1}{\mu+1}
 \end{split}\]
 and so using Jacobi's Identity with $i=k=1$ and $j=\ell=n+1$ gives \eqref{iden:Tmn2} as required.
\comment{\[\begin{split} 
&\mathcal{B}\,\mathcal{B}\bn{1,n+1}{1,n+1}=(-1)^{n-1} \Tmn{m,n+1}{\mu-1} \,\Tmn{m+1,n-1}{\mu+1},\\
&\mathcal{B}\bn{1}{1} \mathcal{B}\bn{n+1}{n+1} -\mathcal{B}\bn{1}{n+1} \mathcal{B}\bn{n+1}{1}
=(-1)^n\left\{ \Tmn{m,n}{\mu+1}\,\Tmn{m,n}{\mu}-\Tmn{m+1,n}{\mu}\, \Tmn{m-1,n}{\mu+1}\right\}.
 \end{split}\]
Hence the identity \eqref{iden:Tmn2} follows from Jacobi's Identity \eqref{JacobiId}.}
%\[ \mathcal{D}\,\mathcal{D}\bn{1,n+1}{1,n+1}=\mathcal{D}\bn{1}{1}\mathcal{D}\bn{1}{n+1} -\mathcal{D}\bn{1}{n+1}\mathcal{D}\bn{n+1}{1}.\]
\end{proof}

The generalised Laguerre polynomial $\Tmn{m,n}{\mu}(z)$ satisfies a number of Hirota bilinear equations and discrete bilinear equations. 
%Some of these are given in the following Conjecture, which has been verified for numerous $m$ and $n$ using a symbolic manipulation program.
%Maple\texttrademark.

\begin{lemma}\label{lem:313}The generalised Laguerre polynomial $\Tmn{m,n}{\mu}(z)$ satisfies the Hirota bilinear equations 
\begin{subequations}\label{eq324}\begin{align}
&\D_z\left(\Tmn{m,n-1}{\mu+1}\cdot \Tmn{m,n}{\mu}\right)=\Tmn{m+1,n-1}{\mu}\,\Tmn{m-1,n}{\mu+1},\label{324a}\\
&\D_z\left(\Tmn{m,n-1}{\mu+1}\cdot \Tmn{m+1,n}{\mu-1}\right)=\Tmn{m+1,n-1}{\mu}\,\Tmn{m,n}{\mu},\label{324b}\\
&\D_z\left(\Tmn{m,n-1}{\mu+1}\cdot \Tmn{m,n}{\mu-1}\right)=\Tmn{m+1,n-1}{\mu}\,\Tmn{m-1,n}{\mu},\label{324c}\\
&\D_z\left(\Tmn{m+1,n}{\mu}\cdot \Tmn{m,n}{\mu+1}\right)=\Tmn{m+1,n-1}{\mu+1}\,\Tmn{m,n+1}{\mu},\label{324d}\\
&\D_z\left(\Tmn{m,n}{\mu}\cdot \Tmn{m,n}{\mu+1}\right)=\Tmn{m+1,n-1}{\mu+1}\,\Tmn{m-1,n+1}{\mu},\label{324e}\\
&\D_z\left(\Tmn{m+1,n}{\mu}\cdot \Tmn{m,n}{\mu}\right)=\Tmn{m+1,n-1}{\mu+1}\,\Tmn{m,n+1}{\mu-1},\label{324f}
\end{align}\end{subequations}
where $\D_z$ is the Hirota bilinear operator
\beq {\D_z(f\cdot g)=\deriv{f}{z}g-f\deriv{g}{z}},\label{Hirota}\eeq
and the discrete bilinear equation
\beq
\Tmn{m,n}{\mu}\,\Tmn{m,n-1}{\mu}-\Tmn{m-1,n}{\mu}\,\Tmn{m+1,n-1}{\mu}=\Tmn{m,n}{\mu-1}\,\Tmn{m,n-1}{\mu+1}.\eeq
\end{lemma}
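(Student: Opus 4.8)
The plan is to derive all the identities of the lemma --- the six Hirota bilinear equations \eqref{eq324} and the concluding discrete bilinear equation --- by reduction to a Jacobi-type (Lewis Carroll) identity, exactly in the spirit of the proofs of Lemmas~\ref{lem:38} and~\ref{lem:312}. For a fixed identity I would first choose, among the representations \eqref{def:Tmn}, \eqref{def:Tmn2} and \eqref{def:Tmna}--\eqref{def:Tmne}, one in which every $\Tmn{m,n}{\mu}$ occurring becomes a Wronskian built from a \emph{common} list of functions: either consecutive derivatives $\varphi,\varphi',\varphi'',\dots$ of a single associated Laguerre polynomial $\varphi=\Lag{m+n}{\mu+1}$, or consecutive associated Laguerre polynomials $\Lag{k}{\alpha}$ sharing a fixed superscript $\alpha$. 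Matching the degrees and superscripts is done with the derivative rule \eqref{lag1}, $\tfrac{\d}{\d z}\Lag{k}{\alpha}=-\Lag{k-1}{\alpha+1}$, which trades a unit shift of the degree or the superscript for a $z$-derivative, together with the contiguous relation \eqref{eq:lag_rel}, $\Lag{m}{\alpha}=\Lag{m}{\alpha+1}-\Lag{m-1}{\alpha+1}$, to absorb any residual shift of $\mu$. Since \eqref{lag1} and \eqref{eq:lag_rel} contain no denominators, the column operations they induce are legitimate (unlike the rewriting in the commented-out alternative proof of Lemma~\ref{lem:32}), but one must carry along the sign prefactors $(-1)^{n(n-1)/2}$ and $(-1)^{\lfloor n/2\rfloor}$ from Lemma~\ref{lem:32} and the determinant representations, as well as the signs from reordering Wronskian columns.

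Once all the polynomials in an identity are realised as minors of one $(n+1)\times(n+1)$ determinant $\mathcal D$, the identity follows from Jacobi's identity \eqref{JacobiId} for a suitable choice of two deleted rows and two deleted columns. The Hirota derivative $\D_z(f\cdot g)=f'g-fg'$ on the right of each bilinear equation is produced using the fact that, because consecutive rows of $\mathcal D$ are successive $z$-derivatives, the $z$-derivative of a Wronskian-type minor of $\mathcal D$ is again a minor of $\mathcal D$ --- exactly the computation $\mathcal D\bn{n+1}{n}=\tfrac{\d}{\d z}\Tmn{m,n}{\mu}$ used in the remark after Lemma~\ref{lem:38}; choosing the deleted indices so that $\mathcal D\bn{i}{j},\mathcal D\bn{k}{\ell}$ and $\mathcal D\bn{k}{j},\mathcal D\bn{i}{\ell}$ pair up as $\{f',g\}$ and $\{f,g'\}$ turns the two cross terms of \eqref{JacobiId} into $f'g-fg'$, while $\mathcal D\cdot\mathcal D\bn{i,k}{j,\ell}$ is the product of the two generalised Laguerre polynomials on the left. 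The cleanest instance is \eqref{324f}: writing $\Tmn{m,n}{\mu}$ via \eqref{def:Tmn2} with $f_i=\Lag{m+i}{n+\mu}$ makes all four factors Wronskians of consecutive $f_i$'s with common superscript $n+\mu$, and \eqref{324f} is then precisely the standard bilinear Wronskian identity $\D_z\big(\Wr(f_2,\dots,f_{n+1})\cdot\Wr(f_1,\dots,f_n)\big)=\Wr(f_1,\dots,f_{n+1})\cdot\Wr(f_2,\dots,f_n)$, which is \eqref{JacobiId} for $\mathcal D=\Wr(f_1,\dots,f_{n+1})$ with the last two rows and the first and last columns deleted. The other five Hirota equations are handled the same way after the additional bookkeeping above, and the concluding discrete bilinear equation is \eqref{JacobiId} with no derivative --- one more three-term Wronskian relation of the type that gave \eqref{iden:Tmn1}--\eqref{iden:Tmn2}. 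As a consistency check the bilinear Toda equation \eqref{eq:Tmn} is recovered, since $\tfrac12\D_z^2(\Tmn{m,n}{\mu}\cdot\Tmn{m,n}{\mu})$ equals the right side of \eqref{eq:Tmn} and this is the symmetric-deletion case of \eqref{JacobiId} already treated after Lemma~\ref{lem:38}; conceptually, the $\Tmn{m,n}{\mu}$ are Wronskian tau functions in the three lattice directions $m,n,\mu$, and \eqref{eq324} are the standard Hirota bilinear equations of that hierarchy.

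I expect the main difficulty to be organisational rather than conceptual. The six Hirota equations do not share a single shift pattern --- \eqref{324a}, \eqref{324d}, \eqref{324e} move $\mu$ by one, while \eqref{324b}, \eqref{324c}, \eqref{324f} and the discrete equation involve $\mu-1,\mu$ and $\mu+1$ --- so each needs its own choice of representation, its own auxiliary determinant $\mathcal D$, and its own pair of deleted indices; there is no master calculation. In the cases straddling three values of $\mu$, the reduction to ``minors of a common $\mathcal D$'' is not immediate: applying \eqref{eq:lag_rel} turns a consecutive-window Wronskian into a Wronskian of first differences, and recognising the result as the relevant minor requires telescoping column operations (and, where forced, multilinear expansions that temporarily produce Wronskians of non-consecutive subsets). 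Keeping every sign correct through these manipulations --- the Wronskian reordering signs, the minus in \eqref{lag1}, and the $(-1)^{n(n-1)/2}$ and $(-1)^{\lfloor n/2\rfloor}$ prefactors --- is the step where I would be most careful and the one most likely to hide an error.
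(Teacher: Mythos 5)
Your identification of \eqref{324f} is correct and complete: with $f_i=\Lag{m+i}{n+\mu}$ the four polynomials are $\Wr(f_2,\dots,f_{n+1})$, $\Wr(f_1,\dots,f_n)$, $\Wr(f_2,\dots,f_n)$, $\Wr(f_1,\dots,f_{n+1})$, and \eqref{JacobiId} applied to $\Wr(f_1,\dots,f_{n+1})$ with rows $n,n+1$ and columns $1,n+1$ deleted gives exactly \eqref{324f}. However, your overall plan --- ``realise everything as minors of one $(n+1)\times(n+1)$ determinant $\mathcal D$ and apply \eqref{JacobiId}'' --- has a structural gap for \eqref{324a}, \eqref{324b}, \eqref{324c} and the discrete bilinear equation. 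In any instance of the Desnanot--Jacobi identity \eqref{JacobiId} the three bilinear products pair determinants of sizes $(n+1)\cdot(n-1)$, $n\cdot n$ and $n\cdot n$, and the Hirota derivative does not alter sizes (differentiating an $(n-1)$-Wronskian yields another $(n-1)\times(n-1)$ determinant). But in \eqref{324a} every one of the three products, namely $\bigl(\Tmn{m,n-1}{\mu+1}\bigr)'\,\Tmn{m,n}{\mu}$, $\Tmn{m,n-1}{\mu+1}\,\bigl(\Tmn{m,n}{\mu}\bigr)'$ and $\Tmn{m+1,n-1}{\mu}\,\Tmn{m-1,n}{\mu+1}$, pairs a determinant of size $n-1$ with one of size $n$; the same $(n-1)\cdot n$ pattern occurs in \eqref{324b}, \eqref{324c} and in the discrete equation. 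No choice of a single determinant and two deleted rows and columns can produce that pattern, so these four identities are not single instances of \eqref{JacobiId}. Your remark that the discrete equation is ``of the type that gave \eqref{iden:Tmn1}--\eqref{iden:Tmn2}'' overlooks that those identities do have the Jacobi size pattern $(n+1)(n-1)=n\cdot n-n\cdot n$, whereas the discrete one does not. (For \eqref{324d} and \eqref{324e} the size pattern is consistent with \eqref{JacobiId}, so there your plan is not structurally blocked, though the mixed $\mu$-shifts still need care.)

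The paper closes exactly this gap by invoking a different determinantal identity: identity (C) of Vein and Dale, stated as \eqref{VD36C}, $\An\bn{1}{1}\Ann\bn{n}{1}-\An\bn{n}{1}\Ann\bn{1}{1}=\Ann\bn{n+1}{1}\Ann\bn{1,n}{1,n+1}$, which mixes minors of the $n\times n$ and $(n+1)\times(n+1)$ Wronskians built from the same function and therefore has precisely the required $(n-1)\cdot n$ structure; \eqref{324a} is proved this way using the representation \eqref{def:Tmn}, \eqref{324b} using \eqref{def:Tmn2}, and the remaining cases are stated to follow similarly. To repair your argument you would need either to import such a two-determinant (Plücker-type) identity, or to add an embedding device you do not mention --- for instance promoting an $(n-1)$-Wronskian to an $n\times n$ determinant with a constant column, using $\Lag{0}{\a}=1$ and \eqref{wron_df} --- before \eqref{JacobiId} can bite. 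A secondary caution: your proposal to impose a common superscript by ``column operations'' based on \eqref{eq:lag_rel} is delicate, since column operations on $\mathcal D$ do not act uniformly on all of its minors; the paper avoids this by computing each required minor directly from the derivative rule \eqref{lag1} within the chosen representation.
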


\comment{\noindent\blue{Note: Perhaps we should introduce the notation
\[\W_{n}(\ph) = \det \left[\frac{\d^{j+k}\ph}{\d z^{j+k}}
\right]_{j,k=0}^{n-1} = \Wr\left(\ph,\deriv{\ph}{z},\ldots,\deriv[n-1]{\ph}{z}\right),\]
which simplifies some of the above formulae.}}

\begin{proof}%We 
In \cite[Theorem 3.6]{refVeinDale}, Vein and Dale prove three variants of the Jacobi Identity \eqref{JacobiId}. To prove some to the results in this Lemma, we use, 
\beq\An\bn{1}{1} \Ann\bn{n}{1}- \An\bn{n}{1} \Ann\bn{1}{1}=\Ann\bn{n+1}{1} \Ann\bn{1,n}{1,n+1}, \label{VD36C}\eeq
which is identity (C) in \cite[Theorem 3.6]{refVeinDale} with $r=1$.
For \eqref{324a}, consider the determinants
\[\An =\W_{n}\left(\Lag{m+n+1}{\mu}\right)=\Tmn{m+1,n}{\mu-1},\qquad
\Ann=\W_{n+1}\left(\Lag{m+n+1}{\mu}\right)=\Tmn{m,n+1}{\mu-1},\]
where $\W_{n}(\ph)$ is defined by
\[\W_{n}(\ph) = \det \left[\frac{\d^{j+k}\ph}{\d z^{j+k}}
\right]_{j,k=0}^{n-1} = \Wr\left(\ph,\deriv{\ph}{z},\ldots,\deriv[n-1]{\ph}{z}\right),\]
then
\begin{align*}
\An\bn{1}{1} &=\W_{n-1}\left(\deriv[2]{\Lag{m+n+1}{\mu}}{z}\right)=\W_{n-1}\left(\Lag{m+n-1}{\mu+2}\right)=\Tmn{m,n-1}{\mu+1},\\
\An\bn{n}{1}&=\W_{n-1}\left(\ds\deriv{\Lag{m+n+1}{\mu}}{z}\right) =(-1)^{n-1}\W_{n-1}\left(\Lag{m+n}{\mu+1}\right) =(-1)^{n-1}\,\Tmn{m+1,n-1}{\mu},\\
\Ann\bn{1}{1} &=\W_{n}\left(\deriv[2]{\Lag{m+n+1}{\mu}}{z}\right)=\W_n\left(\Lag{m+n-1}{\mu+2}\right)=\Tmn{m-1,n}{\mu+1},\\
\Ann\bn{n}{1} &=\dz\W_{n}\left(\deriv{\Lag{m+n+1}{\mu}}{z}\right)=(-1)^n\dz\W_{n}\left(\deriv{\Lag{m+n}{\mu+1}}{z}\right)=(-1)^n\dz\Tmn{m,n}{\mu},\\
\Ann\bn{n+1}{1}& =\W_{n}\left(\deriv{\Lag{m+n+1}{\mu}}{z}\right)=(-1)^n\W_{n}\left(\Lag{m+n}{\mu+1}\right)=(-1)^n\Tmn{m,n}{\mu},\\
\Ann\bn{1,n}{1,n+1} &=\dz\W_{n-1}\left(\deriv[2]{\Lag{m+n+1}{\mu}}{z}\right) =\dz\W_{n-1}\left(\Lag{m+n-1}{\mu+2}\right)
=\dz\Tmn{m,n-1}{\mu+1},
\end{align*}
and so
\[\Tmn{m,n-1}{\mu+1}\,\dz\Tmn{m,n}{\mu}+\Tmn{m+1,n-1}{\mu}\,\Tmn{m-1,n}{\mu+1}=\Tmn{m,n}{\mu}\dz\Tmn{m,n-1}{\mu+1},\]
which proves the result.

To prove \eqref{324b}, we use \eqref{VD36C}
with
\[\begin{split} \An& =\Wr\left( L_{m+1}^{(n+\mu-1)}, L_{m+2}^{(n+\mu-1)}, \ldots, L_{m+n}^{(n+\mu-1)} \right)=\Tmn{m,n}{\mu-1},\\
\Ann&=\Wr\left( L_{m+1}^{(n+\mu-1)}, L_{m+2}^{(n+\mu-1)}, \ldots, L_{m+n+1}^{(n+\mu-1)} \right)=\Tmn{m,n+1}{\mu-2} . \end{split}\]
then
\[\begin{split} 
\An\bn{1}{1} &=\Wr\left(\dz L_{m+2}^{(n+\mu-1)}, \dz L_{m+3}^{(n+\mu-1)}, \ldots, \dz L_{m+n}^{(n+\mu-1)} \right)\\
&=(-1)^{n-1}\Wr\left( L_{m+1}^{(n+\mu)}, L_{m+2}^{(n+\mu)}, \ldots, L_{m+n-1}^{(n+\mu)} \right)
=(-1)^{n-1}\,\Tmn{m,n-1}{\mu+1},\\
\An\bn{n}{1} &=\Wr\left( L_{m+2}^{(n+\mu-1)}, L_{m+2}^{(n+\mu-1)}, \ldots, L_{m+n}^{(n+\mu-1)} \right)=\Tmn{m+1,n-1}{\mu},\\
\Ann\bn{n}{1} &=\dz\Wr\left( L_{m+2}^{(n+\mu-1)}, L_{m+3}^{(n+\mu-1)}, \ldots, L_{m+n+1}^{(n+\mu-1)} \right)=\dz\Tmn{m+1,n}{\mu-1},\\
\Ann\bn{1}{1} &=\Wr\left(\dz L_{m+2}^{(n+\mu-1)},\dz L_{m+3}^{(n+\mu-1)}, \ldots, \dz L_{m+n+1}^{(n+\mu-1)} \right)\\
&=(-1)^n \Wr\left(L_{m+1}^{(n+\mu)}, L_{m+2}^{(n+\mu)}, \ldots, L_{m+n}^{(n+\mu)} \right)=(-1)^n\Tmn{m,n}{\mu},\\
\Ann\bn{n+1}{1} &=\Wr\left( L_{m+2}^{(n+\mu-1)}, L_{m+3}^{(n+\mu-1)}, \ldots, L_{m+n+1}^{(n+\mu-1)} \right)=\Tmn{m+1,n}{\mu-1},\\
\Ann\bn{1,n}{1,n+1} &=\Wr\left(\dz L_{m+2}^{(n+\mu-1)}, \dz L_{m+3}^{(n+\mu-1)}, \ldots, \dz L_{m+n}^{(n+\mu-1)} \right),\\
&=(-1)^{n-1}\Wr\left( L_{m+1}^{(n+\mu)}, L_{m+2}^{(n+\mu)}, \ldots, L_{m+n-1}^{(n+\mu)} \right)=(-1)^{n-1}\dz \Tmn{m,n-1}{\mu+1}, %\Ann\bn{1,n+1}{1,n+1} 
 \end{split}\]
 and so
 \[\Tmn{m,n-1}{\mu+1}\,\dz\Tmn{m+1,n}{\mu-1} -\Tmn{m+1,n-1}{\mu}\,\Tmn{m,n}{\mu} =\Tmn{m+1,n}{\mu-1} \dz \Tmn{m,n-1}{\mu+1},\]
which proves the result. {The other results  \eqref{324c}--\eqref{324f} are proved in a similar way.}
\end{proof}
\section{Rational solutions of \PV}
\label{sec:rats}
%In this section we discuss rational solutions of \PV\ \eqref{eq:pv}.
\subsection{Classification of rational solutions of \PV}
Rational solutions of \PV\ \eqref{eq:pv} are classified in the following Theorem. 
%due to Kitaev, Law and McLeod \cite[Theorem 1.1]{refKLM}; see also \cite[Theorem 40.3]{refGLS}.

\begin{theorem}{\label{thm:PT.RS.KLM}
Equation \eqref{eq:pv} has a rational solution if and only if one of the
following holds:
\begin{enumerate}[(i)]
\item
$\a=\tfrac12 m^2$, $\b=-\tfrac12(m+2n+1+\mu)^2$, $\ga=\mu$, for $m\geq1$;
\item
$\a=\tfrac12(m+\mu)^2$, $\b=-\tfrac12(n+\ep\mu)^2$, $\ga=m+\ep n$, with $\ep=\pm1$, provided that $m\not=0$ or $n\not=0$;
\item
$\a=\tfrac12(m+\tfrac12)^2$, $\b=-\tfrac12(n+\tfrac12)^2$, $\ga=%m+n+
\mu$, provided that $m\not=0$ or $n\not=0$,
\end{enumerate}
where $m,n\in\Z$ and $\mu$ is an arbitrary constant,
together with the solutions obtained through the symmetries %$\mathcal{S}$ 
\begin{align}
\label{PV:S1}\mathcal{S}_1:\qquad &{w_1}({z};{\a_1}, {\b_1}, {\ga_1},-\tfrac12)
={w(-z;\a,\b,\ga,-\tfrac12)},&&({\a_1}, {\b_1}, {\ga_1},-\tfrac12)=(\a,\b,-\ga,-\tfrac12),\\
\label{PV:S2}\mathcal{S}_2:\qquad &{w_2}({z};{\a_2}, {\b_2}, {\ga_3},-\tfrac12)
=\frac{1}{w(z;\a,\b,\ga,-\tfrac12)},&&({\a_2}, {\b_2}, {\ga_3},-\tfrac12)=(-\b,-\a,-\ga,-\tfrac12),
\end{align}
where $w(z;\a,\b,\ga,-\tfrac12)$ is a solution of \eqref{eq:pv}.
}\end{theorem}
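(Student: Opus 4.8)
The plan is to reduce the classification to known results on the Bäcklund transformation group of \PV\ and the structure of its symmetry lattice. The starting point is the affine Weyl group of type $A_3^{(1)}$ acting on the parameter space of \PV. Writing the parameters in the usual way via $(a_0,a_1,a_2,a_3)$ with $a_0+a_1+a_2+a_3=1$ (so that $\a=\tfrac12 a_1^2$, $\b=-\tfrac12 a_3^2$, $\ga=a_0-a_2$ up to a choice of normalisation consistent with $\de=-\tfrac12$), the Bäcklund transformations act as the extended affine Weyl group $\wt{W}(A_3^{(1)})$, generated by the reflections $s_0,s_1,s_2,s_3$ together with the diagram rotation $\pi$. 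The seed rational solutions are the simple ones: $w\equiv$ constant (which exists precisely when one of the $a_i$ vanishes, or a pair of them coincide appropriately), and the half-integer "Riccati-type" seeds. First I would recall (citing \cite{refKLM,refMOK,refOkamotoPV}) the precise list of parameter values admitting a rational solution obtainable directly, then show every orbit of $\wt{W}(A_3^{(1)})$ through such a seed is captured by cases (i)--(iii).

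The key steps, in order, are as follows. Step 1: set up the parametrisation and record how $s_0,\ldots,s_3,\pi$ act on $(a_0,a_1,a_2,a_3)$ and correspondingly on $(\a,\b,\ga)$; verify that $\mathcal{S}_1$ and $\mathcal{S}_2$ from the statement correspond to specific elements of the group (in fact to the two generators of the subgroup fixing the "rational locus" setwise that are not captured by the $\mu$-shift), so that modding out by $\CS_1,\CS_2$ is legitimate. Step 2: identify the seed solutions. The constant solutions $w=w_0$ of \eqref{eq:pv} force, on substitution, algebraic constraints on $(\a,\b,\ga)$; solving these yields families that, after acting by the Weyl group, generate cases (i) and (ii). The Riccati seeds (where \PV\ reduces to a Riccati equation whose solution is rational) yield the half-integer case (iii). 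Step 3: compute the orbit. Because $\wt{W}(A_3^{(1)})$ acts on the affine root parameters by affine reflections, the orbit of a seed is an affine-lattice coset translated by Weyl-group images; parametrising the translation lattice by $(m,n)\in\Z^2$ and tracking the free parameter $\mu$ through the shifts $a_i\mapsto a_i\pm1$ produces exactly the three stated families, with the squares $(m+2n+1+\mu)^2$, $(n+\ep\mu)^2$, $(n+\tfrac12)^2$ emerging as the squared root-parameters. Step 4: prove necessity — that no rational solution exists outside these families. This is the Riccati/irreducibility side: use the known result (Gromak--Lukashevich, or Kitaev--Law--McLeod) that a rational solution of \PV\ must arise from a seed of the above type, equivalently that the $\tau$-function / Hamiltonian must be polynomial, which pins down the parameters to lie on the $\wt{W}(A_3^{(1)})$-orbit of the seeds.

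The main obstacle I expect is Step 4, the necessity direction, together with the bookkeeping in Step 3 needed to show the three families do not overlap redundantly and together exhaust the orbit. Necessity is not something one proves from scratch here; the honest approach is to invoke the classification already in the literature (this Theorem is attributed — note the label \texttt{thm:PT.RS.KLM} — to Kitaev, Law and McLeod, and to Kaneko--Ochiai / Masuda), and then the real work is purely organisational: translating their parameter conventions into the $(\a,\b,\ga)$ of \eqref{eq:pv} with $\de=-\tfrac12$, checking the half-integer shift in (iii) really is distinct from (i) and (ii) rather than a boundary case, and verifying that the two symmetries $\mathcal{S}_1,\mathcal{S}_2$ account for precisely the sign ambiguities ($\ga\mapsto-\ga$ and $(\a,\b)\mapsto(-\b,-\a)$) that would otherwise duplicate entries. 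A secondary subtlety is the genuine exclusion $m=0$ or $n=0$ in (ii) and (iii): when both vanish the putative "rational solution" degenerates (e.g. to $w\equiv1$, which is a fixed point of \PV\ only for special further parameter values, or is not a solution at all), so one must check the edge of the lattice by hand. I would present the proof as: state the parametrisation, cite the necessity classification, then carry out the orbit computation and the identification of $\CS_1,\CS_2$ explicitly, relegating the edge cases to a short verification.
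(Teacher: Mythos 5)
The paper does not actually prove this theorem: its proof is the single line ``See Kitaev, Law and McLeod \cite{refKLM}; also \cite[Theorem 40.3]{refGLS}''. Your proposal rests on the same foundation, since you defer the necessity direction to \cite{refKLM}, so in substance the two agree; what you add is a sketch of the sufficiency direction via the extended affine Weyl group $\wt{W}(A_3^{(1)})$ acting on $(a_0,a_1,a_2,a_3)$, with $\CS_1,\CS_2$ identified as outer symmetries and the families (i)--(iii) produced as orbits of seed solutions. That buys a more self-contained statement than the paper's bare citation, at the price of exactly the convention-translating bookkeeping you flag yourself (which is where the paper's follow-up remark lives: KLM's cases (I) and (II) merge under $\CS_2$, and KLM's restriction $\mu\notin\Z$ in case (iii) is only needed for uniqueness, as discussed in \S\ref{PVrata_nonuni}). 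Two slips in your sketch are worth correcting, though neither is fatal since you would in any case calibrate these steps against \cite{refKLM}: first, the one-parameter family (i) is the Riccati/Kummer hierarchy, i.e.\ the special-function solutions of \PV\ degenerating to Laguerre polynomials via \eqref{Kummer:Lag}, not the Weyl orbit of constant seeds; the constant seed is $w\equiv-1$, which solves \eqref{eq:pv} precisely when $\b=-\a$ and $\ga=0$ and sits at the excluded corner $m=n=0$ of case (ii), whereas cases (ii)--(iii) belong to the generalised Umemura families. Second, in your edge-case discussion the degenerate constant is $w\equiv-1$, not $w\equiv1$ ($w=1$ is a singular value of \eqref{eq:pv} and is never a solution), so the check at the boundary of the lattice concerns when $w\equiv-1$ is or is not counted by the list rather than a failure of $w\equiv1$ to solve the equation.
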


\begin{proof}See Kitaev, Law and McLeod \cite{refKLM}; also \cite[Theorem 40.3]{refGLS}.\end{proof}

\begin{remark}{\rm Kitaev, Law and McLeod \cite[Theorem 1.1]{refKLM} give four cases, though their cases (I) and (II) are related by the symmetry \eqref{PV:S2}.
Kitaev, Law and McLeod \cite{refKLM} also state that $\mu\not\in\Z$ in case (iii), but this does not seem necessary, except for uniqueness as discussed in \S\ref{PVrata_nonuni}. 
}\end{remark}

\comment{\[ \a=\tfrac12a^2,\qquad \b=-\tfrac12b^2,\qquad \ga=-1-2\th.\]
\[\begin{array}{|c@{\quad}c@{\quad}c@{\quad}c|} \hline 
& a & b & \th\\ \hline
& & & \\[-10pt]
\text{(i)} & m & m+2n+1+ \mu &-\tfrac12(1+\mu)\\[2.5pt]
\text{(ii)} & m+\mu & n+\ep \mu & -\tfrac12(1+m+\ep n)\\[2.5pt]
\text{(iii)} & m+\tfrac12 & n+\tfrac12 &-\tfrac12(1+\mu)\\[2.5pt]\hline
\end{array}\]}

Rational solutions in case (i) of Theorem\ \ref{thm:PT.RS.KLM} are expressed in terms of \textit{generalised Laguerre polynomials}, which are written in terms of a determinant of Laguerre polynomials and are our main concern in this manuscript. %are discussed in Theorems \ref{thm:PVi} and \ref{thm:PVia}.
%and those in cases (ii) and (iii) in Theorems\ref{thm:PVii} and \ref{thm:PViii}, respectively.

\comment{We remark that solutions of \PV\ \eqref{eq:pv} have the symmetry %$\mathcal{S}$ 
\beq\label{PV:S2}\qquad \widetilde{w}({z};\widetilde{A}, \widetilde{B}, \widetilde{C},-\tfrac12)
=\frac{1}{w(z;\a,\b,\ga,-\tfrac12)},\qquad(\widetilde{A}, \widetilde{B}, \widetilde{C},-\tfrac12)
=(-B,-A,-C,-\tfrac12).\eeq}
Rational solutions in cases (ii) and (iii) of Theorem \ref{thm:PT.RS.KLM} are expressed in terms of \textit{generalised Umemura polynomials}.
As mentioned above, Umemura \cite{RefUm20} defined some polynomials through a differential-difference equation to describe rational solutions of \PV\ \eqref{eq:pv}; see also
\cite{refPACpv,refNY98ii,refYamada}.
Subsequently these were generalised by Masuda, Ohta and Kajiwara \cite{refMOK}, who defined the {generalised Umemura polynomial} $U_{m,n}^{(\a)}(z)$ through a coupled differential-difference equations and also gave a representation as a determinant. Our study of the generalised Umemura polynomials is currently under investigation and we do not pursue this further here.

Rational solutions in case (i) of Theorem \ref{thm:PT.RS.KLM} are special cases of the solutions of \PV\ \eqref{eq:pv} expressible in terms of Kummer functions $\KummerM(a,b,z)$ and $\KummerU(a,b,z)$, or equivalently 
%the Whittaker functions $\WhitM{\k}{\mu}(z)$ and $\WhitW{\k}{\mu}(z)$ or 
the confluent hypergeometric function ${}_1F_1(a;c;z)$. 
Specifically 
\beq U\left(-n,\a+1,z\right)=(-1)^{n}{\left(\a+1\right)_{n}}M\left(-n,\a+1,z\right)=(-1)^{n}n!L^{(\a)}_{n}(z),\label{Kummer:Lag}\eeq
with $\Lag{n}{\a} (z)$ the associated Laguerre polynomial, cf.~\cite[equation (13.6.19)]{refDLMF}.
%\beq\label{eq:alp} \Lag{n}{\a}(z)=\frac{z^{-\a}\,\e^z}{n!}\deriv[n]{}{z}\left(z^{n+\a}\,\e^{-z}\right),\qquad n\geq0,\eeq
%and $(a)_n$ is the Pochhammer symbol.

Determinantal representations of these rational solutions are given in the following Theorem.

\begin{theorem}{\label{thm:PVi}Define the polynomial $\tau_{m,n}^{(\mu)}(z)$
\beq \label{def:taumn}
\tau_{m,n}^{(\mu)}(z)=\det\left[\left(z\deriv{}{z}\right)^{j+k} \Lag{m+n}{n+\mu}(z)\right]_{j,k=0}^{n-1},%\qquad \delta(f)=z\deriv{}{z}(f),
\eeq
with $L_n^{(\alpha)}(z)$ the associated Laguerre polynomial \eqref{eq:alp},
then
\begin{subequations}\begin{align}
{w}_{m,n}(z;\mu)&=\left(\frac{m+\mu+ 2n}{m+\mu+ 2n+1}\right)^{\!n}
\frac{\tau_{m-1,n}^{(\mu)}(z)\,\tau_{m-1,n+1}^{(\mu)}(z)}{\tau_{m,n}^{(\mu)}(z)\,\tau_{m-2,n+1}^{(\mu)}(z)},\qquad m, n\geq1,
\end{align}\end{subequations}
is a rational solution of \PV\ \eqref{eq:pv} for the parameters
\begin{subequations}\begin{align}
\pmn{\tfrac12m^2}{-\tfrac12(m+ 2n+1+\mu)^2}{\mu}
\end{align}\end{subequations}}
\end{theorem}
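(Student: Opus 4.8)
The route I would take goes through the Hamiltonian/$\tau$-function formulation of \PV. First I would recall that \eqref{eq:pv} is equivalent to a polynomial Hamiltonian system (whose time-scaled Hamiltonian is polynomial in $z$, $w$ and the conjugate momentum), that the associated $\sigma$-function solves the $\sigma$-form of \PV\ discussed in \S\ref{sec:sigma}, and that $w$ is recovered from the $\sigma$-function and its derivative by a rational formula which, written in $\tau$-functions, has exactly the shape $w=(\text{const})\,\tau_a\tau_b/(\tau_c\tau_d)$. The conceptual input is the observation preceding the theorem: on the parameter slice of case~(i) of Theorem~\ref{thm:PT.RS.KLM} one has $\ga=\mu$ \emph{fixed} and the Kummer function entering the classical (${}_1F_1$) solution of \PV\ degenerates, by \eqref{Kummer:Lag}, to the associated Laguerre polynomial $L_{m+n}^{(n+\mu)}(z)$. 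Thus the whole case-(i) family is produced by applying \bts\ (which shift $(\a,\b,\ga)$ by integers) to this already-rational seed, and since under a \bk\ step the $\tau$-functions satisfy a Toda-type bilinear recursion, the iterated $\tau$-function acquires a Wronskian/Hankel shape built from the seed and the Euler operator $z\,\d/\d z$ --- which is precisely $\tau_{m,n}^{(\mu)}(z)$ of \eqref{def:taumn}.

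Concretely the steps are: (1) fix the classical solution of \PV\ with $\ga=\mu$ whose $\tau$-function is a constant multiple of $L_{m+n}^{(n+\mu)}(z)$, and record its $(\a,\b,\ga)$; (2) identify the two commuting \bk\ directions with the shifts $m\mapsto m\pm1$ and $n\mapsto n\pm1$ and follow their action on the parameters --- this is where the identification $\a=\tfrac12m^2$, $\b=-\tfrac12(m+2n+1+\mu)^2$, $\ga=\mu$ has to emerge; (3) establish the Toda/Hirota bilinear equations satisfied by $\tau_{m,n}^{(\mu)}$, i.e.\ the $z\,\d/\d z$-analogues of Lemmas~\ref{lem:38}, \ref{lem:312} and \ref{lem:313}, proved in the same way from the Jacobi (Lewis Carroll) identity~\eqref{JacobiId}, and check that these coincide with the bilinear equations the \bk-iterated $\tau$-functions must obey, so that $\tau_{m,n}^{(\mu)}$ genuinely is the $\tau$-function on that lattice; (4) insert the four $\tau$-functions $\tau_{m-1,n}^{(\mu)},\tau_{m-1,n+1}^{(\mu)},\tau_{m,n}^{(\mu)},\tau_{m-2,n+1}^{(\mu)}$ --- those attached to the lattice sites adjacent to the chosen canonical variable --- into the recovery formula for $w$. (One could equally read off $\tau_{m,n}^{(\mu)}$ as a Schur-type determinant, cf.\ Lemma~\ref{schur_defn}, after a suitable time substitution, which makes the bilinear relations manifest.)

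The genuine obstacle, everything else being bookkeeping, is pinning down the scalar prefactor $\bigl(\tfrac{m+\mu+2n}{m+\mu+2n+1}\bigr)^{n}$: it is invisible to the (homogeneous) bilinear equations and the definitions fix no free constant, so it must be extracted by comparing one coefficient of the four determinants. Here there is a closed form: expanding $\det[(z\,\d/\d z)^{j+k}L_{m+n}^{(n+\mu)}]$ by multilinearity in the columns collapses, after two Vandermonde evaluations, to $\tau_{m,n}^{(\mu)}(z)=\sum_{S}\bigl(\prod_{i\in S}c_i\bigr)\,\Delta(S)^2\,z^{\sum_{i\in S}i}$, the sum over $n$-subsets $S\subseteq\{0,1,\dots,m+n\}$, where $c_i$ is the coefficient of $z^i$ in $L_{m+n}^{(n+\mu)}(z)$ and $\Delta(S)=\prod_{i<j,\,i,j\in S}(j-i)$. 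Taking $S$ maximal (resp.\ minimal) gives the highest (resp.\ lowest) coefficient of $\tau_{m,n}^{(\mu)}$ as an explicit product of binomials $\binom{m+\mu+2n}{j}$ and of factorials; the signs and the $\Delta(S)^2$ and factorial factors cancel in the quotient, and the surviving binomial products combine to the stated constant. A parallel check of the parameter identification against the \bk-shifts of step~(2) then completes the argument.

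Finally, I would note an alternative that sidesteps the Hamiltonian machinery: substitute the proposed $w_{m,n}(z;\mu)$ directly into \eqref{eq:pv}, write $\d w/\d z$ and $\d^2 w/\d z^2$ through logarithmic derivatives of the four $\tau_{m,n}^{(\mu)}$, and reduce the identity to a finite list of Hirota bilinear relations among the $\tau_{m,n}^{(\mu)}$ analogous to \eqref{eq324}, each provable from \eqref{JacobiId}; this is elementary but computationally heavy, and it still needs the prefactor computed as above to fix the single multiplicative constant, with $(\a,\b,\ga)$ then read off from the resulting identity.
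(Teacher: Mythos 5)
Your strategy and the paper's proof diverge in an instructive way: the paper disposes of this theorem in one line, by citing Masuda's determinantal representation of the special-function (Kummer) solutions of \PV\ \cite[Theorem 2.2]{refMasuda04} and specialising it to the polynomial case via the Kummer--Laguerre reduction \eqref{Kummer:Lag}. What you propose is, in effect, to reprove Masuda's theorem from scratch: seed solution with $\tau$-function proportional to $L_{m+n}^{(n+\mu)}(z)$, \bk\ shifts in $m$ and $n$, Toda/Hirota bilinear recursions for the Euler-operator Hankel determinants \eqref{def:taumn} proved from the Jacobi identity \eqref{JacobiId}, and the recovery formula $w=(\mathrm{const})\,\tau_a\tau_b/(\tau_c\tau_d)$. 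That is indeed the standard machinery behind the cited result, and your Cauchy--Binet expansion $\tau_{m,n}^{(\mu)}=\sum_{S}\bigl(\prod_{i\in S}c_i\bigr)\Delta(S)^2 z^{\sum_{i\in S}i}$ is correct, so the plan is sound in outline; what it buys is self-containedness, at the price of redoing a substantial amount of published work that the paper simply invokes.

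The caveat is that, as written, your argument is a programme rather than a proof: steps (2)--(4) --- verifying that the parameter shifts of the \bk\ lattice land exactly on $\a=\tfrac12m^2$, $\b=-\tfrac12(m+2n+1+\mu)^2$, $\ga=\mu$, that the determinants \eqref{def:taumn} satisfy precisely the bilinear relations forced on the iterated $\tau$-functions, and that the recovery formula selects the four stated lattice sites --- are exactly the content of Masuda's theorem and are only asserted, not carried out. Also, the prefactor can be obtained far more cheaply than by comparing leading coefficients of four Hankel determinants: reducing the Euler-operator determinant to a genuine Wronskian gives $\tau_{m,n}^{(\mu)}(z)=a_{m,n}z^{n(n-1)/2}\Tmn{m,n}{\mu}(z)$ with $a_{m,n}=\prod_{j=1}^{n}(m+n+j+\mu)^{j-1}$ (this is Lemma~\ref{lem:54} of the paper), and the ratio $a_{m-1,n}a_{m-1,n+1}/(a_{m,n}a_{m-2,n+1})=\bigl((m+2n+1+\mu)/(m+2n+\mu)\bigr)^{n}$ cancels the stated constant, consistent with the prefactor-free form in Theorem~\ref{thm:PVia}. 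If you intend the self-contained route, you must either execute the bilinear verifications in full or, as the paper does, cite \cite{refMasuda04} and let the theorem follow by specialisation.
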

\begin{proof}This result can be derived from the determinantal representation of the special function solutions of \PV\ \eqref{eq:pv} given by Masuda \cite[Theorem 2.2]{refMasuda04}
.\end{proof}
\begin{remark}{\rm The polynomial $\tau_{m,n}^{(\mu)}(z)$ %and $\Tmn{m,n}{\mu}(z)$ have 
has degree $\tfrac12(2m+n+1)n$.% and $(m+1)n$, respectively.
}\end{remark}
\begin{lemma}\label{lem:54}
The polynomials $\tau_{m,n}^{(\mu)}(z)$ and $\Tmn{m,n}{\mu}(z)$ are related as follows
\[ \tau_{m,n}^{(\mu)}(z)=a_{m,n}z^{n(n-1)/2}\Tmn{m,n}{\mu}(z),\qquad a_{m,n}=\prod^{n}_{j =1}(m+n+j+\mu)^{j -1}.\]
%where %for some constant $a_{m,n}$.
%\[a_{m,n}=\prod^{n}_{j =1}(a+m+n+j)^{j -1}\]
\end{lemma}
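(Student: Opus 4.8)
The plan is to turn both determinants into ordinary Wronskians of Laguerre polynomials and then read off the scalar factors relating them, using the Wronskian form of $\Tmn{m,n}{\mu}$ supplied by Lemma~\ref{lem:32}.

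First I would separate the Euler operator $\theta=z\,\d/\d z$ from the accompanying powers of $z$. Using the classical expansion $\theta^{\,j}=\sum_{i=0}^{j}S(j,i)\,z^{i}\,(\d/\d z)^{i}$ with $S(j,i)$ the Stirling numbers of the second kind --- so that $\big[S(j,i)\big]_{j,i=0}^{n-1}$ is lower triangular with unit diagonal --- and setting $g_k=\theta^{\,k}L_{m+n}^{(n+\mu)}(z)$, the $(j,k)$ entry of the matrix in \eqref{def:taumn} becomes $\theta^{\,j+k}L_{m+n}^{(n+\mu)}=\theta^{\,j}g_k=\sum_i S(j,i)\,z^{i}g_k^{(i)}$. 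Factoring the constant unitriangular matrix $\big[S(j,i)\big]$ out of the determinant and extracting a factor $z^{i}$ from the $i$-th row gives
\[
\tau_{m,n}^{(\mu)}(z)=z^{\,n(n-1)/2}\,\Wr\!\big(g_0,g_1,\ldots,g_{n-1}\big),\qquad g_k=\theta^{\,k}L_{m+n}^{(n+\mu)}(z).
\]

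Next I would expand the $g_k$ in the basis $L_{m+1}^{(n+\mu)},\ldots,L_{m+n}^{(n+\mu)}$. The standard identity $z\,\tfrac{\d}{\d z}L_{\ell}^{(\alpha)}(z)=\ell\,L_{\ell}^{(\alpha)}(z)-(\ell+\alpha)L_{\ell-1}^{(\alpha)}(z)$ (derivable from \eqref{lag1} and \eqref{eq:lag_rel}) shows that $\theta$ lowers the lower index of a Laguerre polynomial by at most one, so $g_k\in\operatorname{span}\{L_{m+1}^{(n+\mu)},\ldots,L_{m+n}^{(n+\mu)}\}$, and iterating it, the coefficient of the extreme term $L_{m+n-k}^{(n+\mu)}$ in $g_k$ comes out to $(-1)^{k}\prod_{i=0}^{k-1}(m+2n+\mu-i)$. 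Hence the matrix $A$ expressing $(g_0,\ldots,g_{n-1})$ in the ordered basis $\big(L_{m+n}^{(n+\mu)},L_{m+n-1}^{(n+\mu)},\ldots,L_{m+1}^{(n+\mu)}\big)$ is lower triangular with $\det A=(-1)^{n(n-1)/2}\prod_{k=0}^{n-1}\prod_{i=0}^{k-1}(m+2n+\mu-i)$, and by multilinearity of the Wronskian together with Lemma~\ref{lem:32} --- which gives $\Wr\!\big(L_{m+n}^{(n+\mu)},\ldots,L_{m+1}^{(n+\mu)}\big)=(-1)^{n(n-1)/2}\Tmn{m,n}{\mu}$ --- the two sign factors cancel and one lands on $\tau_{m,n}^{(\mu)}=a_{m,n}\,z^{n(n-1)/2}\Tmn{m,n}{\mu}$ with $a_{m,n}=\prod_{k=0}^{n-1}\prod_{i=0}^{k-1}(m+2n+\mu-i)$.

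The last step is to recognise this double product as $\prod_{j=1}^{n}(m+n+j+\mu)^{\,j-1}$: substituting $j=n-i$ rewrites the inner product as $\prod_{j=n-k+1}^{n}(m+n+j+\mu)$, and for each fixed $j\in\{1,\ldots,n\}$ the number of indices $k\in\{0,\ldots,n-1\}$ with $n-k+1\le j$ is exactly $j-1$. The hard part will not be conceptual but the bookkeeping: keeping the orderings of the Laguerre polynomials consistent throughout, verifying that the three separate $(-1)^{n(n-1)/2}$ factors (from the row extraction, from $\det A$, and from Lemma~\ref{lem:32}) genuinely cancel, and getting the product re-indexing right. A minor point worth a remark: all identities used are polynomial in $\mu$, so it suffices to argue for generic $\mu$, where the $g_k$ are linearly independent, and the formula then holds for all $\mu$ by continuity.
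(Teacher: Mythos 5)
Your argument is correct and follows essentially the same route as the paper: pull out $z^{n(n-1)/2}$ to turn the Euler-operator determinant into a Wronskian, expand $(z\,\d/\d z)^k L_{m+n}^{(n+\mu)}(z)$ via the three-term relation to isolate the extreme term with coefficient $\pm\prod_{i=0}^{k-1}(m+2n+\mu-i)$, and reduce to the Wronskian form of $\Tmn{m,n}{\mu}(z)$ from Lemma~\ref{lem:32}, collecting the diagonal coefficients into $a_{m,n}$. Your Stirling-number factorisation and explicit triangular transition matrix are simply more explicit versions of the identity \eqref{iden1} and the ``keep only the last term'' Wronskian column reduction used in the paper, with the same sign cancellation and product re-indexing.
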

\begin{proof}
From \eqref{def:taumn}, by definition
\[
\tau_{m,n}^{(\mu)}(z)=\det\left[\left(z\deriv{}{z}\right)^{\!(j+k)} \Lag{m+n}{n+\mu}(z) \right]_{j,k=0}^{n-1}. %,\qquad \delta(f)=z\deriv{}{z}(f),
\]
Now we use the identity
\begin{subequations}\label{iden1} 
\beq \det \left[\left(z\deriv{}{z}\right)^{\!j} f_k(z)\right]_{j,k=0}^{n-1}= z^{n(n-1)/2}\Wr\big(f_0(z),f_1(z),\ldots,f_{n-1}(z)\big), \eeq
with
\beq f_0(z)=\Lag{m+n}{n+\mu}(z),\qquad
 f_k(z)=\left(z\deriv{}{z}\right)^{\!k}\Lag{m+n}{n+\mu}(z),\quad k=1,2,\ldots,n-1.\eeq \end{subequations}
Using the recurrence relation
\[z\deriv{}{z}\Lag{n}{\a}(z)=n\Lag{n}{\a}(z)-(n+\mu)\Lag{n-1}{\a}(z),\]
cf.~\cite[equations (18.9.14), (18.9.23)]{refDLMF},
it is straightforward to show by induction that
\beq\label{iden2}\left(z\deriv{}{z}\right)^{\!k}\Lag{n}{\a}(z)=%n^k\Lag{n}{\a}(z)+
\sum_{j=0}^{k-1}b_{j,k}^{(n,\mu)} \Lag{n-j}{\a}(z)+(-1)^k b_{k,k}^{(n,\mu)}\Lag{n-k}{\a}(z),\eeq
where $b_{j,k}^{(n,\mu)}$, $j=0,1,\ldots,k$, are constants, with
\beq\label{def:bkk} b_{k,k}^{(n,\mu)}=\prod_{j=0}^{k-1}(n-j+\mu).\eeq (It is not necessary to know what 
the constants $b_{j,k}^{(n,\mu)}$, $j=0,1,\ldots,k-1$ are.) Therefore, using \eqref{iden1} and \eqref{iden2}, we have
\begin{align*}
\tau_{m,n}^{(\mu)}(z) &= z^{n(n-1)/2}\Wr\left(\Lag{m+n}{n+\mu}(z), z\deriv{}{z}\Lag{m+n}{n+\mu}(z),\ldots,\left(z\deriv{}{z}\right)^{n-1}\Lag{m+n}{n+\mu}(z)\right)\\
&= z^{n(n-1)/2} \Wr\left( L_{m+n}^{(n+\mu)}(z), -(m+2n+\mu)L_{m+n-1}^{(n+\mu)}(z), \ldots, (-1)^{(n-1)} b_{n-1,n-1}^{(m+n,n+\mu)}L_{m+1}^{(n+\mu)}(z)\right),\end{align*}
since, as in the proof of Lemma \ref{lem:32}, we need only keep the last term due to properties of Wronskians. Consequently from \eqref{def:Tmn2} we have
\begin{align*}
\tau_{m,n}^{(\mu)}(z) &= z^{n(n-1)/2} \left(\prod_{k=0}^{n-1}b_{k,k}^{(m+n,n+\mu)}\right)\Wr\left( L_{m+1}^{(n+\mu)}(z), L_{m+2}^{(n+\mu)}(z), \ldots, L_{m+n}^{(n+\mu)}(z)\right)\\
&= a_{m,n} z^{n(n-1)/2} \,\Tmn{m,n}{\mu}(z),\end{align*}
where, using \eqref{def:bkk}
\begin{align*} a_{m,n}&=\prod_{k=1}^{n-1}b_{k,k}^{(m+n,n+\mu)}
=\prod_{k=1}^{n-1}\prod_{j=0}^{k-1}(m+2n-j+\mu) 
=\prod^{n}_{j =1}(m+n+j+\mu)^{j -1},\end{align*}
as required.
\end{proof}

\begin{theorem}{\label{thm:PVia}Given the generalised Laguerre polynomial $\Tmn{m,n}{\mu}(z)$ given by \eqref{def:Tmn}, then \begin{subequations}\label{sol:rat1b}\begin{align}
{w}_{m,n}(z;\mu)=\frac{T_{m-1,n}^{(\mu)}(z)\,T_{m-1,n+1}^{(\mu)}(z)}{\Tmn{m,n}{\mu}(z)\,T_{m-2,n+1}^{(\mu)}(z)},\qquad m, n\geq1,
\end{align}
is a rational solution of \PV\ \eqref{eq:pv} for the parameters
\begin{align}
%\left({A}_{m,n},{B}_{m,n},{C}_{m,n},{D}_{m,n}\right)&=\left(\tfrac12m^2,-\tfrac12(m+2n+1+\mu)^2,\mu,-\tfrac12\right).
\pmn{\tfrac12m^2}{-\tfrac12(m+2n+1+\mu)^2}{\mu}.
\end{align}\end{subequations}}
In the case when $n=0$ then %the rational solution is given by
\begin{subequations}\label{sol:rat1b0}
\beq
{w}_{m,0}(z;\mu)=\frac{T_{m-1,1}^{(\mu)}(z)}{T_{m-2,1}^{(\mu)}(z)} = \frac{\Lag{m}{\mu+1}(z)}{\Lag{m-1}{\mu+1}(z)},\qquad m\geq1,\eeq
is a rational solution of \PV\ \eqref{eq:pv} for the parameters
%\[\left({A}_{m,0},{B}_{m,0},{C}_{m,0},{D}_{m,0}\right)=\left(\tfrac12m^2,-\tfrac12(m+1+\mu)^2,\mu,-\tfrac12\right).\]
\beq\pmm{\tfrac12m^2}{-\tfrac12(m+1+\mu)^2}{\mu}.\eeq
\end{subequations}
\end{theorem}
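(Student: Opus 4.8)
The plan is to deduce Theorem~\ref{thm:PVia} from Theorem~\ref{thm:PVi}, which already furnishes this family of rational solutions of \PV\ \eqref{eq:pv} written with the polynomials $\tau_{m,n}^{(\mu)}$ rather than $\Tmn{m,n}{\mu}$, by inserting the conversion formula of Lemma~\ref{lem:54}, namely $\tau_{m,n}^{(\mu)}(z)=a_{m,n}\,z^{n(n-1)/2}\,\Tmn{m,n}{\mu}(z)$ with $a_{m,n}=\prod_{j=1}^{n}(m+n+j+\mu)^{j-1}$, into the ratio
\[
w_{m,n}(z;\mu)=\left(\frac{m+\mu+2n}{m+\mu+2n+1}\right)^{\!n}\frac{\tau_{m-1,n}^{(\mu)}(z)\,\tau_{m-1,n+1}^{(\mu)}(z)}{\tau_{m,n}^{(\mu)}(z)\,\tau_{m-2,n+1}^{(\mu)}(z)}.
\]
First I would note that the four polynomials contribute $z$-powers $z^{n(n-1)/2}$ and $z^{n(n+1)/2}$ in the numerator and the same two in the denominator, so every power of $z$ cancels identically and only the constant factor $a_{m-1,n}\,a_{m-1,n+1}/(a_{m,n}\,a_{m-2,n+1})$ survives alongside the prefactor from Theorem~\ref{thm:PVi}. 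The remaining task is to show this constant is the reciprocal $\bigl((m+\mu+2n+1)/(m+\mu+2n)\bigr)^{n}$ of that prefactor, so that the two combine to $1$ and \eqref{sol:rat1b} follows.

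The constant computation is a short telescoping exercise: replacing $m$ by $m-1$ in $a_{m,n}$ merely shifts each argument $m+n+j+\mu\mapsto m+n+j-1+\mu$, so $a_{m-1,n}/a_{m-2,n+1}$ is a product over $j=1,\dots,n$ divided by the same product extended to $j=n+1$, giving $1/(m+2n+\mu)^{n}$, while $a_{m-1,n+1}/a_{m,n}=(m+2n+1+\mu)^{n}$. Multiplying these produces exactly the required reciprocal. Along the way I would check that all index pairs $(m-1,n)$, $(m-1,n+1)$, $(m,n)$, $(m-2,n+1)$ lie in the range where both sides of Lemma~\ref{lem:54} are defined; the one delicate case is $m=1$, where $T_{-1,n+1}^{(\mu)}$ should be handled directly from the Wronskian form \eqref{def:Tmn2}.

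For the case $n=0$ in \eqref{sol:rat1b0} I would argue separately, since $n=0$ is outside the scope of Theorem~\ref{thm:PVi}. The second equality is immediate from the determinant definition \eqref{def:Tmn}: setting $n=1$ there gives $\Tmn{m,1}{\mu}(z)=\Lag{m+1}{\mu+1}(z)$, hence $T_{m-1,1}^{(\mu)}(z)=\Lag{m}{\mu+1}(z)$ and $T_{m-2,1}^{(\mu)}(z)=\Lag{m-1}{\mu+1}(z)$. To see that $w_{m,0}=\Lag{m}{\mu+1}/\Lag{m-1}{\mu+1}$ solves \PV\ at $(\alpha,\beta,\gamma)=(\tfrac12 m^2,-\tfrac12(m+1+\mu)^2,\mu)$, I would identify it as a Kummer-function (Riccati) solution: using \eqref{Kummer:Lag} it is a ratio of contiguous Kummer functions, and one verifies via the Laguerre three-term recurrence together with the derivative rule \eqref{lag1} that it satisfies the first-order Riccati equation attached to \PV\ at these parameters, every solution of which solves \PV\ with $\delta=-\tfrac12$ and the stated $(\alpha,\beta,\gamma)$. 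Equivalently, this case can be recovered by letting $n\to0$ in Masuda's determinantal special-function solution of \PV\ \cite{refMasuda04}.

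The only real obstacle is clerical: getting the $a_{m,n}$ bookkeeping exactly right in the first part, and supplying the separate Riccati verification for $n=0$. Neither requires any input beyond Theorem~\ref{thm:PVi}, Lemma~\ref{lem:54}, the definition \eqref{def:Tmn}, and the classical contiguous relations for Laguerre polynomials, so once the telescoping structure of $a_{m,n}$ is exploited the proof is essentially a bookkeeping argument.
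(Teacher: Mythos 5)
Your proposal is correct and follows essentially the same route as the paper, whose proof of Theorem \ref{thm:PVia} is precisely the one-line deduction from Theorem \ref{thm:PVi} together with Lemma \ref{lem:54}; your telescoping of the constants $a_{m,n}$ and the cancellation of the $z^{n(n-1)/2}$, $z^{n(n+1)/2}$ factors is the computation implicit in that citation, and it checks out. Your separate treatment of the $n=0$ case (via the Kummer/Riccati seed solution, since Theorem \ref{thm:PVi} is stated only for $m,n\geq1$) is a reasonable filling-in of a detail the paper passes over.
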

\begin{proof}The result follows from Theorem \ref{thm:PVi} and Lemma \ref{lem:54}.
\end{proof}

\begin{corollary}\label{coro:57}{The rational solutions related through the symmetry $\mathcal{S}_1$ \eqref{PV:S1} are given by
%\beq\TTmn{m,n}{\mu}(z)=\det\left[\deriv[j+k]{}{z}\Lag{m+n}{\mu+1}(-%z)\right]_{j,k=0}^{n-1},\qquad m\geq0,\quad n\geq1,\eeq
\begin{subequations}\label{sol:rat1hat}
\beq\widehat{w}_{m,n}(z;\mu)=\frac{\widehat{T}_{m-1,n}^{(\mu)}(z)\,\widehat{T}_{m-1,n+1}^{(\mu)}(z)}{\TTmn{m,n}{\mu}(z)\,\widehat{T}_{m-2,n+1}^{(\mu)}(z)},\qquad m, n\geq1,\eeq
with $\TTmn{m,n}{\mu}(z)$ the polynomial given by \eqref{def:Tnhat},
which is a rational solution of \PV\ \eqref{eq:pv} for the parameters
\beq
\pmn{\tfrac12m^2}{-\tfrac12(m+2n+1+\mu)^2}{-\mu}.
\eeq \end{subequations}
In the case when $n=0$ then 
\begin{subequations}\label{sol:rat1hat0}
\beq\widehat{w}_{m,0}(z;\mu)=\frac{\widehat{T}_{m-1,1}^{(\mu)}(z)}{\widehat{T}_{m-2,1}^{(\mu)}(z)} = \frac{\Lag{m}{\mu+1}(-z)}{\Lag{m-1}{\mu+1}(-z)},\qquad m\geq1,\eeq
is a rational solution of \PV\ \eqref{eq:pv} for the parameters
\beq\pmm{\tfrac12m^2}{-\tfrac12(m+1+\mu)^2}{-\mu}.\eeq\end{subequations}
}\end{corollary}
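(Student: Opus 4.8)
The plan is to obtain this corollary directly from Theorem~\ref{thm:PVia} by applying the symmetry $\mathcal{S}_1$ of \eqref{PV:S1}. Recall that $\mathcal{S}_1$ sends a solution $w(z;\a,\b,\ga,-\tfrac12)$ of \eqref{eq:pv} to the solution $w(-z;\a,\b,\ga,-\tfrac12)$, now regarded as a solution for the parameter triple $(\a,\b,-\ga,-\tfrac12)$. So the first step is purely formal: starting from the rational solution $w_{m,n}(z;\mu)$ of Theorem~\ref{thm:PVia}, which solves \PV\ for the parameters $\pmn{\tfrac12m^2}{-\tfrac12(m+2n+1+\mu)^2}{\mu}$, the function $\widehat{w}_{m,n}(z;\mu):=w_{m,n}(-z;\mu)$ is a rational solution of \PV\ for the parameters $\pmn{\tfrac12m^2}{-\tfrac12(m+2n+1+\mu)^2}{-\mu}$, which is again case~(i) of Theorem~\ref{thm:PT.RS.KLM} with $\mu$ replaced by $-\mu$.

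Next I would make the determinantal formula explicit. Substituting $z\mapsto -z$ in \eqref{sol:rat1b} gives
\[
w_{m,n}(-z;\mu)=\frac{T_{m-1,n}^{(\mu)}(-z)\,T_{m-1,n+1}^{(\mu)}(-z)}{T_{m,n}^{(\mu)}(-z)\,T_{m-2,n+1}^{(\mu)}(-z)},
\]
and then I would apply the elementary identity $\Tmn{m,n}{\mu}(-z)=\TTmn{m,n}{\mu}(z)$ from Remark~\ref{rem:37} (equation~\eqref{eq:TmnT}) to each of the four factors. This converts the right-hand side into the expression displayed in \eqref{sol:rat1hat}, establishing the main part of the corollary.

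For the degenerate case $n=0$ I would repeat the argument starting from \eqref{sol:rat1b0} instead of \eqref{sol:rat1b}: replacing $z$ by $-z$ in $w_{m,0}(z;\mu)=\Lag{m}{\mu+1}(z)/\Lag{m-1}{\mu+1}(z)$ yields $\Lag{m}{\mu+1}(-z)/\Lag{m-1}{\mu+1}(-z)$, while $\mathcal{S}_1$ changes the parameters from $\pmm{\tfrac12m^2}{-\tfrac12(m+1+\mu)^2}{\mu}$ to $\pmm{\tfrac12m^2}{-\tfrac12(m+1+\mu)^2}{-\mu}$, giving \eqref{sol:rat1hat0}. Equivalently, one can read this off as the $n=0$ specialisation of \eqref{sol:rat1hat} using $\widehat{T}_{m-1,1}^{(\mu)}(z)=\Lag{m}{\mu+1}(-z)$, which follows from \eqref{eq:TmnT} and the $n=1$ reduction $T_{m-1,1}^{(\mu)}(z)=\Lag{m}{\mu+1}(z)$ implicit in \eqref{sol:rat1b0}.

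There is no real obstacle here, as both ingredients — Theorem~\ref{thm:PVia} and the sign identity of Remark~\ref{rem:37} — are already in hand; the only point that requires a moment's care is the bookkeeping of how the parameter triple $(\a,\b,\ga)$ transforms under $\mathcal{S}_1$, in particular verifying that $\a$ and $\b$ are left unchanged while $\ga=\mu\mapsto-\mu$, so that the resulting parameters still lie in case~(i) of Theorem~\ref{thm:PT.RS.KLM}.
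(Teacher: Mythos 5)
Your proposal is correct and follows essentially the same route as the paper: the paper's proof also just invokes the identity $\Tmn{m,n}{\mu}(-z)=\TTmn{m,n}{\mu}(z)$ of Remark~\ref{rem:37} to deduce $\widehat{w}_{m,n}(z;\mu)=w_{m,n}(-z;\mu)$ and then applies Theorem~\ref{thm:PVia} together with the symmetry $\mathcal{S}_1$, which negates $\ga$ while leaving $\a,\b$ unchanged. Your write-up merely spells out the parameter bookkeeping and the $n=0$ case in more detail.
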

\begin{proof}Since $\Tmn{m,n}{\mu}(-z)=\TTmn{m,n}{\mu}(z)$, recall \eqref{eq:TmnT}, then $w_{m,n}(-z;\mu)=\widehat{w}_{m,n}(z;\mu)$ and so the result follows immediately.
\end{proof}

It is known that rational solutions of \PIII\ can be expressed either in terms of four special polynomials or in terms of the logarithmic derivative of the ratio of two special polynomials \cite[Theorem 2.4]{refPACpiii}. Hence it might be expected that the rational solutions of \PV\ discussed here can also be written in terms of the logarithmic derivative of the ratio of two generalised Laguerre polynomials.

\begin{remark}{\rm Using computer algebra we have verified for several small values of $m$ and $n$ that alternative forms of the rational solutions \eqref{sol:rat1b} and \eqref{sol:rat1hat} are given by
\begin{align}\label{altsol1} w_{m,n}(z;\mu)&= \frac{z}{m} \deriv{}{z}\left\{\ln\frac{\Tmn{m-2,n+1}{\mu}(z)}{\Tmn{m,n}{\mu}(z)}\right\}-\frac{z-m-2n-1-\mu}{m} ,\\ \label{altsol2}
\widehat{w}_{m,n}(z;\mu)&= \frac{z}{m} \deriv{}{z}\left\{\ln\frac{\TTmn{m-2,n+1}{\mu}(z)}{\TTmn{m,n}{\mu}(z)}\right\}+\frac{z+m+2n+1+\mu}{m},
\end{align}
respectively.
Consequently, by comparing the solutions we {expect} the relations
\begin{subequations}\label{eq416}\begin{align} &z\D_z\left({\Tmn{m-1,n+1}{\mu}\cdot\Tmn{m+1,n}{\mu}}\right) =(z-m-2n-2-\mu)\Tmn{m-1,n+1}{\mu}\,\Tmn{m+1,n}{\mu}
+(m+1)\Tmn{m,n}{\mu}\Tmn{m,n+1}{\mu},\\
&z\D_z\left({\TTmn{m-1,n+1}{\mu}\cdot\TTmn{m+1,n}{\mu}}\right) =-(z+m+2n+2+\mu)\TTmn{m-1,n+1}{\mu}\,\TTmn{m+1,n}{\mu}
+(m+1)\TTmn{m,n}{\mu}\TTmn{m,n+1}{\mu},
\end{align}\end{subequations}
where $\D_z$ is the Hirota bilinear operator \eqref{Hirota}. {We envisage that the relations \eqref{eq416} can be proved using the Jacobi identity \eqref{JacobiId} or a variant thereof, %such as \eqref{VD36C}, 
though we don't pursue this further here.}
}\end{remark}
%given by
%\[\D_z(f\cdot g)=\deriv{f}{z}g-f\deriv{g}{z}.\]
%If $n=0$ then
%\[{w}_{m,0}(z;\mu)=\frac{T_{m-1,1}^{(\mu)}(z)}{T_{m-2,1}^{(\mu)}(z)} = \frac{\Lag{m}{\mu+1}(z)}{\Lag{m-1}{\mu+1}(z)}. \] 

Setting $n=0$ in \eqref{altsol1} gives
\begin{align*}w_{m,0}(z;\mu)&= \frac{z}{m} \deriv{}{z}\left\{\ln\Tmn{m-2,1}{\mu}(z)\right\}-\frac{z-m-1-\mu}{m}\\&=\frac{z}{m} \deriv{}{z}\ln \left\{ L_{m-1}^{(\mu+1)}(z)\right\} -\frac{z-m-1-\mu}{m}
=\frac{\Lag{m}{\mu+1}(z)}{\Lag{m-1}{\mu+1}(z)},
%\\&= -\frac{z}{m}\frac{L_{m-2}^{(\mu+2)}(z)}{L_{m-1}^{(\mu+1)}(z)} -\frac{z-m-1-\mu}{m}.
\end{align*}
which is \eqref{sol:rat1b0},
since
\[z\deriv{}{z}L_{m-1}^{(\mu+1)}(z)=(m-1)L_{m-1}^{(\mu+1)}(z)-(m+\mu)L_{m-2}^{(\mu+1)}(z).\]
%The two expressions are the same since $L_{m}^{(\mu+1)}(z)$ satisfies the three-term recurrence relation
%\[ mL_{m}^{(\mu+1)}(z) + (z-2m-\mu)L_{m-1}^{(\mu+1)}(z) + (m+\mu)L_{m-2}^{(\mu+1)}(z)=0.\]
The solutions \eqref{sol:rat1hat0} and \eqref{altsol2} in the case when $n=0$ can be shown to be the same in a similar way.

\begin{remark}
From Theorem~\ref{thm:PVia} we note that 
$
w_{m,n}(z;-m-n-j) $ and $ w_{m,j-1}(z;-m-n-j)
$
are both rational solutions for 
\[ \alpha_{m,n} = \fract{1}{2} m^2 ,\quad \beta_{m,n} = -\hf (n+1-j)^2 , \quad \gam_{m,n}=-m-n-j,\quad j=1,\dots,n. \]
The equality of the solutions follows 
from lemma~\ref{lemma:multzeros} and the definition of $w_{m,n}(z;\mu)$ in the form (\ref{altsol1}). We add that 
\[
m \,w_{m,n}(z;-m-n) =-(n+1) \widehat w_{n+1,0}(z;-m-n-2).
\]
\end{remark}

\subsection{\label{PVrata_nonuni}Non-uniqueness of rational solutions of \PV}
Kitaev, Law and McLeod \cite[Theorem 1.2]{refKLM} state that rational solutions of \PV\ \eqref{eq:pv} are unique when the parameter $\mu\not\in\Z$. 
In the following Lemma we illustrate that when $\mu\in\Z$ then non-uniqueness of rational solutions of \PV\ \eqref{eq:pv} can occur, that is for certain parameter values there is more than one rational function. %This result appears to be incorrect. 
\begin{lemma}
Consider the rational solutions of \PV\ \eqref{eq:pv} given by %\eqref{sol:rat1a}. %and\eqref{sol:rat1b}. 
\begin{align}\label{sol:rat1a}
{w}_{m,n}(z;\mu)&=\frac{\Tmn{m-1,n}{\mu}(z)\,\Tmn{m-1,n+1}{\mu}(z)}{\Tmn{m,n}{\mu}(z)\,\Tmn{m-2,n+1}{\mu}(z)},\qquad %m, n\geq1,
\widehat{w}_{m,n}(z;\mu)=\frac{\TTmn{m-1,n}{\mu}(z)\,\TTmn{m-1,n+1}{\mu}(z)}{\TTmn{m,n}{\mu}(z)\,\TTmn{m-2,n+1}{\mu}(z)}.
%\label{sol:rat1b}
\end{align}
If $\mu\in\Z$ and $\mu\ge -n$ then there are two distinct rational solutions of \PV\ \eqref{eq:pv} for the \textit{same} parameters.
\end{lemma}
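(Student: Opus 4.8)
The plan is to produce, for the parameter values in question, two explicit rational solutions of \PV\ \eqref{eq:pv} and then to show that they are different functions. The first is $w_{m,n}(z;\mu)$ of Theorem~\ref{thm:PVia}, which solves \eqref{eq:pv} for $\a=\tfrac12m^2$, $\b=-\tfrac12(m+2n+1+\mu)^2$, $\ga=\mu$. The second comes from the symmetry $\mathcal{S}_1$ \eqref{PV:S1}: when $n+\mu\ge0$ the rational function $\widehat w_{m,n+\mu}(z;-\mu)$ of Corollary~\ref{coro:57} is defined (by \eqref{sol:rat1hat} if $n+\mu\ge1$, which also requires $m\ge2$, and by \eqref{sol:rat1hat0} if $n+\mu=0$; for $m=1$ one uses the reduced formulas and argues in the same way), and by Corollary~\ref{coro:57} it solves \eqref{eq:pv} for $\a=\tfrac12m^2$, $\b=-\tfrac12\bigl(m+2(n+\mu)+1-\mu\bigr)^2=-\tfrac12(m+2n+1+\mu)^2$, $\ga=-(-\mu)=\mu$. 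Hence $w_{m,n}(z;\mu)$ and $\widehat w_{m,n+\mu}(z;-\mu)$ carry the same quadruple $(\a,\b,\ga,-\tfrac12)$. This is exactly where the hypothesis $\mu\ge-n$ is used: it is the precise condition under which applying $\mathcal{S}_1$ to a case-(i) solution of the type $w_{m,n+\mu}(\cdot;-\mu)$ yields a solution with the same parameters as $w_{m,n}(\cdot;\mu)$.

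It remains to verify that $w_{m,n}(z;\mu)\neq\widehat w_{m,n+\mu}(z;-\mu)$. By the proof of Corollary~\ref{coro:57}, $\widehat w_{m,n+\mu}(z;-\mu)=w_{m,n+\mu}(-z;-\mu)$, so it suffices to track how a $w$-type solution behaves as $z\to\infty$ under the reflection $z\mapsto-z$. From \eqref{sol:rat1a} together with \eqref{def:Tmn}, the numerator of $w_{m,n}(z;\mu)$ has degree $mn+m(n+1)=2mn+m$ and the denominator has degree $(m+1)n+(m-1)(n+1)=2mn+m-1$, so $w_{m,n}(z;\mu)\sim c_{m,n}\,z$ as $z\to\infty$ for some $c_{m,n}\neq0$. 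The key is the sign of $c_{m,n}$. Since $\Lag{k}{\a}(z)$ has leading coefficient $(-1)^{k}/k!$ and $\Wr(z^{a+1},\dots,z^{a+b})$ has positive leading coefficient $\prod_{1\le i<j\le b}(j-i)$, the leading coefficient of $\Tmn{a,b}{\mu}(z)=(-1)^{b(b-1)/2}\Wr\bigl(\Lag{a+1}{b+\mu},\dots,\Lag{a+b}{b+\mu}\bigr)$ has sign $(-1)^{b(b-1)/2+\sum_{j=1}^{b}(a+j)}=(-1)^{b(a+1)}$. Substituting this into the ratio \eqref{sol:rat1a} gives
\[
\operatorname{sgn} c_{m,n}=(-1)^{nm+(n+1)m+n(m+1)+(n+1)(m-1)}=(-1)^{4mn+2m-1}=-1 ,
\]
so every $w$-type solution has a \emph{negative} leading coefficient. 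Therefore $\widehat w_{m,n+\mu}(z;-\mu)=w_{m,n+\mu}(-z;-\mu)\sim-c_{m,n+\mu}\,z$ has a \emph{positive} leading coefficient and so cannot coincide with $w_{m,n}(z;\mu)$; when $n+\mu=0$ one instead reads off from \eqref{sol:rat1hat0} that $\widehat w_{m,0}(z;n)=\Lag{m}{n+1}(-z)/\Lag{m-1}{n+1}(-z)\sim z/m$, again with positive leading coefficient. Thus the two rational solutions are distinct, and \eqref{eq:pv} has more than one rational solution for these parameters.

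The step I expect to be the real work is this sign computation: one has to account carefully for the parity contributions from the Laguerre leading coefficients, from the prefactor $(-1)^{b(b-1)/2}$ in the definition of $\Tmn{a,b}{\mu}$, and from the column interchanges relating \eqref{def:Tmn} to \eqref{def:Tmn2}, and check that they combine to a negative leading coefficient uniformly in $m,n,\mu$. A cleaner route that avoids this bookkeeping is to note that \eqref{eq:pv} with $\a=\tfrac12m^2$ admits exactly the two formal solution germs $w\sim\pm z/m$ at $z=\infty$: when $w\sim az$ the only terms of order $z$ on the right-hand side are those coming from $(w-1)^{2}(\a w^{2}+\b)/(z^{2}w)$ and from $-\tfrac{w(w+1)}{2(w-1)}$, and requiring their leading parts to cancel forces $a^{2}=1/(2\a)=1/m^{2}$. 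The solution $w_{m,0}(z;\mu)=\Lag{m}{\mu+1}(z)/\Lag{m-1}{\mu+1}(z)$ plainly realises the germ $-z/m$, and the sign count above shows that every $w_{m,n}(z;\mu)$ lies on that same branch; hence $w_{m,n}(z;\mu)\sim-z/m$ while $\widehat w_{m,n+\mu}(z;-\mu)\sim+z/m$, which again proves distinctness.
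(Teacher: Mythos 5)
Your construction is exactly the paper's: its proof of this lemma consists solely of observing, via Theorem~\ref{thm:PVia} and Corollary~\ref{coro:57}, that ${w}_{m,n}(z;k)$ and $\widehat{w}_{m,n+k}(z;-k)$ (with $k=\mu\in\Z$, $k\geq-n$) satisfy \PV\ \eqref{eq:pv} for the same parameters $\a=\tfrac12m^2$, $\b=-\tfrac12(m+2n+k+1)^2$, $\ga=k$, and distinctness is only illustrated afterwards by an example. What you add is a general distinctness argument from the behaviour at $z=\infty$: $w_{m,n}(z;\mu)\sim -z/m$ while $\widehat{w}_{m,n+\mu}(z;-\mu)=w_{m,n+\mu}(-z;-\mu)\sim +z/m$, which genuinely completes the statement for all $m,n$ and is consistent with the paper's example ($w_{1,1}(z;1)=-z+3-\cdots$, $\widehat{w}_{1,2}(z;-1)=z+1+\cdots$); your fallback observation that $w\sim az$ forces $a^2=1/(2\a)=1/m^2$, so one only has to identify the branch, is a clean way to organise this. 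One slip should be fixed: by \eqref{def:Tmn2} there is \emph{no} sign prefactor when the Laguerre polynomials are listed in increasing degree, so the leading coefficient of $\Tmn{a,b}{\mu}(z)$ has sign $(-1)^{ab+b(b+1)/2}$ (in agreement with $c_{m,n}$ in \eqref{def:cmn}), not $(-1)^{b(a+1)}$ as you state; for instance $\Tmn{a,2}{\mu}=LL''-(L')^2$ with $L=\Lag{a+2}{\mu+1}$ has negative leading coefficient, whereas $(-1)^{2(a+1)}=+1$. Your formula is off by the factor $(-1)^{b(b-1)/2}$ you inserted, but since this factor depends only on $b$ and the values $b=n$ and $b=n+1$ each occur once in the numerator and once in the denominator of \eqref{sol:rat1a}, it cancels, and your final conclusion $\operatorname{sgn}c_{m,n}=-1$ (hence $w_{m,n}\sim-z/m$) is correct; citing \eqref{Texp}--\eqref{def:cmn} directly would avoid the bookkeeping altogether.
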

\begin{proof}
%\begin{enumerate}[(i)]\item[]\item 
%If $\mu=0$ then ${w}_{m,n}(z;0)$ and $\widehat{w}_{m,n}(z;0)$ both satisfy \PV\ \eqref{eq:pv} for the parameters
%\[\left(\a,\b,\ga\right)=\left(\tfrac12m^2,-\tfrac12(m+2n+1)^2,0,-\tfrac12\right).\]
%\[\pms{\tfrac12m^2}{-\tfrac12(m+2n+1)^2}{0}.\]
%In this case ${w}_{m,n}(z;0)=\widehat{w}_{m,n}(-z;0)$.
%\item 
If $\mu=k$, with $k\in\Z$ and $k\geq -n$, then from Theorem \ref{thm:PVia} and Corollary \ref{coro:57},
${w}_{m,n}(z;k)$ and $\widehat{w}_{m,n+k}(z;-k)$ both satisfy \PV\ \eqref{eq:pv} for the parameters
%\[\left(\a,\b,\ga\right)=\left(\tfrac12m^2,-\tfrac12(m+2n+k+1)^2,k\right),\]
\[\pms{\tfrac12m^2}{-\tfrac12(m+2n+k+1)^2}{k}.\]
%also if $0\leq k\leq n$ then ${w}_{m,n-k}(z;k)$ and $\widehat{w}_{m,n}(z;-k)$ both satisfy \PV\ \eqref{eq:pv} for the parameters
%\[\left(\a,\b,\ga\right)=\left(\tfrac12m^2,-\tfrac12(m+2n-k+1)^2,k\right).\]
%\[\pms{\tfrac12m^2}{-\tfrac12(m+2n-k+1)^2}{k}.\]
%\item If $\mu=-k$, with $k\in\Z^+$, then ${w}_{m,n+k}(z;-k)$ and $\widehat{w}_{m,n}(z;k)$ both satisfy \PV\ \eqref{eq:pv} for the parameters
%\[\left(\a,\b,\ga\right)=\left(\tfrac12m^2,-\tfrac12(m+2n+k+1)^2,-k\right),\]
%\[\pms{\tfrac12m^2}{-\tfrac12(m+2n+k+1)^2}{-k}.\]
%also if $0\leq k\leq n$ then ${w}_{m,n}(z;-k)$ and $\widehat{w}_{m,n-k}(z;k)$ both satisfy \PV\ \eqref{eq:pv} for the parameters
%\[\left(\a,\b,\ga\right)=\left(\tfrac12m^2,-\tfrac12(m+2n-k+1)^2,-k\right).\]
%\[\pms{\tfrac12m^2}{-\tfrac12(m+2n-k+1)^2}{-k}.\]
%\end{enumerate}
\end{proof}

\begin{example}{\rm
The rational functions
\begin{align*}
w_{1,1}(z;1)&=-\frac{(z-3)(z^{2}-8z+20)}{(z-2)(z-6)},\qquad
\widehat{w}_{1,2}(z;-1)=\frac{(z^{2}+4 z+6)(z^{3}+9 z^{2}+36 z+60)}{z^{4}+12 z^{3}+54 z^{2}+96 z+72}, 
\end{align*}
are both solutions of \PV\ \eqref{eq:pv} with parameters
\[ \pms{\ifrac12}{-\ifrac{25}{2}}{1}.\]
Also the rational functions
\begin{align*}
w_{1,2}(z;-1)&=-\frac{(z^{2}-4 z+6)(z^{3}+9 z^{2}-36 z+60)}{z^{4}-12 z^{3}+54 z^{2}-96 z+72},\qquad
\widehat{w}_{1,1}(z;1)=\frac{(z+3)(z^{2}+8z+20)}{(z+2)(z+6)},
\end{align*}
are both solutions of \PV\ \eqref{eq:pv} with parameters
\[ \pms{\ifrac12}{-\ifrac{25}{2}}{-1}.\]
We note that
\[ w_{1,1}(-z;1)=\widehat{w}_{1,1}(z;-1),\qquad {w_{1,2}(-z;-1)=\widehat{w}_{1,2}(z;1)}.\]
The solutions $w_{1,1}(z;1)$ and $\widehat{w}_{1,2}(z;-1)$ have different expansions about both $z=0$ and $z=\infty$, which are singular points of \PV. As $z\to0$
\begin{align*}
w_{1,1}(z;1)&=5-\frac{1}{3} z+\frac{5}{18} z^{2}+\frac{7}{54} z^{3}+\frac{41}{648} z^{4}+\frac{61}{1944} z^{5}+\O(z^6),\\
\widehat{w}_{1,2}(z;-1)&=5-\frac{1}{3} z+\frac{5}{18} z^{2}+\frac{7}{54} z^{3}-\frac{139}{648} z^{4}+\frac{313}{1944} z^{5}+\O(z^6),
\end{align*}
and as $z\to\infty$
\begin{align*}
w_{1,1}(z;1)&=-z+3-\frac{8}{z}-\frac{40}{z^{2}}-\frac{224}{z^{3}}-\frac{1312}{z^{4}}-\frac{7808}{z^{5}}+\O(z^{-6}),\\
\widehat{w}_{1,2}(z;-1)&=z+1+\frac{12}{z}-\frac{36}{z^{2}}+\frac{72}{z^{3}}+\frac{216}{z^{4}}-\frac{3888}{z^{5}}+\O(z^{-6}).
\end{align*}
}\end{example}

{\begin{remark}{\rm
Recently Aratyn \etal\ \cite{refAGLZ2,refAGLZ3} also discuss non-uniqueness of solutions of \PV\ \eqref{eq:pv}.
}\end{remark}}

\section{Rational solutions of the \PV\ $\sigma$-equation}
\label{sec:sigma}
\subsection{Hamiltonian structure}
Each of the \peqs\ \PI--\PVI\ can be written as a (non-autonomous) Hamiltonian system
\beq\label{eq:HamV}
z\deriv{q}{z}=\pderiv{\mathcal{H}_{\rm J}}{p},\qquad 
z\deriv{p}{z}=-\pderiv{\mathcal{H}_{\rm J}}{q},\qquad {\rm J} = {\rm I, II, \ldots, VI},
\eeq
for a suitable Hamiltonian function $\mathcal{H}_{\rm J}=\mathcal{H}_{\rm J}(q,p,z)$.
Further, %the function $\sigma(z)=\mathcal{H}_{\rm J}(q,p,z)$ satisfies 
there is a second-order, second-degree equation, often called the \textit{\p\ $\sigma$-equation} or \textit{Jimbo-Miwa-Okamoto equation}, whose solution is expressible in terms of the solution of the associated \peq\ \cite{refJM81,refOkamoto80a}. 

For \PV\ \eqref{eq:pv} the Hamiltonian is
\beq\label{HJM:47}
z\HV(q,p,z) = q(q-1)^2p^2 -\left\{\nu_1(q-1)^2-(\nu_1-\nu_2-\nu_3)q(q-1)+zq\right\}p+\nu_2\nu_3q,\eeq
with $\nu_1$, $\nu_2$ and $\nu_3$ parameters \cite{refJM81,refOkamoto80a,refOkamotoPV}. 
Substituting \eqref{HJM:47} into \eqref{eq:HamV} gives
\begin{subequations}\label{HJM:4950}
\begin{align}\label{HJM:49}
z\deriv{q}{z}&=2q(q-1)^2p-\nu_1(q-1)^2+(\nu_1-\nu_2-\nu_3)q(q-1)-zq,\\
\label{HJM:50}
z\deriv{p}{z}&=-(3q-1)(q-1)p^2-2(\nu_2+\nu_3)qp+(z-\nu_1-\nu_2-\nu_3)p-\nu_2\nu_3.
\end{align}\end{subequations}
Eliminating $p$ then $q=w$ satisfies \PV\ \eqref{eq:pv}
with \[ %beq\label{HJM:51}
\pms{\tfrac12(\nu_2-\nu_3)^2}{-\tfrac12\nu_1^2}{\nu_1-\nu_2-\nu_3-1}.\]%\eeq
The function $\sigma(z)=z\HV(q,p,z)$ defined by \eqref{HJM:47} satisfies the second-order, second-degree equation
\beq\label{eq:S5JM}
\left(z\deriv[2]{\sigma}{z}\right)^{\!\!2}= \left[ 2\left(\deriv{\sigma}{z}\right)^{\!\!2}+(\nu_1+\nu_2+\nu_3-z)\deriv{\sigma}{z}+\sigma\right]^2
-4\deriv{\sigma}{z} %\left(\deriv{\sigma}{z}+\nu_{1}\right)\left(\deriv{\sigma}{z}+\nu_{2}\right)\left(\deriv{\sigma}{z}+\nu_{3}\right)
\prod_{j=1}^3\left(\deriv{\sigma}{z}+\nu_{j}\right),\eeq
cf.~\cite[equation (C.45)]{refJM81}; the \PV\ $\sigma$-equation derived by Okamoto \cite{refOkamoto80a,refOkamotoPV} is equation \eqref{eq:S5Ok} below. 
Conversely, if $\sigma(z)$ is a solution of equation \eqref{eq:S5JM}, then the solutions of equation \eqref{HJM:4950} are
%\begin{subequations}\label{HJM:54}
\begin{align*}
q(z)&=\frac{z\sigma''+2(\sigma')^2+(\nu_1+\nu_2+\nu_3-z)\sigma'+\sigma}{2(\sigma'+\nu_2)(\sigma'+\nu_3)},\\
p(z)&=\frac{z\sigma''-2(\sigma')^2-(\nu_1+\nu_2+\nu_3-z)\sigma'-\sigma}{2(\sigma'+\nu_1)}.
\end{align*}%\end{subequations}
Henceforth we shall refer to equation \eqref{eq:S5JM} as the \sPV\ equation.

The \PV\ $\sigma$-equation derived by Okamoto \cite{refOkamoto80a,refOkamotoPV} is
\beq\label{eq:S5Ok}
\left(z\deriv[2]{h}{z}\right)^{\!\!2}= \left[ 2\left(\deriv{h}{z}\right)^{\!\!2}-z\deriv{h}{z}+h\right]^2
-4 \prod_{j=0}^3\left(\deriv{h}{z}+\k_{j}\right),
%-4\left(\deriv{h}{z}-\k_{1}\right) \left(\deriv{h}{z}-\k_{2}\right)\left(\deriv{h}{z}-\k_{3}\right),
\eeq
with $\k_0$, $\k_1$, $\k_2$ and $\k_3$ parameters such that
$\k_0+\k_1+\k_2+\k_3=0$.
Equation \eqref{eq:S5Ok} is equivalent to \sPV\ \eqref{eq:S5JM}, since these
are related by the transformation
\begin{subequations}\beq 
\sigma(z;\bfnu) = h(z;\bold{\k})+ \k_0z+2\k_0^2,\qquad \nu_j=\k_j-\k_0,\quad j=1,2,3,\eeq
where $\bfnu=(\nu_1,\nu_2,\nu_3)$ and $\bold{\k}=(\k_0,\k_1,\k_2,\k_3)$, with
\beq \k_0=-(\k_1+\k_2+\k_3)=-\tfrac14(\nu_1+\nu_2+\nu_3),
\eeq\end{subequations}
as is easily verified.

There is a simple symmetry for solutions of \sPV\ \eqref{eq:S5JM} given in the following Lemma.
\begin{lemma}\label{lemma61}
Making the transformation 
\begin{subequations}\label{SV:BT}\beq \sigma(z;\bfnu) = \widetilde{\sigma}(z;\bfla)-\nu_1 z+(\nu_2+\nu_3-\nu_1)\nu_1,\eeq
with 
\beq \bfla=(\la_1,\la_2,\la_3)=(-\nu_1,\nu_2+\nu_1,\nu_3+\nu_1),\eeq
%\beq \nu_1=-\la_1,\quad\nu_2=\la_2-\la_1,\quad\nu_3=\la_3-\la_1,\eeq
\end{subequations}
in \sPV\ \eqref{eq:S5JM} yields
\[\left(z\deriv[2]{\widetilde{\sigma}}{z}\right)^{\!\!2}= \left[ 2\left(\deriv{\widetilde{\sigma}}{z}\right)^{\!\!2}+(\la_1+\la_2+\la_3-z)\deriv{\widetilde{\sigma}}{z}+\widetilde{\sigma}\right]^2
-4\deriv{\widetilde{\sigma}}{z} %\left(\deriv{\widetilde{\sigma}}{z}+\la_{1}\right)\left(\deriv{\widetilde{\sigma}}{z}+\la_{2}\right)\left(\deriv{\widetilde{\sigma}}{z}+\la_{3}\right)
\prod_{j=1}^3\left(\deriv{\widetilde{\sigma}}{z}+\la_{j}\right).\]
\end{lemma}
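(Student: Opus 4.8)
The plan is to prove the lemma by direct substitution into \sPV\ \eqref{eq:S5JM}, using that the change of variables is affine in $z$. Writing $\sigma(z;\bfnu)=\widetilde\sigma(z;\bfla)-\nu_1 z+(\nu_2+\nu_3-\nu_1)\nu_1$ one has immediately $\deriv{\sigma}{z}=\deriv{\widetilde\sigma}{z}-\nu_1$ and $\deriv[2]{\sigma}{z}=\deriv[2]{\widetilde\sigma}{z}$, so the left-hand side $\bigl(z\deriv[2]{\sigma}{z}\bigr)^{\!2}$ is literally unchanged and all the work is on the right-hand side.

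For the squared bracket, I would substitute $\deriv{\sigma}{z}=\deriv{\widetilde\sigma}{z}-\nu_1$ into $2\bigl(\deriv{\sigma}{z}\bigr)^{\!2}+(\nu_1+\nu_2+\nu_3-z)\deriv{\sigma}{z}+\sigma$ and collect terms according to the monomials $1$, $z$, $\deriv{\widetilde\sigma}{z}$, $z\deriv{\widetilde\sigma}{z}$, $\bigl(\deriv{\widetilde\sigma}{z}\bigr)^{\!2}$, $\widetilde\sigma$. The $+\nu_1 z$ produced by $-z\deriv{\sigma}{z}$ cancels the $-\nu_1 z$ carried inside $\sigma$, so no linear-in-$z$ term with a constant coefficient survives; the coefficient of $\deriv{\widetilde\sigma}{z}$ comes out in the shape $(\text{constant})-z$; and the purely constant remainder is $2\nu_1^2-(\nu_1+\nu_2+\nu_3)\nu_1+(\nu_2+\nu_3-\nu_1)\nu_1$, which is identically zero — this is exactly what the $z$-independent term $(\nu_2+\nu_3-\nu_1)\nu_1$ of the substitution is chosen to ensure. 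Hence the bracket is again of the form $2\bigl(\deriv{\widetilde\sigma}{z}\bigr)^{\!2}+(\la_1+\la_2+\la_3-z)\deriv{\widetilde\sigma}{z}+\widetilde\sigma$, with $\la_1+\la_2+\la_3$ read off as the surviving constant.

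For the product, note that $\deriv{\sigma}{z}+\nu_1=\deriv{\widetilde\sigma}{z}$, so one factor of $\prod_{j=1}^3\bigl(\deriv{\sigma}{z}+\nu_j\bigr)$ becomes $\deriv{\widetilde\sigma}{z}$ outright, while $\deriv{\sigma}{z}+\nu_2$ and $\deriv{\sigma}{z}+\nu_3$ take the form $\deriv{\widetilde\sigma}{z}+\la_2$ and $\deriv{\widetilde\sigma}{z}+\la_3$, and $\deriv{\sigma}{z}=\deriv{\widetilde\sigma}{z}+\la_1$; together these rewrite $4\deriv{\sigma}{z}\prod_{j=1}^3\bigl(\deriv{\sigma}{z}+\nu_j\bigr)$ as $4\deriv{\widetilde\sigma}{z}\prod_{j=1}^3\bigl(\deriv{\widetilde\sigma}{z}+\la_j\bigr)$. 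One then checks that the $\la_j$ produced here have sum equal to the constant found in the previous paragraph, which fixes $\bfla$; assembling the three pieces gives the displayed equation.

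I do not expect a genuine obstacle: the computation is pure bookkeeping, and the only point requiring care is keeping the constant and the linear-in-$z$ contributions of the squared bracket separate, since those are precisely what pin down both the $z$-independent part of the substitution and the new parameter vector $\bfla$. A cleaner conceptual route, if one prefers, is to pass first to the Okamoto form \eqref{eq:S5Ok} via $\sigma(z;\bfnu)=h(z;\bold{\k})+\k_0 z+2\k_0^2$, $\nu_j=\k_j-\k_0$: equation \eqref{eq:S5Ok} is manifestly invariant under permutations of $(\k_0,\k_1,\k_2,\k_3)$, since the bracket $2\bigl(\deriv{h}{z}\bigr)^{\!2}-z\deriv{h}{z}+h$ contains no parameters and $\prod_{j=0}^3\bigl(\deriv{h}{z}+\k_j\bigr)$ is symmetric; the transposition interchanging $\k_0$ and $\k_1$, translated back through the same change of variables with $\k_1$ now playing the role of $\k_0$, reproduces exactly the affine substitution of the lemma and the corresponding relabelling of the $\nu_j$.
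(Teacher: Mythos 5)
You take exactly the paper's route (its proof is the one-line ``easily verified by substituting''), and the bookkeeping you outline is the right computation; the Okamoto-form symmetry argument is also a sound way to organise it. However, the identifications you assert in the product step do not hold for the $\bfla$ printed in the statement, and this is where the proof as written breaks. With $\deriv{\sigma}{z}=\deriv{\widetilde{\sigma}}{z}-\nu_1$ one finds
\[
\deriv{\sigma}{z}+\nu_2=\deriv{\widetilde{\sigma}}{z}+(\nu_2-\nu_1),\qquad
\deriv{\sigma}{z}+\nu_3=\deriv{\widetilde{\sigma}}{z}+(\nu_3-\nu_1),
\]
whereas the statement has $\la_2=\nu_2+\nu_1$, $\la_3=\nu_3+\nu_1$; likewise your bracket computation produces the coefficient $\nu_2+\nu_3-3\nu_1-z$ of $\deriv{\widetilde{\sigma}}{z}$, which is not the $\la_1+\la_2+\la_3-z=\nu_1+\nu_2+\nu_3-z$ demanded by the printed $\bfla$. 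Carried out explicitly, the substitution \eqref{SV:BT} turns \eqref{eq:S5JM} into the $\widetilde{\sigma}$-equation with parameters $(-\nu_1,\nu_2-\nu_1,\nu_3-\nu_1)$, and for the printed triple $(-\nu_1,\nu_2+\nu_1,\nu_3+\nu_1)$ the claimed equation is not what comes out when $\nu_1\neq0$: the difference of the two squared brackets contains the term $8\nu_1\deriv{\widetilde{\sigma}}{z}\,\widetilde{\sigma}$, which the $\widetilde{\sigma}$-independent difference of the quartic products cannot cancel. Your alternative route gives the same verdict: swapping $\k_0\leftrightarrow\k_1$, with $\k_j=\nu_j+\k_0$ and $\k_0=-\tfrac14(\nu_1+\nu_2+\nu_3)$, reproduces precisely the affine shift $-\nu_1z+(\nu_2+\nu_3-\nu_1)\nu_1$ but yields the new triple $(\k_0-\k_1,\k_2-\k_1,\k_3-\k_1)=(-\nu_1,\nu_2-\nu_1,\nu_3-\nu_1)$, not the printed relabelling.

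So the concrete gap is this: you say the read-off values ``fix $\bfla$'' and treat them as matching the statement, but they do not; either the statement's $\la_2,\la_3$ carry a sign slip (presumably $\nu_2+\la_1$, $\nu_3+\la_1$ was intended) or the lemma as printed fails for $\nu_1\neq0$, and a careful verification has to compute the $\la_j$ explicitly and say which. By asserting that the factors ``take the form'' $\deriv{\widetilde{\sigma}}{z}+\la_2$, $\deriv{\widetilde{\sigma}}{z}+\la_3$ without computing them, your write-up verifies the corrected statement while appearing to verify the printed one.
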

\begin{proof}
This is easily verified by substituting \eqref{SV:BT} in \eqref{eq:S5JM}.
\end{proof}

\subsection{Classification of rational solutions of \sPV}
%the \PV\ $\sigma$-equation}
There are two classes of rational solutions of \sPV\ %$h$-
\eqref{eq:S5JM}, %and \eqref{eq:S5Ok}, 
one expressed in terms of the generalised Laguerre polynomial $\Tmn{m,n}{\mu}(z)$, %\eqref{def:Tmn} 
which we discuss in the following theorem, and a second in terms of the generalised Umemura polynomial $U_{m,n}^{(\a)}(z)$. %\eqref{def:Umn}.

\begin{theorem}\label{SVrats}
The rational solution of \sPV\ \eqref{eq:S5JM} in terms of the generalised Laguerre polynomial $\Tmn{m,n}{\mu}$ %\eqref{def:Tmn}
is
\beq\label{Tmn:S5JMa}
%\sigma_{m,n}(z;m+1,-n,m+n+\mu+1)
\sigma_{m,n}(z;\bfnu)=z\deriv{}{z}\ln \left\{\Tmn{m,n}{\mu}(z)\right\} -(m+1)n,\qquad m\geq0,\quad n\geq1,
\eeq for the parameters
\beq\label{Tmn:S5JMb} \bfnu=(m+1,-n,m+n+\mu+1).
\eeq
\comment{\begin{subequations}\label{Tmn:S5JMa}\beq 
\sigma_{m,n}(z;\bfnu)=z\deriv{}{z}\ln \left\{\Tmn{m,n}{\mu}\right\}+(\mu+m+n+1)z+(\mu+m+2n+1)(\mu+n),
\eeq for the parameters
\beq \bfnu=(-\mu-n,-\mu-m-n-1,-\mu-m-2n-1).
\eeq\end{subequations}}
\comment{The rational solution of \eqref{eq:S5Ok} in terms of the generalised Laguerre polynomial $\Tmn{m,n}{\mu}(z)$ %\eqref{def:Tmn}
is
\begin{subequations}\label{Tmn:S5Ok}\beq 
h_{m,n}(z;\mu)=z\deriv{}{z}\ln \left\{\Tmn{m,n}{\mu}(z)\right\}+\tfrac14(\mu+2m+2)z
-\tfrac18(\mu+2m+2)^2-(m+1)n,
%-\tfrac18\mu^2-\tfrac12(m+1)\mu-\tfrac12(m+1)(m+2n+1)
\eeq for the parameters
\beq \begin{array}{l@{\qquad}l}
\k_{0} = \tfrac{3}{4} \mu+\tfrac12m+n+\tfrac12, & \k_{1} =-\tfrac14\mu+\tfrac12m+\tfrac12,\\[5pt] 
\k_2=-\tfrac14\mu-\tfrac12m-n -\tfrac12, & \k_{3} =-\tfrac14\mu-\tfrac12m-\tfrac12.
\end{array}
\eeq\end{subequations}}
\end{theorem}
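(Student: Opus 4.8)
The plan is to realise $\sigma_{m,n}$ as the $\sigma$-function shadow of the rational \PV\ solution of Theorem~\ref{thm:PVia} through the Hamiltonian formalism of \S\ref{sec:sigma}, and then to reduce the required identities to the bilinear relations for $\Tmn{m,n}{\mu}$ collected in \S\ref{sec:gen_lag}.

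Step one is the parameter dictionary. With $\bfnu=(m+1,-n,m+n+\mu+1)$ one has $\nu_1+\nu_2+\nu_3=2m+\mu+2$, and the \PV\ equation satisfied by $q$ in \eqref{HJM:4950} has parameters $\abc{\tfrac12(m+2n+\mu+1)^2}{-\tfrac12(m+1)^2}{-\mu-1}$. These are exactly the parameters of $1/w_{m+1,n-1}(z;\mu+1)$, the solution obtained from the one in Theorem~\ref{thm:PVia} by the symmetry $\mathcal{S}_2$ \eqref{PV:S2}, so the natural candidate is $q=1/w_{m+1,n-1}(z;\mu+1)$, a product of ratios of generalised Laguerre polynomials by \eqref{sol:rat1b}. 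Step two is to produce the conjugate momentum: solving \eqref{HJM:49} for $p$ in terms of $q$ and $zq'$ and substituting the logarithmic-derivative form \eqref{altsol1} of $w_{m+1,n-1}(z;\mu+1)$ gives $p$ again as a ratio of generalised Laguerre polynomials with shifted indices, the cancellations being precisely the Hirota relations \eqref{eq324} and the discrete identities \eqref{iden:Tmn1}--\eqref{iden:Tmn2}. Step three is to substitute $q$ and $p$ into $z\HV=q(q-1)^2p^2-\{\nu_1(q-1)^2-(\nu_1-\nu_2-\nu_3)q(q-1)+zq\}p+\nu_2\nu_3q$ from \eqref{HJM:47} and simplify: using the Toda identity of Lemma~\ref{lem:38} (to replace $\Tmn{m+1,n-1}{\mu}\Tmn{m-1,n+1}{\mu}$ by $\Tmn{m,n}{\mu}(\Tmn{m,n}{\mu})''-((\Tmn{m,n}{\mu})')^2$) together with Lemma~\ref{lem:312}, the expression should telescope to $z\deriv{}{z}\ln\Tmn{m,n}{\mu}(z)$ plus a constant, which is forced to be $-(m+1)n$ by the requirement $\sigma\to0$ as $z\to\infty$ (legitimate since $\deg\Tmn{m,n}{\mu}=(m+1)n$). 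Finally, since $(q,p)$ then solves \eqref{HJM:4950}, the function $\sigma=z\HV$ automatically solves \sPV\ \eqref{eq:S5JM}, which is the assertion.

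A more self-contained alternative is direct verification of \eqref{eq:S5JM}. Put $\sigma_{m,n}=z(\ln\Tmn{m,n}{\mu})'-(m+1)n$; Lemma~\ref{lem:38} gives $\sigma_{m,n}'=(\Tmn{m,n}{\mu})'/\Tmn{m,n}{\mu}+z\,\Tmn{m+1,n-1}{\mu}\Tmn{m-1,n+1}{\mu}/(\Tmn{m,n}{\mu})^2$, and a further differentiation gives $z\sigma_{m,n}''$. Inserting these into \eqref{eq:S5JM} with $\bfnu=(m+1,-n,m+n+\mu+1)$ and clearing the common denominator $(\Tmn{m,n}{\mu})^4$ turns the second-order, second-degree equation into a polynomial identity among $\Tmn{m,n}{\mu}$ and its Toda neighbours $\Tmn{m\pm1,n\mp1}{\mu}$, which should follow from \eqref{iden:Tmn1}, \eqref{iden:Tmn2} and \eqref{eq324}; degree counting on both sides, using $\deg\Tmn{m,n}{\mu}=(m+1)n$, gives a useful consistency check. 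In either route the main obstacle is the same, and it is essentially organisational: keeping track of the index shifts and of the additive constants produced by the various normalisations (for instance the factor $z^{n(n-1)/2}$ relating $\tau_{m,n}^{(\mu)}$ and $\Tmn{m,n}{\mu}$ in Lemma~\ref{lem:54}, which contributes only to the additive constant), and arranging the long polynomial manipulation so that every cross term is absorbed by one of the bilinear identities already proved in \S\ref{sec:gen_lag}. The one genuinely delicate point is the \bk\ and parameter identification in step one, because several inequivalent rational solutions of \PV\ can share the same parameters (cf.\ \S\ref{PVrata_nonuni}), so one must check that it is this particular solution --- equivalently, this particular $\sigma$ --- that occurs.
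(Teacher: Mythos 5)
Your route is genuinely different from the paper's. The paper does not verify the identity directly: it infers Theorem \ref{SVrats} from Okamoto's and Forrester--Witte's results on special-function (Kummer) solutions of \sPV, specialises to Laguerre polynomials via \eqref{Kummer:Lag}, and fixes the normalisation with Lemma \ref{lemma61}. You instead propose a self-contained construction through the Hamiltonian \eqref{HJM:47}. Your parameter dictionary is correct: with $\bfnu=(m+1,-n,m+n+\mu+1)$ the associated \PV\ parameters are $\left(\tfrac12(m+2n+\mu+1)^2,-\tfrac12(m+1)^2,-\mu-1\right)$, which are indeed those of $1/w_{m+1,n-1}(z;\mu+1)$ under $\mathcal{S}_2$ \eqref{PV:S2} (for $n=1$ one needs the $n=0$ member \eqref{sol:rat1b0}), your formula for $\sigma_{m,n}'$ via the Toda equation of Lemma \ref{lem:38} is right, and the Jacobi/Hirota identities of \S\ref{sec:gen_lag} are the appropriate machinery. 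Carried through, this would give a more self-contained argument than the paper's citation-based one, at the price of a long bilinear computation.

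However, as written the decisive step is missing. The whole content of the theorem is the identity $z\HV(q,p,z)=z\deriv{}{z}\ln\Tmn{m,n}{\mu}(z)-(m+1)n$ (equivalently, the polynomial identity your second route produces after clearing denominators), and you only assert that it ``should telescope'' or ``should follow'' from \eqref{iden:Tmn1}, \eqref{iden:Tmn2} and \eqref{eq324}; none of that reduction is performed, and it is not evident that the listed identities suffice without further $\mu$-shifted relations. Two subsidiary points also need repair. First, the additive constant is not ``forced by the requirement $\sigma\to0$ as $z\to\infty$'': \sPV\ \eqref{eq:S5JM} imposes no such condition, so the constant must emerge from the computation itself, e.g.\ from the large-$z$ behaviour of $q$, $p$ and \eqref{HJM:47}. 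Second, the non-uniqueness worry you raise at the end is dissolved only by that same computation: matching parameters cannot identify which $\sigma$ you have produced (cf.\ \S\ref{PVrata_nonuni}), whereas actually evaluating $z\HV$ for your chosen $(q,p)$ does. So the plan is viable and genuinely distinct from the paper's proof, but until the telescoping (or the direct reduction of \eqref{eq:S5JM} to the bilinear identities) is carried out it remains a programme rather than a proof.
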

\begin{proof}
This result can be inferred from the work of Forrester and Witte \cite{refFW02} and Okamoto \cite{refOkamotoPV} on special function solutions of \sPV, together with the relationship between Kummer functions and associated Laguerre polynomials \eqref{Kummer:Lag}. We have used Lemma \ref{lemma61} as a normalisation.
\end{proof}

\begin{corollary}
The rational solution of \sPV\ \eqref{eq:S5JM} in terms of the generalised Laguerre polynomial $\TTmn{m,n}{\mu}(z)$ %\eqref{def:Tmn}
is
\beq\label{TTmn:S5JMa}
%\widehat{\sigma}_{m,n}(z;m-1,n,-m-n-\mu-1)
\widehat{\sigma}_{m,n}(z;\bfnu)=z\deriv{}{z}\ln \left\{\TTmn{m,n}{\mu}(z)\right\} -(m+1)n,\qquad m\geq0,\quad n\geq1,
%-(\mu+m+n+1)z+(\mu+m+2n+1)(\mu+n),
\eeq for the parameters
\beq\label{TTmn:S5JMb} \bfnu=(-m-1,n,-m-n-\mu-1).
\eeq
\comment{\begin{subequations}\label{Tmn:S5JMa}\beq 
\widehat{\sigma}_{m,n}(z;\bfnu)=z\deriv{}{z}\ln \left\{\TTmn{m,n}{\mu}(z)\right\} -(\mu+m+n+1)z+(\mu+m+2n+1)(\mu+n),
\eeq for the parameters
\beq \bfnu=(\mu+n,\mu+m+n+1,\mu+m+2n+1).
\eeq\end{subequations}}
\comment{The rational solution of \eqref{eq:S5Ok} in terms of the polynomial $\TTmn{m,n}{\mu}(z)$ %\eqref{def:Tmn}
is
\begin{subequations}\label{TTmn:S5Ok}\beq 
\widehat{h}_{m,n}(z;\mu)=z\deriv{}{z}\ln \left\{\TTmn{m,n}{\mu}(z)\right\} -\tfrac14(\mu+2m+2)z
-\tfrac18(\mu+2m+2)^2-(m+1)n,
%-\tfrac18\mu^2-\tfrac12(m+1)\mu-\tfrac12(m+1)(m+2n+1)
\eeq for the parameters
\beq \begin{array}{l@{\qquad}l}
\k_{0} = -\tfrac{3}{4} \mu-\tfrac12m+n -\tfrac12, & \k_{1} =\tfrac14\mu-\tfrac12m-\tfrac12,\\[5pt] 
\k_2=\tfrac14\mu+\tfrac12m+n+\tfrac12, & \k_{3} =\tfrac14\mu+\tfrac12m+\tfrac12.
\end{array}
\eeq\end{subequations}}
\end{corollary}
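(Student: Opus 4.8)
The plan is to deduce this Corollary directly from Theorem~\ref{SVrats}, using the reflection $z\mapsto-z$ together with the identity $\TTmn{m,n}{\mu}(z)=\Tmn{m,n}{\mu}(-z)$ recorded in Remark~\ref{rem:37} (equation~\eqref{eq:TmnT}). The key structural fact I would isolate first is that \sPV\ \eqref{eq:S5JM} is covariant under $z\mapsto-z$ provided one simultaneously negates the parameter vector $\bfnu$.

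To establish that covariance, suppose $\sigma(z)$ solves \eqref{eq:S5JM} with parameters $\bfnu=(\nu_1,\nu_2,\nu_3)$ and set $\rho(z)=\sigma(-z)$, so $\rho'(z)=-\sigma'(-z)$ and $\rho''(z)=\sigma''(-z)$. Writing \eqref{eq:S5JM} at the point $-z$ and substituting: the left-hand side becomes $(z\rho'')^2$; the inhomogeneous term $(\nu_1+\nu_2+\nu_3-z)\sigma'$ becomes $\big((-\nu_1)+(-\nu_2)+(-\nu_3)-z\big)\rho'(z)$; and, since the product $\prod_{j=1}^3(\sigma'+\nu_j)$ has an odd number of factors, the term $-4\deriv{\sigma}{z}\prod_{j=1}^3(\sigma'+\nu_j)$ turns into $-4\rho'(z)\prod_{j=1}^3\big(\rho'(z)+(-\nu_j)\big)$. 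Hence $\rho(z)=\sigma(-z)$ solves \eqref{eq:S5JM} with parameters $-\bfnu$.

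Next I would match the two $\tau$-quotients: with $F(z):=z\,\deriv{}{z}\ln\Tmn{m,n}{\mu}(z)$, Theorem~\ref{SVrats} gives that $\sigma_{m,n}(z)=F(z)-(m+1)n$ solves \sPV\ with $\bfnu=(m+1,-n,m+n+\mu+1)$, while the chain rule and $\TTmn{m,n}{\mu}(z)=\Tmn{m,n}{\mu}(-z)$ yield $z\,\deriv{}{z}\ln\TTmn{m,n}{\mu}(z)=F(-z)$, so $\widehat\sigma_{m,n}(z)=F(-z)-(m+1)n=\sigma_{m,n}(-z)$. Combining this with the covariance above, $\widehat\sigma_{m,n}$ solves \sPV\ with the negated parameters $(-m-1,\,n,\,-m-n-\mu-1)$, which is exactly \eqref{TTmn:S5JMb}. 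I do not anticipate a genuine obstacle: the only delicate point is the sign bookkeeping for the odd product $\prod_{j=1}^3(\deriv{\sigma}{z}+\nu_j)$ under reflection, and as an alternative one could give a direct derivation mirroring the proof of Theorem~\ref{SVrats} from the Forrester--Witte and Okamoto special-function solutions via \eqref{Kummer:Lag}.
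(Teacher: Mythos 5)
Your proposal is correct and follows essentially the same route as the paper: the paper's proof is precisely the observation that $\TTmn{m,n}{\mu}(z)=\Tmn{m,n}{\mu}(-z)$ implies $\widehat{\sigma}_{m,n}(z;\bfnu)=\sigma_{m,n}(-z;-\bfnu)$, i.e.\ reflection $z\mapsto-z$ combined with Theorem~\ref{SVrats} and the covariance of \sPV\ under $(z,\bfnu)\mapsto(-z,-\bfnu)$. You merely spell out the sign bookkeeping for that covariance, which the paper leaves implicit.
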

\begin{proof}
Since $\TTmn{m,n}{\mu}(z)=\Tmn{m,n}{\mu}(-z)$ then
$\widehat{\sigma}_{m,n}(z;\bfnu)={\sigma}_{m,n}(-z;-\bfnu)$.
\end{proof}

\begin{remark}
 We note that 
 \begin{align*}
 &\sigma_{m,n}(z; m+1,-n,m+1-j) = \sigma_{m-j,n}(z; m+1-j,-n, m+1),\quad &j&=1,\ldots, m, \\
&\sigma_{m,n}(z; m+1,-n,0) = 0, \\
&\sigma_{m,n}(z; m,-n,1-j) = \sigma_{m,j-1}(z; m+1,1-j,-n),\quad &j&=2,\ldots ,n.
 \end{align*}
This result follow from the factorisation 
given in Lemma~\ref{lemma:multzeros} 
of the $T_{m,n}^{(\mu)}(z)$ at certain negative integer values of $\mu$.
The third case also follows from the invariance of the Hamiltonian $\HV(q,p,z)$ under 
the interchange of $\nu_2$ and $\nu_3$.
\end{remark}

\subsection{Non-uniqueness of rational solutions of \sPV}
In \S\ref{PVrata_nonuni} it was shown that there was non-uniqueness of rational solutions of \PV\ \eqref{eq:pv} in case (i) in terms of the generalised Laguerre polynomial $\Tmn{m,n}{\mu}(z)$ when $\mu$ is an integer. An analogous situation arises for rational solutions of %\sPV\ 
\sPV\ \eqref{eq:S5JM}. %and \eqref{eq:S5Ok}.

\begin{lemma}
If $\mu\in\Z$ and $\mu \ge - n$ then there are two distinct rational solutions of \sPV\ \eqref{eq:S5JM} %and \eqref{eq:S5Ok} 
for the \textit{same} parameters.
\end{lemma}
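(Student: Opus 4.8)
The plan is to transfer the proof of the analogous statement for \PV\ in \S\ref{PVrata_nonuni} to the $\sigma$-level via the Hamiltonian structure \eqref{eq:HamV}--\eqref{HJM:4950}. Fix $\mu=k\in\Z$ with $k\ge-n$. By the non-uniqueness Lemma for \PV\ already proved in \S\ref{PVrata_nonuni}, $w_{m,n}(z;k)$ and $\widehat{w}_{m,n+k}(z;-k)$ are two \emph{distinct} rational solutions of \PV\ \eqref{eq:pv} for the common parameters $\a=\tfrac12m^2$, $\b=-\tfrac12(m+2n+k+1)^2$, $\ga=k$.

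First I would lift both to \sPV. Choose once and for all a triple $\bfnu=(\nu_1,\nu_2,\nu_3)$ realising these parameters via $\a=\tfrac12(\nu_2-\nu_3)^2$, $\b=-\tfrac12\nu_1^2$, $\ga=\nu_1-\nu_2-\nu_3-1$; for instance $\bfnu=(m+2n+k+1,\,m+n,\,n)$. Since $w_{m,n}(z;k)$ and $\widehat{w}_{m,n+k}(z;-k)$ are rational, solving the first equation of \eqref{HJM:4950} for $p$ gives in each case a rational $p$, so setting $q$ equal to each of $w_{m,n}(z;k)$ and $\widehat{w}_{m,n+k}(z;-k)$ and taking the corresponding $p$ yields two rational solutions $(q,p)$ of the Hamiltonian system \eqref{HJM:4950} for the \emph{same} $\bfnu$; hence $\sigma=z\HV(q,p,z)$ gives two rational solutions $\sigma_1,\sigma_2$ of \sPV\ \eqref{eq:S5JM} for this $\bfnu$. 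They are distinct: if $\sigma_1=\sigma_2$, the inversion formula $q=\bigl(z\sigma''+2(\sigma')^2+(\nu_1+\nu_2+\nu_3-z)\sigma'+\sigma\bigr)\big/\bigl(2(\sigma'+\nu_2)(\sigma'+\nu_3)\bigr)$ forces $w_{m,n}(z;k)=\widehat{w}_{m,n+k}(z;-k)$, a contradiction (these $\sigma$'s are not affine, so the denominator is not identically zero). This already proves the Lemma.

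It then remains to express $\sigma_1,\sigma_2$ in terms of generalised Laguerre polynomials. Unlike the \PV\ case, the second solution is not simply the $\widehat\sigma$ of the following Corollary at shifted indices---its parameters do not coincide with those of $\sigma_{m,n}(z;k)$---so the Hamiltonian lift, or equivalently a normalisation via Lemma~\ref{lemma61}, is what realises both solutions for one $\bfnu$. I would match $\sigma_1,\sigma_2$ against $\sigma_{m,n}(z;k)$ of Theorem~\ref{SVrats} and the $\widehat\sigma$ of the Corollary, using the Kummer--Laguerre identity \eqref{Kummer:Lag} as in the proof of Theorem~\ref{SVrats} and absorbing the linear-in-$z$ term through Lemma~\ref{lemma61}. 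The hard part is this bookkeeping: a \PV\ triple corresponds to four sign choices of $\bfnu$, the labels $(m,n,\mu)$ of the Laguerre-type $\sigma$-solution are shifted relative to those of the $\tau$-quotient solution $w_{m,n}$, and the reindexings coming from $\sigma_{m,n}(z;\bfnu)\mapsto\sigma_{m,n}(-z;-\bfnu)$ together with the discrete symmetry \eqref{symmT} collapse several candidate Laguerre expressions onto the same function; one must verify that after these identifications $\sigma_1$ and $\sigma_2$ belong to the same $\bfnu$ up to permutation and are genuinely different, which I expect to use the factorisation of $\Tmn{m,n}{\mu}$ at the relevant negative integers of $\mu$ (Lemma~\ref{lemma:multzeros}).
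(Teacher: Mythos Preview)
Your first paragraph is a valid proof of the Lemma, and it takes a genuinely different route from the paper. The paper does not lift through the Hamiltonian system at all: it simply writes down the second solution explicitly,
\[
z\deriv{}{z}\ln\bigl\{\TTmn{m,n+k}{-k}(z)\bigr\}-(m+1)z-(m+1)n,
\]
asserting that it solves \sPV\ for $\bfnu=(m+1,-n,m+n+k+1)$. This is the Corollary's $\widehat{\sigma}_{m,n+k}$ (taken with $\mu'=-k$) transported to the target parameters by the shift of Lemma~\ref{lemma61} together with the permutation symmetry of the $\nu_j$. So the paper's argument delivers the Laguerre representation in one stroke, whereas your approach explains \emph{why} the non-uniqueness is inherited---the correspondence $(q,p)\leftrightarrow\sigma$ is a bijection off a thin set, so distinct $q$'s force distinct $\sigma$'s---but leaves the explicit Laguerre form as residual bookkeeping.

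Two small points on your argument. First, in the distinctness step the condition you need is not ``$\sigma$ is not affine'' but specifically $\sigma'+\nu_2\not\equiv0$ and $\sigma'+\nu_3\not\equiv0$; for your choice $\bfnu=(m+2n+k+1,\,m+n,\,n)$ this amounts to ruling out $\sigma=-(m+n)z+c$ and $\sigma=-nz+c$, which holds for these solutions but should be stated. Second, you are silently using that once $q$ solves \PV\ and $p$ is defined algebraically from \eqref{HJM:49}, the pair $(q,p)$ automatically satisfies \eqref{HJM:50}; this is the standard equivalence of \PV\ with the system \eqref{HJM:4950}, but it is worth one sentence.

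Your second paragraph is not needed for the Lemma as stated, and as you recognise the bookkeeping is heavy. If you want the explicit Laguerre form, it is cleaner to imitate the paper: apply the Corollary at the shifted indices $(m,\,n+k,\,-k)$ and then normalise via Lemma~\ref{lemma61} and the permutation symmetry of the $\nu_j$, avoiding the Hamiltonian detour altogether.
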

\begin{proof}
%\begin{enumerate}[(i)]\item[]\item 
If $\mu=k\in\Z$ and $k\geq-n$ then a second rational solution for the parameters \eqref{Tmn:S5JMb} is
\beq \label{TTmn:S5JMc}
\widehat{\sigma}_{m,n}(z;m+1,-n,m+n+k+1)
%\widehat{\sigma}_{m,n}(z;\bfnu)
=z\deriv{}{z}\ln \left\{\TTmn{m,n+k}{-k}(z)\right\} -(m+1)z -(m+1)n.%-(\mu+m+n+1)z+(\mu+m+2n+1)(\mu+n),
\eeq
%\item 
If $\mu=k\in\Z$ and $k\geq-n$ then a second rational solution for the parameters \eqref{TTmn:S5JMb} is
\beq\label{Tmn:S5JMc}
\sigma_{m,n}(z;m-1,n,-m-n-k-1)
%\sigma_{m,n}(z;\bfnu)
=z\deriv{}{z}\ln \left\{\Tmn{m,n+k}{-k}(z)\right\}+(m+1)z-(m+1)n.
\eeq
%\end{enumerate}
\end{proof}

\subsection{Applications}
\def\sig{S}
\subsubsection{Probability density functions associated with the Laguerre unitary ensemble}
In their study of probability density functions associated with Laguerre unitary ensemble (LUE), 
Forrester and Witte \cite{refFW02} were interested in solutions of
\begin{align}
\left(z\deriv[2]{\sig}{z} \right)^{\!2}
&=\left[2 \left(\deriv{\sig}{z} \right)^{\!2}+
\left(2M+\ell-\mu -z \right) \left(\deriv{\sig}{z}\right)+\sig \right]^{2\!} \nn\\ &\qquad-4 \deriv{\sig}{z}
\left(\deriv{\sig}{z} -\mu \right) \left(\deriv{\sig}{z}+M \right) \left(\deriv{\sig}{z}+M+\ell \right),
\end{align}
where $M\ge 0$, $\ell \in \NN$ and $\mu$ is a parameter, which is \sPV\ \eqref{eq:S5JM} with parameters
$\bfnu=(-\mu,M,M+\ell)$. Forrester and Witte 
\cite[Proposition 3.6]{refFW02} define the solution %, see Proposition 3.6 in \cite{refFW02},
\eq
\sig(z;-\mu,M,M+\ell) = -\mu M -Mz+ z\deriv{}{z} \ln 
\det\lf[\deriv[j]{}{z} 
L_{M+k}^{(\mu)}(-z) \ri]_{j,k=0}^{a-1},
\en
which behaves as 
\eq 
\sig(z;-\mu,M,M+\ell) = -\mu M - \frac{\mu M}{\mu+\ell} z+ \mathcal{O} (z^2 ), \qquad \text{as}\qquad z\to 0. 
\en
In terms of the generalised Laguerre polynomial $\Tmn{m,n}{\mu}(z)$, we have 
\eq
\sig(z;-\mu,M,M+\ell) = - \mu M -M z+ z\deriv{}{z} \ln \Tmn{M-1,\ell}{\mu-\ell}(-z).
\en
Explicitly, we have 
\begin{align}
\det\lf[\deriv[j]{}{z} 
L_{M+k}^{(\mu)}(-z) \ri]_{j,k=0}^{\ell-1} 
&= (-1)^{\floor{\ell/2}}\,\Tmn{M-1,\ell}{\mu-\ell}(-z) \\ 
&= (-1)^{\floor{\ell/2}+\floor{(M+\ell)/2}} \Tmn{\ell-1, M}{-\mu-\ell-2M}(z). 
\end{align} 

\subsubsection{Joint moments of the characteristic polynomial of CUE random matrices}
In their study of joint moments of the characteristic polynomial of 
%Circular Unitary Ensemble 
CUE random matrices, Basor \etal~\cite[equation (3.85)]{refBBBGIIK} were interested in solutions of the equation
\begin{subequations}\label{eq:Basor}
\begin{align}
\left(z\deriv[2]{\sig_k}{z}\right)^{\!2}&= \left[ 2\left(\deriv{\sig_k}{z}\right)^{\!2}-(2N+z)\deriv{\sig_k}{z}+\sig_k\right]^2 \nn\\ &\qquad
-4\deriv{\sig_k}{z} \left(\deriv{\sig_k}{z}+k\right)\left(\deriv{\sig_k}{z}-N\right)\left(\deriv{\sig_k}{z}-k-N\right)\!,
\label{eq:Basor1}\end{align}
where $N,k\in\Z$ with $n\geq k>1$,
which is \sPV\ \eqref{eq:S5JM} with parameters $\bfnu=(k,-N,-k-N)$,
satisfying the initial condition
\beq\label{eq:Basor2}
\sig_k(z)=-kN+\tfrac12Nz+\O(z^2),\qquad\text{as}\quad z\to0.\eeq
\end{subequations}
Basor \etal\ derive the solution of \eqref{eq:Basor}, see \cite[equation (4.23)]{refBBBGIIK}, given by
\beq\label{sol:Basor}
\sig_k(z) = -kN+z\deriv{}{z}\ln{B_k(z)},\eeq
where $B_k(z)$ is the determinant
%define a solution of the Jimbo-Miwa form of the \PV\ $\sigma$-equation in terms of the following determinant
\eq
B_k(z) = \det\lf[ L_{N+k+1-i-j}^{(2k-1)}(-z) \ri]_{i,j=1}^k
,\qquad N\ge k> 1
\label{refBasor3}
\en
with $L_n^{(\a)}(z)$ the associated Laguerre polynomial.
Basor \etal~\cite{refBBBGIIK} remark that equation
\eqref{eq:Basor1} is degenerate at $z = 0$, which is a singular point of the equation, and so the {Cauchy-Kovalevskaya} theorem is not applicable to the initial value problem \eqref{eq:Basor}. %and \eqref{eq:Basor2}. 

From \eqref{def:Tmnd}, we have 
\begin{align}
B_k(z) &= \widehat{T}_{N-1,k}^{(0)}(z) %\nn \\ &
= (-1)^{\floor{((N+k)/2)}}\Tmn{k-1,N}{-2(k+N)}(z),
\end{align}
where the second equality follows from \eqref{symmT}. 
In terms of the generalised Laguerre polynomial $\Tmn{m,n}{\mu}(z)$, a solution of \eqref{eq:Basor} is given by
\beq\label{sol:PAC}
\sigma(z;k,-N,-k-N) = -kN+Nz+z\deriv{}{z}\ln \{\Tmn{N-1,k}{0}(z)\},\qquad N\geq1, \quad k\geq1.
\eeq 
Alternatively, in terms of the polynomial $\TTmn{m,n}{\mu}(z)$, a solution of \eqref{eq:Basor} is given by
\[\widehat{\sigma}(z;k,-N,-k-N) = -kN+z\deriv{}{z}\ln \TTmn{N-1,k}{0}(z),\qquad N\geq1, \quad k\geq1,\]
which is the same solution as \eqref{sol:Basor}, though without the constraint $N\geq k$. Therefore we have two \textit{different} solutions of the initial value problem \eqref{eq:Basor}. %, since
%\[\widehat{\sigma}(z;k,-N,-k-N)=\sigma(-z;N,k)+Nz.\]
The solutions \eqref{sol:Basor} and \eqref{sol:PAC} are related by 
\[ \sig_k(z) ={\sigma(z;k,-N,-k-N)-Nz},\]
since equation \eqref{eq:Basor} is invariant under the tranformation
\[ \sigma(z)\to \sigma(z)-Nz,\qquad z\to -z.\]

For example, suppose that $N=2$ and $k=2$, then from \eqref{sol:Basor}
\[\begin{split}
\sig_2(z)&= -\frac{16z^3+192z^2+720z+960}{z^4+16z^3+96z^2+240z+240} %\\&
= -4+z-\frac{z^2}{5}+\frac{3\,z^4}{100}+\frac{z^5}{45}+\O(z^6).
\end{split}\]
 and from \eqref{sol:PAC} 
\[\begin{split}
\sigma(z;2,-2,-4)&= 2z+\frac{16z^3-192z^2+720z-960}{z^4-16z^3+96z^2-240z+240} %\\&
= -4+z-\frac{z^2}{5}+\frac{3\,z^4}{100}-\frac{z^5}{45}+\O(z^6).
\end{split}\]

If we seek a series solution of \eqref{eq:Basor} in the form
\[ \sigma(z)=-Nk+\tfrac12Nz+ \sum_{j=2}^\infty a_{j}z^j,\]
then $a_{2j}$ are uniquely determined with
\[ a_2=\frac{(N+2k)N}{4(4k^2-1)},\qquad a_4 =\frac{(N+2k+1)(N+2k)(N+2k-1)N}{16(4k^2-1)^2(4k^2-1)}+\frac{36(4k^2-1)(k^2-1)}{N(N+2k)(4k^2-9)}a_3^2,
\quad\ldots\ ,\]
and $a_{2j+1}=0$ unless $k$ is an integer. If $k$ is an integer then $a_{2j+1}=0$ for $j<k$, $a_{2k+1}
%{%=0}
$ {is arbitrary}, and $a_{2j+1}$ uniquely determined for $j>k$, as discussed in \cite{refBBBGIIK}. For example, when $N=2$ and $k=2$ then
\[ \sigma(z;k,-N,-k-N)= -4+z-\frac{z^2}{5}+\frac{3\,z^4}{100}+a_5z^5+\frac{29\,z^6}{3000}+\frac{4a_5\,z^7}{25}+\frac{263\,z^8}{360000}-\frac{13a_5\,z^9}{6000}+
\O(z^{10}), \]
with $a_5$ arbitrary.

The solutions $\sig_2(z)$ and $\sigma(z;2,-2,-4)$ have completely different asymptotics as $z\to\infty$, namely
\begin{align*}
\sig_2(z)&=-\frac{16}{z}+\frac{64}{z^2}+\frac{208}{z^3}+\frac{64}{z^4}-\frac{7424}{z^5}+\O(z^{-6}),\\
\sigma(z;2,-2,-4)&= 2z+\frac{16}{z}+\frac{64}{z^2}-\frac{208}{z^3}+\frac{64}{z^4}+\frac{7424}{z^5}+\O(z^{-6}).
\end{align*}

\section{Rational solutions of the symmetric \PV\ system}
\label{sec:spv}
From the works of Okamoto
\cite{refOkamotoP2P4,refOkamotoPV,refOkamotoPVI,refOkamotoPIII}, it is known
that the parameter spaces of \PII--\PVI\ all admit the action of an
extended affine Weyl group; the group acts as a group of {\bts}. In a series
of papers, Noumi and Yamada \cite{refNoumi,refNY98i,refNY98iii,refNY04}
have implemented this idea to derive a hierarchy of dynamical systems
associated to the affine Weyl group of type $\A{N}$, which are now known as
``\emph{symmetric forms of the \peqs}". The behaviour of each dynamical
system varies depending on whether $N$ is even or odd. 

The first member of the $\A{2n}$ hierarchy, i.e.\ $\A{2}$, usually known as \spiv, is equivalent to \PIV\ and given by
\begin{subequations}\label{eq:spiv}\begin{align}
&\deriv{f_1}{z}=f_1(f_2-f_3)+\k_1,\\
&\deriv{f_2}{z}=f_2(f_3-f_1)+\k_2,\\
&\deriv{f_3}{z}=f_3(f_1-f_2)+\k_3,
\end{align}
with constraints
\begin{equation}\k_1+\k_2+\k_3=1,\qquad f_1+f_2+f_3=z.\end{equation}
\end{subequations}
The first member of the $\A{2n+1}$ hierarchy, i.e.\ $\A{3}$, usually known as \spv, is equivalent to \PV\ \eqref{eq:pv}, as shown below, and given by
\begin{subequations}\label{eq:spv}\begin{align}
 z\deriv{f_1}{z}&=f_1f_3(f_2-f_4)+(\tfrac12-\k_3)f_1+\k_1f_3,\\
z\deriv{f_2}{z}&=f_2f_4(f_3-f_1)+(\tfrac12-\k_4)f_2+\k_2f_4,\\
z\deriv{f_3}{z}&=f_3f_1(f_4-f_2)+(\tfrac12-\k_1)f_3+\k_3f_1,\\
z\deriv{f_4}{z}&=f_4f_2(f_1-f_3)+(\tfrac12-\k_2)f_4+\k_4f_2,
\end{align}
with the normalisations
\beq\label{eq:sPVa}
%\k_1+\k_2+\k_3+\k_4=1,\qquad 
f_1(z)+f_3(z)=\sqrt{z},\qquad f_2(z)+f_4(z)=\sqrt{z}
\eeq\end{subequations}
and $\k_1$, $\k_2$, $\k_3$ and $\k_4$ are constants such that
\beq\label{eq:sPVk}
\k_1+\k_2+\k_3+\k_4=1.
\eeq
The symmetric systems \spiv\ \eqref{eq:spiv} and \spv\ \eqref{eq:spv} were found by Adler \cite{refAdler} in the context of periodic
chains of B\"acklund transformations, see also \cite{refVesSh}. The symmetric
systems \spiv\ \eqref{eq:spiv} and \spv\ \eqref{eq:spv} have applications in random matrix theory,
see, for example, %Forrester and Witte 
\cite{refFW01,refFW02}.

Setting $f_1(z)=\sqrt{z}\,u(z)$ and $f_2(z)=\sqrt{z}\,v(z)$, in \spv\ \eqref{eq:spv} %and \eqref{eq:sPVa} 
gives the system
%Using \eqref{eq:sPV1a} to eliminate ${v}(z)$ and ${u}(z)$ gives
\begin{subequations}\label{eq:sPV2}\begin{align}
z\deriv{{u}}{z}&=z(2{v}-1){u}^2-(2z{v}-z+\k_1+\k_3){u}+\k_1,\label{eq:sPV2a}\\
z\deriv{{v}}{z}&=z(1-2{u}){v}^2+(2z{u}-z-\k_2-\k_4){v}+\k_2.\label{eq:sPV2b}
\end{align}
%\beq\label{eq:sPV1a}{v}(z)+\vph_4(z)=1,\qquad {u}(z)+\vph_3(z)=1,\eeq
\end{subequations}
Solving \eqref{eq:sPV2a} for ${v}$, substituting in \eqref{eq:sPV2b}
gives
\begin{align} \deriv[2]{{u}}{z}=\frac12&\left(\frac{1}{{u}}+\frac{1}{{u}-1}\right)\left(\deriv{{u}}{z}\right)^{\!2}-\frac{1}{z}\deriv{{u}}{z}
+\frac{({u}-1)^2\k_1^2-{u}^2\k_3^2}{2z^2{u}({u}-1)} \nn\\ &
+ \frac{(\k_2-\k_4){u}({u}-1)}{z}+\frac{{u}({u}-1)(2{u}-1)}{2}. \label{sol:unm}
\end{align}
Making the transformation ${u}=1/(1-w)$ in \eqref{sol:unm} yields
\begin{subequations}\begin{align} \deriv[2]{w}{z}&=\left(\frac{1}{2w}+\frac{1}{w-1}\right)\left(\deriv{w}{z}\right)^{\!2}-\frac{1}{z}\deriv{w}{z}
+\frac{(w-1)^2(w^2\k_1^2-\k_3^2)}{2z^2w} +\frac{(\k_2-\k_4)w}{z}-\frac{w(w+1)}{2w-1)},
\end{align}
which is \PV\ \eqref{eq:pv} with parameters
\beq\pms{\tfrac12\k_1^2}{-\tfrac12\k_3^2}{\k_2-\k_4}.\label{usol-params}\eeq\end{subequations}
{Analogously solving \eqref{eq:sPV2b} for ${u}$, substituting in \eqref{eq:sPV2a} %and letting $\vph_3=v$ 
gives
\begin{align*} 
\deriv[2]{{v}}{z}=\frac12&\left(\frac{1}{{v}}+\frac{1}{{v}-1}\right)\left(\deriv{{v}}{z}\right)^{\!2}-\frac{1}{z}\deriv{{v}}{z}
+\frac{({v}-1)^2\k_2^2-{v}^2\k_4^2}{2z^2{v}({v}-1)} \nn\\ & 
+\frac{(\k_3-\k_1){v}({v}-1)}{z}+\frac{{v}({v}-1)(2{v}-1)}{2}.\end{align*}
Then making the transformation ${v}=1/(1-w)$ gives \PV\ \eqref{eq:pv} with parameters
\[\pms{\tfrac12\k_2^2}{-\tfrac12\k_4^2}{\k_3-\k_1}.\]}

%The generalised Laguerre rational solution of 
As shown above, \PV\ \eqref{eq:pv} has the rational solution in terms of the generalised Laguerre polynomial $\Tmn{m,n}{\mu}(z)$ given by
\begin{subequations}\beq {w}_{m,n}(z;\mu)=\frac{\Tmn{m-1,n}{\mu}(z)\,\Tmn{m-1,n+1}{\mu}(z)}{\Tmn{m,n}{\mu}(z)\,\Tmn{m-2,n+1}{\mu}(z)},\eeq
%where 
%\[\Tmn{m,n}{\mu}(z) =\Wr\left( L_{m+1}^{(n+\mu)}(z), L_{m+2}^{(n+\mu)}(z), \ldots, L_{m+n}^{(n+\mu)}(z)\right),\]
%with $L_{n}^{(\a)}(z)$ the associated Laguerre polynomial,
for the parameters
\beq\pms{\tfrac12m^2}{-\tfrac12(m+2n+\mu+1)^2}{\mu},\label{Lagrat-params}\eeq\end{subequations}
and so
\beq u_{m,n}(z;\mu)=\frac{1}{1-{w}_{m,n}(z;\mu)}=\frac{\Tmn{m,n}{\mu}(z)\,\Tmn{m-2,n+1}{\mu}(z)}{\Tmn{m,n}{\mu}(z)\,\Tmn{m-2,n+1}{\mu}(z)-
\Tmn{m-1,n}{\mu}(z)\,\Tmn{m-1,n+1}{\mu}(z)}. \eeq
{From %Lemmas \ref{lem:312} and \ref{lem:313}, recall 
equations \eqref{iden:Tmn2} in Lemma \ref{lem:312} and \eqref{324c} in Lemma \ref{lem:313}, with $n\to n+1$, we have}
%\[\Tmn{m,n}{\mu}\,\Tmn{m-2,n+1}{\mu}-\Tmn{m-1,n}{\mu}\,\Tmn{m-1,n+1}{\mu}=-\Tmn{m-1,n}{\mu+1}\,\Tmn{m-1,n+1}{\mu-1}\]
\begin{align}
&\Tmn{m,n}{\mu}\,\Tmn{m,n+1}{\mu}-\Tmn{m,n+1}{\mu-1}\,\Tmn{m,n}{\mu+1}=\Tmn{m+1,n}{\mu}\,\Tmn{m-1,n+1}{\mu},\\
&\D_z\left(\Tmn{m,n}{\mu+1}\cdot\Tmn{m,n+1}{\mu-1}\right)=\Tmn{m+1,n}{\mu}\,\Tmn{m-1,n+1}{\mu},
\end{align}
with $\D_z$ the Hirota operator \eqref{Hirota}, and so the solution of equation \eqref{sol:unm} is given by
\beq u_{m,n}(z;\mu) =-\frac{\Tmn{m,n}{\mu}(z)\,\Tmn{m-2,n+1}{\mu}(z)}{\Tmn{m-1,n}{\mu+1}(z)\,\Tmn{m-1,n+1}{\mu-1}(z)} %\\ &
=\deriv{}{z}\ln\frac{\Tmn{m-1,n+1}{\mu-1}(z)}{\Tmn{m-1,n}{\mu+1}(z)},\qquad m\geq1,\quad n\geq1.\eeq
In the case when $n=0$ then
\beq u_{m,0}(z;\mu) =-\frac{\Tmn{m-2,1}{\mu}(z)}{\Tmn{m-1,1}{\mu-1}(z)} %\\ &
=\deriv{}{z}\ln\Tmn{m-1,1}{\mu-1}(z),\qquad m\geq1. \eeq
We note that
\[u_{m,0}(z;\mu) =-\frac{\Lag{m}{\mu+1}(z)}{\Lag{m+1}{\mu}(z)}=\deriv{}{z}\ln \Lag{m}{\mu}(z).
\]

\def\kk#1#2#3#4{\boldsymbol{\k} = (#1,#4,#3,#2)}
%{\k_1=#1,\quad\k_2=#4,\quad\k_3=#3,\quad\k_4=#2}
\def\K#1#2#3#4{\boldsymbol{\k} = (#1,#4,#3,#2)}
%{\k_1=#1,\k_2=#4,\k_3=#3,\k_4=#2}
From equation \eqref{eq:sPV2a}, we obtain
\beq {v}=\frac{1}{2z{u}({u}-1)}\left\{z\deriv{{u}}{z}+z{u}^2-(z-\k_1-\k_3){u}-\k_1\right\}.\eeq
Depending on the choice of $\k_1$ and $\k_3$, there is a different solution for ${v}$.
From \eqref{eq:sPVk}, \eqref{usol-params} and \eqref{Lagrat-params} we obtain
\[\k_1^2=m^2,\qquad\k_3^2=(m+2n+\mu+1)^2,\qquad \k_2-\k_4=\mu,\qquad \k_1+\k_2+\k_3+\k_4=1,\]
which gives four solutions
%\begin{subequations}
\begin{align*}
&\kk{m}{ -m -n -\mu}{\mu +m +2 n +1}{-m -n},\label{kusol}\\
&\kk{m}{n+1}{-\mu -m -2 n -1}{\mu+n+1},\\
&\kk{-m}{ -n -\mu}{\mu +m +2 n +1}{ -n},\\
&\kk{-m}{m+n+1}{-\mu -m -2 n -1}{\mu+m+n+1}.
\end{align*}%\end{subequations}
Each of these gives a different solution $v_{m,n}(z)$ which we will discuss in turn.

\begin{enumerate}[(i)] 
\item For the parameters $\K{m}{ -m -n -\mu}{\mu +m +2 n +1}{-m -n}$, the solution is
\begin{subequations}
\begin{align}
v_{m,n}^{\rm(i)}(z;\mu) &=-\frac{m+n}{z}\,
\frac{\Tmn{m-1,n+1}{\mu-1}(z)\,\Tmn{m-2,n}{\mu+1}(z)}{\Tmn{m-1,n}{\mu}(z)\,\Tmn{m-2,n+1}{\mu}(z)} \nn \\ &
=1-\frac{\mu+2n+1}{z}+\deriv{}{z}\ln\frac{\Tmn{m-1,n}{\mu}(z)}{\Tmn{m-2,n+1}{\mu}(z)},\qquad m\geq1,\quad n\geq1,\\
v_{m,0}^{\rm(i)}(z;\mu) &=-\frac{m}{z}\,
\frac{\Tmn{m-1,1}{\mu-1}(z)}{\Tmn{m-2,1}{\mu}(z)}=1-\frac{\mu+1}{z}-\deriv{}{z}\ln \Tmn{m-2,1}{\mu}(z),\qquad m\geq1.
\end{align}\end{subequations}

\item For the parameters $\K{m}{n+1}{-\mu -m -2 n -1}{\mu+n+1}$, the solution is
\begin{align}
v_{m,n}^{\rm(ii)}(z;\mu) &=\frac{\Tmn{m-1,n+1}{\mu-1}(z)\,\Tmn{m-2,n+1}{\mu+1}(z)}{\Tmn{m-1,n+1}{\mu}(z)\,\Tmn{m-2,n+1}{\mu}(z)} %\nn \\ &
=1+\deriv{}{z}\ln\frac{\Tmn{m-1,n+1}{\mu}(z)}{\Tmn{m-2,n+1}{\mu}(z)},\qquad m\geq1,\quad n\geq0.
\end{align}

\item For the parameters $\K{-m}{ -n -\mu}{\mu +m +2 n +1}{ -n}$, the solution is
\begin{align}
v_{m,n}^{\rm(iii)}(z;\mu) &=-\frac{\Tmn{m,n - 1}{\mu + 1}(z)\,\Tmn{m - 1, n + 1}{\mu - 1}(z)}{\Tmn{m - 1, n}{\mu}(z)\,\Tmn{m, n}{\mu}(z)}
=\deriv{}{z}\ln\frac{\Tmn{m-1,n}{\mu}(z)}{\Tmn{m,n}{\mu}(z)},\qquad m\geq1,\quad n\geq1,
\end{align}
and $v_{m,0}^{\rm(iii)}(z;\mu)=0$.

\item For the parameters $\K{-m}{m+n+1}{-\mu -m -2 n -1}{\mu+m+n+1}$, the solution is
\begin{subequations}
\begin{align}
v_{m,n}^{\rm(iv)}(z;\mu) &=\frac{\mu+m + n + 1}{z}\,\frac{\Tmn{m, n}{\mu + 1}\,\Tmn{m - 1, n + 1}{\mu - 1}}{\Tmn{m, n}{\mu}\,\Tmn{m - 1, n + 1}{\mu}} \nn\\
&=\frac{\mu+2n+1}{z}+\deriv{}{z}\ln \frac{\Tmn{m -1, n +1}{\mu}(z)}{\Tmn{m,n}{\mu}(z)},\qquad m\geq1,\quad n\geq1,\\
v_{m,0}^{\rm(iv)}(z;\mu) &=\frac{\mu+m + 1}{z}\,\frac{\Tmn{m - 1,1}{\mu - 1}}{\Tmn{m - 1,1}{\mu}}= \frac{\mu+1}{z}+\deriv{}{z}\ln \Tmn{m-1,1}{\mu}(z),\qquad m\geq1.
\end{align}
\end{subequations}
\end{enumerate}

\begin{remarks}{\rm 
\begin{enumerate}[(i)]\item[]
\item Analogous rational solutions of \spv\ \eqref{eq:spv} can be derived in terms of the polynomial $\TTmn{m,n}{\mu}(z)=\Tmn{m,n}{\mu}(-z)$ given by
\[ \widehat{u}_{m,n}(z;\mu)=u_{m,n}(-z;\mu),\qquad
\widehat{v}_{m,n}(z;\mu)=v_{m,n}(-z;\mu).\]
\item Some rational solutions of \spv\ \eqref{eq:spv} are given in \cite{refAGLZ,refGGM20,refGGLM}, where a different normalisation of the symmetric system is used.
\end{enumerate}}\end{remarks}

\subsection{Non-uniqueness of rational solutions of \spv}
As was the case for \PV\ \eqref{eq:pv} and \sPV\ \eqref{eq:S5JM}, there is non-uniqueness for some rational solutions of the symmetric system \spv\ \eqref{eq:spv}. We illustrate this with an example.
\begin{example}{\rm
The sets of functions %$\big(u_{1,1,{v}\big)$ and $\big(\widehat{u},\widehat{v} \big)$
\[ u_{1,1}(z;1)=\frac{(z-2)(z-6)}{(z-4)(z^2-6z+12)},\qquad v_{1,1}^{\rm(i)}(z;1)=\frac{z^2-6z+12}{z (z-3)}, \]
%& \vph_3(z)=\frac{(z-3)(z^2-8z+20)}{(z-4)(z^2-6z+12)}, &&\vph_4(z)=\frac{3(z-4)}{z(z-3)}, \\
and
\[ \widehat{u}_{1,2} (z;-1)=-\frac{z^{4}+12 z^{3}+54 z^{2}+96 z +72}{(z^{2}+6 z +12) (z^{3}+6 z^{2}+18 z +24)},
\qquad \widehat{v}_{1,2}^{\rm(i)} (z;-1)=-\frac{2( z^{2}+6 z +12)}{z (z^{2}+4 z +6)}, \]
%& \widehat{\vph}_3(z)=\frac{(z^{2}+4 z +6) (z^{3}+9 z^{2}+36 z +60)}{(z^{2}+6 z +12)(z^{3}+6 z^{2}+18 z +24)},
%&&\widehat{\vph}_4 (z)=\frac{z^{3}+6 z^{2}+18 z +24}{z (z^{2}+4 z +6)}, 
are both solutions of the system \eqref{eq:sPV2} for the parameters
\[\boldsymbol{\k} =(1,-2,5,-3).\]
Hence the associated solutions of \spv\ \eqref{eq:spv} are
\begin{align*} 
&f_1(z)=\frac{\sqrt{z}\,(z-2)(z-6)}{(z-4)(z^2-6z+12)},&& f_2(z)=\frac{\sqrt{z}\,(z^2-6z+12)}{z (z-3)}, \\
& f_3(z)=\frac{\sqrt{z}\,(z-3)(z^2-8z+20)}{(z-4)(z^2-6z+12)}, &&f_4(z)=\frac{3\sqrt{z}\,(z-4)}{z(z-3)}, 
\end{align*}
and
\begin{align*} & \widehat{f}_1(z)=-\frac{\sqrt{z}\,(z^{4}+12 z^{3}+54 z^{2}+96 z +72)}{(z^{2}+6 z +12) (z^{3}+6 z^{2}+18 z +24)},
&&\widehat{f}_2(z) =-\frac{2\sqrt{z}\,( z^{2}+6 z +12)}{z (z^{2}+4 z +6)}, \\
& \widehat{f}_3(z)=\frac{\sqrt{z}\,(z^{2}+4 z +6) (z^{3}+9 z^{2}+36 z +60)}{(z^{2}+6 z +12)(z^{3}+6 z^{2}+18 z +24)},
&&\widehat{f}_4 (z)=\frac{\sqrt{z}\,(z^{3}+6 z^{2}+18 z +24)}{z (z^{2}+4 z +6)}.
\end{align*}

}\end{example}

\section{Properties of generalised Laguerre polynomials} 
\label{sec:lag}
\begin{remark}
 The generalised Laguerre polynomial $\Tmn{m,n}{\mu}(z)$ is such that 
\begin{align}
\Tmn{m,n}{\mu}(z) = c_{m,n} &\lf \{ z^{(m+1)n} 
- n \bigl(m{+}1\bigr) (m{+}n{+}1{+}\mu ) z^{(m+1)n-1} \nn\right. \\ 
&\quad + \tfrac12n(m+1)(m+n+1+\mu)[(m +1) (m n +n^{2}+n -2)+(m n +n -1) \mu ]
%\fract{1}{2} n(m{ +} 1)(m{+}n {+} 1 {+} \mu)(n( m(m{+}n){+}(m{+}1)\mu{+}2m{+}n{+}1) {-} 2(m{+}1) {-} \mu )
%mn(m+ n+ 2)+n(n+1)+ \mu(mn+n-1) -2(m+1) 
 z^{(m+1)n-2} 
\nn \\ &\qquad \left. +\ldots+ (-1)^{n(m+n) } d_{m,n}\right\}
%\prod_{h \in \mathcal{H}} \mu+n+h \bigr)
\label{Texp}
\end{align}
where 
\eq
c_{m,n}=(-1)^{n(2m+1+n)/2} \prod_{j=1}^n \frac{(j-1)! }{(m+j)!}, \label{def:cmn}
%=(-1)^{n(m+1)} \prod_{j=0}^{n-1} \frac{(-1)^j j! }{(m+j+1)!}
%=\prod_{j=0}^{n-1} \frac{(-1)^{m+j+1} j! }{(m+j+1)!}%\nn \\
%&= (-1)^{n(m+n)} \frac{ \Delta(\bold{h}_{\bfla} )}{\prod_{i=1}^n (m+i)!}, \nn \\ 
\en
which follows from Lemma 1 in \cite{refBK18}, and 
\eq
d_{m,n} =
\prod_{ j=1}^{{\rm min} (m+1,n)-1} (\mu+n+j)^j 
\prod_{ {\rm min}(m+1,n) }^{{\rm max} (m+1,n)} 
(\mu+n+j)^{{\rm min}(m+1,n)}
\prod_{{\rm max}(m+1,n)+1 }^{m+n} 
 (\mu+n+j)^{m+n+1-j}.
\en
Therefore 
\eq
\Tmn{m,n}{-n-j}(0)=0,\qquad j=1,2,\ldots, m+n.
\label{T0}
\en
\end{remark}

\begin{lemma}\label{lemma:multzeros}
The generalised Laguerre polynomials have multiple 
roots at the origin when 
\eq
\mu=-n-j,\qquad j=1,2,\ldots, m+n.
\label{mult}
\en
 Moreover at such values of $\mu$ the polynomials $\Tmn{m,n}{\mu}(z)$ factorise 
as 
\begin{align} 
\Tmn{m,n}{-n-j}(z)& = \frac{c_{m,n}}{c_{m-j,n}} z^{nj} 
\,\Tmn{m-j, n}{j-n}(z), & j&=1,2,\ldots, m, \\ 
\Tmn{m,n}{-m-n-1}(z) &= c_{m,n} \, z^{n(m+1)}, & \\ 
\Tmn{m,n}{-m-n-j}(z) &= \frac{c_{m,n}}{c_{m,j-1}} 
\,z^{(m+1)(n+1-j)} \, \Tmn{m,j-1}{-m-n-j}(z), & j&=2,\ldots, n, 
\label{Tzero3}
\end{align}
where 
$$ 
\Tmn{m-j, n}{j-n}(0) \ne 0, \qquad \Tmn{m,j-1}{-m-n-j}(0)\ne 0. 
$$
% When $\mu=-n-j$ for $j=1,2,\ldots,m$ we find 
% \eq
% T_{m,n}(z; -n-j) = z^{nj}\, (-1)^{nj} 
% \prod_{i=1}^n \frac{(m-j+i)! }{ (m+i)!}
% \,T_{m-j, n }(z;j-n)
%,\qquad T_{m-j, n }(0;j-n) \ne 0. 
% \en
% When $\mu=-m-n-1$ then
% \eq
% T_{m,n} (z;-m-n-1) = c_{m,n} \, z^{n(m+1)},
% \en
% When $\mu=-m-n-j$ for $ j=2,\ldots,n$ we find 
% \eq
% T_{m,n}(z; -m-n-j) = (-1)^{(n+j+2m)(n+1-j)/2}
% \prod_{i=0}^m 
% \frac{ (j+i-1)!}{(n+i)!} 
% \,
% z^{(m+1)(n+1-j)} \, T_{m,j-1}(z; -m-n-j) 
% \label{Tzero3} 
% \en
% where 
% $$ 
% T_{m,j-1}(0; -m-n-j)\ne 0. 
% $$
\end{lemma}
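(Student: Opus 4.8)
The plan is to read everything off the Wronskian representation \eqref{def:Tmn2}, namely $\Tmn{m,n}{\mu}(z)=\Wr\bigl(L_{m+1}^{(n+\mu)}(z),\ldots,L_{m+n}^{(n+\mu)}(z)\bigr)$. The three parameter families in the statement are precisely the cases in which $n+\mu$ is a negative integer $-\ell$: the first family is $\ell=j$ with $1\le j\le m$, the second is $\ell=m+1$, and the third is $\ell=m+j$ with $2\le j\le n$, so that $\ell$ ranges over $\{1,\ldots,m+n\}$. The argument then splits according to whether $\ell<m+1$, $\ell=m+1$, or $\ell>m+1$, i.e.\ according to whether the index of the negative parameter lies among, just below, or above the degrees $m+1,\ldots,m+n$ occurring inside the Wronskian.

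For $\ell=j\le m$ every index $m+1,\ldots,m+n$ exceeds $j$, so I would apply the classical identity $L_k^{(-j)}(z)=(-1)^{j}\tfrac{(k-j)!}{k!}\,z^{j}L_{k-j}^{(j)}(z)$ (valid for integers $0\le j\le k$) to each column. Pulling the column constants out of the determinant and using $\Wr\bigl(z^{j}g_1,\ldots,z^{j}g_n\bigr)=z^{jn}\Wr(g_1,\ldots,g_n)$ leaves $z^{jn}\Wr\bigl(L_{m+1-j}^{(j)}(z),\ldots,L_{m+n-j}^{(j)}(z)\bigr)$, and the remaining Wronskian is $\Tmn{m-j,n}{j-n}(z)$ again by \eqref{def:Tmn2}. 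The multiplicative constant is then forced to equal $c_{m,n}/c_{m-j,n}$ by matching the coefficient of $z^{(m+1)n}$ on the two sides via the leading term in \eqref{Texp}; and $\Tmn{m-j,n}{j-n}(0)\ne 0$ because, by \eqref{T0} together with the explicit constant term in \eqref{Texp}, $\Tmn{m',n}{\mu'}(0)$ vanishes only for $\mu'\in\{-n-1,-n-2,\ldots,-m'-2n\}$, which does not contain $j-n\ge 1-n>-n-1$.

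The case $\ell=m+1$ is the same computation, except that the reduced Wronskian is $\Wr\bigl(L_0^{(m+1)}(z),L_1^{(m+1)}(z),\ldots,L_{n-1}^{(m+1)}(z)\bigr)$, a Wronskian of polynomials of degrees $0,1,\ldots,n-1$ and hence a nonzero constant, so matching leading coefficients gives $\Tmn{m,n}{-m-n-1}(z)=c_{m,n}z^{n(m+1)}$. For the third family, $\ell=m+j$ with $2\le j\le n$, the indices $m+1,\ldots,m+n$ no longer all exceed $\ell$ and the column extraction breaks down; here I would instead use the discrete symmetry \eqref{symmT}, which turns $\Tmn{m,n}{-m-n-j}(z)$ into a scalar multiple of $\Tmn{n-1,m+1}{j-m-n-2}(-z)$, and note that $j-m-n-2=-(m+1)-(n+1-j)$ with $1\le n+1-j\le n-1$, so this polynomial falls under the first family already handled. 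Applying that case and then passing the result back through \eqref{symmT} once more yields $\Tmn{m,n}{-m-n-j}(z)=(c_{m,n}/c_{m,j-1})\,z^{(m+1)(n+1-j)}\,\Tmn{m,j-1}{-m-n-j}(z)$, with the constant again determined by \eqref{Texp} and with $\Tmn{m,j-1}{-m-n-j}(0)\ne 0$ (directly from \eqref{T0}: since $j\le n$, the integer $-m-n-j$ lies strictly below the range $\{-m-2j+2,\ldots,-j\}$ of parameters at which $\Tmn{m,j-1}{\mu'}$ vanishes at the origin). In all three cases the stated factorisations make the order of vanishing at the origin explicit, namely $nj$, $n(m+1)$ and $(m+1)(n+1-j)$ respectively; in particular each such $\Tmn{m,n}{\mu}$ vanishes at the origin, as asserted.

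The main obstacle is the third family: the Wronskian there involves Laguerre polynomials $L_k^{(-\ell)}(z)$ with $k<\ell$, which are not divisible by $z^{\ell}$ and exhibit cancellation in their low-order terms, so no column-by-column argument captures the full order of vanishing; the clean way around this is the reduction to the first family via \eqref{symmT}. The only real nuisance in that reduction is keeping track of the several sign factors of the form $(-1)^{\lfloor x/2\rfloor}$ generated by \eqref{symmT}, which is exactly why I would fix every overall constant by comparing leading coefficients rather than by carrying the signs through.
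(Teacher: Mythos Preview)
Your proposal is correct and follows essentially the same route as the paper: both arguments work from the Wronskian form \eqref{def:Tmn2}, apply the identity $L_k^{(-j)}(z)=(-z)^{j}\tfrac{(k-j)!}{k!}L_{k-j}^{(j)}(z)$ column by column together with the common-factor Wronskian rule \eqref{stan_wron} for the first two families, and handle the third family by passing through the discrete symmetry \eqref{symmT} to reduce it to the first. The only cosmetic differences are that the paper tracks the multiplicative constants explicitly from the Laguerre identity while you fix them by matching leading coefficients via \eqref{Texp}, and the paper quotes an external lemma of Bonneux--Kuijlaars for the nonvanishing at the origin whereas you read it off the explicit constant term $d_{m,n}$ in \eqref{Texp}; both choices are perfectly sound.
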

\begin{proof}
The fact that the generalised Laguerre polynomials have 
multiple roots at the points \eqref{mult} follows from 
the discriminant, and 
that these roots are always at the origin is a consequence of \eqref{T0}. 
We use the standard property of Wronskians 
\eq
\Wr\big(c_1 g(x) f_1(x), \ldots, c_r g(x)f_r(x)\big) = 
\lf( \prod_{i=1}^r c_i \ri ) [g(x)]^r \Wr\big(f_1(x),\ldots, f_r(x)\big) 
,\qquad c_1, \ldots, c_r \in \CC,
\label{stan_wron}
\en
and the property (see, for example, \cite{refKMcL})
\eq
L_n^{(\alpha)}(z) = \frac{(n+\alpha)!}{n!}(-z)^{-\alpha} L_{n+\alpha}^{(-\alpha)}(z),\qquad \alpha \in \{-n,-n+1, \ldots, -1\},
\label{newL}
\en
to rewrite 
\eq
\Tmn{m,n}{-m-n-1}(z) =
\Wr\left( L_{m+1}^{(-m-1)}(z), L_{m+2}^{(-m-1)}(z), 
\ldots, 
 L_{m+n}^{(-m-1)}(z) \right),
\en
as 
\eq
\Tmn{m,n}{-m-n-1}(z) =
(-z)^{n(m+1)} \prod_{j=0}^{n-1} \frac{j!}{(m+j+1)!} 
\Wr\left( L_{0}^{(m+1)}(z), L_{1}^{(m+1)}(z), 
\ldots, 
 L_{n-1}^{(m+1)}(z)\right). 
\en
Since $ L_{0}^{(m+1)}(z)=1$ and 
\eq
\Wr\big( 1, f_1(x),f_2(x),\ldots, f_r(x)\big) = 
\Wr\big( f_1'(x),f_2'(x),\ldots, f_r'(x)\big),
\label{wron_df}
\en
we repeatedly use (\ref{lag1}) and (\ref{wron_df}) to show that 
\eq
\Wr\left( L_{0}^{(m+1)}(z), L_{1}^{(m+1)}(z), 
\ldots, 
 L_{n-1}^{(m+1)}(z) \right) = \prod_{j=0}^{n-1} (-1)^j.
\en
Hence we obtain 
\eq 
\Tmn{n,m}{-m-n-1}(z)= (-z)^{n(m+1)} \prod_{j=0}^{n-1} \frac{ (-1)^j j!}{(m+j+1)!} 
=c_{m,n} \, z^{n(m+1)}. 
\en

When $\alpha=-n-j$ for $ j=1,2,\ldots,m$, we 
again use (\ref{newL}) and (\ref{stan_wron}) 
to obtain 
\begin{align}
\Tmn{m,n}{-n-j}(z) &= \Wr\left( L_{m+1}^{(-j)}(z), L_{m+2}^{(-j)}(z), 
\ldots, 
L_{m+n}^{(-j)}(z)\right)
\nn \\ 
&= z^{nj}\, (-1)^{nj} 
\prod_{i=1}^n \frac{(m-j+i)! }{ (m+i)!}
\Wr\left( L_{m+1-j}^{(j)}(z), L_{m+2-j}^{(j)}(z), 
\ldots, 
 L_{m+n-j}^{(j)}(z)\right)\nn \\
&= \frac{c_{m,n}}{c_{m-j,n}} z^{nj}\,\Tmn{m-j, n }{j-n}(z).
\end{align}

The final case of $\alpha=-m-n-j$ 
for $j=2,3,\ldots,n$ follows similarly, except that we first apply 
 the symmetry \eqref{symmT} in order to use (\ref{newL}). 
 Specifically, we have 
\begin{align}
\Tmn{m,n}{-m-n-j}(z) &= (-1)^{\floor{(m+n+1)/2}} \,
\TTmn{n-1, m+1}{-m-n+j-2}(z)\nn \\ 
%&=
%(-1)^{\floor{(m+n+1)/2}} \, 
%\Wr( [ L_{n}^{(j-n-1)}(-z), L_{n+1}^{(j-n-1)}(-z), 
%\ldots, 
%L_{n+m}^{(j-n-1)}(-z) ]
%\nn \\ 
&=(-1)^{\floor{(m+n+1)/2}} 
z^{(m+1)(n-j+1)} \prod_{i=0}^m 
\frac{ (j+i-1)!}{(n+i)!} \nn \\ 
& \quad \quad 
\times \Wr\left( L_{j-1}^{(n+1-j)}(-z), L_{j}^{(n+1-j)}(-z), 
\ldots, 
L_{j+m-1}^{(n+1-j)}(-z)\right)
\nn \\ 
&= (-1)^{\floor{(m+n+1)/2}} 
z^{(m+1)(n-j+1)} \prod_{i=0}^m 
\frac{ (j+i-1)!}{(n+i)!}
 \, \TTmn{j-2,m+1}{n-m-j}(z).\nn 
\end{align}
Applying the symmetry \eqref{symmT} yields (\ref{Tzero3}).
Finally, 
\[
\Tmn{m-j,n}{j-n}(0) \ne 0,\qquad j=1,2,\ldots,m,
\]
and 
\[
\Tmn{m,j-1}{-m-n-j}(0)\ne 0,\qquad j=2,\ldots,n,
\]
follow from Lemma 2 in \cite{refBK18}.
\end{proof}

\begin{remark}
 The Young diagrams of the polynomials on the right-hand side of (\ref{Tzero3}) are 
 found from the Young diagram of $\bfla=((m+1)^n) $ for $j=1,2,\ldots,m+1$ by removing the right-most $j$ columns. 
When $j=2 ,3,\ldots, n$ the Young diagrams are those such that the bottom $n-j+1$ rows have been removed from $\bfla$.
\end{remark}

%Equivalent definitions (up to constants) 
%\begin{align}
%T_{m,n}(z;s) &= \Wr( [ L_{m+n}^{(n+s)}(z), L_{m+n-1}^{(n+s)}(z), 
%\ldots, 
%L_{m+1}^{(n+s)}(z) ]\nn \\ 
%&= 
%\Wr( [ L_{n-1}^{(-2n-m-1-s)}(-z), L_{n}^{(-2n-m-1-s)}(-z), 
%\ldots, 
%L_{m+n}^{(-2n-m-1-s)}(-z) ]\nn \\ 
%&= 
%\Wr( [ L_{n-1}^{(-2n-m-1-s)}(z), L_{n}^{(-2n-m-1-s)}(z), 
%\ldots, 
%L_{m+n}^{(-2n-m-1-s)}(z) ](-z).
%\end{align}

\begin{definition} A \textit{Wronskian Hermite polynomial} 
$H_{\bfla}(z)$, labelled by partition $\bfla$, is a 
Wronskian of probabilists' Hermite polynomials $\He_n(z)$ given by 
\eq \label{eq:WHP}
H_{\bfla}(z)= \frac{\Wr\left(\He_{h_1}(z),\He_{h_2}(z),
 \ldots,\He_{h_r}(z)\right)}{\Delta(\bold{h}_{\bfla})}.
%,\qquad 
%{\Delta(\bold{h}_{\bfla})=\prod_{1 \le i<j \le r}(h_j-h_i)}
\en
%where 
%$
%\bold{h}_{\bfla}=(h_1,\ldots,h_{r})
%$
%and 
%$
%h_i=\la_i+r-i.
%$ 
The scaling by the Vandermonde determinant
$\Delta(\bold{h}_{\bfla})$ ensures the polynomials are monic.
\end{definition}

\begin{remark}
The well-known identities relating Hermite polynomials and Laguerre polynomials
\[
\He_{2n}(z) = (-1)^n 2^n n! \,L_n^{(-1/2)}(\tfrac12z^2) ,\qquad
\He_{2n+1}(z) = (-1)^n 2^n n! \,z L_n^{(1/2)}(\tfrac12z^2),
\]
cf.~\cite[\S18.7]{refDLMF}, mean that generalised Laguerre polynomials evaluated at negative half-integers are 
related to Wronskian Hermite polynomials. We specialise 
Corollary 4 in \cite{refBDS} to the generalised Laguerre polynomials 
$\Omega_{\bold{\nu}}^{(\alpha)}(z)$. 
Suppose partition $\bold{\Lambda}=\bold{\Lambda}(k, \bold{\nu})$ has $2$-core $k$ and $2$-quotient $(\bold{\nu},\bold{\emptyset})$. Set $\alpha_k =-\tfrac12-\ell(\bfnu) -k$. Then 
\eq
H_{\bold{\Lambda}(k, \bold{\nu})}(z) = 2^{|\bold{\nu}|} 
 z^{k(k-1)/2} \frac{\prod_{j=1}^{\ell(\bfnu)} 
(-1)^{h_j} \,h_j!}{\Delta(\bold{h}_{\bold{\nu}})} 
\, \Omega_{\bold{\nu}}^{(\alpha_k)} 
\lf (\tfrac12 z^2\ri),
\label{HL}
\en
where $\bold{h}_{\bold{\nu}}=(h_1,\ldots,h_r)$ is the degree vector of partition $\bold{\nu}$.
\end{remark}

% \begin{lemma}
% When $\bold{\nu} \in \ZZ/2$ then the generalised Laguerre polynomial 
% is a Wronskian Hermite polynomial up to an overall power of $z$.
% Specifically 
% \eq
% T_{m,n}(\fract{1}{2}x^2;\fract{1}{2}-2n-k) =2^{-n(m+1)} c_{m,n} \, 
% x^{k(1-k)/2} \, H_{\bold{\Lambda}(m+1,n,k)}(x) \quad,\qquad k=1,2,\ldots' .
% \en

% \eq
% T_{m,n}(\fract{1}{2}x^2;\fract{1}{2}-2n-k) =2^{-n(m+1)} c_{m,n} \, 
% x^{k(1-k)/2} \, 
% \Wr\lf ( \{\He_{1+2j}\}_{j=0}^{n+k-2}, 
% \{\He_{2(m+1+j)}\}_{j=0}^{n-1} 
% \ri ) 
% \quad,\qquad k=1,2,\ldots' .
% \en

% The 
% (unnordered) partition of $\bold{\Lambda}(m+1,n,k)$ is $
% k \left(k -1\right)/2+2 m n
% $ is 
% \begin{align}
% \bold{\Lambda}(m,n,k) =\begin{cases}
% \lf ( \{2m-n-k+j\}_{j=1}^n \cup \{ j\}_{j=1}^{n+k-1} \ri ) & k<m-n+2 \\[3pt]
% \lf( \{m+j-1\}_{j=1}^{m-k+1} \}\cup \{m\}_{j=1}^{2(n-m+k-2)} \cup \{j\}_{j=1}^m \ri ) 
% & k<m \\[3pt] 
% \lf (\{m+j-1\}_{j=1}^{k-m} \}\cup \{m\}_{j=1}^{2n-1 )} \cup \{j\}_{j=1}^m \ri )
% & k\ge m. 
% \end{cases}
% \end{align}
% %

% Similarly 
% \eq
% T_{m,n}(-\fract{1}{2}x^2;\fract{1}{2}-2n-k ) = (-1)^{n(m+1)}
% 2^{-n(m+1)} c_{m,n} \, 
% x^{k(1-k)/2} \, H_{\bold{\Lambda}(m+1,n,k)}(x) \quad,\qquad k=1,2,\ldots
% \en
% Note that $\bold{\Lambda}^*$ is the partition with $2$-core $(k-1, \ldots, 1)$ 
% and $2$-quotient $( \phi, {\bfla} )$. 
% \end{lemma}
% \begin{proof}
% First we note that $\bold{\Lambda}(m,n,k)$
% is the partition with $2$-core $(k-1, \ldots, 1)$ 
% and $2$-quotient $( {\bfla}, \phi )$ \cite{sanjida}. The result
% then follows from proposition 5 in Bonneux, Dunning and Stevens. 
% \end{proof}

% \textbf{repeat with $k-> k+1$ } 
\begin{lemma}
Set $\alpha_k =-2n-k-\tfrac12$ for $k=0,1,\ldots$\ . Then 
\eq
\Tmn{m,n}{-2n-k-1/2}(\fract{1}{2}z^2) =2^{-n(m+1)} c_{m,n} \, 
z^{-k(k+1)/2} \, H_{\bfLA_{k,m,n}}(z),
\label{lagH}
\en
where the partition $\bfLA_{k,m,n}$ %of $k(k+1)/2+2 m n$ 
is 
% unordered
%\begin{align}
%\bfla_{m,n,k} =\begin{cases}
%\lf ( \{2m-n-k+1+j\}_{j=1}^n \cup \{ j\}_{j=1}^{n+k} \ri ) & k<m-n+2 \\[3pt]
%\lf( \{m+j\}_{j=1}^{m-k+1} \}\cup \{m+1\}_{j=1}^{2(n-m+k-2)} 
%\cup \{j\}_{j=1}^{m+1} \ri ) 
%& k<m \\[3pt] 
%\lf (\{m+j\}_{j=1}^{k-m} \}\cup \{m+1\}_{j=1}^{2n-1 )} 
%\cup \{j\}_{j=1}^{m+1} \ri )
%& k\ge m. 
%\end{cases}
%\end{align}
%
% Ordered 
\eq
\bfLA_{k,m,n} =
\begin{cases}
\lf ( \{2m-j-k+1\}_{j=0}^{n-1}, \{n+k- j\}_{j=0}^{n+k-1} \ri ), & k<m-n+2, \\[3pt]
\lf( \{2m-j-k+1\}_{j=0}^{m-k}, \{m+1\}_{j=1}^{2(n-m+k-1)} 
, \{m+1-j\}_{j=0}^{m} \ri ),
& m-n+2 \le k< m+1, \\[3pt] 
\lf (\{k-j\}_{j=0}^{k-m-1}, \{m+1\}_{j=0}^{2n-2} 
, \{m+1-j\}_{j=0}^{m} \ri ),
& k\ge m+1. 
\end{cases} 
\label{Lam}
\en
We can equivalently write 
\eq
\Tmn{m,n}{-2n-k-1/2}(\fract{1}{2}z^2) = b_{k,m,n}\,
 z^{-k(k+1)/2} \, 
\Wr\lf ( \{\He_{1+2j}\}_{j=0}^{n+k-1}, 
 \{\He_{2(m+1+j)}\}_{j=0}^{n-1} 
 \ri ),
 \label{TWHP2}
\en
where 
\eq
b_{k,m,n}=\frac{ 2^{-n(m+1)} c_{m,n} }
{ \Delta\lf ( \{1+2j\}_{j=0}^{n+k-1}, 
 \{2(m+1+j)\}_{j=0}^{n-1} \ri) }.
\label{dconst} 
\en
We also find 
\eq
\Tmn{m,n}{-2n-k-1/2}(\fract{1}{2}z^2) = (-1)^{n(m+1)}
2^{-n(m+1)} c_{m,n} \, 
z^{-k(k+1)/2} \, H_{\bfLA_{k,m,n}^*}(z),
\label{THcon}
\en
where $\bfLA_{k,m,n}^*$ denotes the conjugate partition to 
$\bfLA_{k,m,n}$ and $c_{m,n}$ is given by \eqref{def:cmn}.
\end{lemma}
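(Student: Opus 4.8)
The plan is to derive \eqref{lagH} from the Hermite--Laguerre correspondence \eqref{HL} and then to make the partition $\bfLA_{k,m,n}$ explicit using the first-column hook prescription \eqref{hks}. First I would set $\mu=-2n-k-\tfrac12$, so that $n+\mu=-n-k-\tfrac12$, and use \eqref{TOmega} to write $\Tmn{m,n}{\mu}=(-1)^{n(n-1)/2}\Omega_{\bfla}^{(-n-k-1/2)}$ with $\bfla=((m+1)^n)$, whose degree vector is $\bold{h}_{\bfla}=(m+n,m+n-1,\ldots,m+1)$. Since $\ell(\bfla)=n$, the exponent $-\tfrac12-\ell(\bfla)-k$ appearing in \eqref{HL} is exactly $-n-k-\tfrac12$, so \eqref{HL} applies with $\bfnu=\bfla=((m+1)^n)$; its $2$-core is the staircase $(k,k-1,\ldots,1)$ and its second quotient is empty. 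This identifies $\bfLA_{k,m,n}=\bfLA(k,((m+1)^n))$ and writes $\Tmn{m,n}{-2n-k-1/2}(\tfrac12z^2)$ as a constant multiple of $z^{-k(k+1)/2}H_{\bfLA_{k,m,n}}(z)$, the power of $z$ being forced by the order of vanishing of $H_{\bfLA_{k,m,n}}$ at the origin, namely the size $k(k+1)/2$ of its staircase $2$-core, together with the degree count $\deg\Tmn{m,n}{\mu}(\tfrac12z^2)=2(m+1)n=\deg H_{\bfLA_{k,m,n}}-k(k+1)/2$. To pin down the constant I would evaluate $\prod_{j=1}^n(-1)^{h_j}h_j!$ and $\Delta(\bold{h}_{\bfla})$ for $\bold{h}_{\bfla}=(m+n,\ldots,m+1)$, absorb the factors $(-1)^{n(n-1)/2}$ and $2^{-|\bfla|}$ coming from \eqref{TOmega} and \eqref{HL}, and check that the product telescopes to $2^{-n(m+1)}c_{m,n}$ with $c_{m,n}$ as in \eqref{def:cmn}.

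Next I would make \eqref{Lam} explicit. By \eqref{hks} with $r=n$ and $\bold{h}_{\bfla}=(m+n,\ldots,m+1)$, the (unordered) first-column hooks of $\bfLA(k,((m+1)^n))$ form the set $\{2(m+1),2(m+2),\ldots,2(m+n)\}\cup\{1,3,\ldots,2(n+k)-1\}$ of $2n+k$ integers; sorting them into decreasing order and applying \eqref{degvec} with $r=2n+k$ yields the parts of $\bfLA_{k,m,n}$. The three regimes in \eqref{Lam} are precisely the three ways the even block $\{2(m+1),\ldots,2(m+n)\}$ and the odd block $\{1,3,\ldots,2(n+k)-1\}$ interleave: if $2(n+k)-1<2(m+1)$, i.e.\ $k<m-n+2$, the odd block lies entirely below the even block; if $m-n+2\le k<m+1$ it straddles the even block; and if $2(n+k)-1>2(m+n)$, i.e.\ $k\ge m+1$, it extends above the even block. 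In each case the sorted hook list decomposes as a run of consecutive odd integers at the top, a run of consecutive integers overlapping the even block in the middle, and a run of consecutive odd integers at the bottom; subtracting $2n+k-j$ from the $j$-th sorted hook then produces one of the three formulas by a short direct computation. I expect this case analysis, and keeping the index offsets consistent, to be the main bookkeeping burden of the proof.

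Formula \eqref{TWHP2} with the constant \eqref{dconst} is then just \eqref{eq:WHP} applied to this hook set: $\Wr(\{\He_{1+2j}\}_{j=0}^{n+k-1},\{\He_{2(m+1+j)}\}_{j=0}^{n-1})$ is exactly the Wronskian of probabilists' Hermite polynomials indexed by the first-column hooks of $\bfLA_{k,m,n}$, and $b_{k,m,n}$ collects the Vandermonde $\Delta(\bold{h})$ of that hook set together with the constant $2^{-n(m+1)}c_{m,n}$ from \eqref{lagH}. Finally, \eqref{THcon} follows by combining \eqref{lagH} with the conjugation symmetry of Wronskian Hermite polynomials (equivalently, with the Schur symmetry $S_{\bfla}(\bold{t})=S_{\bfla^*}(-\bold{t})$ already used to prove \eqref{symmT}), which relates $H_{\bfLA^*}$ to $H_{\bfLA}$ up to a power of $\i$; the stated sign $(-1)^{n(m+1)}$ emerges because $|\bfLA_{k,m,n}|=k(k+1)/2+2(m+1)n$, so the powers of $\i$ collapse to $\pm1$, while the factor $z^{-k(k+1)/2}$ and the constant $2^{-n(m+1)}c_{m,n}$ are unchanged.
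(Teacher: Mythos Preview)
Your approach is essentially the same as the paper's: both apply \eqref{TOmega} and then \eqref{HL} with $\bfnu=((m+1)^n)$ to obtain \eqref{lagH}, identify the first-column hooks of $\bfLA_{k,m,n}$ via the prescription \eqref{hks} as $\{2(m+1),\ldots,2(m+n)\}\cup\{1,3,\ldots,2(n+k)-1\}$, sort them in the three regimes (which the paper visualises with abacus diagrams rather than your interleaving description, but it is the same case split), apply \eqref{degvec} with $r=2n+k$ to extract \eqref{Lam}, read off \eqref{TWHP2} from \eqref{eq:WHP}, and deduce \eqref{THcon} from the conjugation symmetry $H_{\bfrh}(\i z)=\i^{|\bfrh|}H_{\bfrh^*}(z)$.
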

\begin{proof}
Set $\mu= \mu_k = -2n-k-\tfrac12$ in (\ref{TOmega}) then
 \begin{align}
 \Tmn{m,n}{\nu}(\fract{1}{2}z^2) &= (-1)^{n(n-1)/2} \,
 \Omega_{\bfla}^{(-n-k-1/2)}(\fract{1}{2}z^2) \nn \\
 &= \frac{(-1)^{n(n-1)/2} 2^{n(m+1)} \Delta(\bold{h}_{\bfla}) }
 {\prod_{m=1}^n (-1)^{m+1} (m+1)! }z^{-k(k+1)/2} H_{
\bfLA_{k,m,n}}(z),
 \end{align}
 using (\ref{HL}) with 
 $\bold{\nu}=\bfla=((m+1)^n) $ and 
 $\alpha_k=n+\mu_k$. We denote by 
 $\bfLA_{k,m,n}$ the partition that has
 $2$-core $k$ and 
 $2$-quotient $(\bfla, \bold{\emptyset})$. 
 Simplifying the constant term, we obtain \eqref{lagH}. 
Moreover (\ref{THcon}) follows from \eqref{lagH} by replacing $z$ with $ \i z$ 
 and using the well-known relation 
 $$
 H_{\bfrh}(\i z) = \i^{|\bfrh|} H_{\bfrh^*}(z).
 $$

We determine the degree vector of partition $\bfLA_{k,m,n}$
from the degree vector 
$$
\bold{h}_{\bfla} =( m+1, m+3, \ldots, m+n),
$$
using \eqref{hks}.
Put beads in positions $2(m+1)$ to $2(m+n)$ on the left runner 
and in positions $1$ to $2(n+k-1)+1$ on the right runner. The 
 components of the degree vector of $\bfLA_{k,m,n} $ correspond 
to the positions of the beads: 
\eq
 \{2(m+1+j)\}_{j=0}^{n-1} \cup \{2j-1\}_{j=1}^{n+k}.
\label{fch}
\en
Writing the Wronskian Hermite polynomial explicitly in terms of \eqref{fch} gives \eqref{TWHP2}, where
the Vandermonde determinant in the denominator of the constant \eqref{dconst}
arises because the components of the degree vector as given in \eqref{fch} 
are not ordered. 

The degree vector $\bold{h}_{\bfLA_{k,m,n}}$ is obtained by ordering \eqref{fch} from 
largest value to smallest value.
Depending on $k,m,n$, there are three possibilities corresponding 
to the three abaci in
Figure \ref{fig:abaci}. 
We deduce from the abaci that the degree vector is
\eq
 \bold{h}_{\bfLA_{k,m,n}} =
 \begin{cases}\lf (\{ 2(m{+}n{-}j) \}_{j=0}^{n-1} 
, \{2(n{+}k{-}j ){-}1\}_{j=0}^{n+k-1} \ri),
& k< m{-}n{+}2,\\[3pt]
 \lf (\{ 2(m{+}n {-}j) \}_{j=0}^{m-k} 
, \{2(n{+}k) {-}1{-}j\}_{j=0}^{2(n+k-m)-3} 
, \{ 2(m{-}j){+}1 \}_{j=0}^m 
\ri ),
& m{-}n{+}2 \le k < m+1, \\[3pt]
 \lf (\{ 2(n{+}k {-}j) {-}1\}_{j=0}^{k-1-m} 
, \{2(m{+}n) {-}j\}_{j=0}^{2(n-2)} 
, \{ 2(m{-}j){+}1 \}_{j=0}^m 
\ri ),
& k\ge m{+}1. 
\end{cases}
\nn\en
The description of the partition $\bfLA_{k,m,n}$ in \eqref{Lam} follows from the degree vector using \eqref{degvec} with $r=2n+k$. 
\end{proof}
\begin{remark}
In \eqref{Lam} we have explicitly described the partition $\bfLA_{k,m,n}$ with $2$-core $k$ and $2$-quotient $((m+1)^n, \bold{\emptyset})$. This result may be of independent interest to those who work in combinatorics. 
\end{remark}
\begin{remark}
Wronskian Hermite polynomials of the type $ H_{\bfLA_{K,m,n}}(z)$ appear in \cite{refGGM20} in their classification of solutions to \PV\ at half-integer values of the associated Laguerre parameter using Maya diagrams. Such diagrams also represent partitions and there is straightforward connection between their results and the ones in this article. 
The $ H_{\bfLA_{K,m,n}}(z)$ are related to the $k=2$ cases studied in \S6 of \cite{refGGM20}; the $k=3$ case therein relates to solutions of generalised Umemura polynomials at half-integer values of the parameter.
\end{remark}
\begin{figure}
 \centering
%%
%%Figure 1
%%
 \begin{subfigure}[b]{0.3\textwidth}
 \centering

\begin{tikzpicture}[scale=0.7]
 \node[vertexe] (n0) at (0,0) {};
 \node[vertexf] (n1) at (1,0) {};
 \node[vertexe] (n2) at (0,-2) {};
 \node[vertexf] (n3) at (1,-2) {};
 \path (n0) -- (n2) node [black, font=\LARGE, midway, sloped] {$\ldots$}; 
 \path (n1) -- (n3) node [black, font=\LARGE, midway, sloped] {$\ldots$};

 \node[vertexe] (n4) at (0,-3) {};
 \node[vertexe] (n5) at (1,-3) {};
 \node[vertexe] (n6) at (0,-5) {};
 \node[vertexe] (n7) at (1,-5) {};
 \path (n4) -- (n6) node [black, font=\LARGE, midway, sloped] {$\ldots$}; 
 \path (n5) -- (n7) node [black, font=\LARGE, midway, sloped] {$\ldots$};

\node[vertexf] (n8) at (0,-6) {};
 \node[vertexe] (n9) at (1,-6) {};
 \node[vertexf] (n10) at (0,-8) {};
 \node[vertexe] (n11) at (1,-8) {};
 \path (n8) -- (n10) node [black, font=\LARGE, midway, sloped] {$\ldots$}; 
 \path (n9) -- (n11) node [black, font=\LARGE, midway, sloped] {$\ldots$}; 

\draw [decorate,decoration={brace,amplitude=10pt,raise=5pt},xshift=0pt,yshift=0pt]
(n1) -- (n3) node [black,midway,xshift=1.2cm]{\small $n+k$}; 

\draw [decorate,decoration={brace,amplitude=10pt,raise=5pt},xshift=0pt,yshift=0pt]
(n5) -- (n7) node [black,midway,xshift=1.8cm]{\small $m-n-k+1$};

\draw [decorate,decoration={brace,amplitude=10pt,raise=5pt},xshift=0pt,yshift=0pt]
(n10) -- (n8) node [black,midway,xshift=-0.8cm]{\small $n$};

\end{tikzpicture}

 \caption{$k< m-n+2$}
 \label{fig:abac11}
 \end{subfigure}
 \hfill
 %%
 %% Figure 2
 %%
 \begin{subfigure}[b]{0.3\textwidth}
 \centering

\begin{tikzpicture}[scale=0.7]
 \node[vertexe] (n0) at (0,0) {};
 \node[vertexf] (n1) at (1,0) {};
 \node[vertexe] (n2) at (0,-2) {};
 \node[vertexf] (n3) at (1,-2) {};
 \path (n0) -- (n2) node [black, font=\LARGE, midway, sloped] {$\ldots$}; 
 \path (n1) -- (n3) node [black, font=\LARGE, midway, sloped] {$\ldots$};

 \node[vertexf] (n4) at (0,-3) {};
 \node[vertexf] (n5) at (1,-3) {};
 \node[vertexf] (n6) at (0,-5) {};
 \node[vertexf] (n7) at (1,-5) {};
 \path (n4) -- (n6) node [black, font=\LARGE, midway, sloped] {$\ldots$}; 
 \path (n5) -- (n7) node [black, font=\LARGE, midway, sloped] {$\ldots$};

\node[vertexf] (n8) at (0,-6) {};
 \node[vertexe] (n9) at (1,-6) {};
 \node[vertexf] (n10) at (0,-8) {};
 \node[vertexe] (n11) at (1,-8) {};
 \path (n8) -- (n10) node [black, font=\LARGE, midway, sloped] {$\ldots$}; 
 \path (n9) -- (n11) node [black, font=\LARGE, midway, sloped] {$\ldots$}; 

\draw [decorate,decoration={brace,amplitude=10pt,raise=5pt},xshift=0pt,yshift=0pt]
(n1) -- (n7) node [black,midway,xshift=1.2cm]{\small $n+k$};

\draw [decorate,decoration={brace,amplitude=10pt,raise=5pt},xshift=0pt,yshift=0pt]
(n10) -- (n4) node [black,midway,xshift=-0.8cm]{\small $n$};

\end{tikzpicture}

 \caption{ $m-n+2 \le k <m$}
 \label{fig:abac22}
 \end{subfigure}
 \hfill
 %%
 %% Figure 3
 %%
 \begin{subfigure}[b]{0.3\textwidth}
 \centering
\begin{tikzpicture}[scale=0.7]
 \node[vertexe] (n0) at (0,0) {};
 \node[vertexf] (n1) at (1,0) {};
 \node[vertexe] (n2) at (0,-2) {};
 \node[vertexf] (n3) at (1,-2) {};
 \path (n0) -- (n2) node [black, font=\LARGE, midway, sloped] {$\ldots$}; 
 \path (n1) -- (n3) node [black, font=\LARGE, midway, sloped] {$\ldots$};

 \node[vertexf] (n4) at (0,-3) {};
 \node[vertexf] (n5) at (1,-3) {};
 \node[vertexf] (n6) at (0,-5) {};
 \node[vertexf] (n7) at (1,-5) {};
 \path (n4) -- (n6) node [black, font=\LARGE, midway, sloped] {$\ldots$}; 
 \path (n5) -- (n7) node [black, font=\LARGE, midway, sloped] {$\ldots$};

\node[vertexe] (n8) at (0,-6) {};
 \node[vertexf] (n9) at (1,-6) {};
 \node[vertexe] (n10) at (0,-8) {};
 \node[vertexf] (n11) at (1,-8) {};
 \path (n8) -- (n10) node [black, font=\LARGE, midway, sloped] {$\ldots$}; 
 \path (n9) -- (n11) node [black, font=\LARGE, midway, sloped] {$\ldots$}; 

\draw [decorate,decoration={brace,amplitude=10pt,raise=5pt},xshift=0pt,yshift=0pt]
(n1) -- (n11) node [black,midway,xshift=1.2cm]{\small $n+k$};

\draw [decorate,decoration={brace,amplitude=10pt,raise=5pt},xshift=0pt,yshift=0pt]
(n6) -- (n4) node [black,midway,xshift=-0.8cm]{\small $n$};

\end{tikzpicture}

\caption{ $k\ge m+1$ }
 \label{fig:abac33}
 \end{subfigure}
 \caption{The abaci of $\bfla_{k,m,n}$.}
 \label{fig:abaci}
\end{figure}
%

% The conjugate
% partition $\bold{\Lambda}^*$ is the partition with $2$-core $(k,k-1, \ldots, 1)$ 
% and $2$-quotient $( \phi, {\bfla} )$. 

%\end{proof}

%{\it Note for us:
% This first WHP above is the solution given by 
%Gomez-Ullate et al. in 2009.11668 for their $k=2$ and $(p_0,p_1)=(3,1)$ 
%on page 29. This second WHP above is the solution given by 
%Gomez-Ullate et al. in 2009.11668 for their $k=2$ and $(p_0,p_1)=(1,3)$.
%}

\section{Discriminants, root patterns and partitions}
\label{sec:roots}
In this section we give an expression for the discriminant of the generalised Laguerre polynomials and obtain several results {and conjectures} concerning the pattern of 
roots of the generalised Laguerre polynomials in the complex plane. We finish by noting that several of the results can be reframed using partition data.

\subsection{Discriminant of $\Tmn{m,n}{\mu}(z)$}
%The partition ${\bfla}=((m+1)^n)$ has hook multiset
%\eq
%\mathcal{H}=\{j^j\}_{j=1}^m \cup \{ j^{m+1} \}_{j=m+1}^n \cup \{ j^{n+m-j+1}\}_{j%=n+1}^{m+n} \quad n>m
%\en
%and 
%\eq
%\mathcal{H}=\{j^j\}_{j=1}^n \cup \{ j^{n} \}_{j=n+1}^m \cup \{ j^{n+m-j+1}\}_{j=m%+1}^{m+n} \quad n\le m
%\en%%

%So 
%\eq
%\prod_{k \in \mathcal{H}} k^{k \, p_k^2} 
%= \prod_{j=1}^m j^{j^3}
%\prod_{j=m+1}^n j^{j(m+1)^2} 
%\prod_{j=n+1}^{m+n} j^{ j (m+n-j+1)^2}. 
%\en
Recall that a monic polynomial $f(x)$ 
\eq
f(x) = x^d+ a_{d-1} x^{d-1}+ \ldots+ a_1 x+ a_0,
\en
with roots $\a_1,\alpha_2,\ldots, \alpha_{d} \in \CC$ 
has discriminant 
\eq
\text{Dis}(f)=\prod_{1\le j<k \le d} (\alpha_j-\alpha_k)^2.
\en
The discriminants $\text{Dis}_{m,n}(\mu)$ of several $\Tmn{m,n}{\mu}(z)$
are given in 
 Table~\ref{tab:discrim}.

\begin{table}
 \begin{align*}
 \text{Dis}_{1,1}(\mu) &= (\mu+3) \\
 \text{Dis}_{1,2}(\mu) &= (\mu+3)(\mu+4)^4 (\mu+5) /2^4 3^3 \\
 \text{Dis}_{1,3}(\mu) &= (\mu+4)^2(\mu+5)^8 (\mu+6)^4(\mu+7) / 2^{24} 3^8 \\
 \text{Dis}_{2,1}(\mu) &= (\mu+3)(\mu+4)^2/ 2^2 3 \\
 \text{Dis}_{2,2}(\mu) &= -(\mu+3)(\mu+4)^4 (\mu+5)^8(\mu+6)^2 / 2^{24} 3^8 \\
 \text{Dis}_{2,3}(\mu) &= -(\mu+4)^2(\mu+5)^8 (\mu+6)^{16}(\mu+7)^8(\mu+8)^2 / 2^{60} 3^{21} 5^{11} 
 \end{align*} 
\caption{\label{tab:discrim}
 Some discriminants of $\Tmn{m,n}{\mu}(z)$.}
\end{table}

\begin{conjecture}
The discriminant of $\Tmn{m,n}{\mu}(z)$ when $n>m$ is 
%\begin{align}
%\text{Dis}_{m,n}(\mu)&=%\prod_{k \in \mathcal{H}} k^{k \, p_k^2} 
%(-1)^{(m+1) \floor{n/2} } c_{m,n}^{n(m+1) -1}\prod_{j=1}^m j^{j^3}
%\prod_{j=m+1}^n j^{j(m+1)^2} 
%\prod_{j=n+1}^{m+n} j^{ j (m+n-j+1)^2} \nn \\
%& \times \prod_{j=1}^{m} j^{2j(n-j)(j-1-m)} 
%\prod_{j=1}^{m} (\mu+n+j)^{(n-1)j^2 - j(j-1)(j-2)/3}
%\nn \\
%& \times 
%\prod_{j=m+1}^{n} (\mu+n+j)^{(m+n-j)(m+1)^2 -(m+1)m(m-1)/3} 
%\prod_{j=n+1}^{m+n} (\mu+n+j)^{m(n+m+1-j)^2 - (n+m+1-j)(n+m-j)(n+m-1-j)/3} 
%\end{align}
%we have 
\begin{align}
\text{Dis}_{m,n}(\mu)&=%\prod_{k \in \mathcal{H}_{m,n}} k^{k \, p_k^2} 
(-1)^{(m+1) \floor{n/2} } c_{m,n}^{2((m+1)n -1)}\prod_{j=1}^m j^{j^3}
\prod_{j=m+1}^n j^{j(m+1)^2} 
\prod_{j=n+1}^{m+n} j^{ j (m+n-j+1)^2} \nn \\
&\qquad \times \prod_{j=1}^{m} j^{2j(n-j)(j-1-m)} 
\prod_{j=1}^{m} (\mu+n+j)^{f(n-1,j) }
\nn \\
&\qquad \times 
\prod_{j=m+1}^{n} (\mu+n+j)^{f(m+n-j, m+1)} 
\prod_{j=n+1}^{m+n} (\mu+n+j)^{f(m,m+n+1-j )},
\label{dis}
\end{align}
and when $n\le m$
\begin{align}
\text{Dis}_{m,n}(\mu)&=%\prod_{k \in \mathcal{H}_{m,n}} k^{k \, p_k^2} 
(-1)^{(m+1) \floor{n/2} } c_{m,n}^{2((m+1)n -1)}\prod_{j=1}^n j^{j^3}
\prod_{j=n+1}^m j^{jn^2} 
\prod_{j=m+1}^{m+n} j^{ j (m+n-j+1)^2} \nn \\
&\qquad \times \prod_{j=1}^{n} j^{2j(n-j)(j-1-m)} 
\prod_{j=1}^{n} (\mu+n+j)^{f(n-1,j)}
\nn \\
&\qquad \times 
\prod_{j=n+1}^{m} (\mu+n+j)^{f(j-1,n)} 
\prod_{j=m+1}^{m+n} (\mu+n+j)^{f(m,m+n+1-j) }
\label{dis2}
\end{align}
where 
\eq
f(j,p) =jp^2 -p(p-1)(p-2)/3. 
\en 
\end{conjecture}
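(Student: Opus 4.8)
The plan is to write $\text{Dis}_{m,n}(\mu)$ as a ratio of resultants of generalised Laguerre polynomials with neighbouring indices, to evaluate each resultant in closed form from its zeros in $\mu$ and its leading-coefficient data, and then to reconcile the exponents with \eqref{dis}--\eqref{dis2}. Concretely, since $\Tmn{m,n}{\mu}(z)$ has degree $d=(m+1)n$ and leading coefficient $c_{m,n}$ from \eqref{def:cmn}, we have $\text{Dis}_{m,n}(\mu)=(-1)^{d(d-1)/2}c_{m,n}^{-1}\mathrm{Res}\bigl(\Tmn{m,n}{\mu},(\Tmn{m,n}{\mu})'\bigr)$. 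Evaluating the Hirota identity \eqref{324e}, namely $(\Tmn{m,n}{\mu})'\Tmn{m,n}{\mu+1}-\Tmn{m,n}{\mu}(\Tmn{m,n}{\mu+1})'=\Tmn{m+1,n-1}{\mu+1}\Tmn{m-1,n+1}{\mu}$, at every root of $\Tmn{m,n}{\mu}$ and taking the product over those roots turns it into an identity between resultants — literally so for generic $\mu$ (simple roots), hence identically in $\mu$. Using $\mathrm{Res}(f,g)=\mathrm{lc}(f)^{\deg g}\prod_{f(\alpha)=0}g(\alpha)$ and the degree balance $(m+2)(n-1)+m(n+1)=2(m+1)n-2$, the $c_{m,n}$-factors cancel and one obtains
\[
\text{Dis}_{m,n}(\mu)=(-1)^{d(d-1)/2}\,\frac{\mathrm{Res}\bigl(\Tmn{m,n}{\mu},\Tmn{m+1,n-1}{\mu+1}\bigr)\,\mathrm{Res}\bigl(\Tmn{m,n}{\mu},\Tmn{m-1,n+1}{\mu}\bigr)}{\mathrm{Res}\bigl(\Tmn{m,n}{\mu},\Tmn{m,n}{\mu+1}\bigr)}.
\]

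\textbf{The pairwise resultants.} Each $R(\mu)=\mathrm{Res}(\Tmn{m,n}{\mu},\Tmn{m',n'}{\mu'})$ is a polynomial in $\mu$ whose leading coefficient is an explicit product of $c$'s (from the Poisson product formula) and whose zeros are precisely the $\mu$ at which the two polynomials share a root. By Lemma~\ref{lemma:multzeros} and the factorisations \eqref{Tzero3}, such common factors occur only at the negative integers $\mu=-n-j$, where the shared factor is a known power of $z$ times a smaller generalised Laguerre polynomial with non-vanishing value at $0$; this fixes the order of each zero of $R(\mu)$. Combined with the classical product formulas for $\mathrm{Dis}(L_a^{(\alpha)})$ and $\mathrm{Res}(L_a^{(\alpha)},L_b^{(\alpha)})$ (which account for the $\prod_j j^{(\cdots)}$ and factorial contributions) and with an induction on $m$ and $n$ through the bilinear identities of Lemmas~\ref{lem:312} and \ref{lem:313}, this should produce closed product forms for the three resultants, the base case $n=1$ being classical since $\Tmn{m,1}{\mu}=L_{m+1}^{(\mu+1)}$. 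An alternative supply of resultant data comes from specialising $\mu=-2n-k-\tfrac12$ via \eqref{lagH} and invoking the discriminant of a Wronskian Hermite polynomial, then interpolating in $\mu$.

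\textbf{Assembly.} Substituting the resultant formulas collapses the problem to a finite summation identity: after collecting the $(\mu+n+j)$-exponents, they must equal $f(n-1,j)$, $f(m+n-j,m+1)$, $f(j-1,n)$ and $f(m,m+n+1-j)$ with $f(j,p)=jp^2-p(p-1)(p-2)/3$; this is a telescoping sum over the hooks of the rectangle $((m+1)^n)$ and should yield to induction. The $\mu$-independent prefactor is rebuilt from the classical Laguerre discriminant and resultant constants together with $c_{m,n}$; the overall sign $(-1)^{(m+1)\lfloor n/2\rfloor}$ follows from $(-1)^{d(d-1)/2}$ and the sign of $c_{m,n}$; and the discrete symmetry \eqref{symmT} transports the $n>m$ case \eqref{dis} to the $n\le m$ case \eqref{dis2}. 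Finally one checks the entries of Table~\ref{tab:discrim} and the degenerate evaluation \eqref{sol:rat1b0}.

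\textbf{Main obstacle.} The hard part is pinning down the order of vanishing of $\mathrm{Res}(\Tmn{m,n}{\mu},\Tmn{m',n'}{\mu'})$, equivalently of $\text{Dis}_{m,n}(\mu)$, at $\mu=-n-j$. Knowing the limiting common factor $z^{nj}\Tmn{m-j,n}{j-n}$ is not enough: one must control how fast the roots of $\Tmn{m,n}{\mu}$ that collapse to the origin as $\mu\to-n-j$ separate from one another — their Newton-polygon scaling — since that scaling is what determines the multiplicity of $(\mu+n+j)$ in the discriminant. This is exactly where the hook and content data of $((m+1)^n)$ enter, and matching that data to the cubic $f(j,p)$ is expected to be the most delicate step.
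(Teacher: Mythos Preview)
The paper does not prove this statement: it is explicitly labelled a \emph{Conjecture}, and immediately afterwards the authors write that ``it would be interesting to see if Roberts' approach can be adapted to prove the generalised Laguerre discriminants, possibly starting from the differential-difference equations found in section~\ref{sec:gen_lag}.'' So there is no proof in the paper to compare against; the only hint the authors give is that the method of Roberts (and Amdeberhan), which proceeds via recursion from differential--difference relations, might work.

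Your resultant-based strategy is therefore not a match to anything in the paper, but neither is it a proof. The reduction to $\mathrm{Res}(\Tmn{m,n}{\mu},\Tmn{m\pm1,n\mp1}{\cdot})/\mathrm{Res}(\Tmn{m,n}{\mu},\Tmn{m,n}{\mu+1})$ via \eqref{324e} is a clean first step and is genuinely different from the Roberts-type recursion the authors mention. However, you correctly identify that everything hinges on computing the exact order of vanishing of each pairwise resultant at $\mu=-n-j$, and you do not actually do this: Lemma~\ref{lemma:multzeros} tells you the shared factor at those points but not the local Puiseux/Newton-polygon scaling of the colliding roots, which is what fixes the exponent $f(j,p)$. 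Your remarks about ``telescoping over the hooks'' and ``interpolating via \eqref{lagH}'' are suggestions, not arguments, and the induction you sketch on $m,n$ via Lemmas~\ref{lem:312}--\ref{lem:313} is not shown to close. In short, what you have is a plausible programme for attacking an open conjecture, with the key analytic step --- controlling how the $nj$ (resp.\ $(m{+}1)(n{+}1{-}j)$) roots collapse to the origin --- left as the acknowledged obstacle; the paper itself regards the determination of that collapse behaviour (Conjectures~\ref{conj:blocks}, \ref{angles}, \ref{conj:rootangles}) as conjectural too.
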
{Roberts}~\cite{refRoberts} derived formulae 
for the discriminants of the 
Yablonskii-\Vor\ polynomials, 
%, which associated with rational solutions \PII, 
 the generalised Hermite polynomials and 
the generalised Okamoto polynomials starting from suitable sets of 
differential-difference equations. 
%which associated with rational solutions \PIV.
Amdeberhan~\cite{refAmd} applied similar ideas to the Umemura polynomials 
associated with rational solutions of \PIII. It would be interesting to see 
if {Roberts' approach} can be adapted to prove the generalised Laguerre discriminants, possibly starting from the differential-difference equations found in section~\ref{sec:gen_lag}.

\subsection{Roots in the complex plane}

In this section we classify the allowed configuration of roots of $\Tmn{m,n}{\mu}(z)$ in the $z^2$-plane as a function of $\mu$. 
%since it proves simpler to determine 
%the allowed configurations of roots as $\mu$ varies as a function of $z^2$ rather than $z$.
%Invariance of $T_{m,n}(\tfrac12z^2, \mu)$ under $z\to -z$ means that it is sufficient to 
%describe the behaviour of the roots near
% the negative real axis, the positive
% imaginary axis and at the origin. Moreover, 
Given the symmetry
 \eqref{symmT}, the root 
plot of $\Tmn{m,n}{\mu}$ when 
 $\mu \in (-m-n-1,\ldots, \infty)$
follows from that of 
$\Tmn{n-1,m+1}{-\mu-2n-2m-2}(\tfrac12z^2)$ rotated by $\tfrac12\pi$.

\begin{example} 
Figure~\ref{fig:T64} shows the roots of $T_{6,4}^{(\mu)}(\tfrac12z^2)$ in the complex plane 
for various $\mu$. 
For $\mu=-35/2$ and $\mu=-6$ 
the non-zero roots form a pair of approximate rectangles of size $5 \times 6$. 
When $\mu=-14$ and $\mu=-8$, there are $24 $ roots at the origin and 
two rectangles of roots of size 
$ 3 \times 6$. 
At $\mu=-17/2$ the roots form two rectangles of size $2\times 6$ (or possibly $3 \times 6$), two 
approximate trapezoids of short base 4 and long base $5$ (or $6$) centered on the real axis and two triangles of size $2$ centred on the imaginary axis. At $\mu=-25/2$ there are four $4$-triangles and two $5 \times 2 $ rectangles.

\begin{figure}
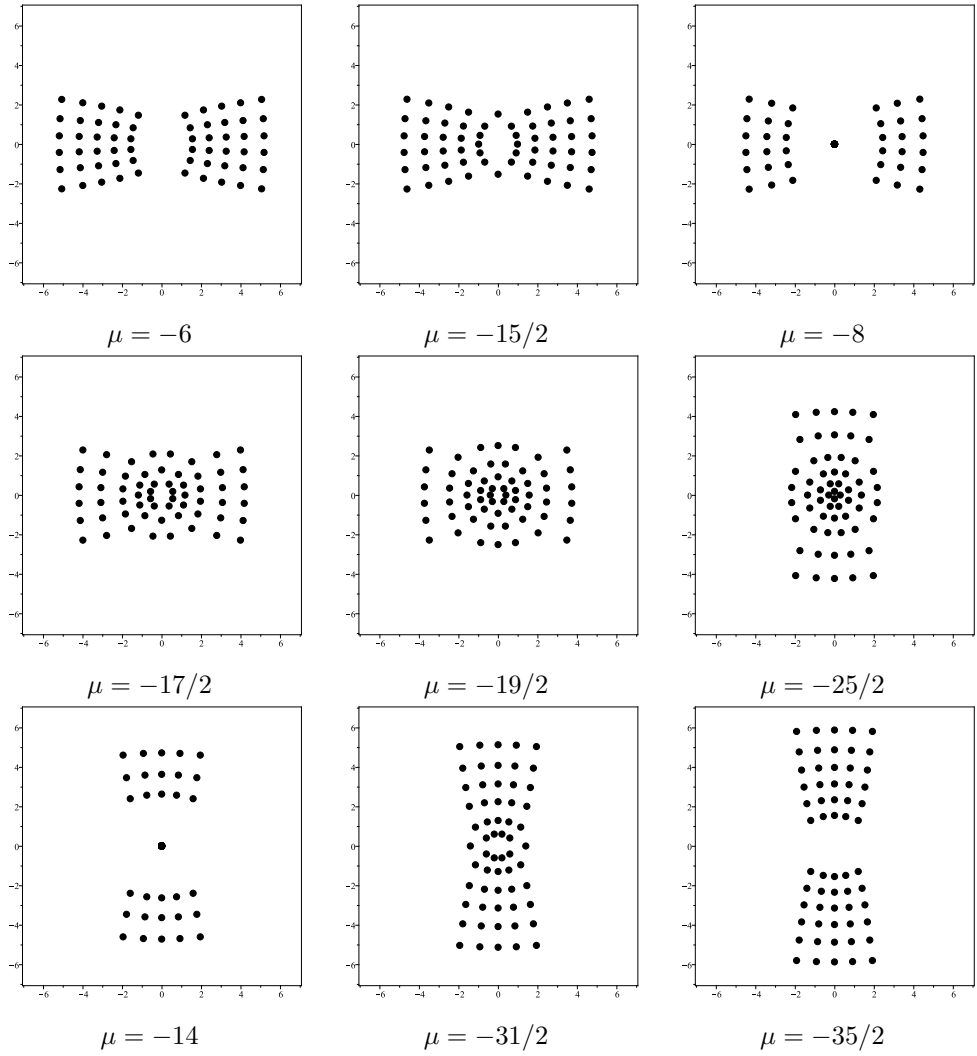
\[ \begin{array}{c@{\quad}c@{\quad}c}
\fig{P6_4m1} & \fig{P6_4m2} &\fig{P6_4m3}\\
\mu=-6 & \mu=-15/2& \mu=-8\\ 
\fig{P6_4m4} & \fig{P6_4m5} &\fig{P6_4m6}\\
\mu=-17/2 & \mu=-19/2 & \mu= -25/2\\ 
\fig{P6_4m7} & \fig{P6_4m8} &\fig{P6_4m9}\\
\mu=-14 & \mu=-31/2 & \mu=-35/2\\ 
\end{array}\]
\caption{\label{fig:T64}The roots of
$T_{6,4}^{(\mu)}(\tfrac12z^2)$ for various $\mu$.}
\end{figure}

\end{example}

Further investigations suggest that 
 the roots of $T_{m,n}^{(\mu)}(\tfrac12z^2)$ that are away from the origin 
 form blocks in the form of approximate trapezoids and/or triangles
 near the origin and 
rectangles further away. We label such blocks E--G
as shown in Figure~\ref{fig:blocks}. 
 We 
 say a rectangle has size $d_1 \times d_2$ if it
has width $d_1$ and height $d_2$. 
A 
 trapezoid of size $d_1 \times d_2$ has long base $d_1$ and
 short base $d_2$. If $d_2=1$ then we call the resulting (degenerate)
 trapezoid a triangle.  {The blocks of roots centered on the real or imaginary axis in approximate rectangles are labelled blocks E and D respectively, and those forming approximate trapezoids are labelled G and F respectively.}
\begin{figure}
\centering 
\begin{subfloat}[All blocks] {
\begin{tikzpicture}[scale=0.75]
 
\draw[black, thick] (-3.5,-1.6) -- (-2,-1.6) -- (-2,1.6) -- (-3.5,1.6) --cycle;

\draw[black, thick] (3.5,-1.6) -- (2,-1.6) -- (2,1.6) -- (3.5,1.6) --cycle;

\draw[black, thick] (-1.3,2) -- (1.3,2) -- (1.3,3) -- (-1.3,3) --cycle; 

\draw[black, thick] (-1.3,-2) -- (1.3,-2) -- (1.3,-3) -- (-1.3,-3) --cycle; 

\draw (0,2.5) node {D} ;
\draw (0,-2.5) node {D} ;

\draw (-2.75,0) node {E} ;
\draw (2.75,0) node {E} ;

\draw[black, thick] (0.3,0.3) -- (1,1.6) -- (-1,1.6) --
(-0.3,0.3)--cycle; 

\draw[black, thick] (0.3,-0.3) -- (1,-1.6) -- (-1,-1.6) --
(-0.3,-0.3)--cycle; 

\draw (0,0.95) node {F} ;
\draw (0,-0.95) node {F} ;

\draw[black, thick] (0.6,0.3) -- (1.5,1.4) -- (1.5,-1.4) --
(0.6,-0.3)--cycle;

\draw[black, thick] (-0.6,0.3) -- (-1.5,1.4) -- (-1.5,-1.4) --
(-0.6,-0.3)--cycle;

\draw (1.05,0) node {G} ;
\draw (-1.05,0) node {G} ;

\end{tikzpicture}
%\end{figure}
}
\end{subfloat}
%\end{figure}
\begin{subfloat}[$\Tmn{5,8}{-57/5}(\tfrac12z^2) $\label{root_colour1}] {
 % \begin{figure}[ht]
%\[ \begin{array}{c@{\quad}c}
%\figg{T5_8_m57p5} 
\includegraphics[width=.3\linewidth]{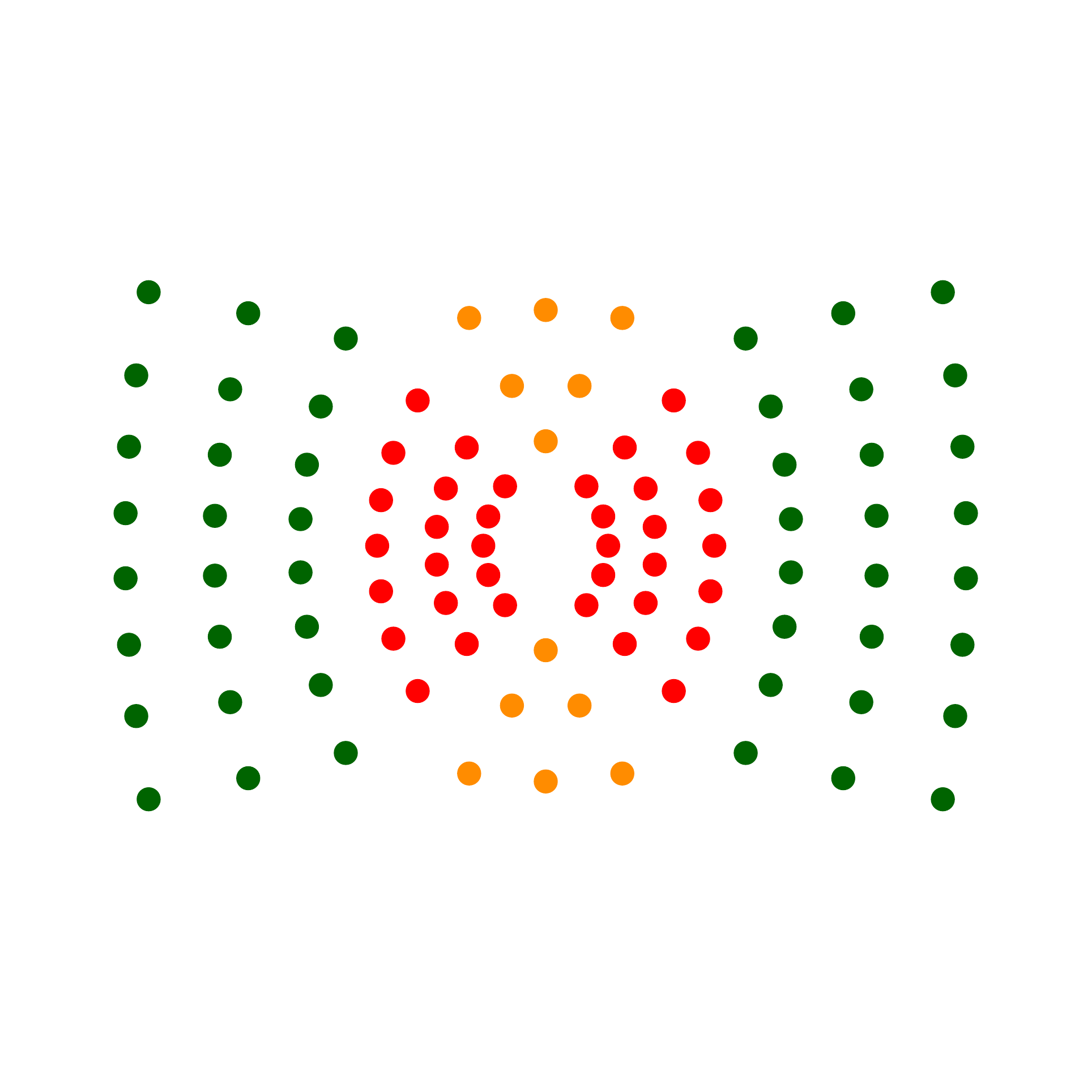}
%
%\includegraphics[width=.3\linewidth]{ T5_8_m313p20 }
%& \figg{T5_8_m313p20} \\
% & 
%\end{array}\]
%\caption{\label{root_colour} The roots of $T_{5,8}^{(\mu)}(\tfrac12z^2)$.}
%
}
\end{subfloat}
\begin{subfloat}[ $\Tmn{5,8}{-313/20}(\tfrac12z^2)$ \label{root_colour2}] {
 % \begin{figure}[ht]
%\[ \begin{array}{c@{\quad}c}
%\figg{T5_8_m57p5} 
%\includegraphics[width=.3\linewidth]{ T5_8_m57p5 }
%
\includegraphics[width=.3\linewidth]{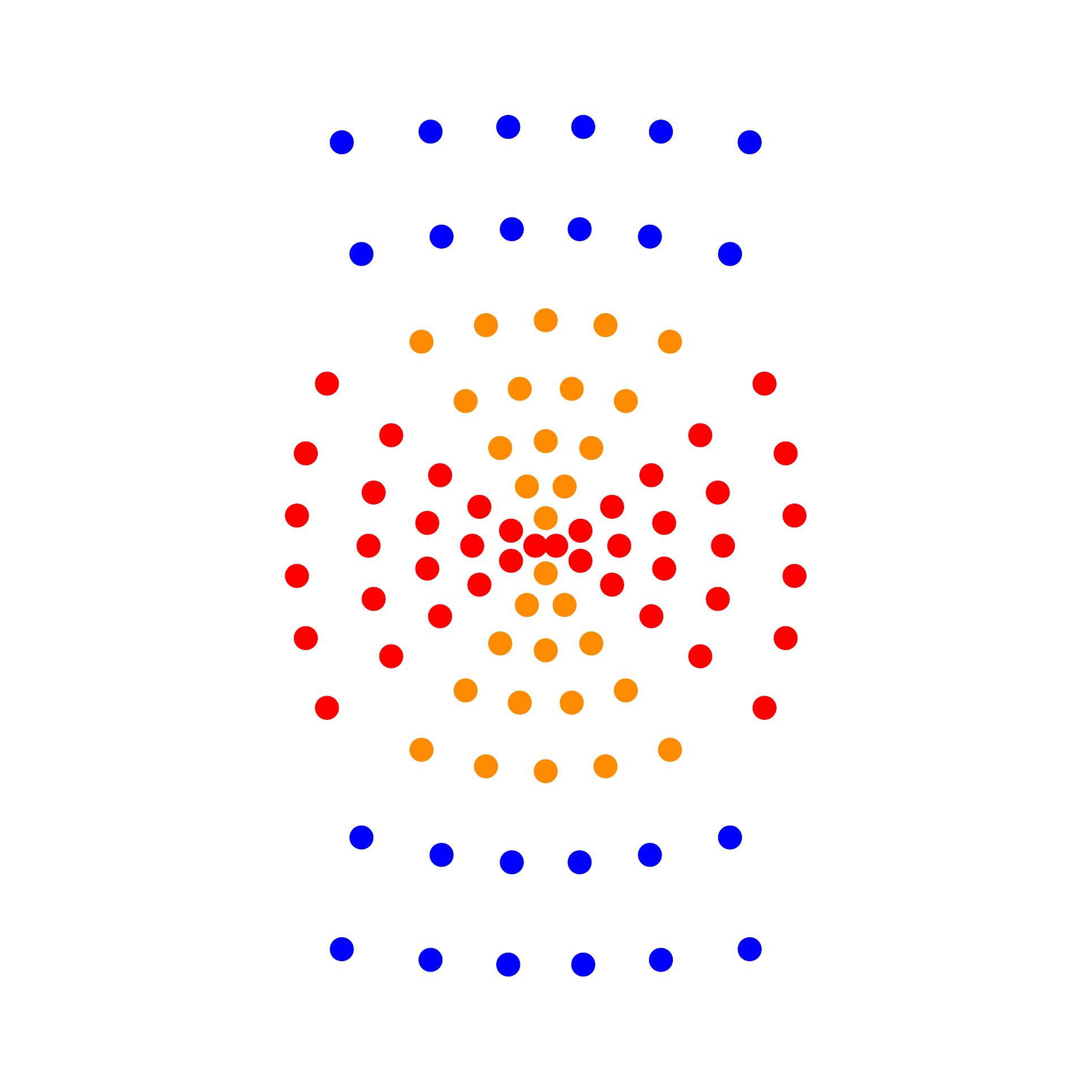}
%& \figg{T5_8_m313p20} \\
%\mu=-57/5 & \mu=-313/20
%\end{array}\]
%\caption{\label{root_colour} The roots of $T_{5,8}^{(\mu)}(\tfrac12z^2)$.}
%
}
\end{subfloat}
\caption{\label{fig:blocks}Blocks formed by the zeros of $T_{m,n}^{(\mu)}(\tfrac12z^2)$.}
\end{figure}
Figures~\ref{root_colour1} and \ref{root_colour2}
 show the zeros of $\Tmn{5,8}{-57/5}(\tfrac12z^2)$ and
$\Tmn{5,8}{-323/20}(\tfrac12z^2)$ with 
 block E zeros in green, block G in red, block F in orange and block D in blue. 

 We describe how the roots transition between blocks as a function of $\mu$ and determine 
the size of each root block for a given $\mu$ when $m=5$ and $n=3$,
 before stating the result for all $m,n$.
\begin{figure}[ht]
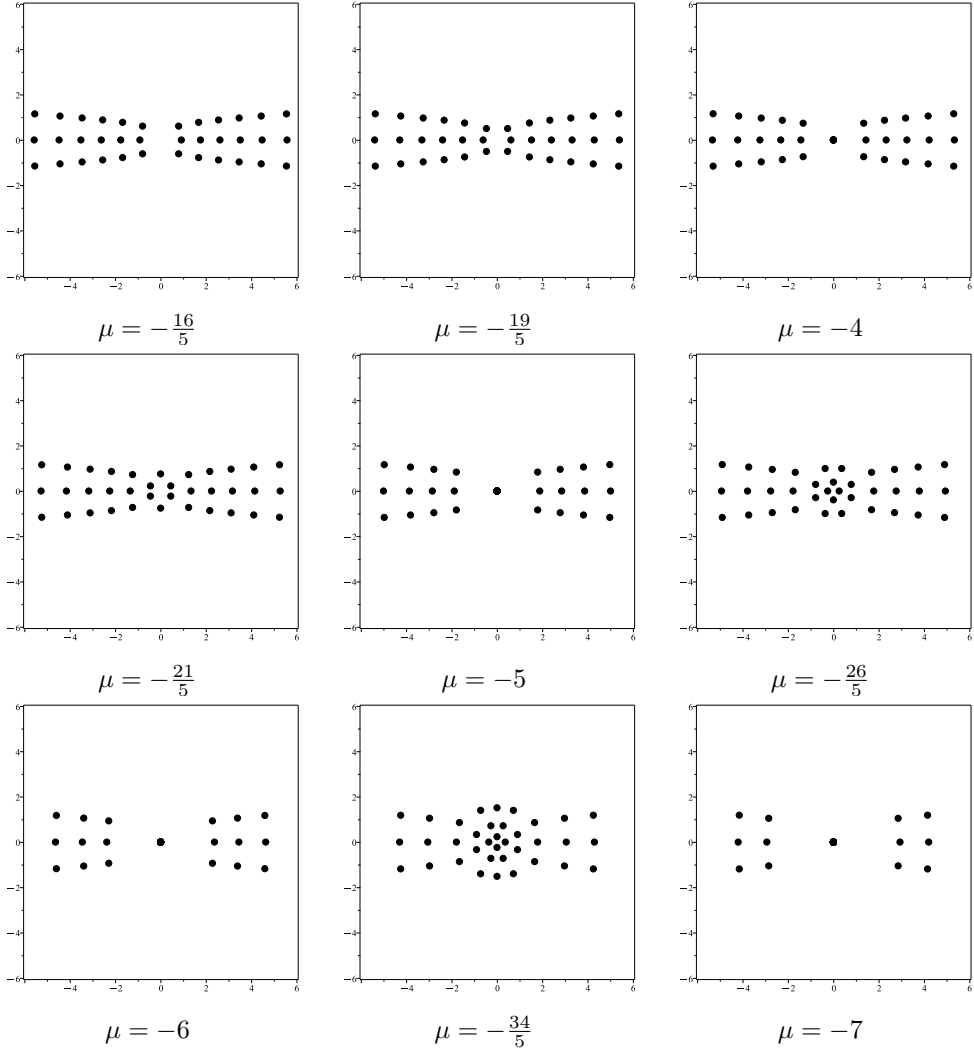
\[ \begin{array}{c@{\quad}c@{\quad}c}
\fig{P5_3m3} & \fig{P5_3m4} &\fig{P5_3m5}\\
\mu=-\fract{16}{5} & \mu=-\fract{19}{5} & \mu=-4\\
\fig{P5_3m6} & \fig{P5_3m8} &\fig{P5_3m9}\\
\mu=-\fract{21}{5} & \mu=-5 & \mu=-\fract{26}{5}\\
\fig{P5_3m11} & \fig{P5_3m13} &\fig{P5_3m14}\\
\mu=-6 & \mu=-\fract{34}{5} & \mu=-7\\
\end{array}\]
\caption{\label{fig:T531}The roots of
$T_{5,3}^{(\mu)}(\tfrac12z^2)$ for $\mu \in [-7, -\fract{16}{5} ]. $}
\end{figure}

\begin{figure}[ht]
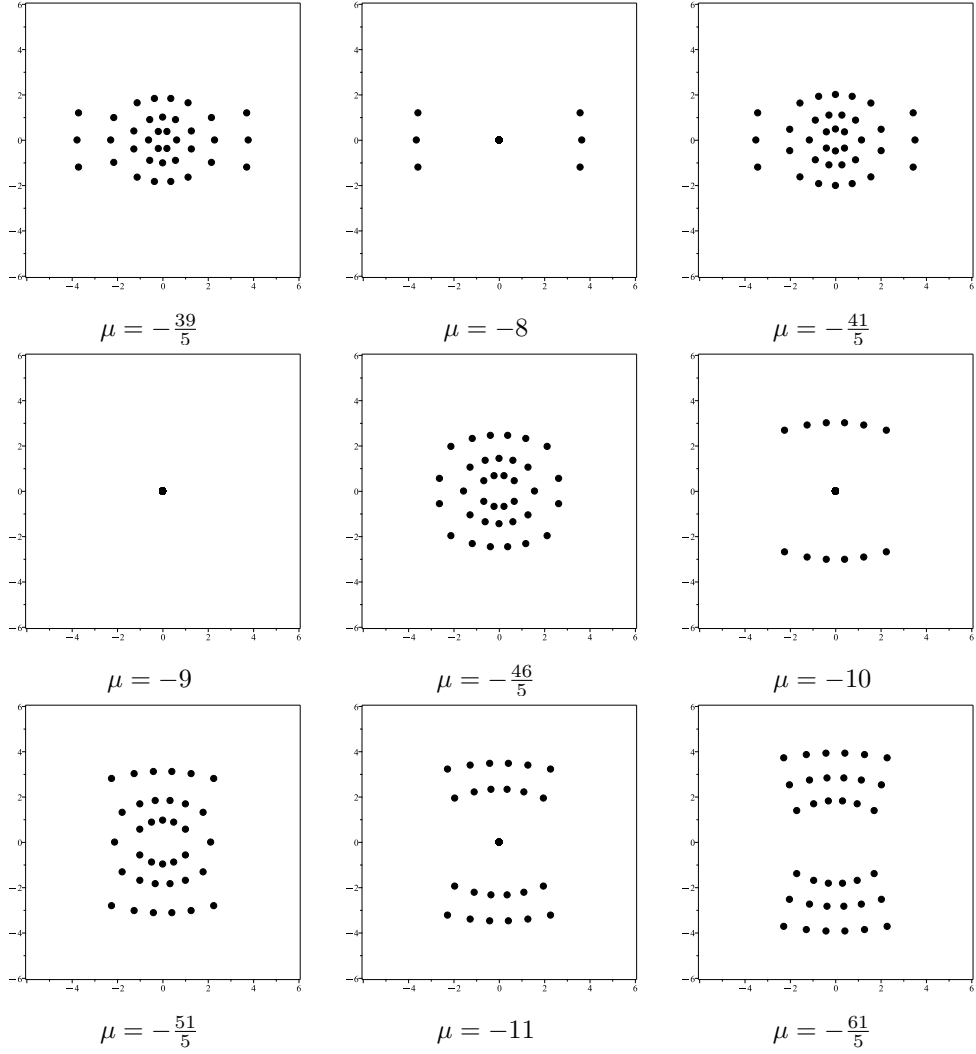
\[ \begin{array}{c@{\quad}c@{\quad}c}
\fig{P5_3m16} & \fig{P5_3m17}& \fig{P5_3m18}\\
\mu=-\fract{39}{5} & \mu=-8& \mu=-\fract{41}{5}\\
\fig{P5_3m20} & \fig{P5_3m21}& \fig{P5_3m23}\\
\mu=-9 & \mu=-\fract{46}{5}& \mu=-10\\
\fig{P5_3m24} & \fig{P5_3m26} &\fig{P5_3m30}\\
\mu=-\fract{51}{5} & \mu=-11 & \mu=-\fract{61}{5}\\
\end{array}\]
\caption{\label{fig:T532}The roots of
$T_{5,3}^{(\mu)}(\tfrac12z^2) $ for $\mu \in [-\fract{61}{5}, -\fract{39}{5} ]. $}
\end{figure}
%% figures P5_4 are actually _4_4
% \begin{figure}[ht]\[ \begin{array}{c@{\quad}c@{\quad}c}
% \fig{P5_4m1} & \fig{P5_4m2} &\fig{P5_4m3}\\
% \mu=-\fract{29}{5} & \mu=-6 & \mu=-\fract{31}{5}\\
% \fig{P5_4m5} &\fig{P5_4m6}& \fig{P5_4m8} \\
% \mu=-7 & \mu=-\fract{36}{5} & \mu=-8\\
% \fig{P5_4m9} &\fig{P5_4m11} &\fig{P5_4m12}\\
% \mu=-\fract{41}{5} & \mu=-9 & \mu=-\fract{46}{5} \\
% % \fig{P5_4m14} & &\\
% %\mu=-\fract{46}{5} & & \\
% \end{array}\]
% \caption{\label{fig:T44}The roots of
% $T_{4,4}^{(\mu)}(\tfrac12z^2)$.}
% \end{figure}
\begin{example}
Figures~\ref{fig:T531} and 
\ref{fig:T532} show the roots of $T_{5,3}^{(\mu)}(\tfrac12z^2) $ for various $\mu$. We describe 
the root blocks and transitions between the blocks 
 as $\mu$ varies from 
$-16/5$ to $ -61/5$. 
For $\mu > -4$ the roots form two E-type rectangles of size $6 \times 3$ 
as shown in the first two images in Figure~(\ref{fig:T531}). 
As $\mu \to -4$ all roots move 
 towards the imaginary axis. At $\mu=-4$ the 
 innermost column of three zeros from each rectangle have coalesced 
at the origin and the remaining roots form 
two rectangles of size $5 \times 3$. We discuss the detailed 
behaviour of the coalesecing zeros 
in the next section. 

As $ \mu$ decreases further, the zeros at the origin emerge as a pair of 
zeros on the imaginary axis and 
two complex zeros forming a pair of columns of height two. 
 The coalescing roots 
 move away from the origin, while the other roots
move towards the origin. {As $\mu $ continues to decrease}, 
the zeros that coalesced turn back towards the origin. At $\mu=-5$ 
these roots 
and the six roots in the 
column of the E-rectangle closest to the imaginary axis all coalesce at $z=0$. There are now 
twelve zeros at the origin and the remaining zeros 
form two rectangles of size $4 \times 3$. 
As $\mu$ decreases, the roots emerge from the origin 
as four $2$-triangles with the remaining roots forming two $4 \times 3 $ 
E-rectangles. The roots in the triangles 
 initially move away from the origin while the rectangles move towards the origin. 
For some $\mu \in (-6,-5)$ all the roots in the triangles have turned back towards the origin. 
At $\mu=-6$ the roots in the triangles and the next innermost 
column of zeros from each rectangle coalesce at the origin. 
After the next coalescence, we see the appearance of a 
a pair of F-trapezoids as well as G-triangles and E-rectangles.

Until all roots coalesce at $\mu=-m-n-1=-9$, 
the coalescing roots always 
consist of the roots that previously coalesced plus the 
innermost column of roots from each E-rectangle. These zeros 
re-configure and join new blocks as they emerge from the origin. 
The coalescing roots initially move away from the origin 
as $\mu$ decreases, and at various values of $\mu$ 
 return to the origin to re-coalesce. 
For $\mu < -m-n-1$, some of the roots start to form D-type rectangles. 
Such roots do not return to the origin as $\mu$ decreases, while all 
other roots return to the origin at each coalesence until they 
become part of a D-rectangle. 
The sizes of each root block of $T_{5,3}^{(\mu)}(\tfrac12z^2)$ 
 for $\mu$ between each coalescence point is given in Table~\ref{tab:T53}.

% After the 
% coalescences at $\mu=-6, -7$ and $\mu=-8$, the coalescing zeros 
% emerge 
% as four $3$-triangles, 
% F-triangles of size 3 and two 
% G-trapezoids of size $4 \times 2$, and then 

% After the next 
% collision of the zeros at the origin, the trapezoids on the imaginary axis 
% grow by a row and the rectangles are now width 1. 

% All zeros 
% coalesce at $\mu=-9$. There are two further coalescenses at $\mu=-10$ and 
% $\mu=-11$. The behaviour of the zeros for $\mu\le9$ can be 
% deduced from that of $T_{2, 6}(z; \mu)$ via (\ref{}), but for completeness 
% we include the zero plots in Figure~\ref{fig: }. On the imaginary 
% axis there are F-trapezoids and we see the emergence
% of D-rectangles. As $\mu$ decreases, only the zeros in the blocks F and G coalesce until for 
% $\mu< -11$ there are $6 \times 3 $ rectangles on the imaginary axis that 
% move away from the origin as $\mu$ decreases further. 

\begin{table}[ht]
\centering
\begin{tabular}{|c|c|c|c|c|} \hline
$\mu$ & E & G & F& D \\ 
 & rectangle & trapezoid/ & triangle/ & rectangle \\ 
& & triangle & trapezoid & \\\hline
$-4< \mu<\infty $ & $6 \times 3$ &&&\\ \hline
$-5<\mu<-4 $ & $5 \times 3$ & $2 \times 2$ & $1$ & \\ 
$-6<\mu<-5 $ & $4 \times 3$ & $2 \times 1$ & $2$ & \\\hline 
$-7<\mu<-6 $ & $3 \times 3$ & 2 & $3 \times 1 $& \\ 
$-8<\mu<-7 $ & $2 \times 3$ & 2 & $4 \times 2 $ & \\
$-9<\mu<-8 $ & $1 \times 3$ & 2 & $5 \times 3$ & \\ \hline
$-10<\mu<-9 $ & & 2 & $5 \times 4$ & $6 \times 1 $\\ 
$-11<\mu<-10 $ & & 1 & $5 \times 5$ & $6 \times 2 $\\\hline
$-\infty < \mu < -11 $ && $6 \times 3 $&&\\\hline 
\end{tabular}
\caption{\label{tab:T53}Size of the root blocks of $T_{5,3}^{(\mu)}(\tfrac12z^2)$. 
}
\end{table}

\end{example}

\begin{conjecture}
\label{conj:blocks}
The block structures when $\mu=-n-j $ for $j=1,\dots, m+n$ 
and there are roots at the origin are given in Table~\ref{tab:ngtm2}.
Our investigations suggest 
the root blocks of $T_{m,n}^{(\mu)}(\tfrac12z^2)$ 
are as per Table~\ref{tab:ngtm} for $n>m$ and Table~\ref{tab:nlm} for $n\le m$ for 
$\mu$ such that $\ceil{\mu}=-n-j$ where $j \in \ZZ$, excluding the points $\mu = -n-1,-n-2, \ldots, -2n-m$.
 \begin{table}[ht]
\centering
\begin{tabular}{|c|c|c|c|c|} \hline 
\multicolumn{2}{|c|}{Condition} &Number of zeros & E & D\\ 
\cline{1-2}$j$ & $\mu$ &at origin & rectangle &rectangle \\ \hline

%$m+1$& $ -m-n-1$ & $2n(m+1)$ & \\ %% covered by next case 

$1,\ldots, m+1$ & $-n-j$ & $2nj$ & $m-j+1 \times n$ & \\ \hline

$2,\ldots, n$ & $-m-n-j$ & $2(m+1)(n+1-j) $ & & $m+1 \times j-1$ \\ 
 
%$n$& $-2n$ & $2(m+1)^2$ & $m+1 \times n-m-1$ \\ 
%$n+m$ & $-2n-m$ &$2(m+1)$& $m+1 \times n-1 $ \\ 
\hline 
\end{tabular}
\caption{\label{tab:ngtm2}{Conjectured root blocks} of $T_{m,n}^{(\mu)}(\tfrac12z^2)$ 
at $\mu$ when there are zeros at the origin. }
\end{table}
%%%
%%%
\begin{table}[ht]
\centering
\begin{tabular}{|c|c|c|c|c|} \hline
Condition & E & G & F& D \\ 
 $j=-n-\ceil{\mu}$ & rectangle & trapezoid/ %$d_1 \times d_2$ 
& triangle/
 %$d_1 \times d_2$ 
& rectangle \\ 
 & & triangle% $d_1$ 
& trapezoid %$d_1$
 & \\\hline
$j\le 0$ & $m+1 \times n$ &&&\\ \hline
$1<j<m+1$ & $m+1-j \times n$ & $n-1 \times n-j$ & 
 $j $ &\\ \hline
$m+1< j<n$ & & $m+n-j \times n-j $ & $m $ & 
 $m+1 \times j-m$ \\\hline 
$n < j<m+n$ & & $m+n -j $ & $m \times j-n+1$ & 
 $m+1 \times j-m$ \\ \hline
$j > m+n$ & & & & $m+1 \times n$ \\ \hline 
\end{tabular}
\caption{\label{tab:ngtm}{Conjectured root blocks} of $T_{m,n}^{(\mu)}(\tfrac12z^2)$ when $n > m$ and 
$j=-n-\ceil{\mu}\in \ZZ$.}
\end{table}
%%%
%%%
\begin{table}[ht]
\centering 
\begin{tabular}{|c|c|c|c|c|} \hline
Condition & E & G & F& D \\ 
$j=-n-\ceil{\mu}$ & rectangle & trapezoid/ & trapezoid/ & rectangle \\ 
& & triangle & triangle & \\\hline
$j\le 0$ & $m+1 \times n$ &&&\\ \hline
$1<j<n$ & $m+1-j \times n$ & $n-1 \times n-j $ & 
 $j$ & \\ \hline
$n+1<j<m+1$ & $m+1-j \times n $ & 
 $n-1$ & $j \times j-n+1$ & \\ \hline 
$m+2<j<m+n$ & & $m+n-j$ & $m \times j-n+1$ & 
 $m+1 \times j-m$ \\ \hline
$j> m+n$ & & & & $m+1 \times n$ \\ \hline 
\end{tabular}
\caption{\label{tab:nlm}{Conjectured root blocks} of $T_{m,n}^{(\mu)}(\tfrac12z^2)$ when $n \le m$ and 
$j=-n-\ceil{\mu}\in\ZZ$.}
\end{table}

\end{conjecture}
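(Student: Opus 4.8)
The plan is to reduce Conjecture~\ref{conj:blocks} to a finite list of limiting computations by combining the exact algebraic structure already available with a continuity argument in $\mu$, and to organise the whole thing as an induction on $(m+1)n$. First I would use Lemma~\ref{lemma:multzeros} together with \eqref{T0} and \eqref{Texp}: these pin down precisely the values $\mu=-n-j$, $j=1,\dots,m+n$, at which $\Tmn{m,n}{\mu}(\tfrac12z^2)$ has roots at the origin, and the factorisations in Lemma~\ref{lemma:multzeros} give the exact multiplicity there ($2nj$, respectively $2(m+1)(n+1-j)$). This establishes Table~\ref{tab:ngtm2} rigorously up to the assertion that the \emph{remaining} roots in those factorisations form the stated E- or D-rectangles; but that is exactly the conjecture applied to the smaller polynomials $\Tmn{m-j,n}{j-n}$ and $\Tmn{m,j-1}{-m-n-j}$, so it is consumed by the induction.

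Next I would show that the combinatorial type of the root set is locally constant in $\mu$ on each open interval between consecutive integers $-n-1,\dots,-2n-m$. For this one needs $\Tmn{m,n}{\mu}(\tfrac12z^2)$ to have only simple roots there, which follows from the discriminant formula \eqref{dis}--\eqref{dis2}, whose every $\mu$-dependent factor is of the form $(\mu+n+j)$; so as a preliminary step I would prove that formula, the natural route being the differential-difference method of Roberts and Amdeberhan \cite{refRoberts,refAmd} applied to the equations of \S\ref{sec:gen_lag}. Granted simple roots on each open interval, the roots depend real-analytically on $\mu$, never collide and never pass through the origin, so the counts in each block are constant on the interval and it suffices to identify them at one value of $\mu$ per interval --- conveniently at the half-integers, where \eqref{lagH}--\eqref{THcon} identify $\Tmn{m,n}{\mu}(\tfrac12z^2)$ with the explicit Wronskian Hermite polynomial $H_{\bfLA_{k,m,n}}(z)$ and the block counts can be read off from the partition $\bfLA_{k,m,n}$ described in \eqref{Lam}.

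The symmetry \eqref{symmT} then halves the work: it interchanges $(m,n,\mu)$ with $(n-1,m+1,-\mu-2n-2m-2)$ and a rotation by $\tfrac12\pi$, carrying the D-free region $\mu>-m-n-1$ onto $\mu<-m-n-1$, so only the D-free side needs direct analysis. On that side I would treat the two ends separately. As $\mu\to+\infty$, rescaling $z=\mu^{1/2}\zeta$ in the Wronskian form \eqref{def:Tmn2} and invoking the large-parameter Hermite limit of $L_{m+k}^{(n+\mu)}$ shows the rescaled polynomial converges to a product associated with a generalised Hermite polynomial, whose roots form two $(m+1)\times n$ rectangular arrays; this gives the $j\le0$ row. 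At each coalescence point $\mu=-n-j$, a Newton-polygon/Puiseux analysis of $\Tmn{m,n}{\mu}(\tfrac12z^2)$ near $(\mu,z)=(-n-j,0)$ determines how the zeros that were at the origin re-emerge: the leading correction term governs their angular distribution and hence which blocks (a new E-column, an F- or a G-triangle/trapezoid) they join, which together with the locally-constant-type argument propagates the block counts across all intervals.

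The hard part will be upgrading ``the block counts are as claimed'' to ``the roots genuinely lie near the stated rectangles and trapezoids''. The locally-constant-type argument forbids only topological transitions; it does not by itself produce the geometric shape, and neither the bilinear equations of \S\ref{sec:gen_lag} nor the Wronskian-Hermite dictionary deliver the shape for free except at the half-integers. I expect that settling the precise geometry in the bulk of each interval requires a genuine WKB/Riemann--Hilbert analysis of the Laguerre functions $L_{m+k}^{(n+\mu)}$, identifying the ``bands'' of an associated equilibrium problem with the E/F/G/D blocks. Absent that, the realistic outcome of this programme is: a rigorous proof of the discriminant and of Table~\ref{tab:ngtm2}, a rigorous determination of the endpoint and half-integer configurations, and the precise shape of the blocks between coalescences left as the residual conjectural content.
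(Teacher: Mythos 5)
The first thing to say is that the paper does not prove this statement: it is stated as a conjecture, supported by numerical root plots, by the proved factorisation Lemma~\ref{lemma:multzeros}, by the symmetry \eqref{symmT}, and by the half-integer link to Wronskian Hermite polynomials \eqref{lagH}. Your programme draws on exactly these ingredients, and the pieces you flag as attainable are genuinely so: the origin multiplicities $2nj$ and $2(m+1)(n+1-j)$ in Table~\ref{tab:ngtm2} follow from Lemma~\ref{lemma:multzeros}, and \eqref{symmT} does reduce $\mu<-m-n-1$ to $\mu>-m-n-1$ up to a quarter-turn, just as the paper itself observes.

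As a proof of the conjecture, however, the proposal has gaps beyond the one you concede at the end. First, simplicity of the roots on the open intervals is made to rest on the discriminant formulae \eqref{dis}--\eqref{dis2}, which are themselves conjectural in the paper (Roberts' method is only suggested as a possible route), so this is an unproved obligation, not an available lemma. Second, even granting simple roots, ``block membership'' is not a topological invariant: the E/F/G/D blocks are approximate shapes with no intrinsic definition, so real-analytic motion of the roots without collisions does not yield well-defined, locally constant block counts; this is the same difficulty as your final caveat, but it already undercuts the counting claims in Tables~\ref{tab:ngtm} and \ref{tab:nlm}, not merely the geometry. Third, the anchor points are not rigid: at half-integer $\mu$ the identity \eqref{lagH} expresses $\Tmn{m,n}{\mu}(\tfrac12z^2)$ through the Wronskian Hermite polynomial $H_{\bfLA_{k,m,n}}(z)$, but the root configurations of Wronskian Hermite polynomials (even of the generalised Hermite polynomials $H_{m,n}$, cited in the paper only via numerical/asymptotic studies) are not proven, so block data cannot simply be ``read off'' from the partition \eqref{Lam}; similarly, the $j\le 0$ row asserts two $(m+1)\times n$ rectangles for \emph{all} $\mu>-n-1$, which the $\mu\to+\infty$ rescaling limit alone does not give, and this also affects the inductive step you invoke for the E-rectangle entry of Table~\ref{tab:ngtm2}. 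In short, carried out in full your plan would reproduce essentially what the authors already have (the factorisation lemma, the symmetry reduction, the Hermite dictionary, and a conjectural discriminant), with the block-structure statements of Conjecture~\ref{conj:blocks} still open, as you yourself acknowledge; the genuinely missing idea remains an asymptotic (WKB/Riemann--Hilbert or equilibrium-measure) description of the zero distribution that would define and locate the blocks.
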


The family of 
Wronskian Hermite polynomials with partitions $\bold{\Lambda}=(m^n)$ are known as the 
generalised Hermite polynomials $H_{m,n}(z)$. 
The roots 
 form $m \times n$ rectangles centered on the origin \cite{refPACpiv,refPACcmft}.

The appearance of rectangular blocks of width $m+1$ and height $n$ for large positive and negative $k$ 
in the root pictures for $\Tmn{m,n}{-2n-k-1/2}(\tfrac12z^2)$ 
is consistent with Theorem 9.6 and Remark 9.7 of \cite{refCM01}. The results therein imply for large $k$ the roots will, up to scaling, be those of a certain Wronskian Hermite polynomial shifted to the right along the real axis, plus the block reflected in the imaginary axis. The numerical investigations in \cite{refBDS} suggest that the relevant Wronskian Hermite polynomial is $H_{m+1,n}(z)$.

\subsection{Root coalescences}
We now zoom into the origin to investigate precisely how the zeros that coalesce behave as they approach
 and leave the origin. We start with the example 
of $T_{5,3}^{(\mu)}(\tfrac12z^2)$, for which the coalescences 
occur at $\mu=-11,-10, \ldots, -4$. 

\begin{example}
Recall that at $\mu \to -4^+$, the six roots of $T_{5,3}^{(\mu)}(\tfrac12z^2)$ 
that form the two innermost columns of the E-rectangles coalesce at $\mu=-4$.
The left-hand plot in Figure~\ref{fig:T53_first_smooth} 
shows the coalescence of these six zeros by overlaying the root plots for 
$\mu \in [-4,-16/5] $ near the origin. 
\begin{figure}[ht]\[ \begin{array}{c@{\quad}c}
 \includegraphics[width=0.4\linewidth]{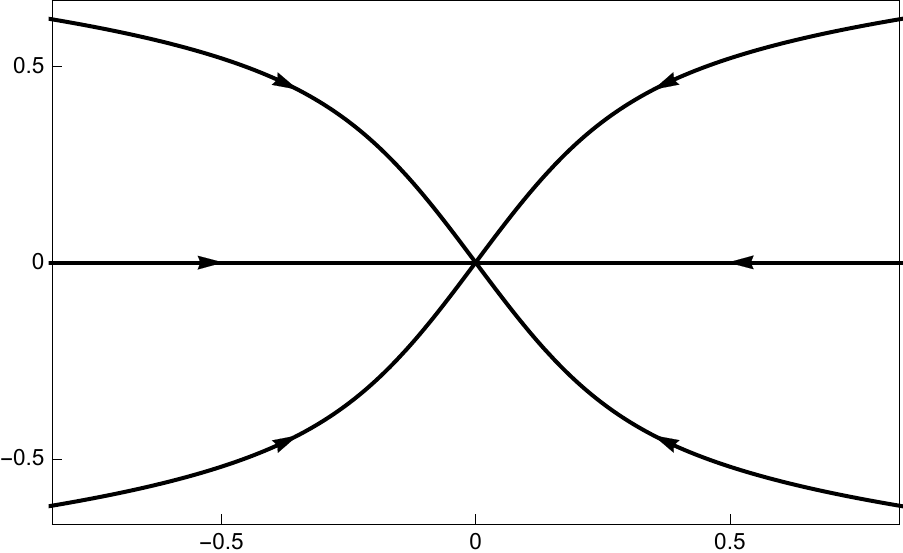}
& 
 \includegraphics[width=0.4\linewidth]{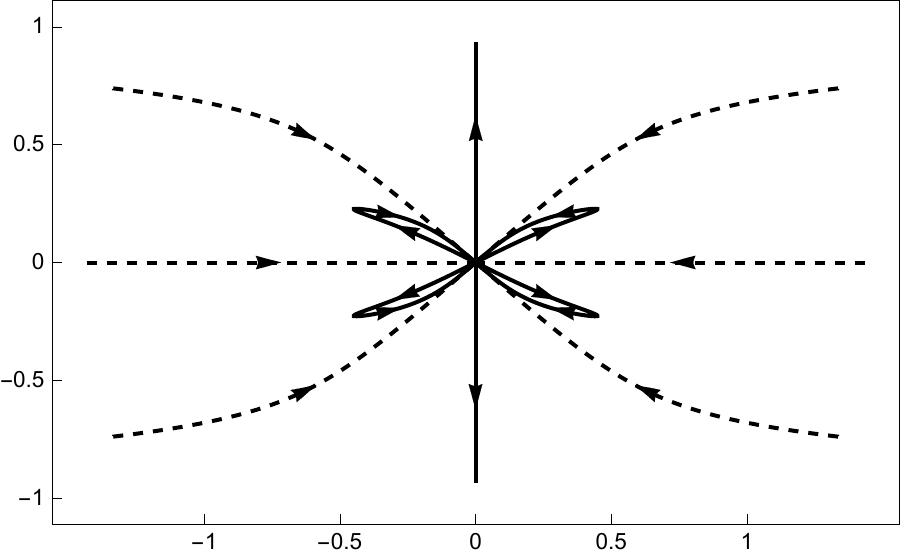} 
\\
\mu \in [-4, -\fract{16}{5}] & \mu \in [-5, -4] \\ 
\end{array}\]
\caption{\label{fig:T53_first_smooth} The coalescence of the 
zeros of $T_{5,3}^{(\mu)}(\tfrac12z^2) $ that are closest 
to the origin shown by overlaying the zero plots as $\mu$ tends to 
$\mu=-4$ (left) and $\mu=-5$ (right). The arrows show the direction in which $\mu $ decreases. 
The solid lines correspond to zeros that arise from the first 
column of the E-rectangles, and the dashed lines correspond to 
zeros that arise from the second column of the E-rectangles. }
\end{figure}
The bold lines in the 
right-hand plot of Figure~\ref{fig:T53_first_smooth} shows 
the reapparance of those zeros as $\mu$ decreases towards $\mu=-5$. 
 The 
previously-real zeros move onto the imaginary axis and the complex zeros 
return to the complex plane and move away from the origin. The 
arrows show the direction of decreasing $\mu$. 
At $\mu\approx 4.2105$, the complex zeros that coalesced 
turn back towards the origin. The lower solid line in the first quadrant shows the movement of the complex root for $\mu \in (4.2105, -4]. $ The upper line shows the root for $\mu \in [-5, 4.2105)$. 
 At $\mu \approx 4.32656$, the imaginary zeros also turn back to the origin. 
The dashed lines show the coalescence of the six zeros 
in the innermost columns of the E-rectangles for $\mu$ from $-4$ to $-5$. 
At $\mu=-5$ all twelve zeros are at the origin. 
The top right plot in Figure~\ref{fig:T53_transitions} shows 
the twelve zeros as they emerge from the origin as $\mu$ decreases from $4$. 
\begin{figure}[ht]
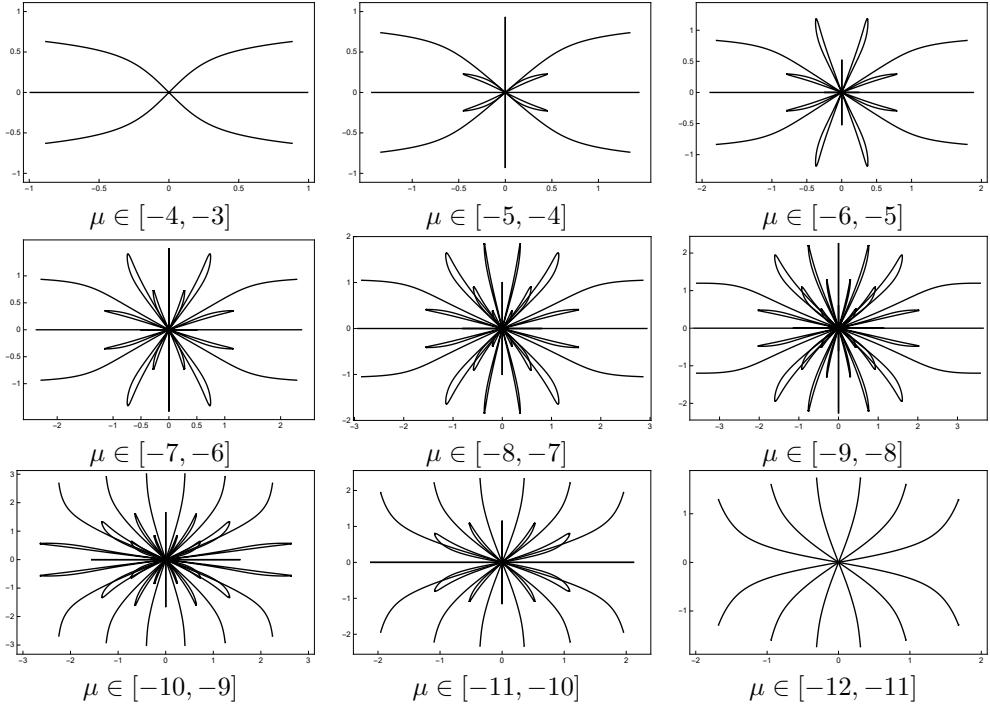
\[ \begin{array}{c@{\quad}c@{\quad}c}
 \fig{T53_m3m4} & \fig{T53_m4m5} & \fig{T53_m5m6} \\ 
\mu \in [-4,-3 ] & \mu\in [-5,-4]& \mu \in [-6,-5] \\ 
\fig{T53_m6m7} & 
 \fig{T53_m7m8} & \fig{T53_m8m9} \\
 \mu \in[-7,-6] & \mu\in [-8,-7]& \mu \in [-9,-8] \\ 
 \fig{T53_m9m10} & \fig{T53_m10m11} & \fig{T53_m11m12}
\\
 \mu \in[-10,-9] & \mu\in [-11,-10]& \mu \in [-12,-11] \\ 
\end{array}\]
\caption{\label{fig:T53_transitions} The movement of the roots of
$T_{5,3}^{(\mu)}(\tfrac12z^2)$ closest to the origin overlaid for $\mu$ in each given interval. }
\end{figure}
There are two roots on the imaginary axis, two on the real axis and 
eight in the complex plane, all of which initially 
move away from the origin. All roots eventually turn around  and return 
to the origin, along with the next set of six zeros from 
the innermost column of the E-rectangles. We see the petal-like shapes traced 
out by the complex zeros as $\mu$ decreases from $-5$ to $-6$. {The values of $\mu$ at which each set of zeros turn around are different.}
The remaining plots 
in Figure~\ref{fig:T53_transitions} show the zeros emerging from the origin and 
 those that coalescence for each of the stated $\mu$. 
Some roots form F-rectangles when $\mu<-9$. 
\end{example}

Our numerical investigations reveal that the angles in the complex plane 
 at which the coalescing roots approach the origin and emerge from it 
can be determined for all $m,n,j$ where $\mu=-n-j$ and $j=1,2,\ldots, m+n$. Before giving the result for 
$\Tmn{m,n}{\mu}(z)$ as a function of $z$, we consider an example. 

\begin{example}\label{ex_T23}
The roots of $\Tmn{2,3}{\mu}$ that coalesce at $ \mu=-3-j-\ep$ for $j=1\dots,5$ 
behave as the $n^{\rm th}$ roots of one or minus one as follows:
$$
\begin{array}{c@{\quad}c@{\quad}c@{\quad}c}
 j & \mu & \mu\to \mu^+ & \mu \to \mu^-\\\hline
1&-4 & (z^3 -1) & (z^3+1)\\ 
2&-5 & (z^4 -1) (z^2+1) & (z^4+1) (z^2-1)\\ 
3&-6 & (z^5-1)(z^3+1)(z-1) & (z^5+1)(z^3-1)(z+1)\\
4&-7 & (z^4+1)(z^2-1) & (z^4-1)(z^2+1)\\
5&-8 & (z^3-1) & (z^3+1) 
\end{array}
$$
Figure~\ref{fig:T23_m4in} shows the roots of 
$\Tmn{2,3}{\mu}$ that converge to to the origin (left) 
 as $\mu \to -4$ 
and emerge (right) from the origin. 
The third roots of $1$ and $-1$ are shown in black and red respectively. 
\begin{figure}[ht]\[ \begin{array}{c@{\quad}c}
 \includegraphics[width=0.3\linewidth]{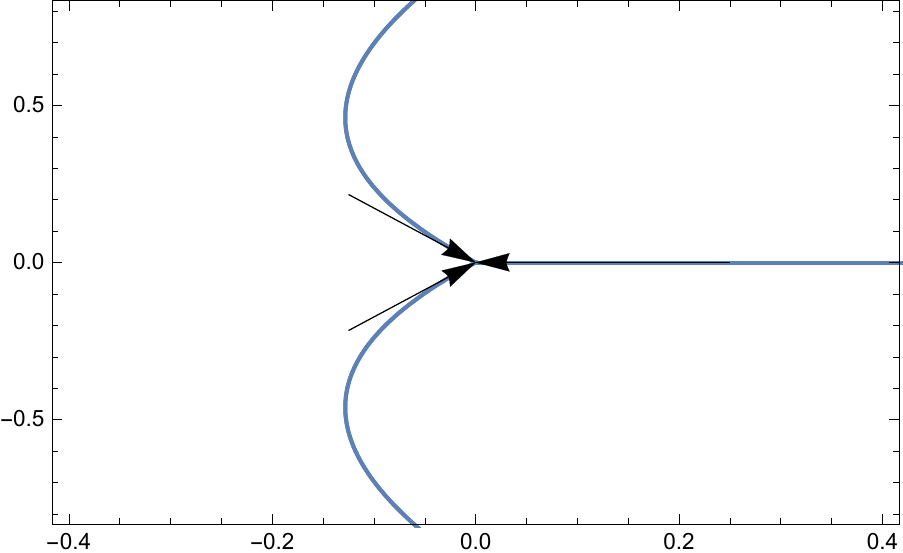}
& 
 \includegraphics[width=0.3\linewidth]{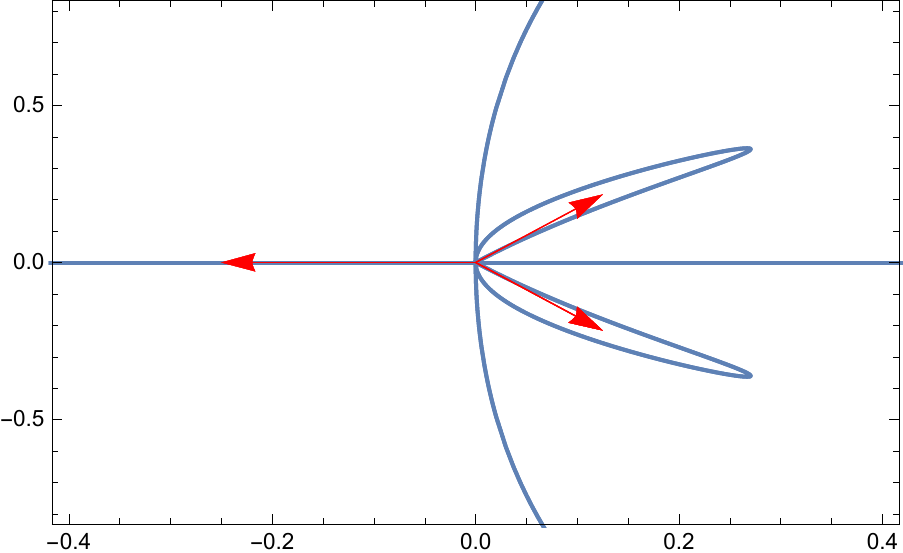} 
\\
\mu \in [-4, -3 ] & \mu \in [-5, -4] \\ 
\end{array}\]
\caption{\label{fig:T23_m4in} The coalescence of the 
zeros of $\Tmn{2,3}{\mu}$ that are closest 
to the origin shown by overlaying the zero plots as $\mu$ approaches 
$\mu=-4$ (left) and $\mu=-5$ (right) from the right. 
The black arrows (left) indicate the direction of the root movement as $\mu \to -4$ from the right and the red arrows (right) show the roots leaving the origin 
as $\mu$ decreases from $-4$. The black arrows 
show the third roots of unity and the red arrows (right) show the 
third roots of $-1$. The blue lines in the right figure without arrows 
correspond to the movement of the roots that 
approach the origin as $\mu \to -5^-$ at angles corresponding to the 
fourth roots of $1$ and the square roots of $-1$. 
 }
\end{figure}

\end{example}

\begin{conjecture}\label{angles}
Let $n>m$ and $\ep>0$. For $\mu = -n-j+\ep$ where $j=1,2,\ldots, m+1$ the $nj$ roots of $\Tmn{m,n}{\mu}(z)$ 
 that coalesce at the origin
 at $\ep =0$ approach the origin on the rays in the complex plane defined by 
certain roots of $+1$ and $-1$. We encode this behaviour 
in the polynomial 
 \eq
\prod_{k=1}^j \lf(z^{n+j+1-2k} - (-1)^{n+k}\ri),\qquad j=1,2,\ldots,m+1.
\label{rootangles1}
\en
Furthermore, when $\mu=-n-j+\ep$ for $j=m+2, \ldots, m+n$ the $(m+1) (m+n+1-j)$ roots that approach 
the origin behave as roots of $\pm 1$ according to 
\begin{subequations}\begin{align}
&\prod_{k=j-m}^j \lf(z^{n+j+1-2k} - (-1)^{n+k}\ri),&& j=m+2,m+3,\ldots,n, \\ 
&\prod_{k=j-m}^n \lf(z^{n+j+1-2k} - (-1)^{n+k}\ri),&& j=n+1,n+2,\ldots,m+n.
 \end{align}
 \label{rootangles2}
 \end{subequations}
The roots that coalesce leave the origin on rays that are rotated through 
$\tfrac12\pi$ compared to the coalescence rays. Thus the root behaviours as $\mu=-n-j-\ep$ for $j=1,2,\ldots, m+n$ are encoded in the polynomials
\begin{subequations}\begin{align}
&\;\;\prod_{k=1}^j \lf(z^{n+j+1-2k}+ (-1)^{n+k}\ri),&& j=1,2,\ldots,m+1, \\ 
&\prod_{k=j-m}^j \lf(z^{n+j+1-2k}+ (-1)^{n+k}\ri),&& j=m+2,m+3,\ldots,n, \\ 
&\prod_{k=j-m}^n \lf(z^{n+j+1-2k}+ (-1)^{n+k}\ri),&& j=n+1,n+2\ldots,m+n.
 \end{align}
 \label{rootangles3}
 \end{subequations}

Similarly, 
when $n\le m $ the roots coalesce at and emerge from the origin as $\mu=-n-j\pm\ep$ as roots of $\pm 1$ according to 
\begin{subequations}
\begin{align}
&\;\;\prod_{k=1}^j \lf(z^{n+j+1-2k} \mp (-1)^{n+k}\ri) ,&& j=1,2\ldots,n , \\ 
&\;\;\prod_{k=1}^n \lf(z^{n+j+1-2k} \mp (-1)^{n+k}\ri) ,&& j=n+1,n+2,\ldots,m+1, \\ 
&\prod_{k=j-m}^n \lf(z^{n+j+1-2k} \mp (-1)^{n+k}\ri) ,&& j=m+2,m+3,\ldots,m+n.
 \end{align}
\label{rootangles4}
\end{subequations}
\end{conjecture}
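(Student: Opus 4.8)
The plan is to analyse $\Tmn{m,n}{\mu}(z)$ near each exceptional value $\mu_0=-n-j$ by a Newton--Puiseux argument in the two small quantities $\ep:=\mu+n+j$ and $z$. Write $\Tmn{m,n}{\mu_0+\ep}(z)=\sum_{i\ge0}a_i(\ep)z^i$. By the factorisations in Lemma~\ref{lemma:multzeros}, exactly $N$ roots collapse to the origin as $\ep\to0$, where $N=nj$ when $n>m$ and $j\le m+1$, and $N=(m+1)(m+n+1-j)$ in the remaining ranges: indeed $a_0(0)=\dots=a_{N-1}(0)=0$ and $a_N(0)\ne0$, while the other $(m+1)n-N$ roots stay bounded away from $0$ by continuity. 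The first and most substantial step is to determine the exact order $e_i:=\operatorname{ord}_\ep a_i(\ep)$ for $0\le i\le N$. Using the Schur--determinant form \eqref{Tschur} (or the expansion \eqref{Texp}), each $a_i$ is an explicit polynomial in $\ep$, and I expect $e_i$ to be a piecewise-linear \emph{sawtooth} in $i$ whose local minima occur precisely at the partial sums $p_t:=d_1+\dots+d_t$, where $d_k:=n+j+1-2k$ and $k$ runs over exactly the index set appearing in the relevant product \eqref{rootangles1}--\eqref{rootangles4}.

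Granted this shape, the lower convex hull of $\{(i,e_i):0\le i\le N\}$ is the concatenation of edges $E_1,\dots,E_t$, with $E_k$ joining the valleys $(p_{k-1},e_{p_{k-1}})$ and $(p_k,e_{p_k})$; hence $E_k$ has horizontal width exactly $d_k$ and some slope $-\rho_k$ with $\rho_k>0$ rational. Newton-polygon theory then assigns to $E_k$ a cluster of $d_k$ roots with $|z|\asymp|\ep|^{\rho_k}$, and after the substitution $z=\ep^{\rho_k}\zeta$ the leading balance along $E_k$ is the vanishing of its \emph{edge polynomial} of degree $d_k$, supported on the lattice points of $E_k$. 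To recover the conjectured factorisation one must show: (a)~every interior lattice point $(i,e_i)$ of $E_k$ lies \emph{strictly} above the edge, so that the edge polynomial is a genuine binomial $c_k'\zeta^{d_k}+c_k$; and (b)~the ratio works out to $-c_k/c_k'=(-1)^{n+k}$. Part~(b) reduces to a quotient of products of factorials / hook lengths built from the leading $\ep$-coefficients of step~1, which I expect to telescope to that sign. Combining (a) and (b) gives the factor $\zeta^{d_k}-(-1)^{n+k}$, and running over $k$ produces \eqref{rootangles1} and \eqref{rootangles2}.

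For $\mu=\mu_0-\ep$ one only has to track parity. Replacing $\ep$ by $-\ep$ multiplies the leading coefficient of $a_i$ by $(-1)^{e_i}$, so on $E_k$ the two endpoint coefficients acquire $(-1)^{e_{p_{k-1}}}$ and $(-1)^{e_{p_k}}$; since $e_{p_{k-1}}-e_{p_k}=\rho_kd_k$, the relative sign in the binomial flips exactly when $\rho_kd_k$ is odd, and a short verification that this parity is always odd (from the explicit $\rho_k$ of step~1) turns $\zeta^{d_k}-(-1)^{n+k}$ into $\zeta^{d_k}+(-1)^{n+k}$, i.e.\ the rotated rays of \eqref{rootangles3}. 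The case $n\le m$ and the ranges $j>m+1$ follow by the same computation, the only change being which index set $k$ runs over in forming the partial sums $p_t$; alternatively one reduces them to the previous case via the discrete symmetry \eqref{symmT}, which exchanges $\Tmn{m,n}{\mu}$ with $\Tmn{n-1,m+1}{-\mu-2n-2m-2}$ up to $z\mapsto-z$, swapping the roles of $m$ and $n$ and carrying \eqref{rootangles4} onto \eqref{rootangles1}--\eqref{rootangles3}.

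I expect the main obstacle to be step~1 together with part~(a) of step~2: extracting from \eqref{Tschur} the \emph{exact} $\ep$-order $e_i$ of every low-degree coefficient and proving that the interior points of each Newton edge lie strictly above it --- this is precisely what pins down the number of clusters as $t$ and their sizes as the $d_k$. Once the Newton polygon is in hand, identifying each edge polynomial as $\zeta^{d_k}\mp(-1)^{n+k}$ (part~(b) and the parity check) should be routine manipulation of factorials, and the conjectured polynomials \eqref{rootangles1}--\eqref{rootangles4} then follow by taking the product of the edge balances.
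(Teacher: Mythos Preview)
The paper does not prove this statement: it is labelled a \emph{conjecture} and is supported only by the numerical investigations described in Examples~\ref{ex_T23} and the surrounding discussion; the sole ``argument'' in the paper is the reformulation in Conjecture~\ref{conj:rootangles} in terms of the hook multiset~\eqref{hookms}, which is a rewriting, not a proof. So there is nothing in the paper to compare your proposal against.

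Your Newton--Puiseux strategy is a perfectly sensible line of attack and goes well beyond what the paper attempts. However, as you yourself flag, the whole argument rests on step~1: showing that the $\ep$-orders $e_i=\operatorname{ord}_\ep a_i(\ep)$ have exactly the sawtooth profile with valleys at the partial sums $p_t=\sum_{k\le t}d_k$. You do not carry this out, and nothing in the paper (neither the Schur form~\eqref{Tschur} nor the expansion~\eqref{Texp}) gives it for free; the coefficients $a_i$ are large alternating sums over sub-partitions, and controlling the \emph{exact} $\ep$-order of each one --- including the non-vanishing of the putative leading term --- is the substantive content of the conjecture. Part~(a) of step~2 (strict convexity at interior lattice points) is equally unproven and equally essential: without it the edge polynomial need not be a binomial and the root angles are not determined. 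Part~(b) and the parity check for $\ep<0$ are also not routine as stated: the slopes $\rho_k$ are never computed, so the claim that $\rho_k d_k$ is always odd is an assertion, and for non-integer $\rho_k$ the substitution $z=\ep^{\rho_k}\zeta$ with $\ep<0$ requires a choice of branch that you would need to track to recover a genuine sign flip rather than a complex phase.

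In short: your outline is a credible blueprint for a proof, but every load-bearing step is left at the level of ``I expect''. Since the paper offers no proof at all, your proposal is strictly more than the paper contains, but it is still a proof sketch rather than a proof.
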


\subsection{The role of the partition}

In this section we remark that several features of the generalised Laguerre polynomials can be written in terms 
of partition data, particularly the hooks of the partition $\bfla=(m+1)^n$.

We first propose an expression for the coefficients of the Wronskian Laguerre 
polynomials $\Omega_{\bfla}^{(\alpha)}(z)$ for all partitions $\bfla$. The result generalises the expression given in 
 Theorem 3 and Proposition 2 in \cite{refBDS} for the coefficients of the Wronskian Hermite polynomials $H_{\bfLA}(z)$
for the subset of partitions $\bfLA$ with $2$-quotient $(\bfla,\bold{\emptyset})$. 
\begin{conjecture}
\label{conj:omega_coeffs}
Consider the Wronskian Laguerre polynomial $\Omega_{\bfla}^{(\alpha)}(z)$ defined in (\ref{bk_omega}). Set 
\eq
 \Omega_{\bfla}^{(\alpha)}(z) =c_{\bfla}\sum_{j=0}^{|\bfla|} r_{j}^{(\alpha)} \,z^{ |\bfla| -j},\
 \label{omega_exp}
 \en
 with $r_{0}^{(\alpha)}=1$. Then 
 \eq
 c_{\bfla}
 = \frac{\Delta_{\bfla} }
{
\prod_{h \in \bold{h}_{\bfla}} (-1)^h h!
}.
\label{r0}
\en
 and 
\eq
r_{j}^{(\alpha)} = \binom{ |\bfla|}{j} \sum_{\widetilde \bfla <_j \, \bfla }
\frac{F_{\widetilde \bfla} F_{ \bfla / \widetilde \bfla} }
{F_{\bfla}} \frac{\Psi_{\bfla}^{(\alpha)}} {\Psi_{\widetilde \bfla}
^{(\alpha+\ell(\bfla)-\ell(\widetilde \bfla) )}}\,,
\en
where the sum is over all partitions $\widetilde \bfla$ in the Young lattice obtained by removing $j$ 
boxes from the Young diagram of $\bfla$. 
Moreover, 
\begin{align}
\Psi_{\bfrh}^{(\alpha)} &= (-1)^{|\bfrh|+ 
\text{ht}(\bold{\Rho})} \prod_{j=1}^{\ell(\bfrh)}
\Biggl ( \ \prod_{k=\ell(\bfrh)}^{\bold{h}_{\bfrh_j}-1}\lf( 
\bold{h}_{\bfrh_j}-k +\alpha+\ell(\bfrh) \ri) 
\nn \\
& \quad \quad \quad \quad \quad \quad \quad \times 
%\prod_{j=0}^{\ell(\bfrh)-1} 
\prod_{k \in \{0,1,\dots,\ell(\bfrh)-1\} \setminus \bold{h}_{\bfrh} } ^{j-1}\lf( j-1- k
-\alpha-\ell(\bfrh)
 \ri) \Biggr)
 \label{Psi}
 \end{align}
 where $\text{ht}(\bold{\Rho})$ is the number of vertical dominoes in the partition $\bold{\Rho}$ that has empty 2-core and 2-quotient $( \bfrh,\bold{\emptyset})$. 
 We remark that $\Psi_{\bfrh}^{(\alpha)}$ is a polynomial of degree $|\bfrh|$ in $\alpha$ with leading coefficient $(-1)^{|\bfrh|}$. A consequence is that all coefficients of the Wronskian Laguerre polynomial are written through (\ref{Psi}) in terms of the hooks of partitions. 
\end{conjecture}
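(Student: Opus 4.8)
The plan is to obtain the expansion by expanding the Wronskian $\Omega_{\bfla}^{(\alpha)}(z)=\Wr\big(L_{h_1}^{(\alpha)}(z),\dots,L_{h_r}^{(\alpha)}(z)\big)$, $r=\ell(\bfla)$, directly in powers of $z$. Substituting the finite series $L_{n}^{(\alpha)}(z)=\sum_{k=0}^{n}\tfrac{(-1)^{k}}{k!}\binom{n+\alpha}{n-k}z^{k}$ into each argument and using multilinearity of the Wronskian together with $\Wr(z^{k_1},\dots,z^{k_r})=\big(\prod_{i<j}(k_j-k_i)\big)z^{\,k_1+\cdots+k_r-\binom r2}$, which vanishes unless the $k_j$ are distinct, yields
\[
\Omega_{\bfla}^{(\alpha)}(z)=\sum_{0\le k_1<\cdots<k_r}\det\!\Big[\tfrac{(-1)^{k_j}}{k_j!}\binom{h_i+\alpha}{h_i-k_j}\Big]_{i,j=1}^{r}\Big(\prod_{1\le i<j\le r}(k_j-k_i)\Big)\,z^{\,k_1+\cdots+k_r-\binom{r}{2}}.
\]
If one sets $\widetilde h_{r+1-j}=k_j$, then $\{k_j\}$ is the degree vector of a partition $\widetilde\bfla$ with at most $r$ parts (padded with trivial entries), and $k_1+\cdots+k_r-\binom r2=|\widetilde\bfla|$. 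Since $\binom{h_i+\alpha}{h_i-k}\equiv0$ for $k>h_i$, a Hall-marriage argument on the bipartite graph joining column $k$ to the rows with $h_i\ge k$ shows the minor is identically zero unless $\widetilde h_j\le h_j$ for all $j$, i.e. unless $\widetilde\bfla\subseteq\bfla$. Hence $r_j^{(\alpha)}$ collects the contributions of the $\widetilde\bfla<_j\bfla$, which is the shape of the claimed formula.

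Next I would evaluate the minor. Writing $\binom{h_i+\alpha}{h_i-k}=\tfrac1{(h_i-k)!}(\alpha+1)_{h_i}/(\alpha+1)_{k}$, one pulls $(\alpha+1)_{h_i}$ out of each row and $1/(\alpha+1)_{k_j}$ out of each column, reducing the minor to $\det\big[1/(h_i-k_j)!\big]$; reversing the column order this is $\det\big[1/(\lambda_i-\widetilde\lambda_j-i+j)!\big]_{i,j=1}^{r}$, which by the stated determinantal formula for $F_{\bfla/\widetilde\bfla}$ equals $F_{\bfla/\widetilde\bfla}/j!$. The term $\widetilde\bfla=\bfla$ gives the leading coefficient $c_{\bfla}=\Delta_{\bfla}/\prod_{h\in\bold{h}_{\bfla}}(-1)^{h}h!$, which is \eqref{r0} and agrees with Lemma~1 of \cite{refBK18}. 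Dividing a general coefficient by $c_{\bfla}$ splits it into a rational part — assembled from the factorials $h_i!$, $k_j!$ and a ratio of Vandermonde determinants — and an $\alpha$-dependent part equal to $\prod_{i}(\alpha+1)_{h_i}/(\alpha+1)_{i-1}$ divided by the analogous product for the degree vector of $\widetilde\bfla$. Invoking the hook-length formula $F_{\bfla}=|\bfla|!/\prod_{h\in\mathcal{H}_{\bfla}}h$ together with the classical identity $\prod_{h\in\mathcal{H}_{\bfla}}h=\prod_i h_i!/\prod_{i<j}(h_i-h_j)$, the rational part is exactly $\binom{|\bfla|}{j}F_{\widetilde\bfla}/F_{\bfla}$, so it remains only to identify the $\alpha$-dependent part.

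To do so, specialising the expansion to $\widetilde\bfla=\bold\emptyset$ forces $\Psi_{\bfla}^{(\alpha)}=\Omega_{\bfla}^{(\alpha)}(0)/c_{\bfla}$; since $\Omega_{\bfla}^{(\alpha)}(0)=\det\big[(-1)^{j}L^{(\alpha+j)}_{h_i-j}(0)\big]_{i,j=0}^{r-1}$ with $L^{(\beta)}_{m}(0)=\binom{m+\beta}{m}$, the same row/column pull-out gives $\Psi_{\bfla}^{(\alpha)}=(-1)^{|\bfla|}\prod_{i=1}^{r}(\alpha+1)_{h_i}/(\alpha+1)_{i-1}$, a polynomial of degree $|\bfla|$ in $\alpha$ with leading coefficient $(-1)^{|\bfla|}$. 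The shift $\alpha\mapsto\alpha+\ell(\bfla)-\ell(\widetilde\bfla)$ in the denominator $\Psi_{\widetilde\bfla}^{(\cdot)}$ is then exactly the elementary identity converting the product of Pochhammer symbols indexed by the \emph{native} degree vector of $\widetilde\bfla$ (with $\ell(\widetilde\bfla)$ entries) to the one indexed by its padding to $\ell(\bfla)$ entries, which is what appeared in the expansion above.

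The remaining, and main, obstacle is the combinatorial step: verifying that $(-1)^{|\bfrh|}\prod_{i}(\alpha+1)_{h_i}/(\alpha+1)_{i-1}$ equals the hook expression \eqref{Psi}. This requires rewriting the product of linear factors $\prod_{l=i}^{h_i}(\alpha+l)$ so that only the first-column hooks $\bold{h}_{\bfrh}$ appear explicitly; the mismatch between the running lower limits $l=i$ and $l=\ell(\bfrh)+1$ produces precisely the complementary factors $\prod_{k\in\{0,\dots,\ell(\bfrh)-1\}\setminus\bold{h}_{\bfrh}}(j-1-k-\alpha-\ell(\bfrh))$ in \eqref{Psi}, and the total number of sign changes incurred accounts for $\text{ht}(\bold{\Rho})$ in the prefactor. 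I would carry out this reorganisation by induction on $|\bfrh|$, removing an outer corner and tracking the change in $\bold{h}_{\bfrh}$ and in $\text{ht}(\bold{\Rho})$, along the lines of the treatment of Wronskian Hermite polynomials in \cite{refBDS}; an alternative is a direct cell-by-cell bijection. Throughout, the interlocking sign conventions — the $(-1)^{j}$ from the monomial expansion, the $(-1)^{\binom r2}$ relating $\Delta$ to $\prod_{i<j}(h_i-h_j)$, and the $(-1)^{|\bfrh|+\text{ht}(\bold{\Rho})}$ in \eqref{Psi} — must be reconciled, which is the one genuinely fiddly part of an otherwise standard argument.
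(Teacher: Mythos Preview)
The statement you are addressing is presented in the paper as a \emph{conjecture}, not a theorem; the paper gives no proof and the remark immediately following it says ``A proof of the more general result is under consideration.'' What the paper does supply is a partial \emph{verification}: in the lemma that follows, the authors take the rectangular partition $\bfla=((m+1)^n)$, evaluate $\Psi_{\bfla}^{(\alpha)}$ and $\Psi_{\widetilde\bfla}^{(\alpha)}$ directly from the formula \eqref{Psi}, and check that the resulting $c_{m,n}$, $d_1^{(\mu)}$ and $d_{n(m+1)}^{(\mu)}$ agree with the known expansion \eqref{Texp} of $T_{m,n}^{(\mu)}(z)$. That is a consistency check in a special case, not a proof of the conjecture.

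Your expansion-based argument is a genuine attempt at a general proof, and most of it is correct. The monomial expansion of the Wronskian, the identification of the surviving index sets with degree vectors of subpartitions $\widetilde\bfla\subseteq\bfla$, the factorisation of the minor using $\binom{h_i+\alpha}{h_i-k}=\frac{1}{(h_i-k)!}\,(\alpha+1)_{h_i}/(\alpha+1)_{k}$, and the reduction of the numerical part to $\binom{|\bfla|}{j}F_{\widetilde\bfla}F_{\bfla/\widetilde\bfla}/F_{\bfla}$ via the hook-length formula are all sound. Your identification $\Psi_{\bfla}^{(\alpha)}=\Omega_{\bfla}^{(\alpha)}(0)/c_{\bfla}=(-1)^{|\bfla|}\prod_{i=1}^{r}\prod_{l=i}^{h_i}(\alpha+l)$ is also correct and is a clean closed form.

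The genuine gap is precisely the one you flag yourself: you have not shown that $(-1)^{|\bfrh|}\prod_{i=1}^{r}\prod_{l=i}^{h_i}(\alpha+l)$ equals the expression \eqref{Psi}, nor that the sign discrepancy is exactly $(-1)^{\text{ht}(\bold{\Rho})}$. You sketch an induction on $|\bfrh|$ by removing an outer corner, but this is not carried out, and the second product in \eqref{Psi} has an index set whose intended reading is itself somewhat ambiguous and would need to be pinned down before the bijection can be executed. So your proposal advances beyond the paper --- which proves nothing here --- but stops short of a complete argument at the same combinatorial step the authors themselves left open.
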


\begin{figure}
\vspace{1cm}
\begin{center}
 \begin{subfigure}[b]{0.48\textwidth}
\centering
\hspace{-1.5cm}
\begin{tikzpicture}[scale=0.7]
\begin{Tableau}{{,,,,,,,,},{,,,,,,,},{,,,,,,},{,,},{,},{ }}
%vertical
\draw[draw=black] (0.1,-1.9) rectangle ++(0.8,1.8);
\draw[draw=black] (0.1,-3.9) rectangle ++(0.8,1.8);
\draw[draw=black] (0.1,-5.9) rectangle ++(0.8,1.8);
\draw[draw=black] (1.1,-2.9) rectangle ++(0.8,1.8);
\draw[draw=black] (1.1,-4.9) rectangle ++(0.8,1.8);
\draw[draw=black] (2.1,-3.9) rectangle ++(0.8,1.8);
% horizontal
\draw[draw=black] (1.1,-0.9) rectangle ++(1.8,0.8);
\draw[draw=black] (3.1,-0.9) rectangle ++(1.8,0.8);
\draw[draw=black] (5.1,-0.9) rectangle ++(1.8,0.8);
\draw[draw=black] (7.1,-0.9) rectangle ++(1.8,0.8);
\draw[draw=black] (2.1,-1.9) rectangle ++(1.8,0.8);
\draw[draw=black] (4.1,-1.9) rectangle ++(1.8,0.8);
\draw[draw=black] (6.1,-1.9) rectangle ++(1.8,0.8);
\draw[draw=black] (3.1,-2.9) rectangle ++(1.8,0.8);
\draw[draw=black] (5.1,-2.9) rectangle ++(1.8,0.8);
\draw [decorate,decoration={brace,amplitude=10pt,raise=1pt
},xshift=0pt,yshift=0pt]
(0.1,0.1) -- (8.9,0.1) node [black,midway,above=10pt]{$2m+1$}; 
\draw [decorate,decoration={brace,amplitude=10pt,raise=1pt
},xshift=0pt,yshift=0pt]
 (-0.1,-2.9) -- (-0.1,-0.1) node [black,midway,left=10pt]{$n$}; 
\draw [decorate,decoration={brace,amplitude=10pt,raise=1pt
},xshift=0pt,yshift=0pt]
 (-0.1,-5.9) -- (-0.1,-3.1) node [black,midway,left=10pt]{$n$}; 
\end{Tableau} 
\end{tikzpicture}
\vspace{-1cm}
 \caption{
The Young diagram of $\bold{\Lambda}_{4,3} = (9,8,7,3,2,1)$. }
 \label{fig:youngtiling43}
 \end{subfigure}
\hfill
%%%%
%%%%
 \begin{subfigure}[b]{0.48\textwidth}
%\centering
\hspace{-1.5cm}
\begin{tikzpicture}[scale=0.7]
\begin{Tableau}{{,,},{,},{,},{,},{,},{ }}
%vertical
\draw[draw=black] (0.1,-1.9) rectangle ++(0.8,1.8);
\draw[draw=black] (0.1,-3.9) rectangle ++(0.8,1.8);
\draw[draw=black] (0.1,-5.9) rectangle ++(0.8,1.8);
\draw[draw=black] (1.1,-2.9) rectangle ++(0.8,1.8);
\draw[draw=black] (1.1,-4.9) rectangle ++(0.8,1.8);
%
% horizontal
\draw[draw=black] (1.1,-0.9) rectangle ++(1.8,0.8);
\draw [decorate,decoration={brace,amplitude=10pt,raise=1pt
},xshift=0pt,yshift=0pt]
(0.1,0.1) -- (2.9,0.1) node [black,midway,above=10pt]{$2m+1$}; 
\draw [decorate,decoration={brace,amplitude=10pt,raise=1pt
},xshift=0pt,yshift=0pt]
 (-0.1,-1.9) -- (-0.1,-0.1) node [black,midway,left=10pt]{$m+1$}; 
\draw [decorate,decoration={brace,amplitude=10pt,raise=1pt
},xshift=0pt,yshift=0pt]
 (-0.1,-3.9) -- (-0.1,-2.1) node [black,midway,left=10pt]{$2(n-m-1)$};
\draw [decorate,decoration={brace,amplitude=10pt,raise=1pt
},xshift=0pt,yshift=0pt]
 (-0.1,-5.9) -- (-0.1,-4.1) node [black,midway,left=10pt]{$m+1$}; 
\end{Tableau}
\end{tikzpicture}
\vspace{-1cm}
 \caption{The Young diagram of $\bold{\Lambda}_{1,3} = (3,2^4,1)$. 
}
 \label{fig:youngtiling13}
 \end{subfigure}
\end{center}
\caption{Examples of Young diagrams of 
$\bold{\Lambda}_{m,n}$ for $m>n-2$ (left) 
and $m \le n-2$ (right). 
The domino tiling is shown. 
 The number of vertical dominoes 
is $\text{ht}( \bold{\Lambda}_{4,3})=6$
 and 
$\text{ht}(\bold{\Lambda}_{1,3})=5$ respectively. 
\label{fig:youngtiling}}
\end{figure}
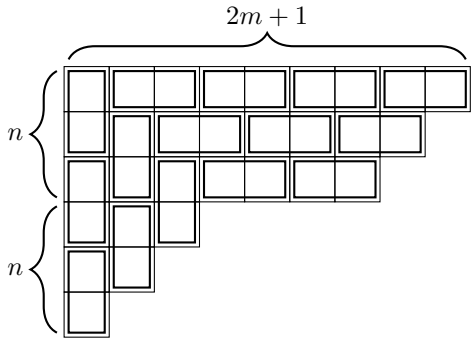
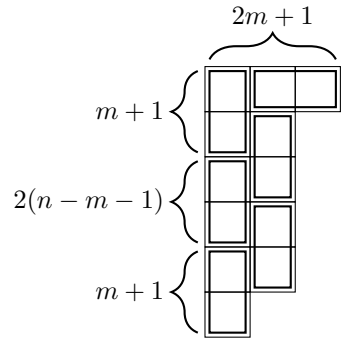

\begin{remark}
We have also generalised Conjecture \ref{conj:omega_coeffs} 
 to determinants of Laguerre polynomials of universal character type \cite{refKoike}. Such polynomials are defined in terms of two partitions and are generalisations of Wronskian Hermite polynomials $H_{\bfLA}(z)$
with $2$-quotient $(\bfla_1,\bfla_2)$. 
Examples include the generalised Umemura polynomials \cite{refMOK} and the 
 Wronskian Laguerre polynomials arising in \cite{refBK18, refDuran14b,refDP15,refGUGM18}. 
A proof of the more general result is under consideration. 
\end{remark}

We now record some information about the partitions ${\bfla}=((m+1)^n)$ of the generalised Laguerre polynomial $\Tmn{m,n}{\mu}(z)$ and the corresponding partition $\bfLA_{m,n}$ with empty 2-core and 2-quotient $(\bfla, \bold{\emptyset})$.
The Young diagram of $((m+1)^n)$
is a rectangle of width $m+1$ and height $n$.
Since the
degree vector of $\bfla$ is 
\[\bold{h}_{\bfla}=(m+n, m+n-1, \ldots, m+1),\] 
 the Vandermonde determinant is 
\[
\Delta(\bold{h}_{\bfla})=%\prod_{1\le i<j\le n}(h_j-h_i)= 
(-1)^{n(n-1)/2}\prod_{j=2}^n (j-1)!,\]
Since ${\bfla}^*=( n^{m+1})$, the 
 multiset of hooks $\mathcal{H}_{m,n}$ of ${\bfla}$ following from (\ref{hook_length}) is
\eq
\mathcal{H}_{m,n}= \{\{ m+n+2-j-k\}_{k=1}^{m+1}\}_{j=1}^n\,.
\label{hook1}
\en
The multiset can also be written as 
\eq
\mathcal{H}_{m,n}= \{k^k\}_{k=1}^{{\rm min} (m+1,n)-1}
\cup \{k^{{\rm min}(m+1,n)}\}_{k={\rm min}(m+1,n) }^{{\rm max} (m+1,n)}
\cup 
 \{k^{m+n+1-k}\}_{k={\rm max}(m+1,n)+1 }^{m+n}.
 \label{hookmulti}
\en

We now describe the Young diagram of $\bold{\Lambda}_{m,n}$ and determine its $2$-height. The shape of the Young diagram depends on the 
relative values of $m$ and $n$. 
When $m >n-2$, the Young diagram consists of 
 the top $n$ rows of a staircase partition of size $2m+1$ with 
 a complete staircase of size $n$ below. When $m\le n-2$ 
the Young diagram consists of the top $m+1$ rows of a 
$2m+1$ staircase, then $2(n-m-1)$ rows of length $m+1$ and finally a
complete $m+1$ staircase. The two cases are illustrated in Figure~\ref{fig:youngtiling}.

All Young diagrams corresponding to partitions $\bold{\Lambda}(0,\bfnu)$ with empty 2-core and 2-quotient $(\bfnu, \bold{\emptyset})$ 
have a unique tiling with $|\bfnu|$ 
 dominoes: 
tile the boxes of the Young
diagram to the right and above the main diagonal with horizontal dominoes and tile the boxes 
on and below the main diagonal with vertical 
dominoes. The tiling is illustrated in Figure~\ref{fig:youngtiling}. The number of vertical dominoes and, therefore, the $2$-height of $\bold{\Lambda}(0,\bfnu)$ is 
$$
\text{ht}(\bold{\Lambda}(0,\bfnu)) =\sum_{j=1}^d (\lambda_j^* -j)/2,
$$
where $d$ is the number of boxes in the main diagonal or, equivalently, the size of the Durfee square. 
 The $2$-heights of the 
 Young diagrams of $\bold{\Lambda}_{m,n}$ are therefore 
 \eq
\text{ht}(\bold{\Lambda}_{m,n}) = 
\begin{cases}
n(n+1)/2 & m >n-2 \\ 
(2n-m)(m+1)/2 & m\le n-2. 
\end{cases}
\label{height}
\en

\begin{lemma}
Recall the expansion (\ref{Texp}) of the generalised Laguerre polynomial 
$$
T^{(\mu)}_{\bfla}(z) = 
c_{m,n} \lf ( z^{n(m+1)} + d_1^{(\mu)} z^{n(m+1)-1} + 
\dots + (-1)^{n(m+1)} d_{n(m+1)}^{(\mu)})\ri)\,.
$$
The overall constant is 
\eq
c_{m,n}=(-1)^{n(m+1)} 
\frac{ \Delta(h_{\bfla} )
}
{\prod_{h \in \bold{h}_{\bfla} } (-1)^h \,h!
} 
\en
where 
\eq
\Delta(\bold{h}_{\bfla})=%\prod_{1\le i<j\le n}(h_j-h_i)= 
(-1)^{n(n-1)/2}\prod_{j=1}^n (j-1)!,
\en
and 
\eq
d_{1}^{(\mu)} = -n(m+1) (\mu+m+n+1)\,.
\en
The constant $d_{n(m+1)}^{(\mu)}$ can be written 
in terms of the hooks of the Young diagram of $\bfla$: 
\eq
d_{n(m+1)}^{(\mu)} = \prod_{h \in \mathcal{H}_{m,n}} \mu+n+h\,.
\en
\end{lemma}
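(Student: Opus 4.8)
The plan is to extract all three quantities from the Wronskian representation
$\Tmn{m,n}{\mu}(z)=\Wr\bigl(\Lag{m+1}{n+\mu}(z),\Lag{m+2}{n+\mu}(z),\dots,\Lag{m+n}{n+\mu}(z)\bigr)$ of Lemma~\ref{lem:32}, using the explicit degree vector $\bold{h}_{\bfla}=(m+n,m+n-1,\dots,m+1)$ of the rectangle $\bfla=((m+1)^n)$ together with three elementary facts about associated Laguerre polynomials: $\Lag{p}{\a}(z)$ has leading coefficient $(-1)^p/p!$; the ratio of its coefficient of $z^{p-1}$ to that of $z^{p}$ equals $-p(p+\a)$; and $\Lag{q}{\b}(0)=\binom{q+\b}{q}$. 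The formula for $\Delta(\bold{h}_{\bfla})$ is then immediate: with $h_j=m+n+1-j$ one has $\Delta(\bold{h}_{\bfla})=\prod_{1\le j<k\le n}(j-k)=(-1)^{n(n-1)/2}\prod_{j=1}^{n}(j-1)!$.

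For $c_{m,n}$ and $d_{1}^{(\mu)}$ I would use multilinearity of the Wronskian in its $n$ columns. Write $\Lag{m+i}{n+\mu}(z)=c_iz^{m+i}+c_i'z^{m+i-1}+\cdots$ with $c_i=(-1)^{m+i}/(m+i)!$ and $c_i'/c_i=-(m+i)(m+i+n+\mu)$, and split the determinant into its ``all--leading'' term and its first--order corrections. The all--leading term is $\bigl(\prod_{i}c_i\bigr)\Wr(z^{m+1},\dots,z^{m+n})=\bigl(\prod_{i}c_i\bigr)\,\Delta(\bold{h}_{\bfla})\,z^{n(m+1)}$, which yields $c_{m,n}$; rewriting $\prod_i(m+i)!=\prod_{h\in\bold{h}_{\bfla}}h!$ and matching against \eqref{def:cmn} pins down the constant, the only subtlety being the parity count that reconciles the sign $(-1)^{n(n-1)/2}$ coming from the ordering of $\bold{h}_{\bfla}$ with the $(-1)^{h}$ factors in the denominator (equivalently, with the normalisation $\Tmn{m,n}{\mu}=(-1)^{n(n-1)/2}\Omega_{\bfla}^{(n+\mu)}$ of \eqref{TOmega} and Conjecture~\ref{conj:omega_coeffs}). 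Because the exponents $m+1,\dots,m+n$ are consecutive, lowering any of them but the smallest produces two equal columns, so the only surviving first--order correction replaces $z^{m+1}$ by $z^{m}$; dividing by $c_{m,n}$ leaves $d_{1}^{(\mu)}=(c_1'/c_1)\,\Delta(m,m+2,\dots,m+n)/\Delta(m+1,\dots,m+n)=-(m+1)(m+n+1+\mu)\cdot n$, using the elementary Vandermonde ratio $\Delta(m,m+2,\dots,m+n)/\Delta(m+1,\dots,m+n)=n$.

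For the constant term I would evaluate $\Tmn{m,n}{\mu}(0)$ directly. From \eqref{lag1}, $\tfrac{\d^{i-1}}{\d z^{i-1}}\Lag{m+j}{n+\mu}=(-1)^{i-1}\Lag{m+j-i+1}{n+\mu+i-1}$, so $\Tmn{m,n}{\mu}(0)=\det\bigl[(-1)^{i-1}\binom{m+j+n+\mu}{m+j-i+1}\bigr]_{i,j=1}^{n}=(-1)^{n(n-1)/2}\det\bigl[\binom{m+j+n+\mu}{m+j-i+1}\bigr]_{i,j=1}^{n}$. Writing the binomial as $\Gamma(m+j+n+\mu+1)/\bigl(\Gamma(m+j-i+2)\Gamma(n+\mu+i)\bigr)$ and pulling the row--only and column--only Gamma factors out of the determinant reduces matters to $\det[1/(m+j-i+1)!]_{i,j=1}^{n}$, which equals $1/\prod_{h\in\mathcal{H}_{m,n}}h$ by the determinantal formula for $F_{\bfla}$ recalled in \S\ref{sec:parts} (the displayed expression for $F_{\bfla/\widetilde{\bfla}}$ with $\widetilde{\bfla}=\bold{\emptyset}$, combined with $F_{\bfla}=|\bfla|!/\prod_{h\in\mathcal{H}_{\bfla}}h$, specialised to $\bfla=((m+1)^n)$). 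The leftover Gamma ratio $\prod_{j=1}^{n}\Gamma(m+n+1+\mu+j)\big/\prod_{i=1}^{n}\Gamma(n+\mu+i)$ telescopes to $\prod_{j=1}^{n}\prod_{l=j}^{j+m}(\mu+n+l)=\prod_{l=1}^{m+n}(\mu+n+l)^{e(l)}$ with $e(l)=\#\{\,j:1\le j\le n,\ j\le l\le j+m\,\}$, and a short count (using the symmetry of the rectangle's hooks recorded in \eqref{hookmulti}) shows $e(l)$ is precisely the multiplicity of $l$ in $\mathcal{H}_{m,n}$; hence $\Tmn{m,n}{\mu}(0)=(-1)^{n(n-1)/2}\bigl(\prod_{h\in\mathcal{H}_{m,n}}h\bigr)^{-1}\prod_{h\in\mathcal{H}_{m,n}}(\mu+n+h)$. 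Comparing with $c_{m,n}(-1)^{n(m+1)}d_{n(m+1)}^{(\mu)}$ and inserting the value of $c_{m,n}$ gives $d_{n(m+1)}^{(\mu)}=\prod_{h\in\mathcal{H}_{m,n}}(\mu+n+h)$; as a cross--check this matches the three--factor product appearing in the expansion \eqref{Texp} term by term.

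The Laguerre expansions and the Vandermonde ratios are routine. The main obstacle is the constant--term step: obtaining a clean closed form for the binomial determinant $\det[\binom{m+j+n+\mu}{m+j-i+1}]$ — I would route this through the tableaux/hook--length identity of \S\ref{sec:parts} rather than re-deriving it from lattice paths — and then carrying out the combinatorial identification of the telescoped exponents $e(l)$ with the hook multiplicities of the $n\times(m+1)$ rectangle. A secondary, purely bookkeeping difficulty is keeping all the signs straight, since several $(-1)^{n(n-1)/2}$'s, $(-1)^{\sum h}$'s and $(-1)^{n(m+1)}$'s must be collated against \eqref{def:cmn} and \eqref{Texp}.
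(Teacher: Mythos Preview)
Your argument is correct and self-contained, but it follows a genuinely different route from the paper. The paper's proof routes everything through Conjecture~\ref{conj:omega_coeffs}: it writes $c_{m,n}=(-1)^{n(n-1)/2}c_{\bfla}$, $d_1^{(\mu)}=r_1^{(\mu+n)}$ and $d_{n(m+1)}^{(\mu)}=(-1)^{n(m+1)}r_{n(m+1)}^{(\mu+n)}$, and then computes $\Psi_{\bfla}^{(\alpha)}$ and $\Psi_{\widetilde{\bfla}}^{(\alpha)}$ explicitly (separately for $m>n-2$ and $m\le n-2$, using the $2$-heights \eqref{height} and \eqref{height2}), eventually observing the key cancellation $(m+1+\alpha)\Psi_{\widetilde{\bfla}}^{(\alpha)}=-\Psi_{\bfla}^{(\alpha)}$ to obtain $d_1^{(\mu)}$. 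In effect, the paper's lemma is a consistency check of the conjectured coefficient formula against the already-known expansion~\eqref{Texp}.

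Your approach instead extracts all three quantities directly from the Wronskian of Lemma~\ref{lem:32}: the leading Vandermonde for $c_{m,n}$; the single surviving sub-leading correction (exploiting that the degrees $m+1,\dots,m+n$ are consecutive) for $d_1^{(\mu)}$; and the evaluation $\Tmn{m,n}{\mu}(0)$ as a binomial determinant, reduced via the $F_{\bfla/\bold{\emptyset}}$ formula of \S\ref{sec:parts} to $1/\prod_{h\in\mathcal{H}_{m,n}}h$, with the leftover Gamma ratio identified with $\prod_{h\in\mathcal{H}_{m,n}}(\mu+n+h)$ by the hook-multiplicity count. This is more elementary and, importantly, does not invoke the unproven Conjecture~\ref{conj:omega_coeffs}; it therefore stands as an independent proof rather than a verification. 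What the paper's route buys is an explicit demonstration that the $\Psi$-machinery of the conjecture reproduces the rectangle case, which is its intended purpose in that section. Your sign bookkeeping and the identification $e(l)=\#\{(j,k):h_{j,k}=l\}$ for the rectangle are routine (both reduce to $\min(l,m+1,n,m+n+1-l)$), so the ``main obstacle'' you flag is not really an obstacle here.
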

\begin{proof}
Set $\bfla = ((m+1)^n). $ 
Then $ \ell(\bfla)=n $ and $|\bfla|=n(m+1)$. 
Using the relation (\ref{TOmega}) between $T_{m,n}^{(\mu)}(z)$ and 
$\Omega_{\bfla}^{(\alpha)}(z)$ and comparing the expansions 
 (\ref{Texp}) and (\ref{omega_exp}), we have 
$$
c_{m,n}=(-1)^{n(n-1)/2}c_{\bfla},
$$
$$
d_{1}^{(\mu)} = r_1^{(\mu+n)}=n(m+1) \frac{\Psi^{(\mu+n)}_{\bfla}}{\Psi^{(\mu+n)}_{\widetilde \bfla}},
$$
and 
\beq
d_{n(m+1)}^{(\mu)} =(-1)^{n(m+1)} r_{n(m+1)}^{(\mu+n)} = 
(-1)^{n(m+1)} \Psi^{(\mu+n)}_{\bfla}\,. 
\label{eq:dpsi}
\eeq
The expression for $c_{m,n}$ follows from (\ref{r0}) using the 
degree vector $\bold{h}_{\bfla}$.

We now determine $\Psi^{(\alpha)}_{\bfla}$ from (\ref{Psi}). We need 
 (\ref{height}) 
and 
$$
\{0,1,\dots n -1\} \setminus \bold{h}_{\bfla} 
= \begin{cases}
\{0,1,\dots n -1\},\qquad & m>n-2, \\
 \{0,1,\dots m\}, & m\le n-2.
\end{cases}
$$
We deduce that when $m>n-2$ then 
\begin{align}
\Psi_{\bfla}^{(\alpha)} &= (-1)^{n(m+1) + n(n+1)/2} 
\prod_{j=1}^{n}\lf( 
\prod_{k=n}^{m+n-j}\lf( 
m+2n+1-j-k +\alpha \ri) 
\prod_{k=0}^{j-1}\lf( 
j-1-k -\alpha-n \ri) \ri)
\nn \\ 
 &= (-1)^{n(m+1) } 
\prod_{j=1}^{n} \lf ( 
\prod_{k=1}^{m+1-j}\lf( 
m+n+2-j-k +\alpha \ri) \prod_{k=m+2-j}^{m+1}\lf( 
m+n+2-j-k +\alpha \ri) \ri), 
\end{align}
where the second line follows after changing variables and taking 
a minus sign out of each entry in the second set of products. 
If $m<n-2$ then 
\begin{align}
\Psi_{\bfla}^{(\alpha)} &= (-1)^{n(m+1) +(2n-m)(m+1)/2 } 
\prod_{j=1}^{m} 
\prod_{k=n}^{m+n-j}( 
m+2n+1-j-k +\alpha ) 
 \prod_{j=1}^{n} 
\prod_{k=0}^{\text{min}(j-1,m)}( 
j-1-k-\alpha-n ) 
\nn \\ 
 &= (-1)^{n(m+1) } 
\prod_{j=1}^{m} \lf ( 
\prod_{k=1}^{m+1-j}( 
m+n+2-j-k +\alpha ) \prod_{k=m+2-j}^{m+1}( 
m+n+2-j-k +\alpha ) \ri ) \nn \\
& \quad \quad \quad \times \prod_{j=m+1}^n\prod_{k=1}^{m+1} (m+n+2-j-k+\alpha)\,. 
\end{align}
Recalling that the hook in box $(j,k)$ of the Young diagram of $\bfla$ is 
$h_{j,k}= m+n+2-j-k$, we deduce for all $m,n$ that
\eq
\Psi_{\bfla}^{(\alpha)} = (-1)^{n(m+1)} \prod_{j=1}^n\prod_{k=1}^{m+1}
(h_{j,k}+\alpha)\,.
\en
Therefore from (\ref{eq:dpsi}) we conclude that 
\eq
d_{n(m+1)}^{(\mu)} =%(-1)^{n(m+1)} \Psi_{\bfla}^{(\mu+n)} = 
\prod_{j=1}^n\prod_{k=1}^{m+1}
(h_{j,k}+\mu+n)\,.
\en

To determine the coefficient $r_1^{(\alpha)}$ we find 
all partitions $\widetilde \bfla$ obtained from $\bfla$ by removing 
one box from the Young diagram of $\bfla$ such that the result is a valid Young diagram. Since the Young diagram of $\bfla$ is a rectangle, the only
possibility is to remove box in position $(n,m+1)$ . Hence 
\eq
\widetilde \bfla = ((m+1)^{n-1} ,m),\qquad 
\bold{h}_{\widetilde \bfla } =( m+n, m+n-1, \dots, m+2,m),
\en
and $\ell(\widetilde \bfla)=n$ and $|\widetilde \bfla | =n(m+1)-1$.
Clearly $F_{\bfla} = F_{\widetilde \bfla} $ 
and $F_{{\bfla}/{\widetilde \bfla}}=1$.
We also need the 2-height of 
the partition $\widetilde \bfLA$ with empty 2-core and quotient $(\widetilde \bfla, \bold{\emptyset})$.
The partition is 
\eq
\widetilde \bfLA=
\begin{cases}
{\lf( \{2m-j+1\}_{j=0}^{m} , \{m+1\}_{j=1}^{2(n-m-1)-1} ,m
, \{m-j\}_{j=0}^{m-1} \ri ),}\quad 
& m \le n-2, \\[3pt] 
\lf ( \{2m-j+1\}_{j=0}^{m-1} , m,m ,\{m- j\}_{j=0}^{m-1} \ri ), & m=n-1, \\[3pt]
\lf ( \{2m-j+1\}_{j=0}^{n-2} , \{2m-n\} ,\{n- j\}_{j=0}^{n-1} \ri ), & m>n-2,
\end{cases}
\label{Lamt}
\en
which is obtained from $\bfLA_{m,n}$ 
by removing one vertical domino from the Young diagram 
 if $m>n-1$ and 
one horizontal domino if $m\le n-1$. 
Hence the 2-height is 
\eq
\text{ht}(\widetilde {\bold{\Lambda}}) = 
\begin{cases}
\tfrac12n(n+1) -1, & m >n-2, \\ 
\tfrac12(2n-m)(m+1),\qquad & m\le n-2.
\end{cases}
\label{height2}
\en

Carefully evaluating (\ref{Psi}), we deduce that when $m=n-1$ then 
\begin{align}
\Psi_{\widetilde \bfla}^{(\alpha)} &=
%(-1)^{n(m+1) + n(n+1)/2-1} 
%\prod_{j=1}^{m} 
%\prod_{k=m+1}^{2m+1-j}\lf( 
%3m+2-j-k +\alpha \ri) 
% \prod_{j=0}^{m} 
%\prod_{k=0}^{\text{min}(m-1,j)}\lf( 
%j-k -\alpha-m-1 \ri) 
%\nn \\ 
% &=
-(-1)^{m } 
\prod_{j=1}^{m} \lf ( 
\prod_{k=1}^{m+1-j}\lf( 
2m+3-j-k +\alpha \ri) 
\prod_{k=m+2-j}^{m+1}\lf( 
2m+3-j-k +\alpha \ri) \ri) 
\nn \\ 
&\quad \quad \quad \times \prod_{j=m+1}^{m+1}\prod_{k=2}^{m+1} (2m+3-j-k+\alpha) \,. 
\end{align}
When $m>n-2$ then 
\begin{align}
\Psi_{\widetilde \bfla}^{(\alpha)} &=
%(-1)^{n(m+1)-1 +n(n+1)/2-1}\prod_{j=1}^{n-1} 
%\prod_{k=n}^{m+n-j} ( m+2n+1-j-k+\alpha )
%\prod_{k=n}^{m-1} (m-k+\alpha+n )
%\nn \\ 
%&&\quad \quad \times \prod_{j=0}^{n-1} \prod_{k=0}^j ( j-k-\alpha-n )
%\nn \\ 
%&=
-(-1)^{n(m+1)}\prod_{j=1}^{n-1} 
\prod_{k=1}^{m+1-j} ( m+n+2-j-k+\alpha )
\prod_{k=2}^{m+1-n} ( m+n+2-(n)-k+\alpha )
\nn\\ 
& \quad \quad \times \prod_{j=1}^n \prod_{k=m+2-j}^{m+1} 
( m+n+2-j-k+\alpha),
\end{align}
and when $m\le n-2$ then 
\begin{align}
\Psi_{\widetilde \bfla}^{(\alpha)} &= 
-(-1)^{n(m+1)} 
\prod_{j=1}^{n-1}\prod_{k=1}^{m+1-j}
( m+n+2-j-k+\alpha)
\prod_{j=1}^m \prod_{k=m+2-j}^{m+1} ( m+n+2-j-k+\alpha)
\nn \\ 
& \quad \quad \times 
\prod_{j=m}^{n-1} \prod_{k=3}^{m+2}( m+n+2-j-k+\alpha ) 
\prod_{j=m+1}^{n-1}\prod_{k=1}^1 ( m+n+2-j-k+\alpha).
\end{align}

We notice that in each case $\Psi_{\widetilde \bfla}^{(\alpha)}$ includes all 
terms of the form $h_{j,k}+\alpha$ where $h_{j,k}$ are the hooks of the
 Young diagram of $\bfla$ except for the term $m+1+\alpha$. 
Therefore 
\eq
(m+1+\alpha) \Psi_{\widetilde \bfla}^{(\alpha)} = 
-(-1)^{n(m+1)} \prod_{j=1}^n\prod_{k=1}^{m+1}
(h_{j,k}+\alpha)= -\Psi_{\bfla}^{(\alpha)}\,.
\en
We conclude that 
\eq
r_1^{(\alpha)} = n(m+1) \frac {\Psi_{\bfla}^{(\alpha)}}
{\Psi_{\widetilde \bfla}^{(\alpha)}}
=-n(m+1) (\alpha+m+1)\,. 
\en
and 
\eq
d_{1}^{(\alpha)}=-n(m+1) (\mu+m+n+1)\,. 
\en
\end{proof}

\begin{conjecture}
The hook multiset $\mathcal{H}_{m,n}$ (\ref{hookmulti}) has the form 
\eq
\mathcal{H}_{m,n}= 
\begin{cases}
 \{k^{p_1}\}_{k=1}^{m}
\cup \{k^{p_2}\}_{k=m+1}^{n}
\cup 
 \{k^{p_3}\}_{k=n+1}^{m+n},\qquad & n>m, \\[3pt] 
 \{k^{p_1}\}_{k=1}^{n}
\cup \{k^{\widetilde{p}_2}\}_{k=n+1}^{m+1}
\cup 
 \{k^{p_3}\}_{k=m+2}^{m+n},\qquad & n \le m,
 \end{cases}
 \label{hooks_mult}
\en
where 
$$ p_1=k\quad,\quad p_2=m+1\quad,\quad \widetilde{p}_2=n \quad,\quad p_3=m+n+1-k,
$$ are the multiplicities of the hooks in each respective set.
The discriminant of $T_{m,n}^{(\mu)}(z)$ for $n>m$ in terms of partition data is 
\begin{align}
\text{Dis}_{m,n}(\mu)&=%\prod_{k \in \mathcal{H}_{m,n}} k^{k \, p_k^2} 
(-1)^{(m+1) \floor{n/2} } c_{m,n}^{n(m+1) -1} \nn \\
& \qquad \times \prod_{k=1}^{m} k^{2k(n-k)(k-1-m)} 
\prod_{k=1}^{m} k^{k p_1^2} \,(\mu+n+k)^{f(n-1,p_1) }
\nn \\
& \qquad\times 
\prod_{k=m+1}^{n} k^{k p_2^2} \,(\mu+n+k)^{f(m+n-k, p_2)} 
\prod_{k=n+1}^{m+n} k^{ k p_3 ^2} \,(\mu+n+k)^{f(m,p_3 )},
\label{dis1}
\end{align}
where $f(k,p) =kp^2 -p(p-1)(p-2)/3 $. Similarly the discriminant 
when $n \le m$ is 
\begin{align}
\text{Dis}_{m,n}(\mu)&=%\prod_{k \in \mathcal{H}_{m,n}} k^{k \, p_k^2} 
(-1)^{(m+1) \floor{n/2} } c_{m,n}^{2(n(m+1) -1)} 
%\nn \\& \qquad\times 
\prod_{k=1}^{m} k^{2k(n-k)(k-1-m)} 
\prod_{k=1}^{n} k^{k p_1^2} \,(\mu+n+k)^{f(n-1,p_1) }
\nn \\
& \qquad\times 
\prod_{k=n+1}^{m} k^{k \widetilde{p}_2^2} \,(\mu+n+k)^{f(k-1, \widetilde{p}_2)} 
\prod_{k=m+1}^{m+n} k^{ k p_3 ^2} \,(\mu+n+k)^{f(m,p_3 )}.
\label{dis22}
\end{align}
\end{conjecture}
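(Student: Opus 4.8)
The statement splits into two essentially independent pieces: the purely combinatorial identity \eqref{hooks_mult} for the hook multiset $\mathcal{H}_{m,n}$ of the rectangle $\bfla=((m+1)^n)$, and the two discriminant formulae \eqref{dis1}--\eqref{dis22}, which we intend to obtain from the earlier expressions \eqref{dis}--\eqref{dis2} by rewriting every index-product through the hook multiplicities. The plan is therefore: (1) prove \eqref{hooks_mult}; (2) substitute into \eqref{dis}--\eqref{dis2} to get \eqref{dis1}--\eqref{dis22}; and, to make the statement unconditional, (3) indicate how \eqref{dis}--\eqref{dis2} itself would be established.

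First I would prove \eqref{hooks_mult}. For $\bfla=((m+1)^n)$ formula \eqref{hook_length} gives $h_{j,k}=m+n+2-j-k$ on the box $(j,k)$ with $1\le j\le n$ and $1\le k\le m+1$, so a hook value $v\in\{1,\dots,m+n\}$ occurs with multiplicity $\#\{(j,k):j+k=m+n+2-v\}$. Writing $s=j+k$, this count equals $\min(n,s-1)-\max(1,s-m-1)+1$, and a short case analysis (according as $s-1\le\min(m+1,n)$, or $s-1$ lies strictly between $\min(m+1,n)$ and $\max(m+1,n)$, or $s-1>\max(m+1,n)$) gives, when $n>m$, multiplicity $k$ for $1\le k\le m$, multiplicity $m+1$ for $m+1\le k\le n$ and multiplicity $m+n+1-k$ for $n+1\le k\le m+n$. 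The case $n\le m$ is the conjugate statement, since the hook multiset is invariant under $\bfla\mapsto\bfla^*=(n^{m+1})$; it amounts to interchanging the roles of $m+1$ and $n$ and yields the multiplicity $\widetilde p_2=n$ on the middle range. This reproduces \eqref{hooks_mult} with $p_1=k$, $p_2=m+1$, $\widetilde p_2=n$, $p_3=m+n+1-k$, and is consistent with the equivalent description \eqref{hookmulti} already recorded in the text; this step is routine bookkeeping.

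Granting \eqref{dis}--\eqref{dis2}, the identities \eqref{dis1}--\eqref{dis22} then follow by inspection once the multiplicities are known: since $p_1=k$ one has $k^{k p_1^2}=k^{k^3}=j^{j^3}$ and $f(n-1,p_1)=f(n-1,j)$; since $p_2=m+1$ one has $k^{k p_2^2}=j^{j(m+1)^2}$ and $f(m+n-k,p_2)=f(m+n-j,m+1)$; since $p_3=m+n+1-k$ one has $k^{k p_3^2}=j^{j(m+n-j+1)^2}$ and $f(m,p_3)=f(m,m+n+1-j)$; the sign $(-1)^{(m+1)\floor{n/2}}$ and the factor $\prod_j j^{2j(n-j)(j-1-m)}$ are unchanged, and the same substitutions, now with $\widetilde p_2=n$ on the range $n+1\le k\le m$, carry \eqref{dis2} into \eqref{dis22}. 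One should also reconcile the power of $c_{m,n}$: the correct exponent is $2(n(m+1)-1)$ throughout, as in \eqref{dis}, \eqref{dis2} and \eqref{dis22}, and as forced by $\text{Dis}_{m,n}(\mu)=c_{m,n}^{2(\deg-1)}\prod_{i<j}(\alpha_i-\alpha_j)^2$ with $\deg=(m+1)n$; the exponent $n(m+1)-1$ printed in \eqref{dis1} is a typographical slip.

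The substance of the statement thus lies in \eqref{dis}--\eqref{dis2}, which is itself only conjectured in the excerpt. To remove that hypothesis I would follow the route flagged after its statement, adapting Roberts' method: express $\text{Dis}_{m,n}(\mu)$ through $\text{Res}\big(\Tmn{m,n}{\mu},(\Tmn{m,n}{\mu})'\big)$ and set up a recursion in $(m,n)$ using the Toda-type equation \eqref{eq:Tmn} together with the Hirota bilinear identities of Lemma \ref{lem:313}; the dependence on $\mu$, namely the linear factors $\mu+n+j$ and their exponents, would be pinned down by the locations of the multiple roots at the origin and the factorisations in Lemma \ref{lemma:multzeros}, while the $\mu$-independent prefactor is fixed by the leading coefficient \eqref{def:cmn} together with the base case $n=1$, where $\Tmn{m,1}{\mu}(z)=L_{m+1}^{(\mu+1)}(z)$ has the classical Laguerre discriminant. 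The main obstacle is exactly this step: controlling simultaneously the combinatorial prefactor and the exponents of all the linear factors as the recursion is iterated. Once \eqref{dis}--\eqref{dis2} is in hand, the passage to the hook formulation \eqref{dis1}--\eqref{dis22} is the formal substitution described above.
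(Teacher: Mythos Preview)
Your proposal matches the paper's own treatment: the paper also derives \eqref{dis1}--\eqref{dis22} simply by rewriting the earlier conjectured formulae \eqref{dis}--\eqref{dis2} in terms of the hook multiplicities of \eqref{hooks_mult}, and it leaves the underlying discriminant formulae \eqref{dis}--\eqref{dis2} unproven, merely suggesting Roberts' method as a possible route. You go slightly further than the paper in spelling out the elementary count that yields \eqref{hooks_mult} and in flagging the exponent of $c_{m,n}$ in \eqref{dis1} as a misprint, but the overall strategy is the same, and your honest acknowledgement that step (3) is the real obstacle is exactly the position the paper takes.
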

The discriminant representations (\ref{dis1}) and (\ref{dis22}) 
follow directly from rewriting (\ref{dis}) and (\ref{dis2}) in 
terms of the hooks and their multiplicities as defined by (\ref{hooks_mult}). 

As already mentioned, the E- and F-type blocks seen for large positive 
and negative values of $\mu$ are of size $m+1 \times n$ and therefore 
resemble the rectangular Young diagram of $\lambda$. 
Moreover, the three allowed sets of block structures corresponding 
to intermediate values of $\mu$, as given in table~\ref{tab:ngtm}, 
appear at $\mu+n+k=0$ where the multiplicity of the first column hook $k$ in $\bold{h}_{\bfla}$ changes its 
multiplicity type from type $p_1$ to $p_2$ to $p_3$.

\begin{conjecture}
\label{conj:rootangles}
Finally, the set of 
integers encoding the $n^{\rm th}$ roots of $\pm 1$ 
via the polynomials in Conjecture~\ref{angles} 
are 
 the hooks on the diagonals parallel to the main diagonal of the Young diagram of $\bfla$. 
 Specifically, as $\ep \to 0$
 for $\mu=-n-j-\ep$, 
 hook $h_{jk}$ in column $j$ contributes an 
$h_{jk}^{\rm th}$ root of unity 
 if $k$ is odd and an $h_{jk}^{\rm th}$ root of $-1$ if
 $k$ is even. For $\mu=-n-j\mp\ep$ 
 the polynomials in Conjecture~\ref{angles} are 
\begin{align*}
& \ \ \prod_{k=1}^j \ \ z^{h_{j,k}}\mp(-1)^{n+k}, && j =1,2, \dots, m+1, \\
& \prod_{k=j-m}^n z^{h_{j,k}}\mp(-1)^{n+k}, && j =m+2 ,m+3,\dots, n, \\
& \prod_{k=j-m}^n z^{h_{j,k}}\mp(-1)^{n+k}, && j=n+1 , n+2,\dots, m+n ,
\end{align*}
when $n>m$ where $h_{j,k}\in\mathcal{H}_{m,n}$. For $n\le m$ the result is 
\begin{align*}
&\ \ \prod_{k=1}^j z^{h_{j,k}}\mp(-1)^{n+k}, && j=1, 2,\dots, n, \\
&\ \ \prod_{k=1}^n z^{h_{j,k}}\mp(-1)^{n+k}, && j=n+1,n+2 ,\dots, m+1, \\
&\prod_{k=j-m}^n z^{h_{j,k}}\mp(-1)^{n+k}, && j=m+2,m+3,\dots ,m+n. 
\end{align*}
\end{conjecture}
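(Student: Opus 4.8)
The plan is to reduce Conjecture~\ref{conj:rootangles} to Conjecture~\ref{angles} together with a purely combinatorial identity: that the integer exponents $n+j+1-2k$ appearing in the coalescence polynomials \eqref{rootangles1}--\eqref{rootangles4} are exactly the hook lengths of $\bfla=((m+1)^n)$ read off along the diagonals parallel to the main diagonal of its Young diagram. Since Conjecture~\ref{angles} is already available, the only genuine work is this identification, followed by the sign bookkeeping that converts $z^{h}-(-1)^{n+k}$ (resp.\ $z^{h}+(-1)^{n+k}$) into the alternation between $h$-th roots of $+1$ and $h$-th roots of $-1$ along a diagonal. First I would record, from the hook-length formula \eqref{hook_length} applied to the rectangle (so that $\bfla^*=(n^{m+1})$), that the box in row $r$, column $c$ of $\bfla$ has hook $h_{r,c}=m+n+2-r-c$. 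Hence along the diagonal $D_s=\{(r,r+s):r\ge 1\}$ sitting $s\ge 0$ steps to the right of the main diagonal the hooks are $m+n+2-s-2r$, an arithmetic progression with common difference $-2$, and along the diagonal $E_t=\{(r+t,r):r\ge 1\}$ sitting $t\ge 1$ steps below it the hooks are $m+n+2-t-2r$, again stepping by $-2$.

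The second step is the reparametrisation and the matching of ranges. Take $n>m$ first. The diagonal $D_s$ has $\min(m+1-s,n)=m+1-s$ boxes for $s=0,1,\dots,m$; writing $j=m+1-s$ (so $j=1,\dots,m+1$) turns its hook list into $\{\,n+j+1-2k : k=1,\dots,j\,\}$, which is precisely the exponent set of the first product in \eqref{rootangles1} and of the first line of Conjecture~\ref{conj:rootangles}. The diagonal $E_t$ has $\min(m+1,\,n-t)$ boxes for $t=1,\dots,n-1$; writing $j=m+1+t$ (so $j=m+2,\dots,m+n$) its hook list is $\{\,n+j+1-2k\,\}$ with $k$ running from $j-m$ up to $j$ in the full, $(m+1)$-box regime $j\le n$ (matching the first product in \eqref{rootangles2}) and from $j-m$ up to $n$ in the truncated regime $j>n$ (matching the second product in \eqref{rootangles2} and the corresponding line of \eqref{rootangles3}). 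The case $n\le m$ is the transpose of this, using $\bfla^*=(n^{m+1})$: the above-diagonal diagonals now carry $\min(j,n)$ boxes, so the three regimes $j\le n$, $n<j\le m+1$, $j>m+1$ of \eqref{rootangles4} correspond respectively to the short above-diagonal diagonals, the long above-diagonal diagonals (which carry the full $n$ boxes), and the below-diagonal diagonals $E_t$. Finally, since moving one box along a diagonal changes the column index by $1$, the parity of the column index of the $k$-th box alternates with $k$, so the factor $z^{h_{j,k}}\mp(-1)^{n+k}$ has as its roots the $h_{j,k}$-th roots of $\pm 1$ with exactly the alternation asserted in Conjecture~\ref{conj:rootangles}; the overall $\mp$ (coalescence versus rotated emergence) is inherited verbatim from \eqref{rootangles3}--\eqref{rootangles4}. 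Assembling these identifications gives the hook form of all the coalescence polynomials.

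Granting Conjecture~\ref{angles}, the argument above is bookkeeping with no real obstruction; the only mildly delicate point is keeping the three regimes (and the $n>m$ versus $n\le m$ dichotomy) straight while checking that the column-parity of each box along a diagonal indeed matches the exponent $n+k$ in the sign $(-1)^{n+k}$. The substantive difficulty, and the natural next target, is Conjecture~\ref{angles} itself. I would attack it by a local Newton--Puiseux analysis of $\Tmn{m,n}{\mu}(z)$ near the degenerate point $(z,\mu)=(0,-n-j)$: set $\mu=-n-j+\ep$, use Lemma~\ref{lemma:multzeros} (which already pins down the exact order of the zero at the origin) together with the coefficient expansion of Conjecture~\ref{conj:omega_coeffs} --- in particular the product-over-hooks form $d_{n(m+1)}^{(\mu)}=\prod_{h\in\mathcal{H}_{m,n}}(\mu+n+h)$ of the constant term and the analogous sub-leading coefficients --- to determine exactly which low-order coefficients of $\Tmn{m,n}{\mu}(z)$ vanish at $\ep=0$ and to what order in $\ep$. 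The Newton polygon in the $(z,\ep)$-plane then yields both the scaling $z\sim\ep^{1/d}$ of each batch of coalescing roots and the constant whose $d$-th roots give the limiting directions. The hard part is to show this constant is exactly $\pm 1$ with the stated sign pattern: this forces one to track phases, not merely moduli, of the relevant coefficients, and it is here that the hook-product structure of the coefficient formulas is expected to be the decisive input --- which is also the reason the answer is most naturally expressed in terms of hooks, exactly as in Conjecture~\ref{conj:rootangles}.
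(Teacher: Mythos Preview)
Your reduction is exactly the paper's own: the statement is a conjecture, and the paper's accompanying remark simply says that it follows from Conjecture~\ref{angles} once one rewrites the hook multiset \eqref{hookmulti} in the re-indexed form \eqref{hookms}, i.e.\ once one recognises that the exponents $n+j+1-2k$ are precisely the hooks of the rectangle $((m+1)^n)$ grouped by diagonal. Your explicit computation of $h_{r,c}=m+n+2-r-c$ along the diagonals $D_s$ and $E_t$, together with the reparametrisations $j=m+1-s$ and $j=m+1+t$, is a fuller version of that same bookkeeping and is correct in all regimes (I checked it against the $m=2$, $n=3$ data in Example~\ref{ex_T23} and Figure~\ref{fig:rootangles}).

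One small point of phrasing: when you argue the sign alternation by saying ``moving one box along a diagonal changes the column index by~1'', what you actually need (and what your parametrisation already gives) is that your running index $k$ is the \emph{row} index of the box on the diagonal, so it steps by~1 and the factor $(-1)^{n+k}$ alternates. The paper's wording ``hook $h_{jk}$ in column $j$'' is itself loose --- $j$ labels the diagonal, not a column --- so do not try to reconcile your indexing with that phrase literally; the formulas and the figure caption are the reliable guides.

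Your final paragraph, sketching a Newton--Puiseux attack on Conjecture~\ref{angles} using the hook-product form of the low-order coefficients, goes beyond anything the paper attempts (it leaves Conjecture~\ref{angles} entirely open) and is a sensible programme; just be aware that it rests on Conjecture~\ref{conj:omega_coeffs}, which is also unproved.
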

\begin{remark}
 The result follows from Conjecture \ref{angles} by rewriting 
 the hook multiset (\ref{hookmulti}) as 
\eq
\mathcal{H}_{m,n}=
\begin{cases}
\{\{n{+}j{+}1{-}2k\}_{k=1}^j \}_{j=1}^{m+1} \cup 
\{\{n{+}j{+}1{-}2k\}_{k=j-m}^j \}_{j=m+2}^{n} \cup \{\{n{+}j{+}1{-}2k\}_{k=j-m}^n \}_{j=n+1}^{m+n}, &n>m, \\ 
\{\{n{+}j{+}1{-}2k\}_{k=1}^j \}_{j=1}^{n} \cup 
\{\{n{+}j{+}1{-}2k\}_{k=1}^n \}_{j=n+1}^{m+1} \cup \{\{n{+}j{+}1{-}2k\}_{k=j-m}^n \}_{j=m+2}^{m+n}, &n\le m. 
\end{cases}
\label{hookms}
\en
We illustrate how to determine the root angle polynomials from a Young diagram 
in Figure~\ref{fig:rootangles} for the example~\ref{ex_T23}
of $\Tmn{2,3}{\mu}(z)$. 
\end{remark}

\begin{remark}
We have found other families of Wronskian Hermite and Wronskian Laguerre polynomials for which properties can be written compactly in terms of partition data. Combinatorial concepts also appeared in the studies of special polynomials associated with \peqs\ in \cite{RefUm98,RefUm01,RefUm20,refNoumi2,refBDS,refBon}. We are currently investigating this curious appearance of partition combinatorics in various aspects of Wronskian polynomials. 
\end{remark}
\begin{figure}
 \centering 
\begin{tikzpicture}[scale=0.6%0.4
,draw/.append style={thick,black},
 baseline=(current bounding box.center)]
 \

\node[] at (1.25,0.75) {$\color{gray!90}{1}$};
\node[] at (0.25,0.75) {$2$};

\node[] at (-0.75,0.75) {$\color{gray!90}{3}$};

\node[] at (-0.25,0.25) {$\searrow$};
\node[] at (0.75,0.25) {$\searrow$};
\node[] at (1.75,0.25) {$\searrow$};

\node[] at (-0.75,-0.25) {$4$};
\node[] at (-0.75,-1.25) {$\color{gray!90}{5}$};
%\node[] at (-0.75,-2.25) {$6$};
%\node[] at (-0.75,-3.25) {$\color{gray!90}{7}$};
%\node[] at (-0.75,-4.25) {$8$};

\node[] at (-0.25,-.75) {$\searrow$};
\node[] at (-0.25,-1.75) {$\searrow$};
%\node[] at (-0.25,-2.75) {$\searrow$};
%\node[] at (-0.25,-3.75) {$\searrow$};
%\node[] at (-0.25,-4.75) {$\searrow$};

 % now draw the tableau
 \tableauRow=-1.5
 \foreach \Row in {
{\color{gray!90}{5},\color{black}{4},\color{gray!90}{3}},
{4,\color{gray!90}{3},\color{black}{2}},
{\color{gray!90}{3},2,\color{gray!90}{1}}} {
 \tableauCol=0.5
 \foreach\k in \Row {
 \draw[thin](\the\tableauCol,\the\tableauRow)rectangle++(1,1);
 \draw[thin](\the\tableauCol,\the\tableauRow)+(0.5,0.5)node{$\k$};
 \global\advance\tableauCol by 1
 }
 \global\advance\tableauRow by -1
 }
\end{tikzpicture}
 \caption{The hooks on the $j^{\rm th}$ diagonal 
of the Young diagram of $\Tmn{2,3}{\mu}$ 
encode the behaviour of the roots that coalesce at the origin at $\mu=-n-j-\ep$
through the polynomials in Conjecture \ref{conj:rootangles}.
When $j=3$ the polynomial is $(z^5-1)(z^3+1)(z-1)$ and when $j=3$ or $j=5$ the polynomial is $z^3-1$.
}
 \label{fig:rootangles}
\end{figure}
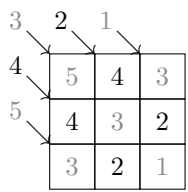

%\newpage 

%\appendix
%\section{Spare root figures}
%\begin{figure}[h]\[ %\begin{array{c@{\quad}c@{\quad}c}
%\fig{P8_2m1} & \fig{P8_2m2} &\fig{P8_2m3}\\
%\fig{P8_2m4} & \fig{P8_2m5} &\fig{P8_2m6}\\
%\fig{P8_2m7} & \fig{P8_2m8} &\fig{P8_2m9}\\
%\end{array}\]
%\caption{\label{fig:T82}The roots of $T_{2,8}^{(\mu)}(\tfrac12z^2)$.}
%\end{figure}

%\begin{figure}[h]\[ \begin{array}{c@{\quad}c@{\quad}c}
% \fig{P3_2m1} & \fig{P3_2_m39p10} & \fig{P3_2m2}\\
 
 %\mu=-\fract{7}{2} & \mu=-\frac{39}{10} & \mu=-4 \\

 % \fig{P3_2m3}& \fig{P3_2m4} & \fig{P3_2m5}\\ 

 % \mu=-\fract{9}{2} & \mu=-5 & \mu=-\fract{11}{2} \\
 
 % \fig{P3_2m6} & & \\
 % \mu=-6 & & \\
%\end{array}\]
%\caption{\label{fig:T32}The roots of
%$T_{3,2}^{(\mu)}(\tfrac12z^2)$.}
%\end{figure} 

\subsection*{Acknowledgements}
We thank David G\'omez-Ullate, Davide Masoero and Bryn Thomas
for helpful comments and illuminating discussions. {We also thank the reviewers for their constructive comments and suggestions.}

\comment{\subsection*{Conflict of interest statement}
The authors declare no conflict of interest.
\subsection*{Data availability statement}
This paper has no associated data.
\subsection*{Orcid}
Peter Clarkson: 0000-0002-8777-5284 \\
Clare Dunning: 0000-0003-0535-9891 }

\def\cpam{Commun. Pure Appl. Math.}
\def\CPAM{Commun. Pure Appl. Math.}
\def\funk{Funkcial. Ekvac.}
\def\FUNK{Funkcial. Ekvac.}
\def\CUP{Cambridge University Press}

\def\refpp#1#2#3#4{\vspace{-0.2cm}
\bibitem{#1} \textrm{\frenchspacing#2}, \textrm{#3}, #4.}

\def\refjltoap#1#2#3#4#5#6#7{\vspace{-0.2cm}
\bibitem{#1}\textrm{\frenchspacing#2}, \textrm{#3},
\textit{\frenchspacing#4}\ (#6)\ #7.}

\def\refjl#1#2#3#4#5#6#7{\vspace{-0.2cm}
\bibitem{#1}\textrm{\frenchspacing#2}, \textrm{#3},
\textit{\frenchspacing#4}, \textbf{#5}\ (#6)\ #7.}

\def\refbk#1#2#3#4#5{\vspace{-0.2cm}
\bibitem{#1} \textrm{\frenchspacing#2}, \textit{#3}, #4, #5.}

\def\refcf#1#2#3#4#5#6#7{\vspace{-0.2cm}
\bibitem{#1} \textrm{\frenchspacing#2}, \textrm{#3},
in: \textit{#4}, {\frenchspacing#5}, pp.\ #6, #7.}

\def\ams{American Mathematical Society}
\def\DE{Diff. Eqns.}
\def\JPA{J. Phys. A}
\def\NMJ{Nagoya Math. J.}
\def\JCAM{J. Comput. Appl. Math.}

\end{document}